\RequirePackage[l2tabu,orthodox]{nag}
\documentclass
[letterpaper,12pt,]
{article}

\usepackage{etex}
\usepackage{verbatim}
\usepackage{xspace,enumerate}
\usepackage[dvipsnames]{xcolor}
\usepackage[T1]{fontenc}
\usepackage[full]{textcomp}
\usepackage[american]{babel}
\usepackage{mathtools}
\usepackage{array}
\usepackage{amsthm}
\usepackage[
letterpaper,
top=1in,
bottom=1in,
left=1in,
right=1in]{geometry}
\usepackage{newpxtext} %
\usepackage{textcomp} %
\usepackage[varg,bigdelims]{newpxmath}
\usepackage[scr=rsfso]{mathalfa}%
\usepackage{bm} %
\linespread{1.1}%
\let\mathbb\varmathbb
\usepackage{microtype}
\usepackage[pagebackref,colorlinks=true,urlcolor=blue,linkcolor=blue,citecolor=OliveGreen]{hyperref}
\usepackage[capitalise,nameinlink]{cleveref}
\crefname{lemma}{Lemma}{Lemmas}
\crefname{fact}{Fact}{Facts}
\crefname{theorem}{Theorem}{Theorems}
\crefname{corollary}{Corollary}{Corollaries}
\crefname{claim}{Claim}{Claims}
\crefname{example}{Example}{Examples}
\crefname{algorithm}{Algorithm}{Algorithms}
\crefname{problem}{Problem}{Problems}
\crefname{definition}{Definition}{Definitions}
\crefname{exercise}{Exercise}{Exercises}
\crefalias{step}{enumi}
\crefname{step}{step}{steps}
\Crefname{step}{Step}{Steps}
\usepackage{amsthm}

\newtheorem{theorem}{Theorem}[section]
\newtheorem*{theorem*}{Theorem}
\newtheorem{lemma}[theorem]{Lemma}
\newtheorem*{lemma*}{Lemma}
\newtheorem{fact}[theorem]{Fact}
\newtheorem*{fact*}{Fact}
\newtheorem{proposition}[theorem]{Proposition}
\newtheorem*{proposition*}{Proposition}
\newtheorem{corollary}[theorem]{Corollary}
\newtheorem*{corollary*}{Corollary}

\newtheorem*{hypothesis*}{Hypothesis}

\newtheorem*{conjecture*}{Conjecture}
\theoremstyle{definition}
\newtheorem{definition}[theorem]{Definition}
\newtheorem*{definition*}{Definition}

\newtheorem*{construction*}{Construction}

\newtheorem*{example*}{Example}

\newtheorem*{question*}{Question}
\newtheorem{algorithm}[theorem]{Algorithm}
\newtheorem*{algorithm*}{Algorithm}

\newtheorem*{assumption*}{Assumption}

\newtheorem*{problem*}{Problem}

\newtheorem*{openquestion*}{Open Question}
\newtheorem{model}[theorem]{Model}
\newtheorem*{model*}{Model}
\theoremstyle{remark}

\newtheorem*{claim*}{Claim}
\newtheorem{remark}[theorem]{Remark}
\newtheorem*{remark*}{Remark}

\newtheorem*{observation*}{Observation}
\usepackage{paralist}
\frenchspacing
\let\originalleft\left
\let\originalright\right
\renewcommand{\left}{\mathopen{}\mathclose\bgroup\originalleft}
\renewcommand{\right}{\aftergroup\egroup\originalright}
\usepackage{turnstile}
\usepackage{mdframed}
\usepackage{tikz}
\usepackage{caption}
\DeclareCaptionType{Algorithm}
\usepackage{newfloat}
\usepackage{xparse}
\usepackage{amsthm} %
\makeatletter
\let\latexparagraph\paragraph
\RenewDocumentCommand{\paragraph}{som}{%
	\IfBooleanTF{#1}
	{\latexparagraph*{#3}}
	{\IfNoValueTF{#2}
		{\latexparagraph{\maybe@addperiod{#3}}}
		{\latexparagraph[#2]{\maybe@addperiod{#3}}}%
	}%
}
\newcommand{\maybe@addperiod}[1]{%
	#1\@addpunct{.}%
}
\makeatother

\newenvironment{algorithmbox}{\begin{mdframed}[nobreak=true]
		\begin{algorithm}}{\end{algorithm}\end{mdframed}}

\usepackage{boxedminipage}

\newcommand{\paren}[1]{(#1)}
\newcommand{\Paren}[1]{\left(#1\right)}

\newcommand{\Bigparen}[1]{\Big(#1\Big)}
\newcommand{\brac}[1]{[#1]}
\newcommand{\Brac}[1]{\left[#1\right]}

\newcommand{\abs}[1]{\lvert#1\rvert}
\newcommand{\Abs}[1]{\left\lvert#1\right\rvert}

\newcommand{\card}[1]{\lvert#1\rvert}
\newcommand{\Card}[1]{\left\lvert#1\right\rvert}

\newcommand{\Set}[1]{\left\{#1\right\}}

\newcommand{\norm}[1]{\lVert#1\rVert}
\newcommand{\Norm}[1]{\left\lVert#1\right\rVert}

\newcommand{\normt}[1]{\norm{#1}_2}
\newcommand{\Normt}[1]{\Norm{#1}_2}

\newcommand{\Snorm}[1]{\Norm{#1}^2}

\newcommand{\normo}[1]{\norm{#1}_1}
\newcommand{\Normo}[1]{\Norm{#1}_1}

\newcommand{\Normi}[1]{\Norm{#1}_\infty}

\newcommand{\Normio}[1]{\Norm{#1}_{\infty \rightarrow 1}}

\newcommand{\iprod}[1]{\langle#1\rangle}
\newcommand{\Iprod}[1]{\left\langle#1\right\rangle}

\newcommand{\Esymb}{\mathbb{E}}
\newcommand{\Psymb}{\mathbb{P}}

\DeclareMathOperator*{\E}{\Esymb}

\DeclareMathOperator*{\ProbOp}{\Psymb}
\renewcommand{\Pr}{\ProbOp}

\newcommand{\suchthat}{\;\middle\vert\;}
\newcommand{\tensor}{\otimes}

\newcommand{\sge}{\succeq}
\newcommand{\sle}{\preceq}

\newcommand{\sbits}{\{\pm1\}}

\newcommand{\tensorpower}[2]{#1^{\tensor #2}}

\newcommand{\sm}{\setminus}

\newcommand{\from}{\colon}

\newcommand{\mper}{\,.}

\newcommand\bdot\bullet

\newcommand{\ind}[1]{\mathbbm{1}[#1]}

\DeclareMathOperator{\poly}{poly}

\DeclareMathOperator{\argmin}{argmin}

\DeclareMathOperator{\sign}{sign}

\newcommand{\Erdos}{Erd\H{o}s\xspace}
\newcommand{\Renyi}{R\'enyi\xspace}

\newcommand{\N}{\mathbb N}
\newcommand{\R}{\mathbb R}

\newcommand{\cA}{\mathcal A}
\newcommand{\cB}{\mathcal B}

\newcommand{\cF}{\mathcal F}
\newcommand{\cG}{\mathcal G}

\newcommand{\cI}{\mathcal I}
\newcommand{\cJ}{\mathcal J}
\newcommand{\cK}{\mathcal K}

\newcommand{\cM}{\mathcal M}
\newcommand{\cN}{\mathcal N}
\newcommand{\cO}{\mathcal O}
\newcommand{\cP}{\mathcal P}

\newcommand{\cS}{\mathcal S}
\newcommand{\cT}{\mathcal T}

\newcommand{\cW}{\mathcal W}

\newcommand{\cY}{\mathcal Y}
\newcommand{\cZ}{\mathcal Z}

\newcommand{\bbP}{\mathbb P}

\renewcommand{\leq}{\leqslant}

\renewcommand{\geq}{\geqslant}
\renewcommand{\ge}{\geqslant}
\let\epsilon=\varepsilon
\numberwithin{equation}{section}

\newcommand{\sse}{\subseteq}

\newcommand{\e}{\epsilon}
\newcommand{\eps}{\epsilon}

\allowdisplaybreaks
\sloppy
\newcommand*{\Id}{\mathrm{Id}}

\newcommand*{\normf}[1]{\norm{#1}_{\mathrm{F}}}
\newcommand*{\Normf}[1]{\Norm{#1}_{\mathrm{F}}}

\newcommand*{\tE}{\tilde \E}

\newcommand\MYcurrentlabel{xxx}
\newcommand{\MYstore}[2]{%
  \global\expandafter \def \csname MYMEMORY #1 \endcsname{#2}%
}
\newcommand{\MYload}[1]{%
  \csname MYMEMORY #1 \endcsname%
}
\newcommand{\MYnewlabel}[1]{%
  \renewcommand\MYcurrentlabel{#1}%
  \MYoldlabel{#1}%
}
\newcommand{\MYdummylabel}[1]{}
\newcommand{\torestate}[1]{%
  \let\MYoldlabel\label%
  \let\label\MYnewlabel%
  #1%
  \MYstore{\MYcurrentlabel}{#1}%
  \let\label\MYoldlabel%
}
\newcommand{\restatetheorem}[1]{%
  \let\MYoldlabel\label
  \let\label\MYdummylabel
  \begin{theorem*}[Restatement of \cref{#1}]
    \MYload{#1}
  \end{theorem*}
  \let\label\MYoldlabel
}
\newcommand{\restatelemma}[1]{%
  \let\MYoldlabel\label
  \let\label\MYdummylabel
  \begin{lemma*}[Restatement of \cref{#1}]
    \MYload{#1}
  \end{lemma*}
  \let\label\MYoldlabel
}
\newcommand{\restateprop}[1]{%
  \let\MYoldlabel\label
  \let\label\MYdummylabel
  \begin{proposition*}[Restatement of \cref{#1}]
    \MYload{#1}
  \end{proposition*}
  \let\label\MYoldlabel
}
\newcommand{\restatefact}[1]{%
  \let\MYoldlabel\label
  \let\label\MYdummylabel
  \begin{fact*}[Restatement of \cref{#1}]
    \MYload{#1}
  \end{fact*}
  \let\label\MYoldlabel
}
\newcommand{\restate}[1]{%
  \let\MYoldlabel\label
  \let\label\MYdummylabel
  \MYload{#1}
  \let\label\MYoldlabel
}

\renewcommand{\ij}{{ij}}

\newcommand{\SBM}{\mathsf{SBM}}

\newcommand{\sbm}{\SBM_n(d,\gamma,x)}
\newcommand{\ssbm}{\SBM_n(\frac{\alpha+\beta}{2}\cdot\log{n}, \frac{\alpha-\beta}{\alpha+\beta}, x)}

\DeclareMathOperator{\err}{err}
\DeclareMathOperator{\Ham}{Ham}
\newcommand\numberthis{\addtocounter{equation}{1}\tag{\theequation}}

\DeclareMathOperator{\Bin}{Binomial}
\newcommand{\rv}[1]{\mathbf{#1}}

\newcommand{\one}{\mathbf{1}}
\newcommand{\zero}{\mathbf{0}}
\newcommand{\Lap}{\mathrm{Lap}}
\usepackage{bbm}

\title{Private estimation algorithms for stochastic block models and mixture models\thanks{This project has received funding from the European Research Council (ERC) under the European Union’s Horizon 2020 research and innovation programme (grant agreement No 815464).}
}
\author{Hongjie Chen\thanks{ETH Z\"urich.}
	\and Vincent Cohen-Addad\thanks{Google Research.}
	\and Tommaso d'Orsi\thanks{Universit\`a Bocconi. This work was done while the author was at ETH Z\"urich.}
	\and Alessandro Epasto\footnotemark[3]
	\and Jacob Imola\thanks{UC San Diego.} 
	\and David Steurer \footnotemark[2] 
	\and Stefan Tiegel\footnotemark[2] 
}

\begin{document}

\maketitle
\thispagestyle{empty} %

\begin{abstract}
	We introduce general tools for designing efficient private estimation algorithms, in the high-dimensional settings, whose statistical guarantees almost match those of the best known non-private algorithms.
To illustrate our techniques, we consider two problems: recovery of stochastic block models and learning mixtures of spherical Gaussians.

For the former, we present the first efficient $(\epsilon, \delta)$-differentially private algorithms for both weak recovery and exact recovery. 
Previously known algorithms achieving comparable guarantees required quasi-polynomial time. 
We complement these results with an information-theoretic lower bound that highlights how the guarantees of our algorithms are almost tight.

For the latter, we design an  $(\epsilon, \delta)$-differentially private algorithm that recovers the centers of the $k$-mixture when the minimum separation is at least $	O(k^{1/t}\sqrt{t})$. For all choices of $t$, this algorithm requires sample complexity $n\geq k^{O(1)}d^{O(t)}$ and time complexity $(nd)^{O(t)}$. Prior work required either an additional additive $\Omega(\sqrt{\log n})$ term in the minimum separation or an explicit upper bound on the Euclidean norm of the centers.

\end{abstract}

\clearpage

\microtypesetup{protrusion=false}
\tableofcontents{}
\microtypesetup{protrusion=true}

\clearpage

\pagestyle{plain}
\setcounter{page}{1}

\section{Introduction}\label{section:introduction}
Computing a model that best matches  a dataset is a fundamental
question in machine learning and statistics. Given a set of
$n$ samples from a model, how to find the most likely parameters
of the model that could have generated this data?
This basic question has been widely studied for several decades,
and recently revisited in the context where the input data has
been partially corrupted (i.e., where few samples of the data
have  been adversarially generated---see for instance~\cite{LaiRV16, diakonikolas2019robust, dOrsiKNS20, DiakonikolasKKP22}). This has led
to several recent works shedding new lights on classic model estimation
problems, such as the Stochastic Block Model (SBM)~\cite{MR3520025-Guedon16, montanari2016semidefinite, moitra2016robust, MR4115142, ding2022robust, Liu-Moitra-minimax} and
the Gaussian Mixture Model (GMM)~\cite{Hopkins018, KothariSS18, BakshiDHKKK20, BakshiDJKKV22} (see Definitions~\ref{model:sbm}
and~\ref{model:gaussian-mixture-model}).

Privacy in machine learning and statistical tasks has recently
become of critical importance. New regulations, renewed consumer interest as well as
privacy leaks, have led the major actors to adopt privacy-preserving
solutions for the machine learning~\cite{GoogleDP, AppleDP, USCensusDP}. This new push has resulted in a flurry of activity in algorithm design for private machine learning, including very recently for SBMs and GMMs~\cite{MohamedNVT22,kamath2019differentially,cohen2021differentially,tsfadia2022friendlycore}. Despite this activity, it has remain an open challenge to fully understand how privacy requirements impact
model estimation problems and in particular their recovery thresholds and the computational complexity.
This is the problem we tackle in this paper.

While other notions of privacy exist (e.g. $k$-anonymity), the de facto privacy standard is the differential privacy (DP) framework
of Dwork, McSherry, Nissim, and Smith~\cite{DworkMNS06}. In this framework, the privacy quality
is governed by two parameters, $\epsilon$ and $\delta$, which in essence
tell us how the probability of seeing a given output changes (both multiplicatively
and additively) between two datasets that differ by any individual data element. This notion, in essence,
quantifies the amount of information \emph{leaked} by a given algorithm on individual data
elements. The goal of the algorithm designer is to come up with differentially private algorithms for $\epsilon$ being
a small constant and $\delta$ being of order $1/n^{\Theta(1)}$.

Differentially private analysis of graphs usually considers two notions of neighboring graphs. The weaker notion of \emph{edge-DP} defines two graphs to be neighboring if they differ in one edge. 
Under the stronger notion of \emph{node-DP},  two neighboring graphs can differ arbitrarily in the set of edges connected to a single vertex.
Recently, there is a line of work on node-DP parameter estimation in random graph models, e.g. Erd\"os-R\'enyi models \cite{ullman2019efficiently} and Graphons \cite{borgs2015private, borgs2018revealing}.
However, for the more challenging task of graph clustering, node-DP is sometimes impossible to achieve.\footnote{In particular, we cannot hope to achieve \emph{exact recovery}. We could isolate a vertex by removing all of its adjacent edges. Then it is impossible to cluster this vertex correctly. See below for formal definitions.}
Thus it is a natural first step to study edge-DP graph clustering.

Very recently, Seif, Nguyen, Vullikanti, and Tandon~\cite{MohamedNVT22} were the
first to propose differentially private algorithms for the Stochastic Block Model,
with edge-DP. Concretely, they propose algorithms achieving exact recovery
(exact identification of the planted clustering) while preserving privacy of individual edges of the graph. The proposed approach either takes $n^{\Theta(\log n)}$ time when $\epsilon$ is constant, or runs in polynomial time when $\eps$
is $\Omega(\log n)$. 

Gaussian Mixture Models have also been studied in the context of differential privacy by~\cite{kamath2019differentially,cohen2021differentially,tsfadia2022friendlycore} using the subsample-and-aggregate framework first introduced in \cite{subsample-and-aggregate} (see also
recent work for robust moment estimation in the differential privacy setting~\cite{KothariMV22,pmlr-v178-ashtiani22a, hopkins2022efficient}). %
The works of \cite{kamath2019differentially,cohen2021differentially} require an explicit bound on the euclidean norm of the centers as the sample complexity of these algorithms depends on this bound.
For a mixture of $k$ Gaussians, if there is a non-private algorithm that requires the minimum distance between the centers to be at least $\Delta$, then \cite{cohen2021differentially,tsfadia2022friendlycore} can transform this non-private algorithm into a private one that needs the minimum distance between the centers to be at least $\Delta+\sqrt{\log n}$, where $n$ is the number of samples.

In this paper, we tackle both clustering problems (graph clustering with the SBM and
metric clustering with the GMM) through a new general privacy-preserving framework that brings us significantly closer to the state-of-the-art of non-private algorithms.
As we will see, our new perspective on the problems appear to be easily extendable to many other estimation algorithms.

\paragraph{From robustness to privacy}
In recent years a large body of work (see \cite{diakonikolas2019recent,dOrsiKNS20,Liu-Moitra-minimax,diakonikolas2019robust,LaiRV16,charikar2017learning,KothariSS18,Hopkins018} and references therein) has advanced our understanding of parameter estimation in the presence of adversarial perturbations. In these settings, an adversary looks at the input instance and modifies it arbitrarily, under some constraints (these constraints are usually meant to ensure that it is still information theoretically possible to recover the underlying structure).
As observed in the past \cite{MR2780083, KothariMV22,liu2022differential}, the two goals of designing privacy-preserving machine learning models and robust model estimation are tightly related. The common objective is to design algorithms that extract global information without over-relaying on individual data samples. %

Concretely, robust parameter estimation tends to morally follow a two-steps process:
\textit{(i)} argue that typical inputs are well-behaved, in the sense that they satisfy some property which can be used  to accurately infer the desired global information, \textit{(ii)} show that adversarial perturbations cannot significantly alter the quality of well-behaved inputs, so that it is still possible to obtain an accurate estimate.
Conceptually, the analysis of private estimation algorithms can also be divided in two parts: \textit{utility}, which is concerned with the accuracy of the output, and \textit{privacy}, which ensures there is no leak of sensitive information.
In particular, the canonical differential privacy definition can be interpreted as the requirement that, for any distinct inputs $Y, Y'$, the change in the output is \textit{proportional} to the distance\footnote{The notion of distance is inherently application dependent. For example, it could be Hamming distance.} between $Y$ and $Y' $.

It is easy to see this as a generalization of robustness: while robust algorithm needs the output to be stable for typical inputs, private algorithms requires this stability for \textit{any possible input.} Then, stability of the output immediately implies adding a small amount of noise to the output yields privacy. If the added noise is small enough, then utility is also preserved.

Our work further tightens this connection between robustness and privacy through a simple yet crucial insight: if two strongly convex functions over constrainted sets --where both the function and the set may depend on the input--  are point-wise close (say in a $\ell_2$-sense), their minimizers are also close (in the same sense). The alternative perspective is that projections of points that are close to each other, onto convex sets that are point-wise close, must also be close.
This observation subsumes previously known sensitivity bounds in the empirical risk minimization literature (in particular in the output-perturbation approach to ERM, see \cref{section:techniques} for a comparison).

The result is a clean, user-friendly, \textit{framework to turn robust estimation algorithms into private algorithms}, while keeping virtually the same guarantees.
We apply this paradigm to stochastic block models and Gaussian mixture models, which we introduce next.

\paragraph{Stochastic block model} The stochastic block model is an extensively studied statistical model for community detection in graphs (see \cite{abbe2017community} for a survey). 

\begin{model}[Stochastic block model]\label[model]{model:sbm}
	In its most basic form, the stochastic block model describes the  distribution\footnote{We use \textbf{bold} characters to denote random variables.} of an $n$-vertex graph \(\mathbf{G}\sim \sbm\), where $x$ is a vector of \(n\) binary\footnote{
		More general versions of the stochastic block model allow for more than two labels and general edge probabilities depending on the label assignment.
		However, many of the algorithmic phenomena of the general version can in their essence already be observed for the basic version that we consider in this work.
	} labels, $d \in \N$, $\gamma > 0$, and for every pair of distinct vertices~\(i,j\in [n]\) the  edge~\(\{i,j\}\)  is independently added to the graph~\(\mathbf{G}\) with probability~\((1+\gamma \cdot x_i\cdot x_j)\frac{d}{n}\)\,.

\end{model}

For balanced label vector $x$, i.e., with roughly the same number of $+1$'s and $-1$'s, parameter $d$ roughly corresponds to the average degree of the graph.
Parameter $\gamma$ corresponds to the \emph{bias} introduced by the community structure.
Note that for distinct vertices \(i,j\in[n]\), the edge \(\{i,j\}\) is present in \(\mathbf{G}\) with probability \((1+\gamma)\frac d n \) if the vertices have the same label \(x_i=x_j\) and with probability \((1-\gamma)\frac d n\) if the vertices have different labels \(x_i\neq x_j\).\footnote{At times we may write $d_n\,, \gamma_n$ to emphasize that these may be functions of $n$.  We write $o(1), \omega(1)$ for functions tending to zero (resp. infinity) as $n$ grows.}

Given a graph \(\mathbf{G}\) sampled according to this model, the  goal is to recover the (unknown) underlying vector of labels as well as possible. In particular, for a chosen algorithm returning a partition $\hat{x}(\mathbf{G})\in\sbits^n$, there are two main objective of interest: \textit{weak recovery} and \textit{exact recovery}. 
The former amounts to finding a partition $\hat{x}(\mathbf{G})$ correlated with the true partition. The latter instead corresponds to actually recovering the true partition with high probability.
As shown in the following table, by now the statistical and computational landscape of these problems is well understood \cite{decelle2011asymptotic, massoulie2014community,  mossel2015reconstruction, mossel2018proof,MR3520025-Guedon16}:

\begin{center}
	\begin{tabular}{ |m{2.4cm}|c|m{4.1cm}| } 
		\hline
		 & \textbf{Objective} & can be achieved (and efficiently so) \textit{iff} \\ 
		\hline
		\hline
		\textit{weak recovery} & $\Pr_{\mathbf{G}\sim\sbm} \Bigparen{\tfrac{1}{n} \abs{\iprod{x,\hat{x}(\mathbf{G})}} \ge \Omega_{d,\gamma}(1)} \geq 1-o(1)$ & $\gamma^2\cdot d \geq 1$ \\ 
		\hline
			\textit{exact recovery} & $\Pr_{\mathbf{G}\sim\sbm} \Bigparen{\hat{x}(\mathbf{G})\in\{x,-x\}}\geq  1-o(1)$ & $\frac{d}{\log n}\Paren{1-\sqrt{1-\gamma^2}}\geq 1$ \\ 
		\hline
	\end{tabular}
\end{center}

\paragraph{Learning mixtures of spherical Gaussians}

The Gaussian Mixture Model we consider is the following.

\begin{model}[Mixtures of spherical Gaussians]\label[model]{model:gaussian-mixture-model} 
	Let $D_1,\ldots, D_k$ be Gaussian distributions on $\R^d$ with covariance $\Id$ and means $\mu_1,\ldots, \mu_k$ satisfying $\Norm{\mu_i-\mu_j}\geq \Delta$ for any $i \neq j$. Given a set $\mathbf{Y}=\{\mathbf{y}_1,\ldots, \mathbf{y}_n\}$  of $n$ samples from the uniform mixture  over $D_1,\ldots, D_k$, estimate  $\mu_1,\ldots, \mu_k$.
\end{model}

It is known that when the minimum separation is $\Delta = o(\sqrt{\log k})$, superpolynomially many samples are required to estimate the means up to small constant error \cite{regev2017learning}.
Just above this threshold, at separation $k^{O(1/\gamma)}$ for any constant $\gamma$, there exist efficient algorithms based on the sum-of-squares hierarchy recovering the means up to accuracy $1/\poly(k)$ \cite{Hopkins018, KothariSS18, steurer2021sos}.
In the regime where $\Delta = O(\sqrt{\log k})$ these algorithms yield the same guarantees but require quasipolynomial time.
Recently, \cite{liu2022clustering} showed how to efficiently recover the means as long as $\Delta = O\paren{\log(k)^{1/2 + c}}$ for any constant $c > 0$.

\subsection{Results}\label{section:results}

\paragraph{Stochastic block model} We present here the first $(\epsilon, \delta)$-differentially private efficient algorithms for exact recovery.
In all our results on stochastic block models, we consider the \emph{edge privacy} model, in which two input graphs are adjacent if they differ on a single edge (cf. \cref{definition:sbm-adjacent-graphs}).

\begin{theorem}[Private exact recovery of SBM] \label{theorem:main-sbm-private-exact-recovery}
	Let $x\in \Set{\pm 1}^n$ be balanced\footnote{A vector $x\in\sbits^n$ is said to be balanced if $\sum_{i=1}^{n}x_i=0$.}.
	For any $\gamma, d, \epsilon, \delta >0$ satisfying
	\begin{equation*}
		\frac{d}{\log n}\Paren{1-\sqrt{1-\gamma^2}} \geq \Omega(1) 
    \quad\text{and}\quad
    \frac{\gamma d}{\log n} \geq \Omega\Paren{\frac{1}{\eps^2} \cdot \frac{\log(1/\delta)}{\log n} + \frac{1}{\eps}} ,
  \end{equation*}
	there exists an $(\eps, \delta)$-differentially edge private algorithm that,
	on input ${\mathbf{G}\sim \sbm}$, returns $\hat{x}(\rv{G})\in\{x,-x\}$ with probability $1-o(1)$.
	Moreover, the algorithm runs in polynomial time.
\end{theorem}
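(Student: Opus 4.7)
The algorithm is a two-phase pipeline combined via basic composition, with roughly half of the privacy budget spent in each phase. Phase~1 is a private weak recovery step: we apply the paper's robustness-to-privacy framework to a regularization of the Guedon--Vershynin SDP, adding a strongly convex term $\lambda\Normf{X}^2$ to the objective so that the optimizer $X^\star(\mathbf G)$ is Lipschitz-in-Frobenius in the input adjacency matrix. Gaussian output perturbation of an appropriately small scale can then be applied to $X^\star(\mathbf G)$ to certify $(\eps/2,\delta/2)$-DP while preserving the sign information of the top eigenvector after rounding. The resulting estimate $\hat x^{(0)}\in\sbits^n$ satisfies $\iprod{\hat x^{(0)}, x}/n \ge 1 - 2\eta$ for some small constant $\eta<1/2$ (after a global sign flip if needed) with probability $1-o(1)$.

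Phase~2 is a noisy-majority-vote cleanup that uses $\hat x^{(0)}$ (public by post-processing) together with the graph $\mathbf G$. For each vertex $i$ we compute $s_i(\mathbf G) = \sum_{j : \{i,j\}\in E(\mathbf G)} \hat x^{(0)}_j$, draw i.i.d. $\eta_i \sim \mathcal N(0,\sigma^2)$ with $\sigma = O(\sqrt{\log(1/\delta)}/\eps)$, and output $\hat x_i = \sign(s_i + \eta_i)$. Treating $\hat x^{(0)}$ as auxiliary input, flipping a single edge $\{i,j\}$ of $\mathbf G$ alters $s$ only in the two coordinates $i,j$ and each by at most $1$, so the $\ell_2$-sensitivity is at most $\sqrt 2$. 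The Gaussian mechanism then certifies $(\eps/2,\delta/2)$-DP for Phase~2, and basic composition with Phase~1 gives the claimed overall $(\eps,\delta)$-DP.

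For utility, conditional on Phase~1 succeeding with correlation $\ge 1-2\eta$ we have $\E[x_i s_i \mid x, \hat x^{(0)}] \ge \gamma d(1-2\eta)(1-o(1))$ with Chernoff fluctuations of order $O(\sqrt{d\log n})$ uniformly in $i$. Under the exact-recovery threshold $(d/\log n)(1-\sqrt{1-\gamma^2}) \ge \Omega(1)$, a union bound over the $n$ vertices gives $\sign(s_i) = x_i$ for all $i$ with probability $1-o(1)$, recovering the non-private cleanup analysis. A second union bound then yields $\Normi{\eta} = O(\sigma\sqrt{\log n})$, and the quantitative condition $\gamma d/\log n \ge \Omega(\log(1/\delta)/(\eps^2\log n) + 1/\eps)$ is precisely calibrated so that this extra noise is dominated by the signal gap $\gamma d$, preserving every sign.

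The principal obstacle is not Phase~2 (a routine Gaussian-mechanism calculation once $\hat x^{(0)}$ is fixed) but Phase~1: one must certify that the privacy noise injected during the regularized SDP does not destroy the constant-correlation guarantee in the low signal-to-noise regime $\gamma d = \Theta(\log n)$ near the information-theoretic threshold. This is exactly what the ``strong convexity $\Rightarrow$ Lipschitz minimizer $\Rightarrow$ output perturbation'' paradigm of \cref{section:techniques} achieves, and the additive $1/\eps$ term in the hypothesis reflects the extra signal margin that private weak recovery needs to absorb the DP perturbation beyond what its non-private counterpart requires.
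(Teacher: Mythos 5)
Your high-level architecture matches the paper's: a private weak-recovery SDP step followed by a private noisy majority-vote cleanup, then composition. The weak-recovery part also matches in spirit (the paper's projection objective $\|Y-X\|_F^2 = \|X\|_F^2 - 2\iprod{Y,X} + \mathrm{const}$ is already $2$-strongly convex, so there is no need to introduce an extra $\lambda\|X\|_F^2$ term, but the effect is the same).

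The genuine gap is in Phase~2: you compute $s_i(\mathbf G) = \sum_{j:\{i,j\}\in E(\mathbf G)} \hat x^{(0)}_j$ using the \emph{same} graph $\mathbf G$ that produced $\hat x^{(0)}$, and then write $\E[x_i s_i \mid x, \hat x^{(0)}] \ge \gamma d(1-2\eta)(1-o(1))$ with Chernoff-scale fluctuations. This conditional-expectation formula assumes that, given $\hat x^{(0)}$, the edge indicators $\{A_{ij}\}_j$ are still independent Bernoullis with the SBM marginals; but $\hat x^{(0)}$ is a function of the same edges (the DP noise is merely added on top of the SDP output and does not break the functional dependence), so conditioning on $\hat x^{(0)}$ changes the edge distribution in an uncontrolled way. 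This is precisely the classic coupling obstruction in SBM boosting proofs. The paper circumvents it with a graph-splitting step (each edge assigned to $\rv G_1$ or $\rv G_2$ uniformly at random, Phase~1 on $\rv G_1$, majority vote on $\rv G_2$), so that the rough estimate is independent of the edges used for voting. Without graph splitting or a substitute leave-one-out stability argument, your utility claim for the cleanup step does not go through. A secondary point: graph splitting also yields a cleaner privacy accounting than basic composition. Because a single edge flip touches only one of $\rv G_1$, $\rv G_2$, the paper's \cref{lem:privacy_proof_exact_recovery} shows the composed algorithm is $(\eps,\delta)$-DP (not $(2\eps,2\delta)$-DP) without halving the budget between phases, so you would pay an unnecessary constant factor even if your utility argument were repaired.
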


For any constant $\epsilon>0$, \cref{theorem:main-sbm-private-exact-recovery} states that $(\epsilon, \delta)$-differentially  private exact recovery is possible, in polynomial time, already a constant factor close to the non-private threshold.
Previous results \cite{MohamedNVT22} could only achieve comparable guarantees in time $O\paren{n^{O(\log n)}}$.
It is also important to observe that the theorem provides a trade-off between signal-to-noise ratio of the instance (captured by the expression on the left-hand side with $\gamma, d$) and the privacy parameter $\epsilon\,.$  In particular, we highlight two regimes: for $d \geq \Omega(\log n)$ one can achieve exact recovery with high probability and privacy parameters $\delta = n^{-\Omega(1)}\,, \eps = O(1/\gamma+1/\gamma^2) $. For $d \geq \omega(\log n)$ one can achieve exact recovery with high probability and privacy parameters $\epsilon= o(1), \delta = n^{-\omega(1)}\,.$
\cref{theorem:main-sbm-private-exact-recovery} follows by a result for private weak recovery and a boosting argument (cf. \cref{thm:private_weak_recovery} and \cref{section:sbm-private-exact-recovery}).

Further, we present a second, exponential-time, algorithm based on the exponential mechanism \cite{mcsherry2007mechanism} which improves over the above in two regards.
First, it gives \emph{pure} differential privacy. 
Second, it provides utility guarantees for a larger range of graph parameters.
In fact, we will also prove a lower bound which shows that its privacy guarantees are information theoretically optimal.\footnote{It is optimal in the "small error" regime, otherwise it is almost optimal. See \cref{thm:lower_bound_private_sbm} for more detail.}
All hidden constants are absolute and do not depend on any graph or privacy parameters unless stated otherwise.
In what follows we denote by $\err\Paren{\hat{x}, x}$ the minimum of the hamming distance of $\hat{x}$ and $x$, and the one of $-\hat{x}$ and $x$, divided by $n$.

\begin{theorem}[Slightly informal, see \cref{theorem:inefficient_sbm} for full version]
	\label{theorem:inefficient_sbm_abridged}
	Let $\gamma \sqrt{d} \geq \Omega\Paren{1}$, $x \in \Set{\pm 1}^n$ be balanced, and $\zeta \geq \exp\Paren{-\Omega\Paren{\gamma^2 d}}$.
	For any $\e \geq\Omega\Paren{\tfrac{\log\Paren{1/\zeta}}{\gamma d}}$, there exists an algorithm which on input $\rv{G}\sim\SBM_n(\gamma,d,x)$ outputs an estimate $\hat{x}(\rv{G})\in\sbits^n$ satisfying $\err\Paren{\hat{x}(\mathbf{G}), x} \leq \zeta$
	with probability at least $1-\zeta$.
	In addition, the algorithm is $\e$-differentially edge private.
	Further, we can achieve error $\Theta\Paren{1/{\sqrt{\log(1/\zeta)}}}$ with probability $1-e^{-n}$.
\end{theorem}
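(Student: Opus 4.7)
The plan is to apply the exponential mechanism of McSherry--Talwar to a lightly centered version of the natural SBM likelihood score, then bound the moment generating function of the score gap to obtain utility. Concretely, set
\[
q(G,\hat x) \;=\; \hat x^{\mathsf T} A_G \hat x \;-\; \tfrac{d}{n}\,\hat x^{\mathsf T}(J-I)\hat x,
\]
where $A_G$ is the adjacency matrix and the centering is deterministic (hence free in sensitivity). Toggling one edge changes $q$ by $\pm 2\hat x_i\hat x_j \in\{\pm 2\}$, so $\Delta q = 2$. The algorithm returns $\hat x\in\{\pm 1\}^n$ with probability proportional to $\exp(\epsilon' q(G,\hat x)/4)$, where $\epsilon' = \min(\epsilon, c\gamma)$ for a small absolute constant $c$. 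This is $\epsilon'$-DP, hence $\epsilon$-DP by monotonicity. The hypothesis $\zeta \geq \exp(-\Omega(\gamma^2 d))$ is exactly what guarantees $\epsilon' \gamma d \geq \Omega(\log(1/\zeta))$, so the assumption on $\epsilon$ transfers to the truncated $\epsilon'$.

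For utility, fix $\hat x$ at Hamming distance $h$ from $x$ and let $\sigma = h_- - h_+$ measure the imbalance of the flip set, where $h_\pm$ is the number of flipped vertices on the $\pm 1$ side of $x$. Classify pairs $(i,j)$ with $\hat x_i\hat x_j\neq x_ix_j$ into $a$-pairs (those with $x_ix_j=+1$, of count $(h(n-h)-\sigma^2)/2$) and $b$-pairs (those with $x_ix_j=-1$, of count $(h(n-h)+\sigma^2)/2$). Using independence of the $A_{ij}$'s and $1+x\leq e^x$, the random part of the MGF factors into $\exp\bigl(\tfrac{d}{n}[a(1+\gamma)(e^{-\epsilon'}-1)+b(1-\gamma)(e^{\epsilon'} - 1)]\bigr)$, while the deterministic centering contributes exactly $-\epsilon' d\sigma^2/n$. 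Rewriting in terms of $\cosh\epsilon'{-}1$ and $\sinh\epsilon'$ gives
\[
\E_G\!\Bigl[\exp\tfrac{\epsilon'}{4}(q(G,\hat x)-q(G,x))\Bigr] \leq \exp\!\Bigl(\tfrac{d\,h(n-h)}{n}[(\cosh\epsilon'{-}1)-\gamma\sinh\epsilon'] + \tfrac{d\sigma^2}{n}[\sinh\epsilon'-\epsilon'-\gamma(\cosh\epsilon'{-}1)]\Bigr).
\]
The key algebraic point is that the centering converts the coefficient of $\sigma^2$ into $\sinh\epsilon'-\epsilon'-\gamma(\cosh\epsilon'{-}1)$, which is nonpositive for $\epsilon' \leq O(\gamma)$; in the same regime $(\cosh\epsilon'{-}1)-\gamma\sinh\epsilon' \leq -\gamma\epsilon'/4$. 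Hence the MGF is at most $\exp(-\gamma\epsilon' d h/8)$ for $h \leq n/2$. Crucially, this bound is uniform over all $\hat x$, balanced or not, which is what makes running the mechanism over the entire hypercube painless.

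The remainder is a union bound. Since $x$ is in the support of the mechanism, $\Pr[\hat x \text{ output} \mid G] \leq \exp(\epsilon'(q(G,\hat x)-q(G,x))/4)$. Taking expectation over $G$ and summing over $\hat x$ with $\text{err}(\hat x,x) > \zeta$ (a factor $2$ absorbs the $\pm x$ symmetry, and $h$ ranges over $[\zeta n, n/2]$):
\[
\Pr[\text{err}>\zeta] \;\leq\; 2\sum_{h\geq \zeta n}\binom{n}{h}\exp(-\gamma\epsilon' d h/8) \;\leq\; 2\sum_{h\geq\zeta n}\exp\!\bigl(-h[\gamma\epsilon' d/8-\log(e/\zeta)]\bigr),
\]
using $\binom{n}{h} \leq (en/h)^h$ and $\log(en/h) \leq \log(e/\zeta)$ in the relevant range. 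The assumption $\epsilon\gamma d \geq C\log(1/\zeta)$ (transferred to $\epsilon'$) with $C$ large enough makes the bracket $\geq \tfrac12\log(1/\zeta)$, so the geometric series is bounded by $\zeta$. For the high-probability variant, rerun with threshold $\eta = 1/\sqrt{\log(1/\zeta)}$: with $\epsilon'\gamma d = \Theta(\log(1/\zeta))$ each summand is $\leq \exp(-h \cdot \Omega(\log(1/\zeta)))$, totalling $\leq \exp(-\eta n \cdot \Omega(\log(1/\zeta))) = \exp(-n\sqrt{\log(1/\zeta)}) \leq e^{-n}$.

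The main obstacle will be the MGF algebra in the second paragraph: picking the centering in $q$ so that the coefficient of $\sigma^2$ is manifestly nonpositive without restricting to balanced $\hat x$, and tracking the $\cosh/\sinh$ Taylor expansion carefully enough that the constants survive into the union bound. Once the regime $\epsilon' = O(\gamma)$ --- which is exactly what the hypothesis $\zeta \geq e^{-\Omega(\gamma^2 d)}$ affords --- is secured, the rest is standard exponential-mechanism bookkeeping.
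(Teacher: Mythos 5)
Your proposal is correct, but it takes a genuinely different route from the paper. The paper (\cref{theorem:inefficient_sbm}, \cref{lemma:utility_inefficient_alg}) runs the exponential mechanism over $\{\pm1\}^n$ with score $s_G(x)=\iprod{x,Y(G)x}$ only to get a \emph{coarse} estimate with error $O(1/\sqrt{\log(1/\zeta)})$, proved via a Grothendieck/cut-norm concentration bound ($\max_x \abs{\iprod{x,[Y(\rv G)-\tfrac1n x^*(x^*)^\top]x}}\le 100n/(\gamma\sqrt d)$, \cref{fact:cut_norm_sbm}) plus a single crude union bound over $e^n$ candidates; it then splits the graph and boosts to error $\zeta$ via the private majority-voting routine \cref{algorithm:private_majority_voting} (\cref{lem:utility_proof_voting}). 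You instead run one exponential mechanism on the whole graph and prove a \emph{per-candidate} Chernoff-style tail bound $\E_G\exp\bigl(\tfrac{\e'}{4}(q(G,\hat x)-q(G,x))\bigr)\le\exp(-\gamma\e'dh/8)$, then sum over distance-$h$ shells. Your two devices are what make this single-shot analysis go through where the paper's does not: (i) the deterministic centering $-\tfrac dn\hat x^\top(J-I)\hat x$ turns the coefficient of the imbalance $\sigma^2=(h_--h_+)^2$ into $\sinh\e'-\e'-\gamma(\cosh\e'-1)$, which is nonpositive once $\e'\lesssim\gamma$, so the bound holds uniformly over unbalanced $\hat x$ and the union bound over all of $\{\pm1\}^n$ is painless; and (ii) the truncation $\e'=\min(\e,c\gamma)$ keeps the $\cosh/\sinh$ Taylor algebra in the regime where the leading coefficient is $\le-\gamma\e'/4$, with the hypothesis $\zeta\ge e^{-\Omega(\gamma^2 d)}$ exactly compensating for the truncation so that $\e'\gamma d\ge\Omega(\log(1/\zeta))$ survives. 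The trade-off: your route is shorter and self-contained and avoids the paper's extra technical constraint (hidden in a footnote of \cref{theorem:inefficient_sbm}) that $\zeta\le\exp(-640)$ for the fine error guarantee; the paper's route is more modular, reusing the same boosting subroutine shared with the efficient $(\e,\delta)$-DP algorithm of \cref{section:sbm-private-exact-recovery}, and avoids the MGF bookkeeping. Both arrive at the same privacy/utility bounds up to constants, and both give the $e^{-n}$-failure variant at error $\Theta(1/\sqrt{\log(1/\zeta)})$ as the easy byproduct of the respective union bounds.

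One small point worth flagging in your write-up: for the final geometric series $\sum_{h>\zeta n}\zeta^{\Theta(h)}$ to actually sum to $\le\zeta$ (rather than $\le\sqrt\zeta$) when $\zeta$ is close to $1/n$, you need the bracket $\gamma\e'd/8-\log(e/\zeta)$ to be at least $\approx 2\log(1/\zeta)$ rather than $\tfrac12\log(1/\zeta)$; this is a free adjustment of the constant $C$ in $\e\gamma d\ge C\log(1/\zeta)$ but should be stated explicitly so the $\zeta$-failure bound does not silently degrade to $O(\sqrt\zeta)$.
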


A couple of remarks are in order.
First, our algorithm works across all degree-regimes in the literature and matches known non-private thresholds and rates up to constants.\footnote{For ease of exposition we did not try to optimize these constants.}
In particular, for $\gamma^2 d = \Theta(1)$,  we achieve weak/partial recovery with either constant or exponentially high success probability.
Recall that the optimal non-private threshold is $\gamma^2 d > 1$.
For the regime, where $\gamma^2 d = \omega(1)$, it is known that the optimal error rate is $\exp\paren{-(1-o(1))\gamma^2d}$ \cite{zhang2016minimax} even non-privately which we match up to constants - here $o(1)$ denotes a function that tends to zero as $\gamma^2 d$ tends to infinity.
Moreover, our algorithm achieves exact recovery as soon as $\gamma^2 d = \Omega\paren{\log n}$ since then $\zeta < \tfrac 1 n$.
This also matches known non-private threshholds up to constants \cite{abbe2015exact, mossel2015consistency}.
We remark that \cite{MohamedNVT22} gave an $\e$-DP exponential time algorithm which achieved exact recovery and has inverse polynomial success probability in the utility case as long as $\e \geq \Omega\paren{\tfrac{\log n}{\gamma d}}$.
We recover this result as a special case (with slightly worse constants).
In fact, their algorithm is also based on the exponential mechanism, but their analysis only applies to the setting of exact recovery, while our result holds much more generally.
Another crucial difference is that we show how to privatize a known boosting technique frequently used in the non-private setting, allowing us to achieve error guarantees which are optimal up to constant factors.

It is natural to ask whether, for a given set of parameters $\gamma, d, \zeta$ one can obtain better privacy guarantees than \cref{theorem:inefficient_sbm_abridged}.
Our next result implies that our algorithmic guarantees are almost tight.

\begin{theorem}[Informal, see \cref{thm:lower_bound_private_sbm} for full version]\label{theorem:main-sbm-private-recovery-lower-bound}
	Suppose there exists an $\eps$-differentially edge private algorithm such that for any balanced ${ x\in\sbits^n }$, on input ${ \rv{G} \sim \SBM_n(d,\gamma,x) }$, outputs ${ \hat{x}(\rv{G}) \in \sbits^n }$ satisfying
  \[
    \Pr\Paren{\err(\hat{x}(\rv{G}), x) < \zeta } \geq 1-\eta \,.
  \]
  Then,
  \begin{equation}
    \eps \geq 
    \Omega\Paren{ \frac{\log(1/\zeta)}{\gamma d} + \frac{\log(1/\eta)}{\zeta n \gamma d} } \,.
  \end{equation}
\end{theorem}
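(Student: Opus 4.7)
The strategy is to combine a natural coupling between SBM distributions with slightly different label vectors with the group-privacy consequence of $\eps$-DP. The key coupling lemma states: for balanced $x, x' \in \sbits^n$ with $\Ham(x, x') = 2k$, letting $S \subseteq [n]$ be the $2k$ mismatched coordinates, one can couple $\rv G \sim \SBM_n(d, \gamma, x)$ and $\rv{G}' \sim \SBM_n(d, \gamma, x')$ edge-by-edge: identical Bernoullis for edges with both endpoints outside $S$, and the maximal coupling of the two Bernoullis of parameters $(1\pm\gamma)d/n$ for any edge with at least one endpoint in $S$. Since each ``mismatched'' edge disagrees with probability at most $2\gamma d/n$ and there are at most $2kn$ such edges, $\|\rv G - \rv{G}'\|_0$ has expectation $O(k\gamma d)$ and by Chernoff exceeds $M_k := C k\gamma d$ with probability at most $\exp(-\Omega(k\gamma d))$. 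The two summands in the lower bound arise from two different choices of ``neighbour'' $x'$ plugged into this coupling.

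\emph{Second term.} Let $x'$ be obtained from $x$ by swapping $k := \lceil \zeta n \rceil + 1$ pairs of opposite labels. Then $\Ham(x, x') = 2k > 2\zeta n$ and, using balance and (WLOG) $\zeta < 1/4$, also $\Ham(x, -x') = n - 2k > 2\zeta n$. Hence the success balls $B_y := \set{\hat y : \err(\hat y, y) \leq \zeta}$ are disjoint. Correctness then gives $\Pr_{\SBM(x')}[A \in B_{x'}] \geq 1 - \eta$ while $\Pr_{\SBM(x)}[A \in B_{x'}] \leq \eta$ (since $A \in B_{x'}$ precludes $A \in B_x$). Group privacy applied along the coupling, with $M_k = O(\zeta n \gamma d)$ and coupling-failure probability $O(\eta)$, then yields $\eta \geq e^{-M_k\eps}(1 - O(\eta))$, which rearranges to $\eps \geq \Omega(\log(1/\eta)/(\zeta n \gamma d))$.

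\emph{First term.} Now take $x'$ to differ from $x$ by a single swap at a pair $(u, v)$ with $x_u = +1, x_v = -1$, so $\Ham(x, x') = 2$ and $\|\rv G - \rv{G}'\|_0 \leq M_1 = O(\gamma d)$ with high probability. Here $B_x$ and $B_{x'}$ overlap, so I instead use the $\pm$-invariant ``pairwise correlation'' event $F := \set{\hat x_u \hat x_w = +1}$ for an auxiliary random witness $w \neq u, v$ with $x_w = +1$. Under $\SBM(x)$, $x_u x_w = +1$, so any $\hat x$ with $\err(\hat x,x) \leq \zeta$ --- aligned with either $x$ or $-x$ --- satisfies $\hat x_u = \hat x_w$ for at least a $(1-2\zeta)^2$-fraction of random pairs $(u, w)$; hence $\Pr_{\SBM(x)}[F] \geq 1 - O(\zeta + \eta)$. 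Under $\SBM(x'_{u, v})$, however, $x'_u x'_w = -1$, so the same averaging gives $\Pr_{\SBM(x'_{u, v})}[F] \leq O(\zeta + \eta)$. Averaging over random swap choices and applying group privacy yields $1 - O(\zeta + \eta) \leq e^{M_1 \eps} \cdot O(\zeta + \eta)$, and rearranging gives $\eps \geq \Omega(\log(1/(\zeta+\eta))/(\gamma d))$, which subsumes the first term of the bound.

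\emph{Main obstacle.} The central subtlety is the $\pm$-ambiguity of $\err$: because both $x$ and $-x$ are valid outputs, any distinguishing test must be invariant under $\hat x \mapsto -\hat x$. For the second term this is handled by enlarging $k$ so that both $\Ham(x, x')$ and $\Ham(x, -x')$ exceed $2\zeta n$, which is possible by balance. For the first term, where the two label vectors differ in only two positions, no single-coordinate $\pm$-invariant test will distinguish them; the key idea is the pairwise product $\hat x_u \hat x_w$, which is $\pm$-invariant yet whose ``correct'' sign (given by $x_u x_w$) flips between the two hypotheses precisely because the witness $w$ lies outside the swap. The remaining details (Chernoff concentration of the coupling, averaging over random coordinate choices, and handling the small probability that the random witness hits $u$) are routine.
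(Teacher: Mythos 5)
Your approach is genuinely different from the paper's: you set up two separate two‑hypothesis tests (one for each summand) and argue by coupling plus group privacy, whereas the paper's proof of \cref{thm:lower_bound_private_sbm} is a single packing argument. It constructs a maximal $2\zeta$-packing $\{x^1,\dots,x^m\}$ of $B_{\err}(x,4\zeta)$ with $m\geq \tfrac12(1/(8e\zeta))^{2\zeta n}$, observes that the success regions $Y_i=\{w:\err(w,x^i)\leq\zeta\}$ are pairwise disjoint, and combines the success guarantee at each $x^i$ with group privacy (applied through the exact MGF of the Binomial Hamming distance of the coupling, not a Chernoff tail, together with Cauchy--Schwarz) and the pigeonhole fact $\sum_i\Pr_{P_1}[\hat x\in Y_i]\leq1$. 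Both terms of the lower bound fall out of a single inequality: the $\log(1/\zeta)$ factor comes from $\log m$, and the $\log(1/\eta)$ factor from the $\eta$ on the right-hand side.

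Your second‑term argument is essentially a two-hypothesis specialization of that and is sound in spirit; the only nitpick is that the paper bypasses the Chernoff concentration step by working with the Binomial MGF directly, which is cleaner and does not require $k\gamma d$ to be large.

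Your first‑term argument, however, has a genuine gap, in fact two.

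First, the proposed test $F=\{\hat x_u\hat x_w=+1\}$ does not actually separate $\SBM_n(d,\gamma,x)$ from $\SBM_n(d,\gamma,x'_{u,v})$. Because $\zeta\geq1/n$, the algorithm is allowed one arbitrary mistake. Consider the following adversarial (but still $\eps$-DP, by post-processing) algorithm: hard-code the reference labeling $x$; on input $G$, run a good recovery to obtain $\tilde x$; if there is exactly one coordinate $\hat u$ with $x_{\hat u}=+1$ but $\tilde x_{\hat u}=-1$, flip it and output the result, otherwise output $\tilde x$. On $\rv G\sim\SBM(x)$, nothing is flipped and $\hat x_u\hat x_w=+1$. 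On $\rv G\sim\SBM(x'_{u,v})$, the recovered $\tilde x\approx x'$ differs from $x$ exactly at $u$ (among $+$-labeled coordinates), so the algorithm flips $u$ back to $+1$, giving $\hat x_u\hat x_w=+1\cdot x_w=+1$ again. The test is identically $+1$ under both hypotheses, so it has zero distinguishing power. The deeper issue is that the coordinate $u$ is determined by the pair $(x,x')$, and an algorithm that knows $x$ can always spend its $\zeta n$-coordinate error budget on exactly that coordinate.

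Second, even granting a working test, a two-hypothesis argument at Hamming distance 2 cannot yield the required $\log(1/\zeta)$ factor: you explicitly get $\eps\geq\Omega(\log(1/(\zeta+\eta))/(\gamma d))$, which does not subsume $\eps\geq\Omega(\log(1/\zeta)/(\gamma d))$ when $\eta\gg\zeta$. For instance, in the exact-recovery regime $\zeta=1/n$ with constant failure probability $\eta=1/4$, the theorem demands $\eps\geq\Omega(\log n/(\gamma d))$, while your bound gives only $\eps\geq\Omega(1/(\gamma d))$. The $\log(1/\zeta)$ factor intrinsically comes from the multiplicity of the packing (there are $\approx(1/\zeta)^{\zeta n}$ pairwise $2\zeta$-separated candidate labelings within distance $4\zeta$ of $x$), not from the separation quality of any single two-point test; this is why the paper's proof is built around that packing.
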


This lower bound is tight for $\eps$-DP exact recovery.
By setting $\zeta = 1/n$ and $\eta=1/\poly(n)$, \cref{theorem:main-sbm-private-recovery-lower-bound} implies no $\eps$-DP exact recovery algorithm exists for $\eps\leq O(\frac{\log n}{\gamma d})$. 
There exist $\eps$-DP algorithms (\cref{algorithm:inefficient_SBM} and the algorithm in \cite{MohamedNVT22}) exactly recover the community for any $\eps\geq\Omega(\frac{\log n}{\gamma d})$.

Notice \cref{theorem:main-sbm-private-recovery-lower-bound} is a lower bound for a large range of error rates (partial to exact recovery).
For failure probability $\eta = \zeta$, the lower bound simplifies to $\e \geq \Omega\paren{\frac{\log(1/\zeta)}{\gamma d}}$ and hence matches \cref{theorem:inefficient_sbm_abridged} up to constants.
For exponentially small failure probability, $\eta = e^{-n}$, it becomes $\e \geq \Omega\paren{\frac{1}{\zeta \gamma d}}$.
To compare, \cref{theorem:inefficient_sbm_abridged} requires $\e \geq \Omega\paren{\frac{1}{\zeta^2 \gamma d}}$ in this regime, using the substitution $\sqrt{\log\paren{1/\zeta}} \rightarrow \zeta$.

Further, while formally incomparable, this $\eps$-DP lower bound also suggests that the guarantees obtained by our efficient $(\eps,\delta)$-DP algorithm in \cref{theorem:main-sbm-private-exact-recovery} might be close to optimal.
Note that setting $\zeta = \tfrac 1 n$ in \cref{theorem:main-sbm-private-recovery-lower-bound} requires the algorithm to exactly recover the partitioning.
In this setting, \cref{theorem:main-sbm-private-exact-recovery} implies that there is an efficient $(\e,n^{-\Theta(1)})$-DP exact recovery algorithm for $\epsilon\leq O\paren{\sqrt{\tfrac{\log n}{\gamma d}}}$.
\cref{theorem:main-sbm-private-recovery-lower-bound} states any $\eps$-DP exact recovery algorithm requires $\epsilon\geq \Omega\paren{\tfrac{\log n}{\gamma d}}$.
Further, for standard privacy parameters that are required for real-world applications, such as $\epsilon\approx 1$ and $\delta=n^{-10}$, \cref{theorem:main-sbm-private-exact-recovery} requires that $\gamma d \geq \Omega(\log n)$.
\cref{theorem:main-sbm-private-recovery-lower-bound} shows that for pure-DP algorithms with the same setting of $\epsilon$ this is also necessary.
We leave it as fascinating open questions to bridge the gap between upper and lower bounds in the context of $(\eps,\delta)$-DP.

\paragraph{Learning mixtures of spherical Gaussians} Our algorithm for privately learning mixtures of $k$ spherical Gaussians provides statistical guarantees matching those of the best known non-private algorithms.

\begin{theorem}[Privately learning  mixtures of spherical Gaussians]\label{theorem:main-clustering}
	Consider an instance of \cref{model:gaussian-mixture-model}. Let $t>0$ be such that $\Delta \geq O\Paren{\sqrt{t}k^{1/t}}$. For $n \geq \Omega\Paren{k^{O(1)}\cdot d^{O(t)} }\,, k\geq (\log n)^{1/5}\,,$ there exists an algorithm, running in time $(nd)^{O(t)}$,
	that outputs vectors $\hat{\bm \mu}_1,\ldots,\hat{\bm \mu}_k$ satisfying
	\begin{align*}
		\max_{\ell\in [k]}\Normt{\hat{\bm \mu}_\ell-\mu_{\pi(\ell)}}\leq O(k^{-12})\,,
	\end{align*}
	with high probability, for some permutation $\pi:[k]\rightarrow [k]\,.$
	Moreover, for $\eps \geq k^{-10}\,, \delta\geq n^{-10}\,,$ the algorithm is $(\eps, \delta)$-differentially private\footnote{Two input datasets are adjacent if they differ on a single sample. See \cref{definition:clustering-adjacent-datasets}.} for any input $Y$.
\end{theorem}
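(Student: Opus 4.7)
The plan is to apply the paper's robustness-to-privacy framework on top of one of the known sum-of-squares clustering algorithms (e.g.\ \cite{Hopkins018, KothariSS18, steurer2021sos}) that already attain recovery at separation $\Delta \geq O(\sqrt{t}k^{1/t})$. Such an algorithm proceeds in two stages: first it solves a convex relaxation returning pseudo-moments of the unknown cluster assignment given the sample matrix $Y$, and then it applies an input-independent rounding procedure to extract estimates of the means. Since the rounding is a deterministic function of the pseudo-moments, the task reduces to releasing a differentially private version of the first-stage convex program's minimizer whose deviation from the true optimum is smaller than the accuracy the non-private pseudo-moments already have.

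Concretely I would proceed as follows. (i) Add a small quadratic regularizer to the SoS relaxation so that its objective is strongly convex, and make the dependence of both the objective and the feasible region on $Y$ explicit. (ii) Invoke the central observation of the paper --- that minimizers of pointwise-close strongly convex programs over pointwise-close convex sets are close in $\ell_2$ --- to bound the $\ell_2$-sensitivity $S$ of the optimum $M(Y)$ when $Y$ is replaced by an adjacent $Y'$ differing in a single sample. (iii) Release $M(Y)+\rv{Z}$, where $\rv{Z}$ is a Gaussian matrix whose variance is calibrated to $S$ and to $(\eps,\delta)$; by the Gaussian mechanism this is $(\eps,\delta)$-differentially private. (iv) Run the non-private rounding on the noisy pseudo-moments: provided the sample budget $n \geq k^{O(1)} d^{O(t)}$ drives the per-entry accuracy of the pseudo-moments below the noise magnitude (which scales as $S\cdot \sqrt{\log(1/\delta)}/\eps$, well controlled for $\eps \geq k^{-10}$ and $\delta \geq n^{-10}$), the rounding analysis carries over essentially verbatim and delivers $\max_\ell \normt{\hat{\bm \mu}_\ell - \mu_{\pi(\ell)}} \leq O(k^{-12})$ for some permutation $\pi$.

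The main obstacle will be the worst-case sensitivity bound in step (ii). Differential privacy must hold for \emph{every} pair of adjacent datasets, so a single adversarially placed sample could, a priori, shift the empirical second moments and therefore the SoS feasibility set by an unbounded amount in Frobenius norm, breaking the pointwise-closeness assumption of the framework. I would address this via a preprocessing phase that truncates each sample to a ball of radius $\poly(d,k,t)$ around a privately estimated center (analogous to the preparations used in the private moment-estimation works \cite{KothariMV22, hopkins2022efficient}), and then bakes this truncation into the pseudo-distribution constraints. The result is that changing one (post-truncation) sample perturbs both the SoS objective and the feasible region by $\poly(d,k,t)$, so that $S$ is polynomial in $d,k,t$ and comfortably absorbed by the sample-complexity budget. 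A secondary issue --- ensuring the noisy rounded candidates can be unambiguously matched with the true centers --- is handled by the separation $\Delta \geq \Omega(\sqrt{t}k^{1/t})$, which is much larger than the $O(k^{-12})$ rounding error.
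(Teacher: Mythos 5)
Your high-level template (strongly convex SoS program $\to$ sensitivity bound of \cref{lemma:key-structure-lemma} $\to$ Gaussian mechanism $\to$ rounding) is the right skeleton, but the two concrete ideas you supply to make it work are not the ones the paper uses, and each has a serious problem.

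First, your preprocessing step --- truncating every sample to a ball of radius $\poly(d,k,t)$ around a \emph{privately estimated} center --- is circular and also defeats the purpose of the theorem. Privately estimating such a center before clustering is essentially the problem being solved, and more importantly, the paper stresses (see the discussion after \cref{theorem:main-clustering}) that its contribution over \cite{kamath2019differentially,cohen2021differentially} is precisely that the sample complexity does \emph{not} depend on an a priori radius $R$ bounding the means. A truncation phase reintroduces exactly this dependence. The paper instead sidesteps unboundedness by optimizing over the $n\times n$ \emph{membership matrix} $W$ (with $W_{ij}\in[0,1]$), which is bounded independent of $Y$; and it makes the program always feasible on every $Y$ (hence avoiding a $Y$-dependent constraint set) by replacing the cluster-size equality $\sum_i z_{i\ell}=n/k$ with the inequality $\sum_i z_{i\ell}\le n/k$ and pushing the missing lower bound into the strongly convex objective $g(W)=\Normf{W}^2 - (10)^{10}k^{300}\iprod{J,W}$. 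This yields $\Normf{\hat W - \hat W'}^2 \le O(n/k)$ on adjacent inputs via \cref{lemma:key-structure-lemma} without ever needing to bound the data.

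Second, and more fatally, step (iv) of your proposal --- ``run the non-private rounding on the noisy pseudo-moments; the rounding analysis carries over essentially verbatim'' --- is explicitly the step the paper argues does \emph{not} work (\cref{section:techniques}, ``From low sensitivity of the indicators to low sensitivity of the estimates''). Adding Gaussian noise to $\hat W$ destroys the subgaussian moment structure that the rounding's utility proof relies on: there exist heavy-tailed distributions arbitrarily close in TV to any subgaussian one, so $\hat W_i + v$ need not induce a subgaussian distribution even when $v$ is tiny. Moreover, a Frobenius-small change in $W$ does not give a bounded change in the derived mean estimates $\nu^{(i)} = \Normo{\phi(W_i)}^{-1}\sum_j \phi(W_{ij}) y_j$, because the $y_j$ themselves are unbounded: a few rows can move arbitrarily far. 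The paper's actual argument (\cref{lemma:stability-nu-vectors}, \cref{algorithm:learning-mixtures-gaussians}) is therefore more layered: it shows that all but $O(n/k^{50})$ rows have $\Normt{\nu^{(i)}_Y-\nu^{(i)}_{Y'}}\le O(n^{-0.1})$; it subsamples $n^{0.01}$ rows and applies the Gaussian mechanism to the ``good'' rows' $\nu^{(i)}$; and it hides the unbounded ``bad'' rows with a private high-dimensional histogram learner (\cref{lemma:private-histogram-learner-high-dimension}) whose privacy guarantee it stretches over the whole bad set using the \cref{lemma:privatizing-input} grouping argument. None of these ingredients appear in your proposal, and without them the privacy case where a single changed sample causes a few rows to move arbitrarily is unhandled.
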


The conditions $\eps \geq k^{-10}\,, \delta\geq n^{-10}$ in \cref{theorem:main-clustering} are not restrictive and should be considered a formality. 
Moreover, setting $\eps=0.01$ and $\delta=n^{-10}$ already provides meaningful privacy guarantees in practice.
The condition that $k\geq (\log n)^{1/5}$ is a technical requirement by our proofs.

Prior to this work, known differentially private algorithms could learn a mixture of $k$-spherical Gaussian either if: (1) they were given a ball of radius $R$ containing all centers \cite{kamath2019differentially,cohen2021differentially};\footnote{In \cite{kamath2019differentially,cohen2021differentially} the sample complexity of the algorithm depends on this radius $R$.} or (2) the minimum separation between centers needs an additional additive $\Omega(\sqrt{\log n})$ term \cite{cohen2021differentially,tsfadia2022friendlycore}\footnote{For $k\leq n^{o(1)}$ our algorithm provides a significant improvement as $\sqrt{\log k}=o(\sqrt{\log n})$.}.
To the best of our knowledge, \cref{theorem:main-clustering} is the first to get the best of both worlds. That is, our algorithm requires no explicit upper bounds on the means (this also means the sample complexity does not depend on $R$) and only minimal separation assumption $O(\sqrt{\log k}).$
Furthermore, we remark that while previous results only focused on mixtures of Gaussians, our algorithm also works for the significantly more general class of mixtures of Poincar\'e distributions.
Concretely, in the regime $k\geq \sqrt{\log d}$, our algorithm recovers the state-of-the-art guarantees provided by non-private algorithms which are based on the sum-of-squares hierarchy \cite{KothariSS18, Hopkins018, steurer2021sos}:\footnote{We remark that \cite{liu2022clustering} give a polynomial time algorithm for separation $\Omega(\log(k)^{1/2 + c})$ for constant $c > 0$ in the non-private setting but for a less general class of mixture distributions.}
\begin{itemize}
	\item If $\Delta \geq k^{1/t^*}$ for some constant $t^*$, then by choosing $t \geq \Omega(t^*)$ the algorithm recovers the centers, up to a $1/\poly(k)$ error, in time $\poly(k,d)$ and using only $\poly(k, d)$ samples.
	\item If $\Delta \geq \Omega(\sqrt{\log k})$ then choosing $t=O(\log k)$ the algorithm recovers the centers, up to a $1/\poly(k)$ error, in quasi-polynomial time $\poly(k^{O(t)}, d^{O(t^2)})$ and using a quasi-polynomial number of samples $\poly(k, d^{O(t)})\,.$
\end{itemize}
For simplicity of exposition we will limit the presentation to mixtures of spherical Gaussians.
We reiterate that separation $\Omega(\sqrt{\log k})$ is  information-theoretically necessary for algorithms with polynomial sample complexity  \cite{regev2017learning}.

Subsequently and independently of our work, the work of \cite{arbas2023polynomial} gives an algorithm that turns any non-private GMM learner into a private one based on the subsample and aggregate framework.
They apply this reduction to the classical result of \cite{moitra2010settling} to give the first finite-sample $(\e,\delta)$-DP algorithm that learns mixtures of unbounded Gaussians, in particular, the covariance matrices of their mixture components can be arbitrary.

\section{Techniques}%
\label{section:techniques}

We present here our general tools for designing efficient private estimation algorithms in the high-dimensional setting whose statistical guarantees almost match those of the best know non-private algorithms.
The algorithms we design have the following structure in common:
First, we solve a convex optimization problem with constraints and objective function depending on our input \(Y\).
Second, we round the optimal solution computed in the first step to a solution \(X\) for the statistical estimation problem at hand.

We organize our privacy analyses according to this structure.
In order to analyze the first step, we prove a simple sensitivity bound for strongly convex optimization problems, which bounds the \(\ell_2\)-sensitivity of the optimal solution in terms of a uniform sensitivity bound for the objective function and the feasible region of the optimization problem.

For bounded problems --such as recovery of stochastic block models-- we use this sensitivity bound, in the second step, to show that introducing small additive noise to  standard rounding algorithms is enough to achieve privacy. %

For unbounded problems --such as learning GMMs--  we use this sensitivity bound to show that on adjacent inputs, either most entries of $X$  only change slightly, as in the bounded case, or few entries vary significantly.  We then combine different privacy techniques to hide both type of changes.

\paragraph{Privacy from  sensitivity of strongly convex optimization problems} 
Before illustrating our techniques with some examples, it is instructive to explicit our framework.
Here we have a set of inputs $\cY$ and a family of strongly convex functions $\cF(\cY)$ and convex sets $\cK(\cY)$ parametrized by these inputs.
The generic \textit{non-private} algorithm based on convex optimization we consider works as follows: 
\begin{enumerate}
	\item Compute $\hat{X}:=\argmin_{X\in \cK(Y)} f_Y(X)\,;$
	\item Round $\hat{X}$ into an integral solution.
\end{enumerate}
For an estimation problem, a distributional assumption on $\cY$ is made. Then one shows how, for typical inputs $\mathbf{Y}$ sampled according to that distribution, the above scheme recovers the desired structured information.

We can provide a privatized version of this scheme by arguing that, under reasonable assumptions on $\cF(\cY)$ and $\cK(\cY)$, the output of  the function $\argmin_{X\in \cK(Y)} f_Y(X)$ has  low $\ell_2$-sensitivity. The consequence of this crucial observation is that one can combine the rounding step 2 with some standard privacy mechanism and achieve differential privacy. That is, the second step becomes:
\begin{enumerate}
	\item[2.] Add random noise $\mathbf{N}$ and round $\hat{X}+\mathbf{N}$ into an integral solution.
\end{enumerate}

Our  sensitivity bound is simple, yet it generalizes previously known bounds for strongly convex optimization problems (we provide a detailed comparison later in the section). For adjacent $Y, Y'\in \cY\,,$ it requires the following properties:
\begin{enumerate}
	\item[\textit{(i)}] For each $X\in \cK(Y)\cap \cK(Y')$ it holds $\abs{f_Y(X)-f_{Y'}(X)}\leq \alpha$;
	\item[\textit{(ii)}] For each $X\in \cK(Y)$ its projection $Z$ onto $\cK(Y)\cap \cK(Y')$ satisfies $\abs{f_Y(X)-f_{Y'}(Z)}\leq \alpha\,.$
\end{enumerate}
Here we think of $\alpha$ as some small quantity (relatively to the problem parameters). Notice, we may think of \textit{(i)} as Lipschitz-continuity of the function $g(Y, X)=f_Y(X)$ with respect to $Y$ and of \textit{(ii)} as a bound on the change of the constrained set on adjacent inputs.
In fact, these assumptions are enough to conclude low $\ell_2$ sensitivity. Let $\hat{X}$ and $\hat{X}'$ be the outputs of the first step on inputs $Y,Y'$.
Then using (i) and (ii) above and the fact that $\hat{X}$ is an optimizer, we can show that there exists $Z\in  \cK(Y)\cap \cK(Y')$ such that 
\begin{align*}
	\abs{f_Y(\hat{X})-f_{Y}(Z)}+\abs{f_{Y'}(\hat{X}')-f_{Y'}(Z)}\leq O(\alpha)\,.
\end{align*}
By $\kappa$-strong convexity of $f_Y\,,f_{Y'}$ this implies 
\begin{align*}
	\Snorm{\hat{X}-Z}_2+\Snorm{\hat{X}'-Z}_2\leq O(\alpha/\kappa)
\end{align*}
which ultimately means $\normt{\hat{X}-\hat{X}'}^2\leq O(\alpha/\kappa)$ (see \cref{lemma:key-structure-lemma}).
Thus, starting from our assumptions on the point-wise distance of $f_Y\,, f_{Y'}$ we were able to conclude low $\ell_2$-sensitivity of our output!

\paragraph{A simple application: weak recovery of stochastic block models}
The ideas introduced above, combined with existing algorithms for weak recovery of stochastic block models, immediately imply a private algorithm for the problem.
To illustrate this, consider \cref{model:sbm} with parameters $\gamma^2 d\geq C$, for some large enough constant $C>1$.
Let $x\in \Set{\pm 1}^n$ be balanced. Here $Y$ is an $n$-by-$n$ matrix corresponding to the rescaled centered adjacency matrix of the input graph: 
\[
  Y_\ij =
  \begin{cases}
    \frac{1}{\gamma d}\Paren{1-\frac{d}{n}}&\text{if $\ij\in E(G)$}\\
    -\frac{1}{\gamma n}&\text{otherwise.}
  \end{cases}
\]
The basic semidefinite program \cite{MR3520025-Guedon16, montanari2016semidefinite} can be recast\footnote{The objective function in \cite{MR3520025-Guedon16, montanari2016semidefinite} is linear in $X$ instead of quadratic. However, both programs have similar utility guarantees and the utility proof of our program is an adaption of that in \cite{MR3520025-Guedon16} (see \cref{lem:utility_priavte_weak_recovery}). We use the quadratic objective function to achieve privacy via strong convexity.} as the strongly convex constrained optimization question of finding the orthogonal projection of the matrix $Y$ onto the set $\cK:=\Set{X\in \R^{n\times n}\suchthat X\sge \mathbf{0}\,, X_{ii}=\frac{1}{n}\,\, \forall i}\,.$ That is
\begin{align*}
	\hat{X} := \argmin_{X\in \cK} \Normf{Y-X}^2\,.
\end{align*}
Let $f_Y(X) := \Normf{X}^2 - 2\iprod{X,Y}$ and notice that $\hat{X} = \argmin_{X\in \cK} f_Y(X)$.
It is a standard fact that, if our input was $\mathbf{G}\sim\sbm$, then with high probability $X(\mathbf{G})=\argmin_{X\in \cK}f_{Y(\mathbf{G})}(X)$ would have leading eigenvalue-eigenvector pair satisfying
\begin{equation*}
	\lambda_1(\mathbf{G}) \geq 1-O(1/\gamma^2 d) 
	\quad\text{and}\quad
	\iprod{v_1(\mathbf{G}), x/\Norm{x}}^2 \geq 1- O\Paren{1/\gamma^2 d}\,.
\end{equation*}
This problem fits perfectly the description of the previous paragraph.
Note that since the constraint set does not depend on $Y$, Property (ii) reduces to Property (i).
Thus, it stands to reason that the projections $\hat{X},\hat{X}'$ of $Y,Y'$ are close whenever the input graphs generating $Y$ and $Y'$ are adjacent.
By H\"older's Inequality with the entry-wise infinity and $\ell_1$-norm, we obtain $\abs{f_Y(X) - f_{Y'}(X)} \leq 2 \Norm{X}_\infty \Norm{Y - Y'}_1$.
By standard facts about positive semidefinite matrices, we have $\Norm{X}_\infty \leq \tfrac 1 n$ for all $X \in \cK$.
Also, $Y$ and $Y'$ can differ on at most 2 entries and hence $\Norm{Y - Y'}_1 \leq O(\tfrac 1 {\gamma d})$.
Thus, $\normf{\hat{X}-\hat{X}'}^2\leq O(\tfrac 1 {n\gamma d})$.

The rounding step is now straightforward.
Using the Gaussian mechanism we return the leading eigenvector of $ \hat{X}+\mathbf{N}$ where $\mathbf{N}\sim N\paren{0, \tfrac{1}{n\gamma d}\cdot \tfrac{\log(1/\delta)}{\eps^2}}^{n\times n}$. This matrix has Frobeinus norm significantly larger than $\hat{X}$  but its spectral norm is only
\begin{align*}
	\Norm{\mathbf{N}}\leq \frac{\sqrt{n\log(1/\delta)}}{\eps}\cdot \sqrt{\tfrac{1}{n\gamma d}}\leq \frac{1}{\eps}\cdot \sqrt{\frac{\log(1/\delta)}{\gamma d}}\,.
\end{align*}
Thus by standard linear algebra, for typical instances $ \mathbf{G}\sim\sbm$, the leading eigenvector of $\hat{X}(\mathbf{G})+\mathbf{N}$ will be highly correlated with the true community vector $x$ whenever  the average degree $d$ is large enough. In conclusion, a simple randomized rounding step is enough!

\begin{remark}[From weak recovery to exact recovery]
	In the non-private setting, given a weak recovery algorithm for the stochastic block model, one can use this as an initial estimate for a boosting procedure based on majority voting to achieve exact recovery.
	We show that this can be done privately.
	See \cref{section:sbm-private-exact-recovery}.
\end{remark}

\paragraph{An advanced application: learning mixtures of Gaussians}
In the context of SBMs our argument greatly benefited from two key properties: first, on adjacent inputs $Y-Y'$ was bounded in an appropriate norm; and second, the convex set $\cK$ was fixed.
In the context of  learning mixtures of spherical Gaussians  as in \cref{model:gaussian-mixture-model}, \textit{both} this properties are \textit{not} satisfied (notice how one of this second properties would be satisfied assuming bounded centers!). So additional ideas are required.

The first observation, useful to overcome the first obstacle, is that before finding the centers, one can  first  find the $n$-by-$n$ membership matrix $W(Y)$ where $W(Y)_\ij=1$ if $i,j$ where sampled from the same mixture component and $0$ otherwise.
The advantage here is that, on adjacent inputs, $\Normf{W(Y)-W(Y')}^2\leq 2n/k$  and thus one recovers the first property.\footnote{Notice for typical inputs $\mathbf{Y}$ from \cref{model:gaussian-mixture-model} one expect $\Normf{W(\mathbf{Y})}^2 \approx n^2/k\,.$}
Here early sum-of-squares algorithms for the problem \cite{Hopkins018, KothariSS18} turns out to be convenient as they rely on minimizing the function $\Normf{W}^2$ subject to the following system of polynomial inequalities in variables $z_{11}\,, \ldots\,, z_{1k}\,, \ldots\,, z_{nk}$, with $W_\ij = \sum_\ell z_{i\ell}z_{j\ell}$ for all $i,j \in [n]$ and a parameter $t>0$. 
\begin{align}\label{equation:techniques-program}
	\Set{
		\begin{aligned}
			&z_{i\ell}^2 = z_{i\ell}&\forall i \in [n]\,, \ell\in[k]\quad \text{(indicators)}\\
			&\sum_{\ell \in [k]} z_{i\ell} \leq 1&\forall i \in [n]\quad \text{(cluster membership)}\\
			&z_{i\ell}\cdot z_{i\ell'} = 0&\forall i \in [n]\,, \ell\in[k]\quad \text{(unique membership)}\\
			&\sum_i z_{i\ell }= n/k&\forall \ell\in[k]\quad \text{(size of clusters)\footnotemark}\\
			&\mu'_\ell = \frac{k}{n}\sum_i z_{i\ell}\cdot  y_i &\forall \ell \in[k]\quad\text{(means of clusters)}\\
			&\frac{k}{n}\sum_i z_{i\ell}\iprod{y_i-\mu'_\ell, u}^{2t}\leq (2t)^{t}\cdot \Norm{u}_2^t& \forall u\in \R^d\,, \ell \in [k]\quad\text{(subgaussianity of $t$-moment)}
		\end{aligned}
	}\tag{$\cP(Y)$}
\end{align}
\footnotetext{Formally, we would replace the constraint on the size of the clusters by one which requires them to be of size $(1\pm \alpha)\tfrac n k$, for some small $\alpha$.}
For the scope of this discussion,\footnote{While this is far from being true, it turns out that having access to a pseudo-distribution satisfying \ref{equation:techniques-program} is enough for our subsequent argument to work, albeit with some additional technical work required.} we may disregard computational issues and assume we have access to an algorithm returning a point from the convex hull $\cK(Y)$ of all solutions to our system of inequalities.\footnote{We remark that a priori it is also not clear how to encode the  subgaussian constraint in a way that we could recover a degree-$t$ pseudo-distribution satisfying  \ref{equation:techniques-program} in polynomial time. By now this is well understood, we discuss this in \cref{section:preliminaries}.}
Each indicator variable $z_{i\ell}\in \Set{0,1}$ is meant to indicate whether sample $y_i$ is believed to be in cluster $C_\ell$. In the non-private setting, the idea behind the program is that --for typical $\mathbf{Y}$ sampled according to \cref{model:gaussian-mixture-model} with minimum separation $\Delta\geq k^{1/t}\sqrt{t}$-- any solution $W(\mathbf{Y})\in \cK(\mathbf{Y})$  is close to the ground truth matrix $W^*(\mathbf{Y})$ in Frobenius norm: $\normf{W(\mathbf{Y})-W^*(\mathbf{Y})}^2\leq 1/\poly(k)\,.$
Each row $W(\mathbf{Y})_i$ may be seen as inducing a uniform distribution over a subset of $\mathbf{Y}$.\footnote{More generally, we may think of a vector $v\in \R^n$ as the vector inducing the distribution given by $v/\normo{v}$ onto the set $Y$ of $n$ elements.} Combining the above bound with the fact that subgaussian distributions at small total variation distance have means that are close, we conclude the algorithm recovers the centers of the mixture.

While this program suggests a path to recover the first property, it also possesses a fatal flaw: the projection $W'$ of $W\in \cK(Y)$ onto $\cK(Y)\cap \cK(Y')$ may be \textit{far} in the sense that $\abs{\Normf{W}^2-\Normf{W'}^2}\geq \Omega(\Normf{W}^2+\Normf{W'}^2)\geq \Omega(n^2/k)\,.$
The reason behind this phenomenon can be found in the constraint $\sum_i z_{i\ell}=n/k\,.$ The set indicated by the vector $(z_{1\ell}\,\ldots\,, z_{n\ell})$ may be subgaussian in the sense of \ref{equation:techniques-program} for input $Y$ but, upon changing a single sample, this may no longer be true. We work around this obstacle in two steps:
\begin{enumerate}
	\item We replace the above constraint with $\sum_i z_{i\ell}\leq n/k\,.$
	\item We compute $\hat{W}:=\argmin_{W \text{ solving }\cP(Y)}\Normf{J-W}^2\,,$ where $J$ is the all-one matrix.\footnote{We remark that for technical reasons our function in \cref{section:clustering-privacy-analysis} will be slightly different. We do not discuss it here to avoid obfuscating our main message.}
\end{enumerate}
The catch now is that the program  is satisfiable for \textit{any} input $Y$ since we can set $z_{il} = 0$ whenever necessary. Moreover, we can  guarantee property \textit{(ii)} (required by our sensitivity argument) for $\alpha\leq O(n/k)$, since we can obtain $W'\in \cK(Y)\cap\cK(Y')$ simply zeroing out the row/column in $W$ corresponding to the sample differing in $Y$ and $Y'$.
Then for typical inputs $\mathbf{Y}$, the correlation with the true solution is  guaranteed by the new strongly convex objective function.

We offer some more intuition on the choice of our objective function:
Recall that $W_{ij}$ indicates our guess whether the $i$-th and $j$-th datapoints are sampled from the same Gaussian component.
A necessary condition for $W$ to be close to its ground-truth counterpart $W^*$, is that they roughly have the same number of entries that are (close to) 1.
One way to achieve this would be to add the lower bound constraing $\sum_{\ell} z_{i \ell} \gtrsim \tfrac n k$.
However, such a constraint could cause privacy issues: There would be two neighboring datasets, such that the constraint set induced by one dataset is satisfiable, but the constraint set induced by the other dataset is not satisfiable.
We avoid this issue by noticing that the appropriate number of entries close to 1 can also be induced by minimizing the distance of $W$ to the all-one matrix.
This step is also a key difference from \cite{KothariMV22}, explained in more detail below.

\paragraph{From low sensitivity of the indicators to low sensitivity of the estimates} %
For adjacent inputs $Y,Y'$ let $\hat{W}, \hat{W}'$ be respectively the matrices computed by the above strongly convex programs. Our discussion implies that, applying our sensitivity bound, we can show $\normf{\hat{W}-\hat{W}'}^2\leq O(n/k)\,.$ The problem is that simply applying a randomized rounding approach here cannot work.  
The reason  is that even tough the vector $\hat{W}_i$ induces a subgaussian distribution, the vector $\hat{W}_i+v$  for $v\in \R^n$, \textit{might not}. Without the subgaussian constraint we cannot provide any meaningful utility bound.
In other words, the root of our problem is that there exists heavy-tailed distributions that are arbitrarily close in total variation distance to  any given subgaussian distribution.

On the other hand, our sensitivity bound implies $\normo{\hat{W}-\hat{W}'}^2\leq o(\normo{\hat{W}})$ and thus, all but a vanishing fraction of rows $i\in[n]$ must satisfy  $\normo{\hat{W}_i-\hat{W}'_i}\leq o(\normo{\hat{W}_i})$.
For each row $i\,,$ let $\mu_i\,, \mu'_i$ be the means of the distributions induced respectively by $\hat{W}_i\,,\hat{W}'_i\,.$
We are thus in the following setting:
\begin{enumerate}
	\item For a set of $(1-o(1))\cdot n$ good rows $\Normt{\mu_i-\mu'_i}\leq o(1)\,,$
	\item For the set $\cB$ of remaining bad rows, the distance $\Normt{\mu_i-\mu'_i}$ may be unbounded.
\end{enumerate}

We hide differences of the first type as follows: pick a random subsample $\bm \cS$ of $[n]$ of size $n^{c}$, for some small $c>0$, and for each picked row use the Gaussian mechanism. The subsampling step is useful as it allows us to decrease the standard deviation of the entry-wise random noise by a factor $n^{1-c}\,.$
We hide differences of the second type as follows: Note that most of the rows are clustered together in space.
Hence, we aim to privately identify the regions which contain many of the rows.
Formally, we use a classic high dimensional $(\epsilon, \delta)$-differentially private histogram learner on $\bm \cS$ and for the $k$ largest bins of highest count privately return their average (cf. \cref{lemma:private-histogram-learner-high-dimension}). The crux of the argument here is that the cardinality of $\cB\cap \bm \cS$ is sufficiently small that the privacy guarantees of the histogram learner can be extended even for inputs that differ in $\card{\cB\cap \bm \cS}$ many samples.
Finally, standard composition arguments will guarantee privacy  of the whole algorithm.

\paragraph{Comparison with the framework of Kothari-Manurangsi-Velingker}
Both Kothari-Manurangsi-Velingker \cite{KothariMV22} and our work obtained private algorithms for high-dimensional statistical estimation problems by privatizing strongly convex programs, more specifically, sum-of-squares (SoS) programs.
The main difference between KMV and our work lies in how we choose the SoS program. For the problem of robust moment estimation, KMV considered the canonical SoS program from \cite{Hopkins018,KothariSS18} which contains a minimum cardinality constraint (e.g., $\sum_l z_{il} \gtrsim \tfrac n k$ in the case of GMMs). Such a constraint is used to ensure good utility. However, as alluded to earlier, this is problematic for privacy: there will always exist two adjacent input datasets such that the constraints are satisfiable for one but not for the other. KMV and us resolve this privacy issue in different ways.

KMV uses an exponential mechanism to pick the lower bound of the minimum cardinality constraint. This step also ensures that solutions to the resulting SoS program will have low sensitivity.
In contrast, we simply drop the minimum cardinality constraint. Then the resulting SoS program is always feasible for any input dataset! To still ensure good utility, we additionally pick an appropriate objective function. For example, in Gaussian mixture models, we chose the objective $\Norm{W - J}_F^2$.
Our approach has the following advantages: First, the exponential mechanism in KMV requires computing $O(n)$ scores. Computing each score requires solving a large semidefinite program, which can significantly increase the running time. Second, proving that the exponential mechanism in KMV works requires several steps: 1) defining a (clever) score function, 2) bounding the sensitivity of this score function and, 3) showing existence of a large range of parameters with high score. Our approach bypasses both of these issues.

Further, as we show, our general recipe can be easily extended to other high dimensional problems of interest: construct a strongly convex optimization program and add noise to its solution. 
This can provide significant computational improvements. For example, in the context of SBMs, the framework of \cite{KothariMV22} would require one to sample from an exponential distribution over matrices. Constructing and sampling from such distributions is an expensive operation. 
However, it is well-understood that an optimal fractional solution to the basic SDP relaxation we consider can be found in \textit{near quadratic time} using the standard matrix multiplicative weight method \cite{arora2007combinatorial, steurer2010fast}, making the whole algorithm run in near-quadratic time.
Whether our algorithm can be sped up to near-linear time, as in \cite{arora2007combinatorial, steurer2010fast}, remains a fascinating open question.

\paragraph{Comparison with previous works on empirical risk minimization}
Results along the lines of the sensitivity bound described at the beginning of the section (see \cref{lemma:key-structure-lemma} for a formal statement)  have been extensively used in the context of empirical risk minimization \cite{chaudhuri2011differentially, kifer2012private, song2013stochastic, bassily2014private, wang2017differentially, munoz2021private}.
Most results focus on the special case of unconstrained optimization of strongly convex functions. 
In contrast, our sensitivity bound applies to the  significantly more general settings where both the objective functions and the constrained set may depend on the input.\footnote{The attentive reader may argue that one could cast convex optimization over a constrained domain as unconstrained optimization of a new convex function with the appropriate penalty terms. In practice however, this turns out to be hard to do for constraints such as \cref{definition:explicitly-bounded}.}
Most notably for our settings of interest, \cite{chaudhuri2011differentially} studied unconstrained optimization of (smooth) strongly convex functions depending on the input, with bounded gradient. We recover such a result for $X'=X$ in \textit{(ii)}. 
In \cite{munoz2021private}, the authors considered constraint optimization of objective functions where the domain (but \textit{not} the function) may depend on the input data. They showed how one can achieve differential privacy while optimize the desired objective function by randomly perturbing the constraints. It is important to remark that, in  \cite{munoz2021private}, the notion of utility is based on the optimization problem (and their guarantees are tight only up to logarithmic factors).
In the settings we consider, even in the special case where $f$ does not depend on the input, this notion of utility may not correspond to the notion of utility required by the estimation problem, and thus, the corresponding guarantees can turn out to be too loose to ensure the desired error bounds.

\section{Preliminaries}\label{section:preliminaries}

We use \textbf{boldface} characters for random variables. 
We hide multiplicative factors \textit{logarithmic} in $n$ using the notation $\tilde{O}(\cdot)\,, \tilde{\Omega}(\cdot)$. Similarly, we hide absolute constant multiplicative factors using the standard notation $O(\cdot)\,, \Omega(\cdot)\,, \Theta(\cdot)$. Often times we use the letter $C$ do denote universal constants independent of the parameters at play. We write $o(1), \omega(1)$ for functions tending to zero (resp. infinity) as $n$ grows. We say that an event happens with high probability if this probability is at least $1-o(1)$.
Throughout the paper, when we say "an algorithm runs in time $O(q)$" we mean that the number of basic arithmetic operations involved is $O(q)$. That is, we ignore bit complexity issues.

\paragraph{Vectors, matrices, tensors}We use $\Id_n$ to denote the $n$-by-$n$ dimensional identity matrix, $J_n\in \R^{n\times n}$ the all-ones matrix and $\mathbf{0}_n\,, \mathbf{1}_n\in \R^n$ to denote respectively the zero and the all-ones vectors.  When the context is clear we drop the subscript. For matrices $A, B \in \R^{n\times n}$ we write $A\sge B$ if $A-B$ is positive semidefinite. For a matrix $M$, we denote its eigenvalues by  $\lambda_1(M)\,,\ldots, \lambda_n(M)$, we simply write $\lambda_i$ when the context is clear. We denote by $\Norm{M}$ the spectral norm of $M$.  We denote by $\R^{d^{\otimes t}}$ the set of real-valued order-$t$ tensors. for a $d\times d$ matrix $M$, we denote by $M^{\otimes t}$ the \emph{$t$-fold Kronecker product} $\underbrace{M\otimes M \otimes \cdots \otimes M}_{t\ \text{times}}$.  
We define the \emph{flattening}, or \emph{vectorization}, of $M$ to be the $d^t$-dimensional vector, whose entries are the entries of $M$ appearing in lexicographic order. With a slight abuse of notation we refer to this flattening  with $M$, ambiguities will be clarified form context. We denote by $N\Paren{0, \sigma^2}^{d^{\otimes t}}$ the distribution over Gaussian tensors with $d^{t}$ entries with standard deviation $ \sigma$. 
Given $u,v\in\sbits^n$, we use $\Ham(u,v) := \sum_{i=1}^n \ind{u_i\neq v_i}$ to denote their Hamming distance.
Given a vector $u\in\R^n$, we let $\sign(u)\in\sbits^n$ denote its sign vector.
A vector $u\in\sbits^n$ is said to be \emph{balanced} if $\sum_{i=1}^{n}u_i=0$.

\paragraph{Graphs} We consider graphs on $n$ vertices and let $\cG_n$ be the set of all graphs on $n$ vertices. For a graph $G$ on $n$ vertices we denote by $A(G)\in \R^{n\times n}$ its adjacency matrix. When the context is clear we simply write $A\,.$ 
Let $V(G)$ (resp. $E(G)$)  denote the vertex (resp. edge) set of graph $G$.
Given two graphs $G,H$ on the same vertex set $V$, let $G\sm H := (V,E(G)\sm H(G))$.
Given a graph $H$, $H' \subseteq H$ means $H'$ is a subgraph of $H$ such that $V(H')=V(H)$ and $E(H)\subseteq E(H)$.
The Hamming distance between two graphs $G,H$ is defined to be the size of the symmetric difference between their edge sets, i.e. $\Ham(G,H) := \abs{E(G) \triangle E(H)}$.

\subsection{Differential privacy}\label{section:preliminaries-privacy}

In this section we introduce standard notions of differential privacy \cite{DworkMNS06}.

\begin{definition}[Differential privacy]\label[definition]{definition:differential-privacy}
	An algorithm $\cM:\cY\rightarrow\cO$ is said to be $(\eps, \delta)$-differentially private for $\eps, \delta >0$ if and only if, for every $S\subseteq \cO$ and every neighboring datasets $Y,Y' \in \cY$ we have 
	\begin{align*}
		\bbP \Brac{\cM(Y)\in S}\leq e^\eps\cdot \bbP \Brac{\cM(Y')\in S}+\delta\,.
	\end{align*}
\end{definition}

To avoid confusion, for each problem we will exactly state the relevant notion of neighboring datasets.
Differential privacy is closed under post-processing and composition.

\begin{lemma}[Post-processing]\label{lemma:differential-privacy-post-processing}
	If $\cM:\cY\rightarrow \cO$ is an $(\eps, \delta)$-differentially private algorithm and $\cM':\cY\rightarrow \cZ$ is any randomized function. Then the algorithm $\cM'\Paren{\cM(Y)}$ is $(\eps, \delta)$-differentially private.
\end{lemma}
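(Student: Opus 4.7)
The plan is to reduce the randomized case to the deterministic case by conditioning on the internal randomness of $\cM'$, and then to handle the deterministic case by pulling events back through $\cM'$ and invoking the $(\eps,\delta)$-DP guarantee of $\cM$ directly.

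First, I would treat the case where $\cM'$ is deterministic. Fix any neighboring datasets $Y,Y'\in\cY$ and any measurable event $S\subseteq\cZ$. Let $T := (\cM')^{-1}(S) = \Set{o\in\cO \suchthat \cM'(o)\in S}\subseteq\cO$. Then by construction $\cM'(\cM(Y))\in S$ if and only if $\cM(Y)\in T$, and likewise for $Y'$. Applying \cref{definition:differential-privacy} to $\cM$ with the event $T$ yields
\begin{align*}
  \bbP\Brac{\cM'(\cM(Y))\in S} \;=\; \bbP\Brac{\cM(Y)\in T} \;\leq\; e^\eps\cdot\bbP\Brac{\cM(Y')\in T} + \delta \;=\; e^\eps\cdot\bbP\Brac{\cM'(\cM(Y'))\in S} + \delta\mper
\end{align*}

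Next, I would extend this to randomized $\cM'$. The standard approach is to represent $\cM'$ as a deterministic function $\tilde{\cM}'\from\cO\times\cR\to\cZ$ together with an independent random seed $\bm r\in\cR$ drawn from some distribution $\mu$, so that $\cM'(o)$ is distributed as $\tilde{\cM}'(o,\bm r)$. Since $\bm r$ is independent of $\cM(Y)$ and $\cM(Y')$, we can condition on $\bm r=r$ and apply the deterministic case to the map $o\mapsto\tilde{\cM}'(o,r)$. This gives, for every fixed $r$,
\begin{align*}
  \bbP\Brac{\tilde{\cM}'(\cM(Y),r)\in S} \;\leq\; e^\eps\cdot\bbP\Brac{\tilde{\cM}'(\cM(Y'),r)\in S} + \delta\mper
\end{align*}
Integrating over $r\sim\mu$ on both sides preserves the inequality by linearity, yielding the claimed bound $\bbP\Brac{\cM'(\cM(Y))\in S}\leq e^\eps\cdot\bbP\Brac{\cM'(\cM(Y'))\in S}+\delta$.

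There is essentially no hard obstacle here; the only things to watch are measurability of $T$ (which holds under the usual standing assumption that $\cM'$ is measurable) and the crucial use of independence between the randomness of $\cM'$ and the randomness of $\cM$ when passing from the conditional bound to the unconditional one. Both are routine and standard in the DP literature.
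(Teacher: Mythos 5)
Your proof is correct and is the standard textbook argument for post-processing in differential privacy: reduce to the deterministic case by pulling events back through $(\cM')^{-1}$, then extend to randomized post-processors by conditioning on the (independent) internal randomness and integrating. The paper states this lemma as a known fact and does not supply its own proof, so there is nothing to compare against; your argument is complete, with the two caveats you raise (measurability of the preimage, independence of the two sources of randomness) correctly identified as the only points requiring care.
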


In order to talk about composition it is convenient to also consider DP algorithms whose privacy guarantee holds only against subsets of inputs. 

\begin{definition}[Differential Privacy Under Condition]\label[definition]{definition:differential-privacy-under-condition}
	An algorithm $\cM:\cY\rightarrow\cO$ is said to be $(\eps, \delta)$-differentially private under condition $\Psi$ (or $(\eps, \delta)$-DP under condition $\Psi$) for $\eps, \delta >0$ if and only if, for every $S\subseteq \cO$ and every neighboring datasets $Y,Y' \in \cY$ both satisfying $\Psi$ we have
	\begin{align*}
		\bbP \Brac{\cM(Y)\in S}\leq e^\eps\cdot \bbP \Brac{\cM(Y')\in S}+\delta\,.
	\end{align*}
\end{definition}

It is not hard to see that the following composition theorem holds for privacy under condition.

\begin{lemma}[Composition for Algorithm with Halting, \cite{KothariMV22}]\label{lemma:differential-privacy-composition}
	Let $\cM_1:\cY\rightarrow\cO_1\cup \Set{\bot}\,, \cM_2:\cO_1\times \cY\rightarrow \cO_2\cup \Set{\bot}\,,\ldots\,, \cM_t:\cO_{t-1}\times \cY\rightarrow \cO_t\cup \Set{\bot}$ be algorithms. Furthermore, let $\cM$ denote the algorithm that proceeds as follows (with $\cO_0$ being empty): For $i=1\,\ldots, t$ compute $o_i=\cM_i(o_{i-1}, Y)$ and, if $o_i=\bot$, halt and output $\bot$. Finally, if the algorithm has not halted, then output $o_t$.
	Suppose that:
	\begin{itemize}
		\item For any $1\leq i\leq t$, we say that $Y$ satisfies the condition $\Psi_i$ if running the algorithm on $Y$ does not result in halting after applying $\cM_1,\ldots, \cM_i$.
		\item  $\cM_1$ is $(\eps_1,\delta_1)$-DP.
		\item  $\cM_i$ is $(\eps_i, \delta_i)$-DP (with respect to neighboring datasets in the second argument) under condition $\Psi_{i-1}$ for all $i=\Set{2,\ldots, t}\,.$
	\end{itemize}
	Then $\cM$ is $\Paren{\sum_i \eps_i, \sum_i \delta_i}$-DP. 
\end{lemma}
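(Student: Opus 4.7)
The plan is to prove Lemma 3.5 by induction on the number of rounds $t$. The base case $t=1$ is immediate from the definition of $(\eps_1,\delta_1)$-DP, since in this case $\cM$ coincides with $\cM_1$.

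For the inductive step, I would collect the first $t-1$ algorithms into a single composite algorithm $\cM^{(t-1)}$ that runs $\cM_1,\ldots,\cM_{t-1}$ with the same halting semantics and outputs either some $o_{t-1}\in\cO_{t-1}$ or the halting symbol $\bot$. Applying the inductive hypothesis to those $t-1$ algorithms shows that $\cM^{(t-1)}$ is $(\eps',\delta')$-DP with $\eps'=\sum_{i<t}\eps_i$ and $\delta'=\sum_{i<t}\delta_i$. The full algorithm $\cM$ then factorizes as: sample $o_{t-1}\sim\cM^{(t-1)}(Y)$; if $o_{t-1}=\bot$ output $\bot$, otherwise output $\cM_t(o_{t-1},Y)$.

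Next, I would fix neighboring $Y,Y'$ and an arbitrary $S\sse \cO_t\cup\{\bot\}$, and condition on the intermediate output:
\begin{align*}
\Pr\Brac{\cM(Y)\in S} \;=\; \Pr\Brac{\cM^{(t-1)}(Y)=\bot}\cdot\ind{\bot\in S} \;+\; \E_{o\sim\cM^{(t-1)}(Y)}\Brac{\ind{o\neq\bot}\cdot\Pr\Brac{\cM_t(o,Y)\in S}}.
\end{align*}
For every non-$\bot$ value $o$ appearing in the expectation, both $Y$ and $Y'$ reach round $t$ along the trace producing $o$, so both satisfy $\Psi_{t-1}$; hence the $(\eps_t,\delta_t)$-DP guarantee of $\cM_t$ under condition $\Psi_{t-1}$ yields the per-$o$ bound $\Pr[\cM_t(o,Y)\in S]\leq e^{\eps_t}\Pr[\cM_t(o,Y')\in S]+\delta_t$. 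Substituting this into the integrand and absorbing the additive $\delta_t$ slack, I would then invoke $(\eps',\delta')$-DP of $\cM^{(t-1)}$ on the outer expectation --- treating the bounded map $o\mapsto\Pr[\cM_t(o,Y')\in S]$ as post-processing (\cref{lemma:differential-privacy-post-processing}) --- to obtain the claimed inequality with compound parameters $(\eps'+\eps_t,\delta'+\delta_t)$.

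The main subtlety --- and the reason the condition $\Psi_i$ enters the statement --- is the asymmetry created by the halting symbol. A priori, $Y$ could reach round $t$ while $Y'$ halts earlier, in which case an unconditional $(\eps_t,\delta_t)$-DP guarantee for $\cM_t$ on $(Y,Y')$ would not even be available. The fix is that halted traces of $Y$ contribute only through the distribution of $\cM^{(t-1)}(Y)$, and are therefore already coupled to those of $Y'$ by the inductive DP bound on $\cM^{(t-1)}$; meanwhile, every non-halted trace automatically witnesses that \emph{both} $Y$ and $Y'$ satisfy $\Psi_{t-1}$, so $\cM_t$'s conditional DP guarantee is exactly the tool needed. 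Careful bookkeeping of the $\bot$-mass --- partitioning the outer expectation via $\ind{o=\bot}$ and $\ind{o\neq\bot}$ and summing the two bounds --- closes the argument with no extra loss beyond $\eps'+\eps_t$ and $\delta'+\delta_t$.
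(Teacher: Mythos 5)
Your high-level plan — induction on $t$, factoring $\cM$ through the intermediate output of $\cM^{(t-1)}$, and combining the inductive $(\eps',\delta')$-DP bound on $\cM^{(t-1)}$ with the conditional $(\eps_t,\delta_t)$-DP of $\cM_t$ — is the right shape, but two steps need repair.

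The more serious one is the assertion that ``for every non-$\bot$ value $o$ appearing in the expectation, both $Y$ and $Y'$ \dots\ satisfy $\Psi_{t-1}$.'' The condition $\Psi_{t-1}$ in \cref{definition:differential-privacy-under-condition} is a predicate on the \emph{dataset}, not on the realized trace: that $\cM^{(t-1)}(Y)$ admits some non-$\bot$ output only shows that $Y$ satisfies the condition (under the reading ``reaches round $t$ with positive probability''); it says nothing about the neighbor $Y'$. If $Y'$ fails $\Psi_{t-1}$, the per-$o$ inequality $\Pr[\cM_t(o,Y)\in S]\leq e^{\eps_t}\Pr[\cM_t(o,Y')\in S]+\delta_t$ is simply not provided by the hypothesis, since DP-under-condition only quantifies over pairs in which \emph{both} endpoints satisfy the condition. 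This mixed case genuinely occurs and needs a separate argument that never touches $\cM_t$'s guarantee: if $Y'$ fails $\Psi_{t-1}$, then $\cM^{(t-1)}(Y')\equiv\bot$ and hence $\cM(Y')\equiv\bot$, while the inductive DP bound on $\cM^{(t-1)}$ forces $\Pr[\cM^{(t-1)}(Y)\neq\bot]\leq\delta'$, so the whole non-$\bot$ contribution of $Y$ can be charged to the $\delta'$ slack directly. Your sketch instead asserts the mixed case cannot arise.

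The second issue is quantitative. Absorbing the $\delta_t$ slack inside the integrand and then passing the $(\eps',\delta')$ bound through the outer expectation via post-processing gives
\[
\Pr[\cM(Y)\in S]\leq e^{\eps'+\eps_t}\Pr[\cM(Y')\in S]+e^{\eps_t}\delta'+\delta_t\,,
\]
and the symmetric order produces $e^{\eps'}\delta_t+\delta'$; either way one $\delta$ picks up a factor $e^\eps$, which is weaker than the $\delta'+\delta_t$ the lemma claims. Recovering the clean additive bound requires the sharper bookkeeping standard for basic composition of approximate DP: split on the set $B$ of outcomes $o$ with $\Pr[\cM^{(t-1)}(Y)=o]>e^{\eps'}\Pr[\cM^{(t-1)}(Y')=o]$, whose excess mass is at most $\delta'$, and on $B^c$ replace the outer mass by $\min\paren{\Pr[\cM^{(t-1)}(Y)=o],\,e^{\eps'}\Pr[\cM^{(t-1)}(Y')=o]}$ before applying the inner DP of $\cM_t$; since these minima sum to at most one, the inner $\delta_t$ contributes only additively. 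The paper does not spell out a proof (it cites \cite{KothariMV22}), but the stated constants require this refinement.
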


\subsubsection{Basic differential privacy mechanisms}\label{section:gaussian-laplace-mechanisms}

The Gaussian and the Laplace mechanism are among the most widely used mechanisms in differential privacy. They work by adding a noise drawn from the Gaussian (respectively Laplace) distribution to the output of the function one wants to privatize. The magnitude of the noise depends on the sensitivity of the function.

\begin{definition}[Sensitivity of function]\label[definition]{definition:sensitivity}
	Let $f:\cY\rightarrow \R^d$ be a function, its $\ell_1$-sensitivity and $\ell_2$-sensitivity are respectively
	\begin{align*}
		\Delta_{f, 1}:= \max_{\substack{Y\,, Y' \in \cY\\ Y\,, Y'\text{ are adjacent}}}\Normo{f(Y)-f(Y')}\qquad \qquad \Delta_{f, 2}:= \max_{\substack{Y\,, Y' \in \cY\\ Y\,, Y'\text{ are adjacent}}}\Normt{f(Y)-f(Y')}\,.
	\end{align*}
\end{definition}

For function with bounded $\ell_1$-sensitivity the Laplace mechanism is  often the tool of choice to achieve privacy.

\begin{definition}[Laplace distribution]\label[definition]{definition:laplace-distribution}
	The Laplace distribution with mean $\mu$ and parameter $b>0$, denoted by $\text{Lap}(\mu, b)$, has PDF $\frac{1}{2b}e^{-\Abs{x-\mu}/b}\,.$
	Let $\Lap(b)$ denote $\Lap(0,b)$.
\end{definition}

A standard tail bound concerning the Laplace distribution will be useful throughout the paper. 

\begin{fact}[Laplace tail bound]\label{fact:laplace-tail-bound}
	Let $\bm x \sim \text{Lap}(\mu, b)$. Then,
	\begin{align*}
		\bbP \Brac{\abs{\bm x -\mu} > t}\leq e^{-t/b}\,.
	\end{align*}
\end{fact}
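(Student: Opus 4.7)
The plan is to directly compute the tail probability using the explicit form of the Laplace density given in \cref{definition:laplace-distribution}. Since the PDF $\tfrac{1}{2b}e^{-|x-\mu|/b}$ is symmetric about $\mu$, after the substitution $y = x - \mu$ the claim reduces to bounding $\bbP[|\bm y| > t]$ where $\bm y$ has density $\tfrac{1}{2b}e^{-|y|/b}$.

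First I would use symmetry to write
\[
  \bbP\Brac{\abs{\bm x - \mu} > t} \;=\; 2 \cdot \bbP\Brac{\bm x - \mu > t} \;=\; 2\int_t^{\infty} \frac{1}{2b} e^{-y/b}\, dy.
\]
Then I would evaluate the elementary integral on the right, obtaining $2\cdot \tfrac{1}{2}e^{-t/b} = e^{-t/b}$, which is exactly the desired bound. The only case to mention separately is $t \leq 0$, where the inequality is trivial since the right-hand side is at least $1$.

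There is essentially no obstacle here: the statement is a direct consequence of the definition of the Laplace PDF and a one-line antiderivative computation. The only thing worth being careful about is the factor of $2$ coming from combining the two symmetric tails, which is exactly canceled by the $\tfrac{1}{2b}$ normalization of the density, yielding the clean bound $e^{-t/b}$ rather than $\tfrac12 e^{-t/b}$.
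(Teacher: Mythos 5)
Your proof is correct: the computation via symmetry and the elementary antiderivative $\int_t^\infty \tfrac{1}{2b}e^{-y/b}\,dy = \tfrac12 e^{-t/b}$ gives exactly the stated bound (with equality for $t\ge 0$), and the $t\le 0$ case is trivial as you note. The paper states this as a standard fact without proof, so there is nothing to compare against; your argument is the standard one and is complete.
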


The Laplace distribution is useful for the following mechanism

\begin{lemma}[Laplace mechanism]\label{lemma:laplace-mechanism}
	Let $f:\cY\rightarrow \R^d$ be any function with $\ell_1$-sensitivity at most $\Delta_{f,1}$. Then the algorithm that adds $\text{Lap}\Paren{\frac{\Delta_{f, 1}}{\eps}}^{\otimes d}$ to $f$ is $(\eps,0)$-DP.
\end{lemma}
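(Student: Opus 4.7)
The plan is to prove this classical result by a direct pointwise comparison of probability densities, since the Laplace mechanism's definition makes the analysis purely computational.

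First, I would fix two adjacent inputs $Y, Y' \in \cY$ and let $\cM(Y) = f(Y) + \mathbf{L}$ where $\mathbf{L} \sim \Lap(\Delta_{f,1}/\eps)^{\otimes d}$. To establish $(\eps, 0)$-DP, by \cref{definition:differential-privacy} it suffices (since $\delta = 0$) to show that for every measurable $S \subseteq \R^d$, we have $\Pr[\cM(Y) \in S] \leq e^\eps \cdot \Pr[\cM(Y') \in S]$. The standard way to reduce this to a pointwise statement is to show the stronger claim that the density ratio $p_{\cM(Y)}(z)/p_{\cM(Y')}(z)$ is bounded by $e^\eps$ for every $z \in \R^d$, and then integrate over $S$.

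Next, I would write out the densities explicitly. Since the coordinates of $\mathbf{L}$ are independent $\Lap(b)$ with $b = \Delta_{f,1}/\eps$, the density of $\cM(Y)$ at $z$ factors as $\prod_{i=1}^d \tfrac{1}{2b} \exp(-\abs{z_i - f(Y)_i}/b)$. The ratio therefore equals
\begin{align*}
\frac{p_{\cM(Y)}(z)}{p_{\cM(Y')}(z)} = \exp\Paren{\frac{1}{b}\sum_{i=1}^d \Paren{\abs{z_i - f(Y')_i} - \abs{z_i - f(Y)_i}}}.
\end{align*}
By the reverse triangle inequality applied coordinatewise, each term in the sum is at most $\abs{f(Y)_i - f(Y')_i}$, so the exponent is bounded by $\Normo{f(Y)-f(Y')}/b \leq \Delta_{f,1}/b = \eps$, using the definition of $\ell_1$-sensitivity.

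The only (very minor) obstacle is being careful that this pointwise density bound indeed implies the measure-theoretic DP statement for arbitrary Borel sets $S$; this follows by integrating the bound $p_{\cM(Y)}(z) \leq e^\eps p_{\cM(Y')}(z)$ over $S$. Since the Laplace density is everywhere positive, there is no issue with null sets, and no additive $\delta$ term is needed. This completes the proof.
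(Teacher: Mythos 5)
Your proof is correct and is the canonical argument for the Laplace mechanism; the paper states this lemma without proof as a standard fact from the DP literature. Your pointwise density-ratio bound via the reverse triangle inequality, followed by integration over $S$, is exactly the standard argument and is complete.
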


It is also useful to consider the "truncated" version of the Laplace distribution where the noise distribution is shifted and truncated to be non-positive.

\begin{definition}[Truncated Laplace distribution]\label[definition]{definition:truncated-laplace-distribution}
	The (negatively) truncated Laplace distribution w with mean $\mu$ and parameter $b$ on $\R$, denoted by $\text{tLap}(\mu, b)$, is defined as $\text{Lap}(\mu, b)$ conditioned on the value being non-positive.
\end{definition}

\begin{lemma}[Truncated Laplace mechanism]\label{lemma:truncated-laplace-mechanism}
	Let $f:\cY\rightarrow \R$ be any function with $\ell_1$-sensitivity at most $\Delta_{f,1}$. Then the algorithm that adds $\text{tLap}\Paren{-\Delta_{f, 1}\Paren{1+\frac{\log(1/\delta)}{\eps}}, \Delta_{f, 1}/\eps}$  to $f$ is $(\eps, \delta)$-DP.
\end{lemma}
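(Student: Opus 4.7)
Let $Y,Y'$ be adjacent inputs and set $a=f(Y)$, $a'=f(Y')$. By symmetry (applied separately to each direction of the DP inequality), I may assume $a\geq a'$ and write $\Delta=a-a'\in[0,\Delta_{f,1}]$. Let $c=\Delta_{f,1}(1+\log(1/\delta)/\eps)$ so that $\mu=-c$, and $b=\Delta_{f,1}/\eps$; the key arithmetic identity I will use repeatedly is $c/b=\eps+\log(1/\delta)$. Let $\bm n\sim\mathrm{tLap}(\mu,b)$, and write $Z=\Pr_{\bm n'\sim\mathrm{Lap}(\mu,b)}[\bm n'\leq 0]=1-\tfrac12 e^{\mu/b}\geq\tfrac12$ for the normalizer. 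Because $\bm n\leq 0$ almost surely, the output $f(Y)+\bm n$ is supported on $(-\infty,a]$ while $f(Y')+\bm n$ is supported on $(-\infty,a']\subseteq(-\infty,a]$.

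Given any measurable $S\subseteq\R$, split it as $S=S_1\sqcup S_2$ with $S_1=S\cap(-\infty,a']$ and $S_2=S\cap(a',a]$. On $S_1$, both $f(Y)+\bm n$ and $f(Y')+\bm n$ have positive densities $g(o;a)=\tfrac{1}{2bZ}e^{-|o-a-\mu|/b}$ and $g(o;a')$ of the same form. By the reverse triangle inequality,
\[
\bigl||o-a-\mu|-|o-a'-\mu|\bigr|\;\leq\;|a-a'|\;=\;\Delta\;\leq\;\Delta_{f,1},
\]
so the pointwise ratio satisfies $g(o;a)/g(o;a')\leq e^{\Delta/b}\leq e^{\eps}$. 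Integrating this ratio over $S_1$ gives the clean bound $\Pr[f(Y)+\bm n\in S_1]\leq e^{\eps}\Pr[f(Y')+\bm n\in S_1]\leq e^{\eps}\Pr[f(Y')+\bm n\in S]$.

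It remains to bound the probability of the ``bad'' region $S_2$, for which the target mechanism puts no mass and so $\delta$ must absorb it entirely. The plan is to compute directly:
\[
\Pr\bigl[f(Y)+\bm n\in S_2\bigr]\;\leq\;\Pr[\bm n\in(-\Delta,0]]\;=\;\frac{1}{Z}\int_{-\Delta}^{0}\frac{1}{2b}e^{-(x+c)/b}\,dx,
\]
where the exponent simplifies because $x+c>0$ on the range of integration ($c>\Delta_{f,1}\geq\Delta$). Evaluating the integral yields $\tfrac{1}{2Z}e^{-c/b}(e^{\Delta/b}-1)\leq\tfrac{1}{2Z}\cdot e^{-\eps}\delta\cdot(e^{\eps}-1)=\tfrac{1}{2Z}\delta(1-e^{-\eps})\leq\delta$, using $Z\geq\tfrac12$ at the end. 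Adding the two pieces gives $\Pr[f(Y)+\bm n\in S]\leq e^{\eps}\Pr[f(Y')+\bm n\in S]+\delta$, which is the required $(\eps,\delta)$-DP guarantee.

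The conceptual obstacle is only the setup (what region forces recourse to $\delta$); once the split is made, everything reduces to the plain Laplace-mechanism triangle-inequality argument on $S_1$ and a one-line integral on $S_2$. The only quantitative check one needs to be careful about is tracking the normalizer $Z$ and the cancellation $e^{-c/b}=e^{-\eps}\delta$ so that the final bound lands at exactly $\delta$ rather than a larger constant multiple.
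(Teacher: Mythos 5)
The paper states this lemma as a known result without supplying a proof, so there is no paper argument to compare against; your derivation is a correct, self-contained proof and matches the standard textbook approach. The decomposition $S = S_1 \sqcup S_2$ with $S_1$ inside the common support $(-\infty, a']$ and $S_2 \subseteq (a', a]$ is exactly the right move: on $S_1$ the vanilla Laplace density-ratio bound $g(o;a)/g(o;a') \leq e^{\Delta/b} \leq e^{\eps}$ applies pointwise, and the leftover region $S_2$, where $f(Y')+\bm n$ has zero mass, is absorbed into $\delta$. Your computation for that tail is correct: since $c > \Delta_{f,1} \geq \Delta$ the absolute value in the exponent opens without a sign change, the integral evaluates to $\tfrac{1}{2Z}e^{-c/b}(e^{\Delta/b}-1)$, and the identity $e^{-c/b} = e^{-\eps}\delta$ together with $Z \geq \tfrac12$ lands the bound at exactly $\delta$. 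The symmetry reduction (WLOG $a\geq a'$, then verify both directions — the reverse direction needs no $\delta$ since the smaller support is contained in the larger) is also sound. One could polish the exposition by stating explicitly that the reverse inequality $\Pr[M(Y')\in S]\leq e^{\eps}\Pr[M(Y)\in S]$ holds without the additive $\delta$ under the $a\geq a'$ orientation, so "applied separately to each direction" is not doing any hidden work, but this is a presentational note, not a gap.
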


The following tail bound is useful when reasoning about truncated Laplace random variables.

\begin{lemma}[Tail bound truncated Laplace]\label{lemma:tail-bound-truncated-laplace}
	Suppose $\mu <0$ and $b> 0$. Let $\bm x \sim \text{tLap}(\mu, b)$. Then,for $y < \mu$ we have that
	\begin{align*}
		\bbP \Brac{\bm x < y}\leq \frac{e^{(y-\mu/b)}}{2-e^{\mu/b}}\,.
	\end{align*}
\end{lemma}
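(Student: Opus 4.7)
The plan is to derive the bound by a direct computation using the definition of the truncated Laplace as $\bm z \sim \Lap(\mu,b)$ conditioned on $\bm z \le 0$. By Bayes' rule,
\[
  \Pr\Brac{\bm x < y} = \frac{\Pr\Brac{\bm z < y}}{\Pr\Brac{\bm z \le 0}}\,,
\]
so it suffices to evaluate the numerator and denominator separately using the Laplace density $\frac{1}{2b}e^{-|x-\mu|/b}$.

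For the numerator, since $y < \mu$ we have $|x-\mu| = \mu - x$ on the entire interval $(-\infty, y)$, so the integral reduces to the standard left-tail formula for the Laplace, yielding $\Pr[\bm z < y] = \tfrac{1}{2}e^{(y-\mu)/b}$. For the denominator, since $\mu < 0$, I would split the integral at $\mu$: on $(-\infty, \mu]$ the mass is $\tfrac{1}{2}$ by symmetry, and on $[\mu, 0]$ direct integration gives $\tfrac{1}{2}(1 - e^{\mu/b})$. Summing, $\Pr[\bm z \le 0] = 1 - \tfrac{1}{2}e^{\mu/b} = \tfrac{1}{2}\bigparen{2 - e^{\mu/b}}$.

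Dividing the two expressions yields exactly the claimed bound (in fact, with equality):
\[
  \Pr\Brac{\bm x < y} = \frac{e^{(y-\mu)/b}}{2 - e^{\mu/b}}\,.
\]
There is no substantial obstacle here; the statement is a mechanical calculation, and the only point worth being careful about is correctly identifying which side of $\mu$ the point $y$ lies on so that the absolute value in the Laplace density can be removed without case analysis, and handling $\mu < 0$ to split the normalization integral. The denominator $2-e^{\mu/b}$ arises naturally from the normalizing constant of the truncated distribution.
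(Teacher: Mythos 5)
Your computation is correct and in fact establishes the claimed bound with equality, which is stronger than the stated inequality. The paper does not include a proof of this lemma, so there is nothing to compare against; your approach via Bayes' rule on the conditioning event and direct integration of the Laplace density is the natural one. The only caveat is a small typo in the paper's statement: the numerator should be read as $e^{(y-\mu)/b}$ (as you derive) rather than $e^{(y-\mu/b)}$, which is what the LaTeX source literally says; your normalization constant $\tfrac{1}{2}\bigparen{2 - e^{\mu/b}}$ and left-tail mass $\tfrac{1}{2}e^{(y-\mu)/b}$ are both right.
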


In constrast, when the function has bounded $\ell_2$-sensitivity, the Gaussian mechanism provides privacy.

\begin{lemma}[Gaussian mechanism]\label{lemma:gaussian-mechanism}
		Let $f:\cY\rightarrow \R^d$ be any function with $\ell_2$-sensitivity at most $\Delta_{f,2}$. Let $0< \eps\,, \delta\leq 1$. Then the algorithm that adds $N\Paren{0, \frac{\Delta_{f, 2}^2 \cdot 2\log(2/\delta)}{\eps^2}\cdot \Id}$  to $f$ is $(\eps, \delta)$-DP.
\end{lemma}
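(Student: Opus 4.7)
The plan is to reduce the analysis to a one-dimensional Gaussian privacy loss computation and then apply a standard Gaussian tail bound. Fix adjacent inputs $Y, Y' \in \cY$, let $v = f(Y)-f(Y')$, and note that by assumption $\Normt{v}\leq \Delta_{f,2}$. Write $\mathcal{M}(Y) = f(Y) + \bm{\xi}$ with $\bm{\xi}\sim N(0,\sigma^2 \Id)$ for $\sigma^2 = \frac{2\Delta_{f,2}^2 \log(2/\delta)}{\eps^2}$. By rotational invariance of the isotropic Gaussian, it suffices to compare the two densities along the direction $v/\Normt{v}$, reducing the problem to comparing $N(0,\sigma^2)$ and $N(\Normt{v}, \sigma^2)$ on $\R$. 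Also, since enlarging $\Normt{v}$ only makes the privacy loss worse, we may assume $\Normt{v}=\Delta_{f,2}$.

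Next, I would analyze the privacy loss random variable $Z(x) := \log\frac{p(x)}{q(x)}$ where $p,q$ are the densities of $N(0,\sigma^2)$ and $N(\Delta_{f,2},\sigma^2)$. A direct calculation gives
\begin{equation*}
  Z(x) \;=\; \frac{2\Delta_{f,2}\, x - \Delta_{f,2}^2}{2\sigma^2}\,,
\end{equation*}
which, when $\bm{x}\sim N(0,\sigma^2)$, is itself Gaussian with mean $-\frac{\Delta_{f,2}^2}{2\sigma^2}$ and variance $\frac{\Delta_{f,2}^2}{\sigma^2}$. The standard reduction from pointwise privacy loss to $(\eps,\delta)$-DP (often phrased as: if $\Pr_{\bm{x}\sim P}[Z(\bm{x}) > \eps]\leq \delta$ then $P(S)\leq e^\eps Q(S)+\delta$ for every measurable $S$) now reduces the claim to showing
\begin{equation*}
  \Pr_{\bm{x}\sim N(0,\sigma^2)}\Brac{Z(\bm{x}) > \eps}\leq \delta\,.
\end{equation*}

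With our choice of $\sigma$, this tail probability is $\Pr\Brac{\bm g > c}$ for a standard normal $\bm g$, where $c = \frac{\eps\sigma}{\Delta_{f,2}} + \frac{\Delta_{f,2}}{2\sigma} = \sqrt{2\log(2/\delta)} + \frac{\eps}{2\sqrt{2\log(2/\delta)}}$. Invoking the standard Mills-ratio bound $\Pr[\bm g > c] \leq \tfrac{1}{2}e^{-c^2/2}$ (valid for $c \geq 0$), expanding $c^2/2 \geq \log(2/\delta)$, and using $\eps,\delta \leq 1$ to absorb the lower-order term, gives $\Pr[\bm g > c]\leq \delta$, as required. The main (mild) obstacle is precisely this last step, namely tracking the constants and the $O(\eps)$ correction term carefully enough to match the explicit $2\log(2/\delta)$ factor in the lemma statement; beyond this bookkeeping the proof is entirely standard.
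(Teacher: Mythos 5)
The paper does not actually supply a proof for this lemma; it is stated as a known result, so there is no in-paper argument to compare against. Your high-level plan — reduce to one dimension by rotation, analyze the privacy-loss random variable, and finish with a Gaussian tail bound — is the standard (and correct) route, and it does yield the stated constants. However, there is a sign error in the middle that you should fix before this can be called a proof.

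Concretely, with $p$ the density of $N(0,\sigma^2)$ and $q$ that of $N(\Delta,\sigma^2)$, the privacy loss is
\begin{align*}
Z(x) \;=\; \log\frac{p(x)}{q(x)} \;=\; \frac{(x-\Delta)^2 - x^2}{2\sigma^2} \;=\; \frac{-2\Delta x + \Delta^2}{2\sigma^2}\,,
\end{align*}
which under $\bm x\sim N(0,\sigma^2)$ has mean $+\frac{\Delta^2}{2\sigma^2}$, not $-\frac{\Delta^2}{2\sigma^2}$. (As a sanity check, the mean of the privacy loss under $P$ is $D_{\mathrm{KL}}(P\Vert Q)\geq 0$; a negative mean is impossible.) Your formula $\frac{2\Delta x-\Delta^2}{2\sigma^2}$ corresponds to swapping $p$ and $q$, but then you must also sample $\bm x$ from $N(\Delta,\sigma^2)$, not $N(0,\sigma^2)$. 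The consequence is that the correct threshold is $c = \sqrt{2\log(2/\delta)} - \frac{\eps}{2\sqrt{2\log(2/\delta)}}$, not $+$. This matters: with your (erroneous) $+$ the exponent $c^2/2$ is already $\geq \log(2/\delta)$, so the tail bound $\frac{1}{2}e^{-c^2/2}\leq \delta/4$ is trivial and the hypotheses $\eps,\delta\leq 1$ do no work — a hint that something is off. With the correct $-$, one computes $c^2/2 = \log(2/\delta) - \eps/2 + \frac{\eps^2}{16\log(2/\delta)}$, and the Mills bound gives $\Pr[\bm g>c] \leq \tfrac{1}{2}e^{-c^2/2} \leq \tfrac{\delta}{4}e^{\eps/2}$, which is $\leq \delta$ precisely because $\eps\leq 1$; one also needs $\delta\leq 1$ to ensure $c>0$ so the Mills bound is applicable. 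So the lemma does hold, but you need the corrected sign, and then the assumption $\eps\leq 1$ is genuinely load-bearing rather than slack to be ``absorbed.''
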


\subsubsection{Private histograms}\label{section:preliminaries-private-histograms}

Here we present a classical private mechanism to learn a high dimensional histogram.

\begin{lemma}[High-dimensional private histogram learner, see \cite{KarwaV18}]\label{lemma:private-histogram-learner-high-dimension}
	Let $q,b\,, \eps>0$ and $0< \delta< 1/n$.
	Let $\{I_i\}_{i=-\infty}^{\infty}$ be a partition of $\R$ into intervals of length $b$, where $I_i:=\Set{x\in \R\suchthat q+ (i-1) \cdot b\leq x< q + i\cdot b}$.
	Consider the partition of $\R^d$ into sets $\Set{B_{i_1,\ldots,i_d}}_{i_1,\ldots, i_d=1}^\infty$ where
	\begin{align*}
		B_{i_1,\ldots,i_d} := \Set{x\in \R^d \suchthat \forall j\in [d]\,, x_j \in I_{i_j}}
	\end{align*}
	Let $Y=\Set{y_1,\ldots, y_n}\subseteq \R^d$ be a dataset of $n$ points.
	For each $B_{i_1,\ldots,i_d}$, let $p_{i_1,\ldots,i_d}=\tfrac{1}{n}\Card{\Set{j\in [n]\suchthat y_j\in B_{i_1,\ldots,i_d}}}$. For $n\geq \frac{8}{\eps \alpha}\cdot \log\frac{2}{\delta \beta}\,,$ there exists an efficient $(\epsilon, \delta)$-differentially private algorithm that returns $\hat{\bm p}_{1,\ldots,1},\ldots, \hat{\bm p}_{i_1,\ldots,i_d},\ldots$ satisfying
	\begin{align*}
		\bbP \Brac{\max_{i_1,\ldots,i_d\in \N}\abs{p_{i_1,\ldots,i_d}-\hat{\bm p}_{i_1,\ldots,i_d}}\geq \alpha}\leq \beta\,.
	\end{align*}
	\begin{proof}
		We consider the following algorithm, applied to  each $i_1,\ldots, i_d\in \N$ on input $Y$:
		\begin{enumerate}
			\item If $p_{i_1,\ldots, i_d} = 0$ set $\hat{p}_{i_1,\ldots, i_d} = 0\,,$ otherwise let $\hat{\bm p}_{i_1,\ldots, i_d} = p_{i_1,\ldots, i_d}+\bm \tau$ where $\bm \tau\sim \text{Lap}\Paren{0,\frac{2}{n\epsilon}}\,.$
			\item If $\hat{\bm p}_{i_1,\ldots, i_d}\leq \frac{3\log(2/\delta)}{\eps n}$ set $\hat{\bm p}_{i_1,\ldots, i_d}=0$.
		\end{enumerate}
	
		First we argue utility.
		By construction we get  $\hat{\bm p}_{i_1,\ldots,i_d}=0$ whenever $p_{i_1,\ldots,i_d}=0$, thus we may focus on non-zero $p_{i_1,\ldots,i_d}$.
		There are at most $n$ non zero $p_{i_1,\ldots,i_d}$.
		By choice of $n, \delta$ and by \cref{fact:laplace-tail-bound} the maximum over $n$ independent trials $\bm \tau \sim \text{Lap}\Paren{0, \frac{2}{n\eps}}$ is bounded by $\alpha$ in absolute value with probability at least $\beta$.
		
		It remains to argue privacy. 
		Let $Y=\Set{y_1,\ldots, y_n}\,, Y'=\Set{y'_1,\ldots, y'_n}$ be adjacent datasets. For $i_1,\ldots,i_d\in \N$, let
		\begin{align*}
			p_{i_1,\ldots,i_d}&=\Card{\Set{j\in [n]\suchthat y_j\in B_{i_1,\ldots,i_d}}}\\
			p'_{i_1,\ldots,i_d}&=\Card{\Set{j\in [n]\suchthat y'_j\in B_{i_1,\ldots,i_d}}}\,.
		\end{align*}
		Since $Y,Y'$ are adjacent there exists only two set of indices $\cI:=\Set{i_1,\ldots, i_d}$ and $\cJ:=\Set{j_1,\ldots, j_d}$ such that $p_{\cI} \neq p'_{\cI}$ and $	p_{\cJ}  \neq p'_{\cJ}\,.$
		Assume without loss of generality $	p_{\cI} > p'_{\cI}$.
		Then it must be $p_{\cI} = p'_{\cI} + 1/n$ and $p_{\cJ}  = p'_{\cJ}-1/n\,.$
		Thus by the standard tail bound on the Laplace distribution in \cref{fact:laplace-tail-bound} and by \cref{lemma:laplace-mechanism}, we immediately get that the algorithm is $(\epsilon, \delta)$-differentially private.
	\end{proof}
\end{lemma}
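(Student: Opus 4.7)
The plan is to instantiate the classical stability-based histogram construction: only process bins that are actually populated by the dataset, add Laplace noise to their empirical frequencies, and apply a threshold that zeros out any noisy count that is too small. Concretely, for each bin $B_{i_1,\ldots,i_d}$ that contains no data point I would output $0$; for each non-empty bin I would release $p_{i_1,\ldots,i_d} + \bm\tau$ with $\bm\tau \sim \mathrm{Lap}(0, 2/(n\eps))$ and then replace the output by $0$ whenever it falls below a threshold of order $\log(1/\delta)/(n\eps)$. Note that this is implementable in finite time: only the (at most $n$) bins that are hit by some sample need to be handled explicitly.

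For utility, the key observation is that a dataset of $n$ points populates at most $n$ bins, so we only need to control the noise on those. The Laplace tail bound from \cref{fact:laplace-tail-bound} gives $\Pr[|\bm\tau| > \alpha] \leq e^{-n\eps\alpha/2}$, and a union bound over the at most $n$ populated bins yields an overall failure probability at most $n\cdot e^{-n\eps\alpha/2}$. The hypothesis $n \geq \tfrac{8}{\eps\alpha}\log\tfrac{2}{\delta\beta}$ comfortably drives this below $\beta$, and because the threshold is much smaller than $\alpha$ (again by the same lower bound on $n$), rounding tiny noisy counts to $0$ cannot violate the $\alpha$-approximation guarantee.

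The privacy analysis is where the main subtlety sits. For adjacent datasets $Y,Y'$ differing in a single sample, at most two bin counts change --- one by $+1/n$ and the other by $-1/n$ --- while all remaining outputs are identically distributed. For a bin that is non-empty under both $Y$ and $Y'$, independent $\mathrm{Lap}(0,2/(n\eps))$ noise together with the $\ell_1$-sensitivity bound $2/n$ immediately yields pure $\eps$-DP via \cref{lemma:laplace-mechanism}. The delicate case, and the main obstacle, is a bin that is populated in exactly one of the two datasets: on one side the algorithm deterministically outputs $0$, while on the other side it releases a noisy count close to $1/n$. The threshold is designed precisely to hide this asymmetry --- choosing it as $\Theta(\log(1/\delta)/(n\eps))$ ensures via \cref{fact:laplace-tail-bound} that the noisy count is cut off except with probability at most $\delta$, upgrading the bound to $(\eps,\delta)$-DP on the asymmetric bin. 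The rest is bookkeeping: checking that the constants in the threshold and in the noise scale simultaneously deliver the $\alpha$-error bound, the failure probability $\beta$, and the privacy slack $\delta$, using the condition $\delta < 1/n$ to avoid a union bound blowup across the (at most two) affected bins.
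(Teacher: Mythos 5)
Your proposal is correct and follows essentially the same route as the paper: release $0$ for empty bins, add $\mathrm{Lap}(0,2/(n\eps))$ noise to the (at most $n$) populated bins, threshold at $\Theta(\log(1/\delta)/(n\eps))$, then argue utility by a union bound over the $n$ Laplace tails and privacy by the $2/n$ sensitivity plus the threshold hiding the bin that flips between empty and non-empty. You actually spell out the privacy case split (bin populated under both datasets versus populated under only one) more explicitly than the paper, which compresses it to a one-line invocation of \cref{lemma:laplace-mechanism} and \cref{fact:laplace-tail-bound}, but the underlying argument is the same.
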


\subsection{Sum-of-squares and pseudo-distributions}\label{section:preliminaries-sos-pseudodistributions}

We introduce here the sum-of-squares notion necessary for our private algorithm learning mixtures of Gaussians. We remark that these notions are not needed for \cref{section:sbm-private-recovery}.

Let $w = (w_1, w_2, \ldots, w_n)$ be a tuple of $n$ indeterminates and let $\R[w]$ be the set of polynomials with real coefficients and indeterminates $w,\ldots,w_n$.
We say that a polynomial $p\in \R[w]$ is a \emph{sum-of-squares (sos)} if there are polynomials $q_1,\ldots,q_r$ such that $p=q_1^2 + \cdots + q_r^2$.

\subsubsection{Pseudo-distributions}\label{section:preliminaries-pseudodistributions}
Pseudo-distributions are generalizations of probability distributions.
We can represent a discrete (i.e., finitely supported) probability distribution over $\R^n$ by its probability mass function $D\from \R^n \to \R$ such that $D \geq 0$ and $\sum_{w \in \mathrm{supp}(D)} D(w) = 1$.
Similarly, we can describe a pseudo-distribution by its mass function.
Here, we relax the constraint $D\ge 0$ and only require that $D$ passes certain low-degree non-negativity tests.

Concretely, a \emph{level-$\ell$ pseudo-distribution} is a finitely-supported function $D:\R^n \rightarrow \R$ such that $\sum_{w} D(w) = 1$ and $\sum_{w} D(w) f(w)^2 \geq 0$ for every polynomial $f$ of degree at most $\ell/2$.
(Here, the summations are over the support of $D$.)
A straightforward polynomial-interpolation argument shows that every level-$\infty$-pseudo distribution satisfies $D\ge 0$ and is thus an actual probability distribution.
We define the \emph{pseudo-expectation} of a function $f$ on $\R^d$ with respect to a pseudo-distribution $D$, denoted $\tilde{\E}_{D(w)} f(w)$, as
\begin{equation}
	\tilde{\E}_{D(w)} f(w) = \sum_{w} D(w) f(w) \,\mper
\end{equation}
The degree-$\ell$ moment tensor of a pseudo-distribution $D$ is the tensor $\E_{D(w)} (1,w_1, w_2,\ldots, w_n)^{\otimes \ell}$.
In particular, the moment tensor has an entry corresponding to the pseudo-expectation of all monomials of degree at most $\ell$ in $w$.
The set of all degree-$\ell$ moment tensors of probability distribution is a convex set.
Similarly, the set of all degree-$\ell$ moment tensors of degree $d$ pseudo-distributions is also convex.
Key to the algorithmic utility of pseudo-distributions is the fact that while there can be no efficient separation oracle for the convex set of all degree-$\ell$ moment tensors of an actual probability distribution, there's a separation oracle running in time $n^{O(\ell)}$ for the convex set of the degree-$\ell$ moment tensors of all level-$\ell$ pseudodistributions.

\begin{fact}[\cite{MR939596-Shor87,parrilo2000structured,MR1748764-Nesterov00,MR1846160-Lasserre01}]
	\label[fact]{fact:sos-separation-efficient}
	For any $n,\ell \in \N$, the following set has a $n^{O(\ell)}$-time weak separation oracle (in the sense of \cite{MR625550-Grotschel81}):
	\begin{equation}
		\Set{ \tilde{\E}_{D(w)} (1,w_1, w_2, \ldots, w_n)^{\otimes d} \mid \text{ degree-d pseudo-distribution $D$ over $\R^n$}}\,\mper
	\end{equation}
\end{fact}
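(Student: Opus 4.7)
The plan is to reduce the existence of a weak separation oracle to two standard ingredients: (a) that level-$\ell$ pseudo-distributions are characterized by a single PSD constraint on a matrix of size $n^{O(\ell)}$ whose entries are \emph{affine} functions of the moment tensor, and (b) that weak separation for a semidefinite feasibility problem of this size can be solved in time polynomial in the matrix size via an eigenvalue computation.

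For (a), I would index the candidate moment vector $m$ by multi-indices $\alpha\in\N^n$ with $|\alpha|\le \ell$ (so $m_\alpha$ plays the role of $\tilde{\E}\, w^\alpha$), and introduce the moment matrix $M(m)$ whose rows and columns are indexed by multi-indices $\alpha,\beta$ of degree at most $\ell/2$, with entries $M(m)_{\alpha,\beta} = m_{\alpha+\beta}$. For any polynomial $f(w)=\sum_{\alpha} c_\alpha w^\alpha$ of degree at most $\ell/2$, a direct expansion gives $\tilde{\E}\, f(w)^2 = c^\top M(m)\, c$, so the non-negativity condition defining a level-$\ell$ pseudo-distribution is equivalent to $M(m)\succeq 0$. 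Together with the normalization $m_0 = 1$, this exactly characterizes the target convex set. The dimension of $M(m)$ is $\binom{n+\lfloor \ell/2\rfloor}{\lfloor \ell/2\rfloor} \le n^{O(\ell)}$, and each entry is a linear readout of $m$.

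For (b), given a query $m$, first test whether $m_0 = 1$ up to the requested precision; if not, the hyperplane $\{m' : m'_0 = 1\}$ immediately gives the certificate. Otherwise, assemble $M(m)$ in time $n^{O(\ell)}$ and compute its smallest eigenvalue $\lambda_{\min}$ together with an associated unit eigenvector $v$ using any standard eigensolver, still in time $n^{O(\ell)}$. If $\lambda_{\min}$ is non-negative up to tolerance, output that $m$ lies approximately in the set. Otherwise, the linear functional
\[
L_v(m') \;:=\; v^\top M(m')\, v \;=\; \sum_{\alpha,\beta} v_\alpha v_\beta\, m'_{\alpha+\beta}
\]
is non-negative on every genuine pseudo-distribution moment vector, since for any such $m'$ coming from a pseudo-distribution $D'$ one has $L_v(m') = \tilde{\E}_{D'}\bigl(\sum_\alpha v_\alpha w^\alpha\bigr)^2 \ge 0$, while $L_v(m) = \lambda_{\min} < 0$; so $L_v$ defines a valid separating hyperplane.

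The only subtle point, rather than any deep step, is the standard bookkeeping required by the formal notion of a weak separation oracle in the sense of \cite{MR625550-Grotschel81}: matching the accept/separate tolerance to the precision used in the eigenvalue computation, bounding the bit complexity of the output hyperplane, and verifying that the produced certificate is robust enough for downstream use by the ellipsoid method. These are routine applications of the reduction from semidefinite feasibility to weak separation and do not inflate the $n^{O(\ell)}$ running time, completing the sketch.
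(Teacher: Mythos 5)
The paper states this fact without proof, citing the original references; your argument is the standard one those references use. The moment-matrix reformulation $\tilde{\E}_D f(w)^2 = c^\top M(m)\, c$, giving the characterization $m_0=1$ and $M(m)\succeq 0$, together with eigenvector-based separation, is exactly right; the one ingredient you assert but do not spell out is the converse inclusion (every $m$ with $m_0=1$ and $M(m)\succeq 0$ is realized by some finitely-supported $D$), which follows from polynomial interpolation and is needed so that passing the PSD test certifies genuine membership rather than membership in a potentially larger set.
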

This fact, together with the equivalence of weak separation and optimization \cite{MR625550-Grotschel81} allows us to efficiently optimize over pseudo-distributions (approximately)---this algorithm is referred to as the sum-of-squares algorithm.

The \emph{level-$\ell$ sum-of-squares algorithm} optimizes over the space of all level-$\ell$ pseudo-distributions that satisfy a given set of polynomial constraints---we formally define this next.

\begin{definition}[Constrained pseudo-distributions]
	Let $D$ be a level-$\ell$ pseudo-distribution over $\R^n$.
	Let $\cA = \{f_1\ge 0, f_2\ge 0, \ldots, f_m\ge 0\}$ be a system of $m$ polynomial inequality constraints.
	We say that \emph{$D$ satisfies the system of constraints $\cA$ at degree $r$}, denoted $D \sdtstile{r}{} \cA$, if for every $S\subseteq[m]$ and every sum-of-squares polynomial $h$ with $\deg h + \sum_{i\in S} \max\{\deg f_i,r\}\leq \ell$,
	\begin{displaymath}
		\tilde{\E}_{D} h \cdot \prod _{i\in S}f_i  \ge 0\,.
	\end{displaymath}
	We write $D \sdtstile{}{} \cA$ (without specifying the degree) if $D \sdtstile{0}{} \cA$ holds.
	Furthermore, we say that $D\sdtstile{r}{}\cA$ holds \emph{approximately} if the above inequalities are satisfied up to an error of $2^{-n^\ell}\cdot \norm{h}\cdot\prod_{i\in S}\norm{f_i}$, where $\norm{\cdot}$ denotes the Euclidean norm\footnote{The choice of norm is not important here because the factor $2^{-n^\ell}$ swamps the effects of choosing another norm.} of the coefficients of a polynomial in the monomial basis.
\end{definition}

We remark that if $D$ is an actual (discrete) probability distribution, then we have  $D\sdtstile{}{}\cA$ if and only if $D$ is supported on solutions to the constraints $\cA$.

We say that a system $\cA$ of polynomial constraints is \emph{explicitly bounded} if it contains a constraint of the form $\{ \|w\|^2 \leq M\}$.
The following fact is a consequence of \cref{fact:sos-separation-efficient} and \cite{MR625550-Grotschel81},

\begin{fact}[Efficient Optimization over Pseudo-distributions]\label{fact:running-time-sos}
	There exists an $(n+ m)^{O(\ell)} $-time algorithm that, given any explicitly bounded and satisfiable system\footnote{Here, we assume that the bit complexity of the constraints in $\cA$ is $(n+m)^{O(1)}$.} $\cA$ of $m$ polynomial constraints in $n$ variables, outputs a level-$\ell$ pseudo-distribution that satisfies $\cA$ approximately. 
\end{fact}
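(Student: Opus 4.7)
The plan is to realize the set of level-$\ell$ pseudo-distributions satisfying $\cA$ as a convex body in the space of degree-$\ell$ moment tensors, show that this body admits a weak separation oracle running in time $(n+m)^{O(\ell)}$, and then invoke the Gr\"otschel--Lov\'asz--Schrijver equivalence between weak separation and weak optimization (the ellipsoid method) to produce a feasible point. Explicit boundedness will be what keeps the domain compact and lets the ellipsoid method terminate in the claimed time.

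First I would fix the convex body. Let $K$ denote the set of degree-$\ell$ moment tensors $M = \tilde\E_D(1,w_1,\dots,w_n)^{\otimes \ell}$ arising from some level-$\ell$ pseudo-distribution $D$ satisfying $\cA$ at the appropriate degree. By \cref{fact:sos-separation-efficient}, the set of all pseudo-distribution moment tensors (ignoring $\cA$) already has an $n^{O(\ell)}$-time weak separation oracle: this is because the condition $\tilde\E p^2 \ge 0$ for all $p$ of degree $\le \ell/2$ is equivalent to positive semidefiniteness of the moment matrix indexed by monomials of degree $\le \ell/2$, and one can separate from the PSD cone in polynomial time by extracting a negative eigenvector. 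Next I would encode the constraints in $\cA$ in the same way: for each subset $S \subseteq [m]$ with $\sum_{i\in S}\max\{\deg f_i,r\} \le \ell$, the requirement $\tilde\E h \prod_{i\in S} f_i \ge 0$ for all sos $h$ of appropriate degree translates to positive semidefiniteness of a localized moment matrix $M_S$ whose entries are linear in the entries of $M$ (they are pseudo-expectations of $w^{\alpha}w^{\beta}\prod_{i \in S} f_i$). Each such PSD condition admits a polynomial-time separation oracle in the same way.

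The number of subsets $S$ that matter is at most $\binom{m}{\lfloor \ell/r\rfloor + 1} \le (n+m)^{O(\ell)}$ (using that constraints have bounded degree), and each localized PSD check costs $(n+m)^{O(\ell)}$, so combining them yields a weak separation oracle for $K$ in time $(n+m)^{O(\ell)}$. The explicit boundedness constraint $\|w\|^2 \le M$ inside $\cA$ forces every pseudo-moment of degree $\le \ell$ to lie in an explicit bounded interval; this gives an a priori ball containing $K$. The satisfiability hypothesis, combined with the fact that the bit complexity of $\cA$ is $(n+m)^{O(1)}$, guarantees that $K$ either contains a ball of radius $2^{-(n+m)^{O(\ell)}}$ or is non-empty up to that precision (one can perturb an exact solution by noise of this scale to produce interior points of a relaxed body). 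Feeding these bounds into the Gr\"otschel--Lov\'asz--Schrijver algorithm produces, in time $(n+m)^{O(\ell)}$, a point in an $2^{-n^{\ell}}$-neighborhood of $K$, which corresponds exactly to a pseudo-distribution approximately satisfying $\cA$ in the sense of the definition.

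The main obstacle I anticipate is not the separation step (which is essentially bookkeeping over PSD cones) but the careful accounting of the approximation error, namely showing that a point output by the ellipsoid method with additive error $2^{-n^\ell}$ indeed yields a pseudo-distribution whose pseudo-expectations of $h\prod_{i\in S} f_i$ are non-negative up to the error $2^{-n^{\ell}}\cdot \norm{h}\cdot\prod_{i\in S}\norm{f_i}$ prescribed in the definition of approximate satisfaction. This requires tracking how additive moment perturbations propagate through multiplication by the polynomials $f_i$ and $h$, and matching the error to the norms of these polynomials in the monomial basis; the explicit boundedness of $\cA$ together with the factor of $2^{-n^\ell}$ (which swamps any polynomial-sized factors in the constraints) is exactly what makes this error accounting go through.
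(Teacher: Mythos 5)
Your proposal is correct and follows essentially the same route the paper takes: the paper offers no proof of its own for this fact, stating only that it is a consequence of \cref{fact:sos-separation-efficient} together with the Gr\"otschel--Lov\'asz--Schrijver equivalence of weak separation and weak optimization, and your argument is precisely the standard unpacking of that citation (separation for the moment-matrix PSD cone plus localized PSD cones for each small subset $S$ of constraints, compactness from the explicit boundedness constraint, and ellipsoid-method error tracking to match the $2^{-n^{\ell}}$ approximation tolerance in the definition of approximate satisfaction).
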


\subsubsection{Sum-of-squares proof}\label{section:preliminaries-sos}
Let $f_1, f_2, \ldots, f_r$ and $g$ be multivariate polynomials in $w$.
A \emph{sum-of-squares proof} that the constraints $\{f_1 \geq 0, \ldots, f_m \geq 0\}$ imply the constraint $\{g \geq 0\}$ consists of  sum-of-squares polynomials $(p_S)_{S \subseteq [m]}$ such that
\begin{equation}
	g = \sum_{S \subseteq [m]} p_S \cdot \Pi_{i \in S} f_i
	\mper
\end{equation}
We say that this proof has \emph{degree $\ell$} if for every set $S \subseteq [m]$, the polynomial $p_S \Pi_{i \in S} f_i$ has degree at most $\ell$.
If there is a degree $\ell$ SoS proof that $\{f_i \geq 0 \mid i \leq r\}$ implies $\{g \geq 0\}$, we write:
\begin{equation}
	\{f_i \geq 0 \mid i \leq r\} \sststile{\ell}{}\{g \geq 0\}
	\mper
\end{equation}

Sum-of-squares proofs satisfy the following inference rules.
For all polynomials $f,g\colon\R^n \to \R$ and for all functions $F\colon \R^n \to \R^m$, $G\colon \R^n \to \R^k$, $H\colon \R^{p} \to \R^n$ such that each of the coordinates of the outputs are polynomials of the inputs, we have:

\begin{align}
	&\frac{\cA \sststile{\ell}{} \{f \geq 0, g \geq 0 \} } {\cA \sststile{\ell}{} \{f + g \geq 0\}}, \frac{\cA \sststile{\ell}{} \{f \geq 0\}, \cA \sststile{\ell'}{} \{g \geq 0\}} {\cA \sststile{\ell+\ell'}{} \{f \cdot g \geq 0\}} \tag{addition and multiplication}\\
	&\frac{\cA \sststile{\ell}{} \cB, \cB \sststile{\ell'}{} C}{\cA \sststile{\ell \cdot \ell'}{} C}  \tag{transitivity}\\
	&\frac{\{F \geq 0\} \sststile{\ell}{} \{G \geq 0\}}{\{F(H) \geq 0\} \sststile{\ell \cdot \deg(H)} {} \{G(H) \geq 0\}} \tag{substitution}\mper
\end{align}

Low-degree sum-of-squares proofs are sound and complete if we take low-level pseudo-distributions as models.

Concretely, sum-of-squares proofs allow us to deduce properties of pseudo-distributions that satisfy some constraints.

\begin{fact}[Soundness]
	\label{fact:sos-soundness}
	If $D \sdtstile{r}{} \cA$ for a level-$\ell$ pseudo-distribution $D$ and there exists a sum-of-squares proof $\cA \sststile{r'}{} \cB$, then $D \sdtstile{r\cdot r'+r'}{} \cB$.
\end{fact}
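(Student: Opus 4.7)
The plan is to directly verify the defining inequality of $D \sdtstile{r \cdot r' + r'}{} \cB$ by substituting the polynomial identity furnished by the sum-of-squares proof and then reducing to the hypothesis $D \sdtstile{r}{} \cA$ term by term. Write $\cA = \Set{f_1 \geq 0, \ldots, f_m \geq 0}$ and first consider the case of a single derived constraint $\cB = \Set{g \geq 0}$; the SoS proof $\cA \sststile{r'}{} \cB$ supplies SoS polynomials $(p_S)_{S \subseteq [m]}$ with $g = \sum_{S} p_S \prod_{i \in S} f_i$ and $\deg\Paren{p_S \prod_{i \in S} f_i} \leq r'$ for every $S$.

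Fix any SoS polynomial $h$ of degree at most $\ell - \max\Set{\deg g,\, r \cdot r' + r'}$. By linearity of pseudo-expectation,
\[
\tE_{D}\, h \cdot g \;=\; \sum_{S \subseteq [m]} \tE_{D} \Paren{h \cdot p_S} \cdot \prod_{i \in S} f_i\,,
\]
and each $h \cdot p_S$ is itself SoS, being a product of two SoS polynomials. The main (essentially bookkeeping) step is to check that every summand on the right falls within the scope of the hypothesis $D \sdtstile{r}{} \cA$. Using $\max\set{\deg f_i,r} \leq \deg f_i + r$, and observing that we may assume $\deg f_i \geq 1$ for every $i$ (constant $f_i$ give trivial constraints and can be removed), so that $\abs{S} \leq \sum_{i\in S}\deg f_i \leq r'$, one estimates
\[
\deg(h \cdot p_S) + \sum_{i \in S} \max\Set{\deg f_i, r} \;\leq\; \deg h + \deg p_S + \sum_{i \in S} \deg f_i + \abs{S} \cdot r \;\leq\; \deg h + r' + r \cdot r' \;\leq\; \ell\,,
\]
where the second inequality uses $\deg p_S + \sum_{i \in S} \deg f_i \leq r'$. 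Hence $\tE_{D} \Paren{h \cdot p_S} \cdot \prod_{i \in S} f_i \geq 0$ by assumption for each $S$, and summing yields $\tE_{D}\, h \cdot g \geq 0$, which is exactly the defining inequality of $D \sdtstile{r\cdot r'+r'}{} \Set{g \geq 0}$.

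For the general statement where $\cB$ is a set of constraints and the definition of $D\sdtstile{}{}\cB$ involves pseudo-expectations of the form $\tE_{D}\, h' \cdot \prod_{j \in T} g_j$, one simply applies the substitution above simultaneously to each factor $g_j$ and combines the resulting SoS coefficients using the addition and multiplication inference rules already listed in the excerpt. The degree accounting proceeds exactly as before, with $r'$ replaced by the aggregated degree of the combined SoS proof. The only remaining subtlety, and the point I would be most careful about, is that $D$ satisfies $\cA$ only \emph{approximately} (up to the $2^{-n^\ell}$ slack in the definition of $\sdtstile{r}{}$); but since we form only polynomially many combinations of polynomials whose coefficient norms are controlled by those of the SoS proof, the induced slack on the output is again of the same form and does not affect the conclusion.
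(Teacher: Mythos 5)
The paper states this result as a \emph{Fact} without proof, citing it implicitly as standard SoS machinery, so there is no in-paper argument to compare against. Your proof is essentially correct and takes the standard route: substitute the SoS identity $g = \sum_S p_S \prod_{i \in S} f_i$, use linearity of $\tE_D$, and verify that each resulting term falls within the scope of the hypothesis $D \sdtstile{r}{} \cA$ by degree bookkeeping. The one non-obvious step—bounding $\sum_{i \in S} \max\{\deg f_i, r\} \leq \sum_{i \in S}\deg f_i + \abs{S}\cdot r$ and then $\abs{S} \leq r'$ via the assumption $\deg f_i \geq 1$—is handled correctly, and the preprocessing justification (constant constraints are either trivially absorbable into the $p_S$ or render $\cA$ infeasible, making the claim vacuous) is sound. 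The only place where you gloss over a genuine wrinkle is the multi-constraint case for $\cB$: after expanding $\prod_{j \in T} g_j$, a given $f_i$ can appear with multiplicity $k_i > 1$, and the term $\prod_i f_i^{k_i}$ is not of the form $\prod_{i \in S'} f_i$ demanded by the definition of $\sdtstile{}{}$. One must first split $f_i^{k_i}$ into an SoS factor $f_i^{k_i - (k_i \bmod 2)}$ times at most one copy of $f_i$, absorb the SoS factors into the multiplier $h \cdot \prod_j p^{(j)}_{S_j}$, and only then run the degree estimate; the final bound $\deg h + \abs{T}(r' + r r') \leq \ell$ still goes through. Your remark on approximate satisfaction is consistent with the paper's own caveat following the Fact.
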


If the pseudo-distribution $D$ satisfies $\cA$ only approximately, soundness continues to hold if we require an upper bound on the bit-complexity of the sum-of-squares $\cA \sststile{r'}{} B$  (number of bits required to write down the proof).

In our applications, the bit complexity of all sum of squares proofs will be $n^{O(\ell)}$ (assuming that all numbers in the input have bit complexity $n^{O(1)}$).
This bound suffices in order to argue about pseudo-distributions that satisfy polynomial constraints approximately.

The following fact shows that every property of low-level pseudo-distributions can be derived by low-degree sum-of-squares proofs.

\begin{fact}[Completeness]
	\label{fact:sos-completeness}
	Suppose $d \geq r' \geq r$ and $\cA$ is a collection of polynomial constraints with degree at most $r$, and $\cA \vdash \{ \sum_{i = 1}^n w_i^2 \leq B\}$ for some finite $B$.
	
	Let $\{g \geq 0 \}$ be a polynomial constraint.
	If every degree-$d$ pseudo-distribution that satisfies $D \sdtstile{r}{} \cA$ also satisfies $D \sdtstile{r'}{} \{g \geq 0 \}$, then for every $\epsilon > 0$, there is a sum-of-squares proof $\cA \sststile{d}{} \{g \geq - \epsilon \}$.
\end{fact}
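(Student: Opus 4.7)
The plan is to prove completeness by the standard LP/SDP duality argument: the space of polynomials of degree at most $d$ is finite-dimensional, so a separation theorem will match up sum-of-squares certificates with pseudo-distributions. Concretely, let $V_d$ denote the (finite-dimensional) space of polynomials in $w_1,\dots,w_n$ of degree at most $d$, and let
\[
  K := \Set{ \sum_{S\subseteq[m]} p_S \cdot \prod_{i \in S} f_i \suchthat \text{each $p_S$ is SoS, $\deg\big(p_S \prod_{i\in S} f_i\big)\leq d$} }\subseteq V_d\,.
\]
This is a convex cone, and our goal is to show $g+\epsilon\in K$ for every $\epsilon>0$.

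Suppose for contradiction that $g+\epsilon\notin \overline{K}$ (the closure of $K$). The separating hyperplane theorem gives a nonzero linear functional $L\colon V_d\to\R$ with $L(q)\geq 0$ for every $q\in K$ and $L(g+\epsilon)<0$. Since $1\in K$, we have $L(1)\geq 0$; a short argument shows $L(1)>0$ (if $L(1)=0$, then because every $\pm C\pm p$ for $p\in V_d$ can be written, using the Archimedean bound $\sum_i w_i^2\leq B$, as an element of $K$ for large $C$, one deduces $L\equiv 0$, contradiction). Rescale so that $L(1)=1$, and set $\tilde\E_D p := L(p)$. Then $D$ is a level-$d$ pseudo-distribution: for any SoS $h$ of degree $\leq d$, the polynomial $h$ lies in $K$ (take $S=\emptyset$, $p_\emptyset=h$), hence $\tilde\E_D h\geq 0$. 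Similarly $D\sdtstile{r}{}\cA$, because for any $S\subseteq[m]$ and SoS $h$ with $\deg h + \sum_{i\in S}\max\{\deg f_i,r\}\leq d$, the product $h\cdot\prod_{i\in S}f_i$ lies in $K$ and so $\tilde\E_D\,h\prod_{i\in S}f_i\geq 0$. By hypothesis, $D\sdtstile{r'}{}\{g\geq 0\}$, and in particular $\tilde\E_D g\geq 0$ (take the empty product and $h=1$), whence
\[
  L(g+\epsilon) = \tilde\E_D g + \epsilon\cdot \tilde\E_D 1 \geq \epsilon > 0\,,
\]
contradicting $L(g+\epsilon)<0$. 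Therefore $g+\epsilon\in\overline{K}$, and it remains to remove the closure.

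The main obstacle is exactly this last step: showing that $K$ itself (not merely its closure) contains $g+\epsilon$. This is where the explicit-boundedness hypothesis $\cA\vdash\{\sum_i w_i^2\leq B\}$ is essential, via an Archimedean-module / Putinar-style argument. The point is that, modulo $\cA$, every polynomial in $V_d$ admits an a priori norm bound when represented as an SoS combination, so the ``truncated'' set $K_T=\{\sum_S p_S\prod f_i\in K : \sum_S \|p_S\|^2\leq T\}$ is the image of a compact set under a continuous linear map, hence closed, and for $T$ large enough covers a neighborhood of $g+\epsilon$ in $\overline{K}$. I would either quote Putinar's Positivstellensatz directly or give a self-contained argument: reapply the separation argument after replacing $\epsilon$ by $\epsilon/2$ and perturbing $g$ by $\epsilon/2$, together with the Archimedean bound to guarantee that the separating $L$ actually extends to a bounded functional on an enlarged compact set of moment tensors, thereby upgrading the membership in $\overline{K}$ to membership in $K$.

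Finally, two bookkeeping points need checking but are routine. First, when the pseudo-distribution $D$ only satisfies $\cA$ \emph{approximately} (as produced by \cref{fact:running-time-sos}), one gets $\cA\sststile{d}{}\{g\geq -\epsilon\}$ with an additional $2^{-n^d}$ slack, which is absorbed into $\epsilon$. Second, one must match degrees between the $\max\{\deg f_i,r\}$ convention used in $\sdtstile{r}{}$ and the actual $\deg f_i$ appearing in $K$; this only matters when $\deg f_i<r$, and in that case we simply enlarge $K$ by also allowing degrees up to $\max\{\deg f_i,r\}$ in the products (equivalently, pad low-degree constraints by a dummy SoS factor), which makes the two sides of the duality match exactly.
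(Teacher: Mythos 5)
The paper states this completeness fact without proof---it is cited as background from the sum-of-squares literature---so there is no proof in the paper to compare against. Your separating-hyperplane argument is the standard conic-duality derivation, and the skeleton is right: the cone $K$ of degree-$d$ SoS combinations, separation, normalization by $L(1)$, the check that the resulting $D$ satisfies $D \sdtstile{r}{} \cA$, and the contradiction via $L(g+\epsilon) \geq \epsilon > 0$. The justification that $L(1) > 0$ via the Archimedean bound is also fine, modulo the routine (but nontrivial) check that $C \pm p \in K$ for every $p \in V_d$ can indeed be certified at degree $\leq d$ from $\cA \vdash \Set{\sum_i w_i^2 \leq B}$.

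The genuine gap is the closure step, which you correctly identify as the crux, but neither of your proposed routes closes it as stated. Invoking Putinar's Positivstellensatz would place $g+\epsilon$ in the untruncated preordering with \emph{no} degree control, so it does not yield $\cA \sststile{d}{} \Set{g \geq -\epsilon}$ at the prescribed degree $d$. And the $K_T$ sketch, as written, is circular: saying that $K_T$ ``covers a neighborhood of $g+\epsilon$ in $\overline{K}$ for $T$ large'' presupposes precisely the a~priori bound on the representing coefficients that one needs to prove. The clean way to remove the closure is the interior-point argument you never quite reach: the same Archimedean bounds used to show $L(1)>0$ show that the constant polynomial $1$ lies in the algebraic interior of $K$. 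Running your separation argument with $g$ in place of $g+\epsilon$ gives $g \in \overline{K}$; then for any $\epsilon>0$, $g+\epsilon = 2\bigl(\tfrac12 g + \tfrac12 \epsilon\bigr)$ is a positive multiple of a convex combination of a point of $\overline{K}$ with an interior point of $K$, hence lies in $\mathrm{int}(K) \subseteq K$. (Also, your first ``bookkeeping'' remark about $2^{-n^d}$ slack is a non sequitur here: the hypothesis of the fact quantifies over pseudo-distributions satisfying $\cA$ exactly, not over outputs of the SoS solver.)
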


\subsubsection{Explictly bounded distributions}\label{section:preliminaries-explicity-bounded}

We will consider a subset of subgaussian distributions denoted as certifiably subgaussians. Many subgaussians distributions are known to be certifiably subgaussian (see \cite{KothariSS18}).

\begin{definition}[Explicitly bounded distribution]\label[definition]{definition:explicitly-bounded}
	 Let $t\in \N$. A distribution $D$ over $\R^d$ with mean $\mu$ is called $2t$-explicitly $\sigma$-bounded  if for each even integer $s$ such that $1\leq s\leq t$ the following equation has a degree $s$ sum-of-squares proof in the vector variable $u$
	 \begin{align*}
	 	 \sststile{u}{2s} \Set{\E_{\mathbf{x}\sim D}\iprod{\mathbf{x}-\mu, u}^{2s}\leq (\sigma s)^{s} \cdot \Normt{u}^{2s}}
	 \end{align*}
 	Furthermore, we say that $D$ is explicitly bounded if it is $2t$-explicitly $\sigma$-bounded for every $t\in \N$.
 	A finite set $X\subseteq \R^d$ is said to be $2t$-explicitly $\sigma$-bounded if the uniform distribution on $X$ is $2t$-explicitly $\sigma$-bounded.
\end{definition}

Sets that are $2t$-explicitly $\sigma$-bounded with large intersection satisfy certain key properties.
Before introducing them we conveniently present the following definition.
\begin{definition}[Weight vector inducing distribution]\label[definition]{definition:weight-vector-inducing-distribution}
	Let $Y$ be a set of size $n$ and let $p\in [0,1]^n$ be a vector satisfying $\Normo{p}=1\,.$ We say that $p$ induces the distribution $D$ with support $Y$ if
	\begin{align*}
		\bbP_{\mathbf{y}\sim D} \Brac{\mathbf{y}=y_i} = p_i\,.
	\end{align*}
\end{definition}

\begin{theorem}[\cite{KothariSS18, Hopkins018}]\label{theorem:closeness-parameters-subgaussians}
	Let $Y\subseteq \R^d$ be a set of cardinality $n$. Let $p,p'\in \Brac{0,1}^n$ be weight vectors satisfying $\Normo{p}=\Normo{p'}=1$ and $\Normo{p-p'}\leq \beta\,.$ Suppose that $p$ (respectively $p'$) induces a $2t$-explicitly $\sigma_1$-bounded (resp. $\sigma_2$) distribution over $Y$ with mean $\mu_{(p)}$  (resp. $\mu_{(p')}$). %
	There exists an absolute constant $\beta^*$ such that, if $\beta\leq \beta^*$, then %
	for  $\sigma=\sigma_1+\sigma_2\,:$
	\begin{align*}
		\Norm{\mu_{(p)}-\mu_{(p')}}\leq \beta^{1-1/2t}\cdot O\Paren{\sqrt{\sigma t}} \,.
	\end{align*}
\end{theorem}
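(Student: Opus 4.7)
The plan is to bound $\Norm{\mu_{(p)} - \mu_{(p')}}$ by duality: for each unit $u\in\R^d$ I will establish that $M_u := \Abs{\iprod{\mu_{(p)}-\mu_{(p')},u}} \le \beta^{1-1/(2t)} \cdot O(\sqrt{\sigma t})$, and then take the supremum. Set $q := p - p'$, so $\Normo{q}\le \beta$ and $\sum_i q_i = 0$ because both $p$ and $p'$ are probability weights. Using $\sum_i q_i = 0$ to recenter at $\mu_{(p)}$, I rewrite $\iprod{\mu_{(p)}-\mu_{(p')},u} = \sum_i q_i \iprod{y_i - \mu_{(p)},u}$. H\"older's inequality with conjugate exponents $\tfrac{2t}{2t-1}$ and $2t$, applied to the splitting $|q_i|= |q_i|^{1-1/(2t)}\cdot|q_i|^{1/(2t)}$, then yields
\begin{align*}
	M_u \le \beta^{1-1/(2t)} \cdot \Paren{\sum_i \abs{q_i}\cdot\abs{\iprod{y_i - \mu_{(p)},u}}^{2t}}^{1/(2t)}.
\end{align*}

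To bound the inner sum, I use $|q_i|\le p_i + p'_i$ so that the sum splits into two pieces. The first, $\sum_i p_i \iprod{y_i - \mu_{(p)}, u}^{2t}$, is at most $(\sigma_1 t)^t$ by the $2t$-explicit $\sigma_1$-bound for the distribution induced by $p$ applied to the unit vector $u$. For the second, $\sum_i p'_i \iprod{y_i - \mu_{(p)}, u}^{2t}$, I write $y_i - \mu_{(p)} = (y_i - \mu_{(p')}) + (\mu_{(p')} - \mu_{(p)})$ and use the convexity bound $(a+b)^{2t} \le 2^{2t-1}(a^{2t}+b^{2t})$. Because $\sum_i p'_i = 1$ and the $p'$-distribution is $2t$-explicitly $\sigma_2$-bounded, this piece is at most $2^{2t-1}(\sigma_2 t)^t + 2^{2t-1} M_u^{2t}$. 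Combining and using $\sigma_1^t + 2^{2t-1}\sigma_2^t \le 2^{2t}\sigma^t$, I arrive at the self-referential inequality
\begin{align*}
	M_u^{2t} \le \beta^{2t-1}\cdot\Paren{2^{2t}\,(\sigma t)^t + 2^{2t-1}\, M_u^{2t}}.
\end{align*}

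The closing step is to absorb the self-reference: provided $\beta^\star$ is chosen small enough that $\beta^{2t-1}\cdot 2^{2t-1} \le \tfrac12$ for all $t\ge 1$ whenever $\beta\le \beta^\star$, rearranging yields $M_u^{2t}\le O(\beta^{2t-1}(\sigma t)^t)$. Since $\paren{2^{2t-1}}^{1/(2t-1)} = 2$, the condition reduces to $2\beta^\star \le 2^{-1/(2t-1)}$, which is satisfied uniformly in $t$ by $\beta^\star = \tfrac14$. Taking a $2t$-th root then gives $M_u \le \beta^{1-1/(2t)} \cdot O(\sqrt{\sigma t})$, and supremizing over unit $u$ produces the claimed bound on $\Norm{\mu_{(p)}-\mu_{(p')}}$. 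The main subtlety is this self-referential term, which is what forces the smallness hypothesis $\beta\le \beta^\star$; I expect no further obstacle. Note that the sum-of-squares content of the hypothesis is only used through the implied $2t$-th moment bound for unit vectors $u$, so the argument does not otherwise invoke pseudo-distribution machinery.
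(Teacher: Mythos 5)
The paper does not prove \cref{theorem:closeness-parameters-subgaussians} itself---it cites it to \cite{KothariSS18, Hopkins018}---so there is no in-paper proof to compare against. Your argument is correct and is essentially the standard one from those references: dualize over unit $u$, use $\sum_i q_i = 0$ to recenter at $\mu_{(p)}$, apply H\"older with exponents $(2t/(2t-1), 2t)$, split $|q_i| \le p_i + p'_i$, change centers via $(a+b)^{2t} \le 2^{2t-1}(a^{2t}+b^{2t})$ for the $p'$-term, and close the resulting self-referential bound on $M_u$ using $\beta \le \beta^\star$ small; the absorption is legitimate since $M_u$ is a priori finite (finite support). One small remark, not a gap in your reasoning: \cref{definition:explicitly-bounded} as written only gives moment bounds for \emph{even} $s \le t$, so the $2t$-th-moment bound you invoke is literally available only for even $t$; this is a quirk of how the paper states the definition rather than a flaw in your proof, and is handled in the original references.
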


In the context of learning Gaussian mixtures, we will make heavy use of the statement below.

\begin{theorem}[\cite{KothariSS18, Hopkins018}]\label{theorem:subgaussianity-from-total-variation}
	Let $Y$ be a $2t$-explicitly $\sigma$-bounded set of size $n$. Let $p\in\R^n$ be the weight vector inducing the uniform distribution over $Y$. Let $p'\in \R^n$ be a unit vector satisfying $\Normo{p-p'}\leq \beta$ for some $\beta\leq \beta^*$ where $\beta^*$ is a small constant. Then $p'$ induces a $2t$-explicitly $(\sigma+O(\beta^{1-1/2t}))$-bounded distribution over $Y$.
\end{theorem}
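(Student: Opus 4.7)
My plan is to exhibit, for every even integer $s$ with $1 \leq s \leq t$, a degree-$2s$ sum-of-squares proof in $u$ of the inequality
$\sum_i p'_i \langle y_i - \mu', u\rangle^{2s} \leq ((\sigma + O(\beta^{1-1/(2t)})) s)^s \Normt{u}^{2s}$,
where $\mu := \mu_{(p)}$ and $\mu' := \mu_{(p')}$ denote the means induced by $p$ and $p'$. Introduce the linear forms $c_i(u) := \langle y_i - \mu, u\rangle$ and $b(u) := \langle \mu' - \mu, u\rangle$, and the signed weight vector $q := p' - p$. Then $\Normo{q} \leq \beta$, $\sum_i q_i = 0$, and consequently the convenient identity $b(u) = \sum_i q_i c_i(u)$ holds.

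\paragraph{Recentering and reduction to two SoS bounds}
I would first decompose $\langle y_i - \mu', u\rangle = c_i(u) - b(u)$ and apply the SoS polynomial inequality $(a-b)^{2s} \leq (1+\eta)^{2s-1} a^{2s} + (1+1/\eta)^{2s-1} b^{2s}$ (a standard AM-GM consequence, valid as a degree-$2s$ SoS in $a,b$ for any $\eta > 0$) to split the target into a ``non-central'' piece $A(u) := \sum_i p'_i c_i(u)^{2s}$ and a ``mean-gap'' piece $B(u) := b(u)^{2s}$, with $\eta$ optimized only at the very end. For $A(u)$, writing $p' = p + q$ reduces the $p$-part to $\leq (\sigma s)^s \Normt{u}^{2s}$ by the hypothesis that $Y$ is $2t$-explicitly $\sigma$-bounded; the $q$-part is dominated by $\sum_i |q_i| c_i(u)^{2s}$ (replacing $q_i$ by $|q_i|$ only adds non-negative SoS terms, since each $c_i(u)^{2s}$ is itself a sum of squares in $u$). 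On this sum I would apply an SoS form of H\"older's inequality with exponents $\Paren{\tfrac{t}{t-s}, \tfrac{t}{s}}$, valid at degree $2t$ in $u$, to obtain $\sum_i |q_i| c_i(u)^{2s} \leq \beta^{(t-s)/t} \Paren{\sum_i |q_i| c_i(u)^{2t}}^{s/t}$. For $B(u)$, I would apply SoS H\"older to $b(u) = \sum_i q_i c_i(u)$ at exponents $\Paren{\tfrac{2t}{2t-1}, 2t}$, obtaining the polynomial bound $b(u)^{2t} \leq \beta^{2t-1} \sum_i |q_i| c_i(u)^{2t}$, which is exactly where the final exponent $\beta^{1-1/(2t)}$ on $\beta$ enters.

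\paragraph{Closing the self-reference; main obstacle}
Both of the previous estimates reduce to controlling the single quantity $M(u) := \sum_i |q_i| c_i(u)^{2t}$. Writing $q = q_+ - q_-$ with $q_{\pm,i} \geq 0$ and $\Normo{q_\pm} \leq \beta/2$, I would use the pointwise inequalities $q_{-,i} \leq p_i$ and $q_{+,i} \leq p'_i$ to reduce $M(u)$ to a combination of $\sum_i p_i c_i(u)^{2t}$ (bounded by $(\sigma t)^t \Normt{u}^{2t}$ by hypothesis) and $\sum_i p'_i c_i(u)^{2t}$; the latter is, via the recentering step applied at $s = t$, comparable to the target central moment $N'(u) := \sum_i p'_i \langle y_i - \mu', u\rangle^{2t}$. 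Chaining everything produces a self-bounding inequality of the shape $N'(u) \leq C_1 (\sigma t)^t \Normt{u}^{2t} + C_2 \beta^{2t-1} N'(u)$. For $\beta$ below an absolute constant $\beta^*$, the coefficient $C_2 \beta^{2t-1}$ is $< 1/2$, so the inequality closes to $N'(u) \leq 2 C_1 (\sigma t)^t \Normt{u}^{2t}$. Substituting this back into the estimates on $A(u)$ and $B(u)$ and choosing $\eta \asymp \beta^{1-1/(2t)}/\sigma$ yields the claimed bound for every even $s \leq t$. The genuinely delicate step is the self-bound at the top degree: the crude pointwise inequality $|q_i| \leq p_i + p'_i$ alone produces a \emph{circular} estimate on $M(u)$, so closing the recursion requires carefully combining the $\ell_1$ control $\Normo{q} \leq \beta$ with pointwise information while keeping every manipulation inside the SoS calculus---an analysis that is tedious but by now standard in the robust moment-estimation toolkit of \cite{KothariSS18, Hopkins018}.
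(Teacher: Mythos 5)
The paper does not prove this statement itself; it is cited from \cite{KothariSS18, Hopkins018} as a black box, so there is nothing internal to compare against, and your attempt has to stand on its own. It has a genuine flaw at the step you rightly flag as delicate, and the proposed fix does not close it. The self-bound you claim, $N'(u) \leq C_1 (\sigma t)^t \normt{u}^{2t} + C_2 \beta^{2t-1} N'(u)$, is not what the chain produces. At the top degree $s=t$ the $q$-part of $A(u)$ contributes $M(u)$ with the recentering weight $(1+\eta)^{2t-1}$ and \emph{no} power of $\beta$ at all (your H\"older factor $\beta^{(t-s)/t}$ is $1$ at $s=t$); controlling $M_+(u)=\sum_i q_{+,i} c_i(u)^{2t}$ via $q_{+,i}\le p'_i$ and undoing the recentering then re-introduces $N'(u)$ with another factor $(1+\eta')^{2t-1}\ge 1$. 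The resulting coefficient of $N'(u)$ is $\bigparen{(1+\eta)^{2t-1}+(1+1/\eta)^{2t-1}\beta^{2t-1}}(1+\eta')^{2t-1}\big/\bigparen{1-(1+1/\eta')^{2t-1}\beta^{2t-1}}$, which exceeds $1$ for all $\eta,\eta'>0$ and all admissible $\beta$: shrinking $\eta,\eta'$ blows up the $(1+1/\eta)^{2t-1}\beta^{2t-1}$ terms, while shrinking $\beta$ leaves the $\ge 1$ factor $(1+\eta)^{2t-1}(1+\eta')^{2t-1}$ untouched. The dangerous $N'(u)$ dependence travels through $A\to M_+\to p'\to N'$, on which there is no $\beta$ suppression, because $q_{+,i}\le p'_i$ simply replaces the quantity being bounded by itself and the $\ell_1$ control $\normo{q_+}\le\beta$ never enters. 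This is not an artifact of sloppy constants: under only the stated hypotheses all of the $\beta$ mass of $q_+$ could sit on one index whose $c_i(u)^{2t}$ is of order $n(\sigma t)^t\normt{u}^{2t}$ while still respecting the $p$-moment bound, so $M_+(u)$ can pick up a term of order $\beta n(\sigma t)^t\normt{u}^{2t}$, which is not $O\bigparen{(\sigma t)^t}\normt{u}^{2t}$ for constant $\beta$ and large $n$. Some pointwise control of $p'$ relative to $p$ (available in the applications in the cited works) appears to be an unstated necessity.

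There is a second, independent obstruction. The intermediate estimate $\sum_i |q_i| c_i(u)^{2s} \le \beta^{(t-s)/t} M(u)^{s/t}$ has a fractional power of a polynomial on its right-hand side; it is not a polynomial inequality and cannot be a line in an SoS derivation at any degree. The legitimate SoS H\"older gives $\bigparen{\sum_i |q_i| c_i(u)^{2s}}^t \le \beta^{t-s} M(u)^s$ at degree $2st$, but passing from this to a degree-$2s$ certificate requires taking $t$-th roots, which is not a valid inference in the SoS calculus. You would need to first reduce $M(u)$ to a scalar multiple of $\normt{u}^{2t}$ and only then apply H\"older so that every fractional exponent sits on a scalar; as written, the lower-moment certificates are never actually produced.
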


\section{Stability of strongly-convex optimization}\label{section:stability-strongly-convex-optimization}

In this section, we prove \(\ell_2\) sensitivity bounds for the minimizers of a general class of (strongly) convex optimization problems.
In particular, we show how to translate a uniform point-wise sensitivity bound for the objective functions into a
\(\ell_2\) sensitivity bound for the minimizers.

\begin{lemma}[Stability of strongly-convex optimization]\label{lemma:key-structure-lemma}
	Let $\cY$ be a set of datasets. Let $\cK(\cY)$ be a family of closed convex subsets of $\R^m$ parametrized by $Y\in\cY$ and let $\cF(\cY)$ be a family of functions $f_Y:\cK(Y)\rightarrow \R\,,$ parametrized by $Y\in \cY\,,$ such that:
	\begin{enumerate}[(i)]
		\item for adjacent datasets $Y,Y'\in \cY$ and $X\in \cK(Y)$ there exist $Z\in \cK(Y)\cap \cK(Y')$ satisfying  $\Abs{f_Y(X) - f_{Y'}(Z)} \leq \alpha$ and $\Abs{f_Y(Z)-f_{Y'}(Z)}\leq \alpha\,.$
		\item $f_Y$ is $\kappa$-strongly convex in $X\in \cK(Y)$.
	\end{enumerate}
	Then for $Y, Y'\in \cY$, $\hat{X} := \arg\min_{X\in \cK(Y)} f_Y(X)$ and $\hat{X}' := \arg\min_{X'\in \cK(Y')} f_{Y'}(X')\,,$
	it holds
	\begin{align*}
		\Snorm{\hat{X}-\hat{X}'}_2\leq \frac{12\alpha}{\kappa}\,.
	\end{align*}
	\begin{proof}
		Let $Z\in \cK(Y)\cap \cK(Y')$ be a point such that $\Abs{f_{Y}(\hat{X}) - f_{Y'}(Z)}\leq \alpha$ and $\Abs{f_Y(Z)-f_{Y'}(Z)}\leq \alpha$.
		By $\kappa$-strong convexity of $f_Y$ and $f_{Y'}$ (\cref{proposition:triangle-inequality-strong-convexity}) it holds
		\begin{align*}
			\Snorm{\hat{X}-\hat{X}'}_2&\leq 	2\Snorm{\hat{X}-Z}_2+ 	2\Snorm{Z-\hat{X}'}_2\\
			&\leq \frac{4}{\kappa}\Paren{f_Y(Z)-f_{Y}(\hat{X}) + f_{Y'}(Z)-f_{Y'}(\hat{X}')} \,.
		\end{align*}
		Suppose w.l.o.g. $f_Y(\hat{X}) \leq f_{Y'}(\hat{X}')$, for a symmetric argument works in the other case. Then 
		\begin{align*}
			f_{Y}(Z) \leq f_{Y'}(Z)+\alpha \leq f_{Y}(\hat{X})+2\alpha 
		\end{align*}
		and 
		\begin{align*}
			f_{Y}(\hat{X}) \leq f_{Y'}(\hat{X}') \leq f_{Y'}(Z) \leq f_{Y}(\hat{X})+\alpha \,.
		\end{align*} 
		It follows as desired
		\begin{align*}
			f_Y(Z)- f_{Y}(\hat{X}) + f_{Y'}(Z)-f_{Y'}(\hat{X}') \leq 3\alpha\,.
		\end{align*}
	\end{proof}
\end{lemma}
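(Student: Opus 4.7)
The plan is to use property (ii), the $\kappa$-strong convexity of each $f_Y$, to convert the point-wise sensitivity of the objectives (property (i)) into an $\ell_2$ sensitivity of their constrained minimizers. The strategy has three steps: pass through a common intermediate point $Z\in\cK(Y)\cap\cK(Y')$, bound $\snormt{\hat X-Z}$ and $\snormt{\hat X'-Z}$ by function-value gaps using strong convexity, and then bound those gaps by $O(\alpha)$ using property (i).

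First, I would invoke the standard consequence of $\kappa$-strong convexity combined with first-order optimality over a closed convex set: if $\hat X$ minimizes $f_Y$ over $\cK(Y)$, then for every $Z\in\cK(Y)$,
\[
\snormt{\hat X-Z}\leq \tfrac{2}{\kappa}\bigl(f_Y(Z)-f_Y(\hat X)\bigr),
\]
and analogously $\snormt{\hat X'-Z}\leq \tfrac{2}{\kappa}\bigl(f_{Y'}(Z)-f_{Y'}(\hat X')\bigr)$ for every $Z\in\cK(Y')$. Combining these with the elementary bound $\|a-b\|_2^2\leq 2\|a-c\|_2^2+2\|c-b\|_2^2$ applied with $c=Z$ yields
\[
\snormt{\hat X-\hat X'}\leq \tfrac{4}{\kappa}\Bigl[\bigl(f_Y(Z)-f_Y(\hat X)\bigr)+\bigl(f_{Y'}(Z)-f_{Y'}(\hat X')\bigr)\Bigr]
\]
for any $Z\in\cK(Y)\cap\cK(Y')$. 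The remaining task is to choose such a $Z$ making the bracketed quantity $O(\alpha)$.

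Property (i) applied to $X=\hat X$ supplies exactly such a $Z$, with $|f_Y(\hat X)-f_{Y'}(Z)|\leq\alpha$ and $|f_Y(Z)-f_{Y'}(Z)|\leq\alpha$. Assume without loss of generality that $f_Y(\hat X)\leq f_{Y'}(\hat X')$; the reverse case follows symmetrically by invoking property (i) starting from $\hat X'$ (property (i) being symmetric under swapping the two datasets). Chaining the two approximate-equality hypotheses gives $f_Y(Z)\leq f_{Y'}(Z)+\alpha\leq f_Y(\hat X)+2\alpha$, so the first gap is at most $2\alpha$. For the second, since $Z\in\cK(Y')$ and $\hat X'$ minimizes $f_{Y'}$ there, $f_{Y'}(\hat X')\leq f_{Y'}(Z)$, while $f_{Y'}(Z)\leq f_Y(\hat X)+\alpha\leq f_{Y'}(\hat X')+\alpha$ by the WLOG assumption, so $f_{Y'}(Z)-f_{Y'}(\hat X')\leq\alpha$. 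Summing yields a $3\alpha$ bound on the bracket and the advertised $\snormt{\hat X-\hat X'}\leq 12\alpha/\kappa$.

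The only step requiring real care is extracting the common reference point $Z$ from property (i) instead of trying to compare $\hat X$ and $\hat X'$ directly, since in general neither minimizer lies in the other's feasible region, so the strong-convexity estimate cannot be applied across the two problems. Once $Z$ is in hand, one has to track the direction of each chained inequality to verify that the WLOG reduction covers both cases. Apart from this bookkeeping, nothing else is delicate: the strong-convexity-to-distance reduction is classical and the final arithmetic is routine.
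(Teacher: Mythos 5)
Your proof is correct and follows essentially the same route as the paper's: choose the intermediate point $Z$ from property (i) applied at $\hat X$, pass through $Z$ via $\|a-b\|^2\leq 2\|a-c\|^2+2\|c-b\|^2$, convert each distance-to-$Z$ into a function-value gap via strong convexity plus first-order optimality, and chain the approximate equalities under the WLOG $f_Y(\hat X)\leq f_{Y'}(\hat X')$. You are in fact slightly more explicit than the paper in noting that the reverse case requires re-applying property (i) starting from $\hat X'$ (since the chosen $Z$ is tied to the argument $X=\hat X$), which is a genuine subtlety the paper glosses over.
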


\section{Private recovery for stochastic block models}\label{section:sbm-private-recovery}

In this section, we present how to achieve exact recovery in stochastic block models privately and thus prove \cref{theorem:main-sbm-private-exact-recovery}.
To this end, we first use the stability of strongly convex optimization (\cref{lemma:key-structure-lemma}) to obtain a private weak recovery algorithm in \cref{section:sbm-private-weak-recovery}.
Then we show how to privately boost the weak recovery algorithm to achieve exact recovery in \cref{section:sbm-private-exact-recovery}.
In \cref{section:sbm-lower-bound}, we complement our algorithmic results by providing an almost tight lower bound on the privacy parameters.
We start by defining the relevant notion of adjacent datasets.

\begin{definition}[Adjacent graphs]\label[definition]{definition:sbm-adjacent-graphs}
  Let $G\,, G'$ be graphs with vertex set $[n]$. We say that $G\,, G'$ are adjacent if $\Card{E(G)\triangle E(G')}= 1\,.$ 
\end{definition}

\begin{remark}[Parameters as public information]
  We remark that we assume the parameters $n, \gamma, d$ to be \textit{public information} given in input to the algorithm.
\end{remark}

\subsection{Private weak recovery for stochastic block models}\label{section:sbm-private-weak-recovery}

In this section, we show how to achieve weak recovery privately via stability of strongly convex optimization (\cref{lemma:key-structure-lemma}).
We first introduce one convenient notation.
The error rate of an estimate $\hat{x}\in\sbits^n$ of the true partition $x\in\sbits^n$ is defined as $\err(\hat{x},x) :=  \frac{1}{n} \cdot \min\{\Ham(\hat{x},x), \Ham(\hat{x},-x)\}$.\footnote{Note $\abs{\iprod{\hat{x},x}} = (1-2\err(\hat{x},x)) \cdot n$ for any $\hat{x},x\in\sbits^n$.}
Our main result is the following theorem.

\begin{theorem} \label{thm:private_weak_recovery}
  Suppose $\gamma\sqrt{d}\geq12800\,, \eps, \delta \geq 0$.
  There exists an (\cref{algorithm:private_weak_recovery}) such that, for any $x\in\sbits^n$, on input $\rv{G}\sim\SBM_n(\gamma,d,x)$, outputs $\hat{x}(\rv{G})\in\sbits^n$ satisfying
  \[
    \err\Paren{\hat{x}(\mathbf{G}), x} \leq O\Paren{ \frac{1}{\gamma\sqrt{d}} + \frac{1}{\gamma d} \cdot \frac{\log(2/\delta)}{\eps^2} } 
  \]
  with probability $1-\exp(-\Omega(n))$.
  Moreover, the algorithm is $(\eps,\delta)$-differentially private for any input graph and runs in polynomial time. 
\end{theorem}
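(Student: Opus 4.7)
The plan is to instantiate the strongly-convex framework from \cref{lemma:key-structure-lemma} on the basic SDP relaxation and then round via the Gaussian mechanism. Concretely, let $Y(G) \in \R^{n\times n}$ be the rescaled centered adjacency matrix described in \cref{section:techniques} (with $Y_{ij}=\tfrac{1}{\gamma d}(1-\tfrac{d}{n})$ on edges and $-\tfrac{1}{\gamma n}$ off edges), take the constraint set $\cK = \{X \sge 0 : X_{ii} = 1/n\}$ (which does \emph{not} depend on the input), and define
\[
  f_Y(X) \;=\; \Normf{X}^2 - 2\iprod{X,Y}, \qquad \hat X(G) \;:=\; \argmin_{X\in\cK} f_{Y(G)}(X).
\]
The objective is $\kappa$-strongly convex with $\kappa = 2$. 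The algorithm is: compute $\hat X(\mathbf G)$, form $\tilde X = \hat X(\mathbf G) + \mathbf N$ with $\mathbf N \sim N(0,\sigma^2)^{n\times n}$ for a suitable $\sigma$, and output $\hat x = \sign(\mathbf v_1)$ where $\mathbf v_1$ is a leading eigenvector of $\tilde X$. Because $\cK$ is input-independent, the overall algorithm is a post-processing of $\tilde X$, so it suffices to make releasing $\tilde X$ be $(\eps,\delta)$-DP.

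For the \textbf{privacy} analysis, on adjacent graphs $G,G'$ (differing on a single edge) the matrix $Y(G)-Y(G')$ has exactly two nonzero entries, each of magnitude $1/(\gamma d)$, so $\Normo{Y(G)-Y(G')} \leq 2/(\gamma d)$. For any $X \in \cK$, $X \sge 0$ with $X_{ii}=1/n$ implies $\Normi{X} \leq 1/n$, and H\"older gives $|f_Y(X) - f_{Y'}(X)| = 2|\iprod{X, Y-Y'}| \leq 4/(n\gamma d)$. Since $\cK(Y) = \cK(Y')$, condition \textit{(ii)} of \cref{lemma:key-structure-lemma} follows from condition \textit{(i)} by taking $Z = X$, so the lemma yields
\[
  \Snormf{\hat X(G) - \hat X(G')} \;\leq\; \frac{12 \cdot 4/(n\gamma d)}{2} \;=\; O\!\Paren{\tfrac{1}{n\gamma d}}.
\]
Taking $\sigma^2 = \Theta\!\Paren{\tfrac{1}{n\gamma d}\cdot\tfrac{\log(2/\delta)}{\eps^2}}$, the Gaussian mechanism (\cref{lemma:gaussian-mechanism}) certifies $(\eps,\delta)$-DP of $\tilde X$ for every input graph, and \cref{lemma:differential-privacy-post-processing} transfers this to $\hat x$.

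For \textbf{utility} I will invoke the standard Gu\'edon--Vershynin / Montanari SDP analysis (the quadratic objective gives the same guarantee since its minimizer equals the projection of $Y$ onto $\cK$). With probability $1-e^{-\Omega(n)}$ the top eigenvalue of $\hat X(\mathbf G)$ is $\geq 1 - O(1/(\gamma\sqrt d))$ and its top unit eigenvector $\mathbf u_1$ satisfies $\iprod{\mathbf u_1, x/\sqrt n}^2 \geq 1 - O(1/(\gamma\sqrt d))$, while all remaining eigenvalues are $O(1/(\gamma\sqrt d))$, giving a spectral gap of order $1$. The noise has operator norm $\Normop{\mathbf N} \leq O(\sigma \sqrt n) = O\!\Paren{\tfrac{1}{\eps}\sqrt{\tfrac{\log(1/\delta)}{\gamma d}}}$ with probability $1-e^{-\Omega(n)}$. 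For $\gamma\sqrt d$ above the stated constant the gap dominates the noise, so Davis--Kahan gives
\[
  \min_{s\in\{\pm 1\}} \Snorm{\mathbf v_1 - s\,\mathbf u_1}_2 \;\leq\; O\!\Paren{\tfrac{1}{\gamma\sqrt d} + \tfrac{1}{\gamma d}\cdot\tfrac{\log(2/\delta)}{\eps^2}},
\]
and combining with the above correlation bound gives the same order upper bound for $\min_s \Snorm{\mathbf v_1 - s\,x/\sqrt n}_2$. Finally, since $\sign$ of a unit vector can flip at coordinate $i$ only when $|\mathbf v_{1,i} - s x_i/\sqrt n| \geq 1/\sqrt n$, we convert the $\ell_2$ bound to Hamming distance via Markov's inequality at threshold $1/\sqrt n$, yielding $\err(\hat x, x) \leq O\!\Paren{\tfrac{1}{\gamma\sqrt d} + \tfrac{1}{\gamma d}\cdot\tfrac{\log(2/\delta)}{\eps^2}}$ as required.

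The \textbf{main obstacles} are two-fold. First, verifying the SDP utility bound for the quadratic objective with the prescribed Frobenius-norm rate $O(1/(\gamma\sqrt d))$ on $\Snormf{\hat X - xx^{\top}/n}$ (the standard proofs are written for the linear objective, but an essentially identical Grothendieck-type argument applies). Second, controlling the random noise matrix: the Frobenius norm of $\mathbf N$ is huge but its operator norm is small, and one must be careful that the Davis--Kahan perturbation is applied in the operator norm so that the eigenvector correlation bound kicks in uniformly across the inputs; this is precisely why the Gaussian mechanism (rather than e.g.\ entrywise Laplace) is the right tool here.
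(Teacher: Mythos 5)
Your proposal is correct and mirrors the paper's proof essentially line for line: the same $\ell_2$-sensitivity bound on the projection via \cref{lemma:key-structure-lemma} (exploiting that $\cK$ is input-independent so condition \textit{(ii)} reduces to \textit{(i)}), Gaussian mechanism plus post-processing for privacy, the Gu\'edon--Vershynin Frobenius-norm bound on $\hat X(\mathbf G)-xx^\top/n$, Davis--Kahan to control the eigenvector of the noised matrix, and the elementary Hamming-from-$\ell_2$ conversion $\tfrac1n\Ham(\sign(v),x)\leq\|v-x/\sqrt n\|^2$. The one small inaccuracy is your claim that $\lambda_1(\hat X(\mathbf G))\geq 1-O(1/(\gamma\sqrt d))$ and the remaining eigenvalues are $O(1/(\gamma\sqrt d))$; from the Frobenius bound $\|\hat X-xx^\top/n\|_F^2\leq O(1/(\gamma\sqrt d))$ Weyl only yields $1\pm O((\gamma\sqrt d)^{-1/2})$, but this is inconsequential since all that is used is a constant spectral gap, which holds once $\gamma\sqrt d$ exceeds the stated constant.
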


Before presenting the algorithm we introduce some notation.
Given a graph $G$, let $Y(G) := \frac{1}{\gamma d}\paren{A(G)-\frac{d}{n}J}$ where $A(G)$ is the adjacency matrix of $G$ and $J$ denotes all-one matrices.
Define $\cK := \Set{X\in \R^{n\times n} \suchthat X\sge 0\,, X_{ii}=\frac{1}{n}\,\, \forall i}$.
The algorithm starts with projecting matrix $Y(G)$ to set $\cK$. 
To ensure privacy, then it adds Gaussian noise to the projection $X_1$ and obtains a private matrix $X_2$.
The last step applies a standard rounding method.

\begin{algorithmbox}[Private weak recovery for SBM] \label{algorithm:private_weak_recovery}
  \mbox{}\\
  \textbf{Input:} Graph $G$.

  \noindent
  \textbf{Operations:}
  \begin{enumerate}
    \item Projection: $X_1 \gets \argmin_{X\in\cK}\norm{Y(G)-X}_F^2$.
    \item Noise addition: $\rv{X}_2 \gets X_1 + \rv{W}$ where $\rv{W} \sim \cN\Paren{0,\frac{24}{n\gamma d}\frac{\log(2/\delta)}{\eps^2}}^{n\times n}$.
    \item Rounding: Compute the leading eigenvector $\rv{v}$ of $\rv{X}_2$ and return $\sign(\rv{v})$.
  \end{enumerate}
\end{algorithmbox}

In the rest of this section, we will show \cref{algorithm:private_weak_recovery} is private in \cref{lem:privacy_private_weak_recovery} and its utility guarantee in \cref{lem:utility_priavte_weak_recovery}.
Then \cref{thm:private_weak_recovery} follows directly from \cref{lem:privacy_private_weak_recovery} and \cref{lem:utility_priavte_weak_recovery}.

\paragraph{Privacy analysis}
Let $\cY$ be the set of all matrices $Y(G)=\frac{1}{\gamma d}\paren{A(G)-\frac{d}{n}J}$ where $G$ is a graph on $n$ vertices.
We further define $q:\cY\rightarrow \cK$ to be the function 
\begin{equation} \label{eq:stable_func}
  q(Y):=\argmin_{X\in \cK}\normf{Y-X}^2 .
\end{equation}

We first use \cref{lemma:key-structure-lemma} to prove that function $q$ is stable.

\begin{lemma}[Stability] \label{lemma:sbm-stability}
  The function $q$ as defined in \cref{eq:stable_func} has $\ell_2$-sensitivity $\Delta_{q, 2} \leq \sqrt{\frac{24}{n\gamma d}}$.
\end{lemma}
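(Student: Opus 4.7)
The plan is a direct application of \cref{lemma:key-structure-lemma}. We rewrite the objective so that it is manifestly $2$-strongly convex in $X$ and satisfies the point-wise stability hypothesis via H\"older's inequality, using the fact that both $\cK$ is independent of $Y$ and matrices of the form $Y(G)$ differ on only $O(1)$ entries for adjacent graphs.

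First I would observe that the Frobenius projection onto $\cK$ is unchanged if we minimize instead the function $f_Y(X) := \Normf{X}^2 - 2\iprod{Y, X}$ in place of $\Normf{Y-X}^2$, since the two differ by $\Normf{Y}^2$, which is a constant independent of $X$. The new objective is $2$-strongly convex in $X$, so we may apply \cref{lemma:key-structure-lemma} with $\kappa = 2$ and $\cK(Y) \equiv \cK$ (the convex set does not depend on the input). Since the constraint set is fixed, for any $X \in \cK$ we may choose $Z := X \in \cK(Y) \cap \cK(Y')$, and the two conditions in hypothesis (i) both collapse to bounding $\abs{f_Y(X) - f_{Y'}(X)}$ uniformly over $X \in \cK$.

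The second step is to bound this point-wise gap. By H\"older's inequality applied with the entrywise $\ell_\infty$ and $\ell_1$ norms,
\begin{equation*}
  \abs{f_Y(X) - f_{Y'}(X)} \;=\; 2\,\abs{\iprod{Y - Y', X}} \;\leq\; 2\,\Norm{Y-Y'}_1\,\Norm{X}_\infty.
\end{equation*}
For adjacent graphs $G, G'$, the matrices $Y(G)$ and $Y(G')$ agree outside of the two symmetric entries corresponding to the single edge in the symmetric difference, and on those entries they differ by exactly $1/(\gamma d)$, so $\Norm{Y(G)-Y(G')}_1 \leq 2/(\gamma d)$. For $X \in \cK$, positive semidefiniteness together with $X_{ii} = 1/n$ yields $\abs{X_{ij}} \leq \sqrt{X_{ii}X_{jj}} = 1/n$, hence $\Norm{X}_\infty \leq 1/n$. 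Combining these gives $\alpha := 4/(n\gamma d)$.

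Finally, invoking \cref{lemma:key-structure-lemma} with this $\alpha$ and $\kappa = 2$,
\begin{equation*}
  \Snormf{q(Y) - q(Y')} \;\leq\; \frac{12\alpha}{\kappa} \;=\; \frac{24}{n\gamma d},
\end{equation*}
so $\Delta_{q,2} \leq \sqrt{24/(n\gamma d)}$ as claimed. There is no real obstacle here: the only moving parts are the reduction to the canonical quadratic form (so that strong convexity is transparent), the H\"older estimate, and the standard PSD bound $\abs{X_{ij}} \leq \sqrt{X_{ii}X_{jj}}$ to control $\Norm{X}_\infty$.
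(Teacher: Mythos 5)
Your proof is correct and follows essentially the same route as the paper's: both replace $\Normf{Y-X}^2$ by the equivalent objective $f_Y(X)=\Normf{X}^2-2\iprod{Y,X}$, both establish $2$-strong convexity of that objective, both bound the point-wise gap via $\Normo{Y-Y'}\le 2/(\gamma d)$ and $\Normi{X}\le 1/n$, and both then invoke \cref{lemma:key-structure-lemma} to obtain $\Snorm{q(Y)-q(Y')}_2\le 24/(n\gamma d)$. The only cosmetic differences are that you make the H\"older step and the PSD bound $\abs{X_{ij}}\le\sqrt{X_{ii}X_{jj}}$ explicit where the paper leaves them implicit, and you simply assert strong convexity of the quadratic rather than expanding it term by term.
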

\begin{proof}
  Let $g:\cY\times \cK\rightarrow \R$ be the function $g(Y,X):=\Normf{X}^2-2\iprod{Y, X}$.
  Applying \cref{lemma:key-structure-lemma} with $f_Y(\cdot) = g(Y,\cdot)$, it suffices to prove that $g$ has $\ell_1$-sensitivity $\frac{4}{n\gamma d}$ with respect to $Y$ and that it is $2$-strongly convex with respect to $X$.
  The $\ell_1$-sensitivity bound follows by observing that adjacent $Y, Y'$ satisfy $\Normo{Y-Y'}\leq \frac{2}{\gamma d}$ and that any $X\in \cK$ satisfies $\Normi{X}\leq \frac{1}{n}\,.$
  Thus it remains to prove strong convexity with respect to $X\in \cK$.
  Let $X, X'\in \cK$ then
  \begin{align*}
    \Normf{X'}^2 &= \Normf{X}^2+2\iprod{X'-X, X}+ \Normf{X-X'}^2\\
    & =  \Normf{X}^2+2\iprod{X'-X, X+Y-Y}+ \Normf{X-X'}^2\\
    &= g(Y, X) +  \iprod{X'-X, \nabla g(X, Y)} + 2\iprod{X', Y} + \Normf{X-X'}^2\,.
  \end{align*}
  That is $g(Y, X)$ is $2$-strongly convex  with respect to $X$.
  Note any $X\in\cK$ is symmetric. Then the result follows by \cref{lemma:key-structure-lemma}.
\end{proof}

\begin{remark}
  In the special case where the contraint set $\cK$ does not depend on input dataset $Y$ (e.g. stochastic block models), the proof can be cleaner as follows.
  Let $f_Y(X) := \|X\|_F^2 - 2\iprod{X,Y}$.
  Let $\hat{X} := \argmin_{X\in\cK}f_Y(X)$ and $\hat{X}' := \argmin_{X\in\cK}f_{Y'}(X)$.
  Suppose without loss of generality $f_Y(\hat{X}) \leq f_{Y'}(\hat{X}')$, for a symmetric argument works in the other case. Then
  \[
    f_Y(\hat{X}) \leq f_{Y'}(\hat{X}') \leq f_{Y'}(\hat{X}) \leq f_Y(\hat{X}) + \alpha \,.
  \]
\end{remark}

Then it is easy to show the algorithm is private.

\begin{lemma}[Privacy] \label{lem:privacy_private_weak_recovery}
  The weak recovery algorithm (\cref{algorithm:private_weak_recovery}) is $(\eps,\delta)$-DP.
\end{lemma}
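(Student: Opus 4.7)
The plan is to assemble the privacy claim from three pieces that have already been established earlier in the paper: the sensitivity of the projection step (\cref{lemma:sbm-stability}), the Gaussian mechanism (\cref{lemma:gaussian-mechanism}), and the post-processing property of differential privacy (\cref{lemma:differential-privacy-post-processing}). The algorithm is essentially a three-step pipeline $G \mapsto X_1 \mapsto \mathbf{X}_2 \mapsto \sign(\mathbf{v})$, and each step corresponds to one of these ingredients.

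First, I would note that the map $G \mapsto Y(G)$ is deterministic and depends only on the input, so the composition $G \mapsto X_1 = q(Y(G))$ is the deterministic function whose $\ell_2$-sensitivity was bounded in \cref{lemma:sbm-stability}: for any two adjacent graphs $G, G'$, the corresponding matrices $Y(G), Y(G') \in \cY$ are adjacent in the sense used there, and therefore
\[
\Normf{q(Y(G)) - q(Y(G'))} \leq \sqrt{\tfrac{24}{n\gamma d}}.
\]
Viewing $X_1$ as an element of $\R^{n^2}$ via the Frobenius inner product, this is exactly an $\ell_2$-sensitivity bound of $\sqrt{24/(n\gamma d)}$ for the vector-valued function $G \mapsto X_1$.

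Second, I would invoke the Gaussian mechanism (\cref{lemma:gaussian-mechanism}) with $\Delta_{f,2}^2 = 24/(n\gamma d)$ and the prescribed noise level. The per-entry variance in Step 2 is chosen precisely so that the variance $\sigma^2$ matches (up to the convention used in \cref{lemma:gaussian-mechanism}) the required $\Delta_{f,2}^2 \cdot 2\log(2/\delta)/\eps^2$; since $\rv{W}$ has i.i.d.\ Gaussian entries, adding $\rv{W}$ to $X_1$ is the standard Gaussian mechanism applied to the vectorization of $X_1$. This guarantees that the map $G \mapsto \rv{X}_2$ is $(\eps,\delta)$-differentially private.

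Third, I would observe that Step 3 of the algorithm---extracting a leading eigenvector of $\rv{X}_2$ and then taking its sign---is a (possibly randomized) function of $\rv{X}_2$ alone, and does not touch the input $G$ again. By \cref{lemma:differential-privacy-post-processing}, applying any such post-processing preserves $(\eps,\delta)$-differential privacy, so the overall output $\sign(\rv{v})$ is $(\eps,\delta)$-DP. Since the sensitivity bound of \cref{lemma:sbm-stability} holds uniformly for every pair of adjacent graphs (and in particular does not rely on any distributional assumption on the input), the privacy guarantee holds for arbitrary inputs, as claimed. There is no real obstacle here; the content of the lemma is to verify that the noise scale in Step 2 is correctly calibrated to the sensitivity bound of \cref{lemma:sbm-stability}, which is a matter of comparing the two expressions.
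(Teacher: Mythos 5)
Your overall structure (sensitivity bound $\to$ Gaussian mechanism $\to$ post-processing) is exactly the decomposition the paper uses, and the first and third pieces are handled correctly. But the calibration in the middle step, which you wave away with ``up to the convention used in \cref{lemma:gaussian-mechanism},'' is precisely where a factor of $2$ is hiding, and it does not resolve itself.

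Concretely: \cref{lemma:sbm-stability} gives $\Normf{X_1-X_1'}^2 \leq 24/(n\gamma d)$, so if you view $X_1$ literally as a vector in $\R^{n^2}$ and apply \cref{lemma:gaussian-mechanism}, the required per-entry variance is $\Delta_{f,2}^2\cdot 2\log(2/\delta)/\eps^2 = \tfrac{48\log(2/\delta)}{n\gamma d\,\eps^2}$. The algorithm adds noise with variance $\tfrac{24\log(2/\delta)}{n\gamma d\,\eps^2}$ --- half of what your argument would call for. The missing observation, which the paper makes explicit, is that every $X\in\cK$ is symmetric with fixed diagonal $1/n$, so the adjacency difference lives entirely in the $\binom{n}{2}$ off-diagonal upper-triangular coordinates, and the Frobenius norm double-counts these: $\Normf{X_1-X_1'}^2 = 2\sum_{i<j}(X_1-X_1')_{ij}^2$. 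Thus the function whose output is the upper-triangular part has squared $\ell_2$-sensitivity at most $12/(n\gamma d)$, for which the Gaussian mechanism asks for variance exactly $\tfrac{24\log(2/\delta)}{n\gamma d\,\eps^2}$; symmetrizing the noisy upper triangle is post-processing, as is the eigenvector rounding. Without this symmetry reduction your proof, taken at face value, would conclude the algorithm is under-noised by a factor of $2$ in variance, so the statement ``which is a matter of comparing the two expressions'' is the one step you cannot skip --- if you do the comparison naively, the expressions do not match.
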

\begin{proof}
  Since any $X\in\cK$ is symmetric, we only need to add a symmetric noise matrix to obtain privacy.
  Combining \cref{lemma:sbm-stability} with \cref{lemma:gaussian-mechanism}, we immediately get that the algorithm is $(\eps, \delta)$-private.
\end{proof}

\paragraph{Utility analysis} Now we show the utility guarantee of our priavte weak recovery algorithm.

\begin{lemma}[Utility] \label{lem:utility_priavte_weak_recovery}
  For any $x\in\sbits^n$, on input $\rv{G}\sim\SBM_n(\gamma,d,x)$, \cref{algorithm:private_weak_recovery} efficiently outputs $\hat{x}(\rv{G})\in\sbits^n$ satisfying
  \[
    \err\Paren{\hat{x}(\mathbf{G}), x} \leq \frac{6400}{\gamma\sqrt{d}} + \frac{7000}{\gamma d} \cdot \frac{\log(2/\delta)}{\eps^2} ,
  \]
  with probability $1-\exp(-\Omega(n))$. 
\end{lemma}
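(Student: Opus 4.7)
I would decompose the utility analysis into (a) a deterministic ``projection accuracy'' bound for $X_1$, (b) an alignment/eigengap bound for its top eigenvector, and (c) a Davis--Kahan step that absorbs the Gaussian noise $\rv W$. Throughout, let $X^\star := \tfrac{1}{n}\, x x^\top$; note $X^\star \in \cK$. Since $X_1$ is the Frobenius projection of $Y(\rv G)$ onto $\cK$, Pythagoras gives $\|X_1 - X^\star\|_F^2 \le 2\langle Y - X^\star,\, X_1 - X^\star\rangle$. Splitting $Y - X^\star = (Y - \E Y) + (\E Y - X^\star)$ and using $\E A_{ij} = (1+\gamma x_i x_j)\tfrac{d}{n}$ for $i \neq j$, a direct computation shows that $\E Y - X^\star$ is \emph{diagonal}, while $X_1 - X^\star$ has diagonal zero (every matrix in $\cK$ agrees with $\tfrac1n I$ on the diagonal). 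Hence the second inner product vanishes and only the centered fluctuation survives: $\|X_1 - X^\star\|_F^2 \le 2\langle Y - \E Y,\, X_1 - X^\star\rangle$. Applying Grothendieck's inequality to $\cK$ exactly as in \cite{MR3520025-Guedon16} bounds the right-hand side by $(4K_G/n)\,\|Y - \E Y\|_{\infty\to 1}$; combining with the standard Chernoff + union bound estimate $\|A - \E A\|_{\infty\to 1} \le O(n\sqrt d)$ (holding with probability $1 - e^{-\Omega(n)}$) and $Y - \E Y = (A - \E A)/(\gamma d)$ yields $\|X_1 - X^\star\|_F^2 \le O\Paren{1/(\gamma\sqrt d)}$.

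\textbf{Top eigenvector of $X_1$.} Since $\operatorname{tr}(X_1) = \|X^\star\|_F^2 = 1$, expanding the Frobenius bound gives $\langle X_1, X^\star\rangle \ge 1 - O(1/(\gamma\sqrt d))$. Writing $X_1 = \lambda_1 v_1 v_1^\top + R$ with $R \succeq 0$ orthogonal to $v_1$, this forces simultaneously $\lambda_1 \ge 1 - O(1/(\gamma\sqrt d))$ and $\langle v_1,\, x/\sqrt n\rangle^2 \ge 1 - O(1/(\gamma\sqrt d))$. Since $\operatorname{tr}(X_1)=1$, the remaining eigenvalues sum to $\le O(1/(\gamma\sqrt d))$, so the spectral gap satisfies $\lambda_1 - \lambda_2 \ge 1/2$ under the hypothesis $\gamma\sqrt d \ge 12800$.

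\textbf{Noise and rounding.} A standard Gaussian random-matrix tail bound gives $\|\rv W\|_{\mathrm{op}} \le O(\sqrt{n}\,\sigma) = O\Paren{\sqrt{\log(2/\delta)/(\gamma d \eps^2)}}$ with probability $1 - e^{-\Omega(n)}$, where $\sigma^2 = 24 \log(2/\delta)/(n\gamma d \eps^2)$ is the per-entry variance. Applying Davis--Kahan to $\rv X_2 = X_1 + \rv W$ using the $\Theta(1)$ eigengap above yields $\sin^2\theta(\rv v, v_1) \le O(\|\rv W\|_{\mathrm{op}}^2)$, so after choosing the sign of $\rv v$,
\[
  \langle \rv v,\, x/\sqrt n\rangle^2 \ge 1 - O\!\Paren{\tfrac{1}{\gamma\sqrt d} \;+\; \tfrac{\log(2/\delta)}{\gamma d \eps^2}} \,.
\]
Finally every coordinate $i$ with $\sign(\rv v_i) \ne x_i$ contributes at least $1/n$ to $\|\rv v - x/\sqrt n\|_2^2$, and the latter is at most $2\bigl(1 - \langle \rv v, x/\sqrt n\rangle^2\bigr)$, which translates the preceding display directly into the claimed error-rate bound.

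\textbf{Main obstacle.} The only genuinely delicate ingredient is the Grothendieck/cut-norm estimate in the first step; everything else reduces to one application each of Davis--Kahan and a Gaussian-matrix operator-norm tail bound. The potentially tricky point is that the algorithm uses the \emph{quadratic} objective $\|Y - X\|_F^2$ rather than the linear objective studied in \cite{MR3520025-Guedon16}, but the Pythagorean identity produces precisely the inner product $\langle Y - X^\star, X_1 - X^\star\rangle$ that their analysis controls, so the bound transfers with essentially no change.
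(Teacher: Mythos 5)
Your proposal follows essentially the same route as the paper: Pythagoras from projection optimality, Grothendieck's inequality plus a Bernstein/Chernoff estimate to control $\|X_1 - X^\star\|_F^2$, an eigengap argument for $X_1$, a Gaussian operator-norm bound on $\rv W$, one Davis--Kahan step, and the $\Ham(\sign(y),x)\le\|y-x\|^2$ rounding. Two small remarks. First, your observation that $\E Y - X^\star$ is diagonal and therefore orthogonal to $X_1 - X^\star$ is correct (a pleasant refinement), but not really needed: the diagonal only contributes $O(1)$ to the $\infty\!\to\!1$ norm, which is swamped by the $O(n/(\gamma\sqrt d))$ estimate, so the paper simply applies Grothendieck to $Y - X^\star$ directly. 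Second, and more importantly, your ``top eigenvector of $X_1$'' step has a quantitative slip: from $\|X_1-X^\star\|_F^2\le\epsilon$ one cannot conclude $\langle X_1,X^\star\rangle \ge 1-O(\epsilon)$ or $\lambda_1 \ge 1-O(\epsilon)$; the correct rate is $1-O(\sqrt\epsilon)$, which is tight (e.g.\ $X_1 = (1-\sqrt\epsilon)\,x x^\top/n + \sqrt\epsilon\,w w^\top$ with $w\perp x$). The downstream alignment $\langle v_1,x/\sqrt n\rangle^2\ge 1-O(\epsilon)$ is in fact true, but it does not follow from the weaker $\langle X_1,X^\star\rangle$ bound as you suggest; the paper avoids this subtlety entirely by applying Davis--Kahan to the pair $(X^\star, X_1)$ — since $X^\star$ has spectral gap exactly $1$, this gives $\|\rv u - x/\sqrt n\| \le 2\|X_1-X^\star\|_F$ directly, hence $\langle \rv u, x/\sqrt n\rangle^2 \ge 1 - O(\|X_1-X^\star\|_F^2)$. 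Your eigengap conclusion $\lambda_1 - \lambda_2 \ge 1/2$ and the final error bound are still correct because they only need the $\sqrt\epsilon$ rate, so this is a fixable imprecision rather than a fatal gap, but the intermediate claims as stated are not right.
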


To prove \cref{lem:utility_priavte_weak_recovery}, we need the following lemma which is an adaption of a well-known result in SBM \cite[Theorem 1.1]{MR3520025-Guedon16}. Its proof is deferred to \cref{section:deferred-proofs-sbm}.

\begin{lemma} \label{lemma:GV}
  Consider the settings of \cref{lem:utility_priavte_weak_recovery}.
  With probability $1-\exp(-\Omega(n))$,
  \begin{equation*}
    \Norm{X_1(\rv{G}) - \frac{1}{n} xx^\top}_F^2 \leq \frac{800}{\gamma\sqrt{d}} .
  \end{equation*}
\end{lemma}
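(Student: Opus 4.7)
The plan is to follow the Guédon–Vershynin strategy \cite{MR3520025-Guedon16} adapted to our quadratic objective. Let $X^{\star} := \tfrac{1}{n} xx^{\top}$ and note that $X^{\star} \in \cK$ because it is rank-one PSD with diagonal $\tfrac{1}{n}$. The key algebraic identity comes from the first-order optimality condition of the projection $X_1 = \arg\min_{X \in \cK} \Normf{Y(\rv G) - X}^2$: for every $X \in \cK$,
\[
  \iprod{Y(\rv G) - X_1, \, X - X_1} \leq 0.
\]
Applying this with $X = X^{\star}$ and rearranging yields the ``basic inequality''
\[
  \Normf{X_1 - X^{\star}}^2 \leq \iprod{Y(\rv G) - X^{\star}, \, X_1 - X^{\star}}.
\]

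Next I would split the right-hand side along $Y(\rv G) - X^{\star} = \bigl(Y(\rv G) - \E Y(\rv G)\bigr) + \bigl(\E Y(\rv G) - X^{\star}\bigr)$. A direct computation shows that $(\E Y(\rv G))_{ij} = \tfrac{1}{n} x_i x_j$ for $i \neq j$ while the diagonal entries are $-\tfrac{1}{\gamma n}$, so $\E Y(\rv G) - X^{\star}$ is a scalar multiple of the identity. Since both $X_1$ and $X^{\star}$ lie in $\cK$, they have equal trace, hence this deterministic contribution vanishes and the problem reduces to controlling
\[
  \iprod{\rv E, \, X_1 - X^{\star}}, \qquad \rv E := Y(\rv G) - \E Y(\rv G) \;=\; \tfrac{1}{\gamma d}\bigl(A(\rv G) - \E A(\rv G)\bigr).
\]

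To bound this term I would invoke Grothendieck's inequality in the form: for any symmetric $M$ and any PSD matrix $X$ with $\max_i X_{ii} \leq 1/n$, one has $|\iprod{M, X}| \leq \tfrac{K_G}{n} \Normio{M}$. Applied separately to $X_1$ and $X^{\star}$, this gives
\[
  \Bigl|\iprod{\rv E, X_1 - X^{\star}}\Bigr| \;\leq\; \tfrac{2 K_G}{n} \Normio{\rv E} \;=\; \tfrac{2 K_G}{n \gamma d} \Normio{A(\rv G) - \E A(\rv G)}.
\]
The main quantitative input is then the high-probability estimate $\Normio{A(\rv G) - \E A(\rv G)} \leq C n \sqrt{d}$, with failure probability at most $\exp(-\Omega(n))$, which is exactly the concentration bound of Guédon–Vershynin for the $\infty \to 1$ norm of centered sparse Bernoulli matrices (a Bernstein-plus-net style argument; this is also where the hypothesis $\gamma \sqrt{d} \geq \Omega(1)$ enters). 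Substituting yields $\Normf{X_1 - X^{\star}}^2 \leq O(1/(\gamma \sqrt{d}))$, and tracking the absolute constants $K_G$ and $C$ gives the stated bound of $800/(\gamma \sqrt{d})$.

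The only genuine difficulty is the random-matrix step, i.e.\ the sharp $\infty \to 1$ concentration of $A(\rv G) - \E A(\rv G)$ at the scale $n\sqrt d$; everything else is deterministic convex-analytic bookkeeping. Since this concentration statement is precisely the one established (and used identically) in \cite{MR3520025-Guedon16}, I would quote it as a black box and devote the write-up to the optimality-inequality step and the trace cancellation, which are the only places our quadratic objective differs from the linear SDP objective treated there.
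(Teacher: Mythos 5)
Your proposal is correct and follows essentially the same route as the paper: a comparison/optimality inequality for the projection, Grothendieck's inequality to pass from $\iprod{\cdot,\cdot}$ to the $\infty\to1$ norm (using $\max_i X_{ii}=1/n$ for $X\in\cK$), and the Bernstein-plus-union-bound concentration of the centered adjacency matrix in $\Normio{\cdot}$ at scale $n\sqrt d$ (the paper's \cref{fact:cut_norm_sbm}). The only cosmetic differences are that you use the variational (first-order) characterization of the projection, which yields the basic inequality without the factor $2$ the paper carries, and that you explicitly peel off the deterministic term $\E Y-X^\star$ and kill it by the equal-trace observation, whereas the paper simply bounds $\Normio{Y(\rv G)-X^\star}$ directly (the diagonal contribution there is an $O(1/\gamma)$ nuisance that is dominated by the $\Theta(n/(\gamma\sqrt d))$ main term, so both treatments are fine). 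Constant tracking works out in both cases to comfortably below $800/(\gamma\sqrt d)$.
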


\begin{proof}[Proof of \cref{lem:utility_priavte_weak_recovery}]
  By \cref{lemma:GV}, we have 
  \begin{equation*}
    \Norm{X_1(\rv{G}) - \frac{1}{n} xx^\top}
    \leq \Norm{X_1(\rv{G}) - \frac{1}{n} xx^\top}_F 
    \leq \sqrt{\frac{800}{\gamma\sqrt{d}}} =: r(\gamma,d)
  \end{equation*}
  with probability $1-\exp(-\Omega(n))$.
  We condition our following analysis on this event happening.
  
  Let $\rv{u}$ be the leading eigenvector of $X_1(\rv{G})$.
  Let $\bm{\lambda}_1$ and $\bm{\lambda}_2$ be the largest and second largest eigenvalues of $X_1(\rv{G})$.
  By Weyl's inequality (\cref{lem:Weyl}) and the assumption $\gamma\sqrt{d}\geq12800$, we have 
  \begin{equation*}
    \bm{\lambda}_1 - \bm{\lambda}_2 \geq 1-2r(\gamma,d) \geq \frac{1}{2}.
  \end{equation*}
  Let $\rv{v}$ be the leading eigenvector of $X_1(\rv{G})+\rv{W}$.
  By Davis-Kahan's theorem (\cref{lem:Davis-Kahan}), we have 
  \begin{align*}
    & \Norm{\rv{u}-\rv{v}} \leq \frac{2\Norm{\rv{W}}}{\bm{\lambda}_1 - \bm{\lambda}_2} \leq 4\Norm{\rv{W}} , \\
    & \Norm{\rv{u}-x/\sqrt{n}} \leq 2\Norm{X_1(\rv{G}) - \frac{1}{n} xx^\top} \leq 2r(\gamma,d) .
  \end{align*}
  Putting things together and using \cref{fact:gaussian-spectral-concentration}, we have
  \begin{align*}
    \Norm{\rv{v}-x/\sqrt{n}} 
    \leq \Norm{\rv{u}-\rv{v}} + \Norm{\rv{u}-x/\sqrt{n}} 
    \leq \frac{24\sqrt{6}}{\sqrt{\gamma d}} \frac{\sqrt{\log(2/\delta)}}{\eps} + 2r(\gamma,d)
  \end{align*}
  with probability $1-\exp(-\Omega(n))$.
  
  Observe $\Ham(\sign(y), x) \leq \norm{y-x}^2$ for any $y\in\R^n$ and any $x\in\sbits^n$.
  Then with probability $1-\exp(-\Omega(n))$,
  \begin{align*}
   \frac{1}{n} \cdot \Ham(\sign(\rv{v}), x) 
    \leq \Norm{\rv{v}-x/\sqrt{n}}^2
    \leq \frac{6400}{\gamma\sqrt{d}} + \frac{7000}{\gamma d} \cdot \frac{\log(2/\delta)}{\eps^2} .
  \end{align*}
\end{proof}

\begin{proof}[Proof of \cref{thm:private_weak_recovery}]
  By \cref{lem:privacy_private_weak_recovery} and \cref{lem:utility_priavte_weak_recovery}.
\end{proof}

\subsection{Private exact recovery for stochastic block models}\label{section:sbm-private-exact-recovery}

In this section, we prove \cref{theorem:main-sbm-private-exact-recovery}. 
We show how to achieve exact recovery in stochastic block models privately by combining the private weak recovery algorithm we obtained in the previous section and a private majority voting scheme.

Since exact recovery is only possible with logarithmic average degree (just to avoid isolated vertices), it is more convenient to work with the following standard parameterization of stochastic block models.
Let $\alpha>\beta>0$ be fixed constants. The intra-community edge probability is $\alpha\cdot\frac{\log n}{n}$, and the inter-community edge probability is $\beta\cdot\frac{\log n}{n}$.
In the language of \cref{model:sbm}, it is $\ssbm$.
Our main result is the following theorem.

\begin{theorem}[Private exact recovery of SBM, restatement of \cref{theorem:main-sbm-private-exact-recovery}] \label{thm:private_exact_recovery}
	Let $\eps, \delta\geq 0$.
  Suppose $\alpha,\beta$ are fixed constants satisfying\footnote{In the language of \cref{model:sbm}, for any $t$ we have $\sqrt{\alpha}-\sqrt{\beta}\geq t$ if and only if $\frac{d}{\log n}(1-\sqrt{1-\gamma^2})\geq\frac{t^2}{2}$. }
  \begin{equation} \label{eq:private_exact_recovery_threshold}
    \sqrt{\alpha}-\sqrt{\beta} \geq 16 
    \quad\text{and}\quad
    \alpha-\beta \geq \Omega\Paren{\frac{1}{\eps^2} \cdot \frac{\log(2/\delta)}{\log n} + \frac{1}{\eps}} ,
  \end{equation}
  Then there exists an algorithm (\cref{algorithm:private_exact_recovery}) such that, for any balanced\footnote{Recall a vector $x\in\sbits^n$ is said to be balanced if $\sum_{i=1}^{n}x_i=0$.} $x\in\sbits^n$, on input $\mathbf{G} \sim \ssbm$, outputs $\hat{x}(\rv{G})\in\{x,-x\}$ with probability $1-o(1)$.
  Moreover, the algorithm is $(\eps,\delta)$-differentially private for any input graph and runs in polynomial time.
\end{theorem}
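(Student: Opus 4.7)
My plan is to reduce exact recovery to private weak recovery followed by a private majority-voting boosting step. I first randomly split the edges of $\mathbf{G}$ into two conditionally independent subgraphs $\mathbf{G}_1,\mathbf{G}_2$ by placing each edge into $\mathbf{G}_1$ independently with probability $1/2$ and the rest into $\mathbf{G}_2$. Conditional on $x$, each $\mathbf{G}_b$ is distributed as an SBM with $\alpha,\beta$ rescaled by $1/2$, and crucially $\mathbf{G}_1,\mathbf{G}_2$ are independent; this decouples the two stages of the utility analysis. Moreover, changing a single edge of $\mathbf{G}$ alters exactly one of $\mathbf{G}_1,\mathbf{G}_2$ (in one edge), which lets me apply adaptive composition in the privacy analysis.

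On $\mathbf{G}_1$ I run \cref{algorithm:private_weak_recovery} with privacy budget $(\eps/2,\delta)$ to obtain an estimate $\hat{x}_0\in\sbits^n$. The condition $\sqrt{\alpha}-\sqrt{\beta}\geq 16$ translates into $\gamma\sqrt{d/2}$ being comfortably above the threshold required by \cref{thm:private_weak_recovery}, and the lower bound on $\alpha-\beta$ ensures that the $\log(2/\delta)/(\gamma d\eps^2)$ term in its utility bound is also $o(1)$. Hence $\err(\hat{x}_0,x)\leq \eta$ for a small absolute constant $\eta$ (say $1/100$) with probability $1-o(1)$; by the symmetry between $x$ and $-x$ I may assume $\Ham(\hat{x}_0,x)\leq \eta n$.

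Given $\hat{x}_0$, I privatize the standard majority-vote boosting step: for each $i\in[n]$ let $M_i:=\sum_{j\in N_{\mathbf{G}_2}(i)}\hat{x}_0(j)$, sample $\mathbf{L}_i\sim\Lap(2/\eps)$ independently, and output $\hat{x}_i:=\sign(M_i+\mathbf{L}_i)$. For fixed $\hat{x}_0$ the vector $M=(M_1,\dots,M_n)$ has $\ell_1$-sensitivity at most $2$ with respect to edge changes in $\mathbf{G}_2$ (each such change affects only two coordinates, by $\pm 1$ each), so \cref{lemma:laplace-mechanism} shows the boosting step is $(\eps/2,0)$-DP with respect to $\mathbf{G}_2$. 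Combined with the splitting observation, adaptive composition (\cref{lemma:differential-privacy-composition}) yields the desired $(\eps,\delta)$-DP guarantee for the overall algorithm with respect to $\mathbf{G}$.

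For utility, conditional on $\hat{x}_0$ the edges of $\mathbf{G}_2$ remain independent Bernoullis, so a Chernoff/Bernstein-type bound shows that $x_iM_i\geq \Omega((\alpha-\beta)\log n)$ uniformly over $i\in[n]$ with probability $1-o(1)$ (the $\eta n$ mislabeled vertices shrink the signal by at most a factor $1-2\eta>0$). The Laplace tail bound (\cref{fact:laplace-tail-bound}) gives $\max_i|\mathbf{L}_i|\leq O(\log n/\eps)$ with the same probability, so the condition $\alpha-\beta\geq \Omega(1/\eps)$ makes the signal dominate the noise at every vertex simultaneously, yielding $\sign(M_i+\mathbf{L}_i)=x_i$ for all $i$. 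The main obstacle is the conditional dependence between $\hat{x}_0$ and the boosting graph; the graph-splitting step is precisely what makes this dependence vanish, at the cost of absorbing a constant factor in $\alpha,\beta$ into the hidden constants of \cref{theorem:main-sbm-private-exact-recovery}.
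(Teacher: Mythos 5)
Your proposal takes essentially the same route as the paper: random graph-splitting, private weak recovery (via \cref{algorithm:private_weak_recovery}) on $\mathbf{G}_1$, and a noisy majority vote on $\mathbf{G}_2$, with the same reasons the two conditions in \cref{eq:private_exact_recovery_threshold} enter. Your private vote rule $\hat{x}_i=\sign\bigl(\sum_{j}A_2(i,j)\hat{x}_0(j)+\mathbf{L}_i\bigr)$ is equivalent in distribution to the paper's $Z_v=S_v-D_v$ rule (by symmetry of the Laplace noise), and the utility analysis is the same Chernoff-plus-Laplace-tail argument, conditionally independent of $\mathbf{G}_1$ thanks to the split.

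The one place where your write-up is imprecise is the privacy step. Invoking "adaptive composition (\cref{lemma:differential-privacy-composition})" is not quite right out of the box: that lemma is stated for mechanisms run on the \emph{same} dataset $Y$ with \emph{independent} internal coins, whereas here $\cA_1$ runs on $\mathbf{G}_1$ and $\cA_2$ on $\mathbf{G}_2$, and the two subgraphs are determined by the \emph{same} random split. To make your argument rigorous you should either (a) include the split randomness $r$ in $\cM_1$'s output so that $\cM_2$ acts on $(\tilde{x},r,G)$ and is manifestly DP with respect to $G$ for each fixed $(\tilde x,r)$, at which point the composition lemma applies and the $\eps/2+\eps/2$ budget gives $(\eps,\delta)$-DP; or (b) do the direct computation the paper does in \cref{lem:privacy_proof_exact_recovery}, which exploits $\Pr(\rv{G}_1(G)=H)=\Pr(\rv{G}_1(G_e)=H)+\Pr(\rv{G}_1(G_e)=H_e)$ and shows the stronger fact that even with the \emph{full} budget $\eps$ in each stage the composite mechanism is already $(\eps,\delta)$-DP, because on each branch of the split the changed edge lands in exactly one subgraph. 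Your observation that "changing a single edge alters exactly one of $\mathbf{G}_1,\mathbf{G}_2$" is precisely the key fact used in (b), but the naive composition theorem does not automatically exploit it, so your budget halving is what saves you; this costs you a harmless constant factor relative to the paper's bound. Beyond this, the proposal is sound.
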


\begin{remark}
  In a standard regime of privacy parameters where $\eps\leq O(1)$ and $\delta=1/\poly(n)$, the private exact recovery threshold \cref{eq:private_exact_recovery_threshold} reads
  \begin{equation*}
    \sqrt{\alpha}-\sqrt{\beta} \geq 16 
    \quad\text{and}\quad
    \alpha-\beta \geq \Omega\Paren{\eps^{-2}+\eps^{-1}} ,
  \end{equation*}
  Recall the non-private exact recovery threshold is $\sqrt{\alpha} - \sqrt{\beta} > \sqrt{2}$.
  Thus the non-private part in \cref{eq:private_exact_recovery_threshold}, i.e. $16$, is close to optimal. 
\end{remark}

\cref{algorithm:private_exact_recovery} starts with randomly splitting the input graph $G$ into two subgraphs $\mathbf{G}_1$ and $\mathbf{G}_2$.
Setting the graph-splitting probability to $1/2$, each subgraph will contain about half of the edges of $G$.
Then we run an $(\eps,\delta)$-DP weak recovery algorithm (\cref{algorithm:private_weak_recovery}) on $\rv{G}_1$ to get a rough estimate $\tilde{x}(\rv{G}_1)$ of accuracy around $90\%$.
Finally, we boost the accuracy to $100\%$ by doing majority voting (\cref{algorithm:private_majority_voting}) on $\rv{G}_2$ based on the rough estimate $\tilde{x}(\rv{G}_1)$.
That is, if a vertex has more neighbors from the opposite community (according to $\tilde{x}(\rv{G}_1)$) in $\rv{G}_2$, then we assign this vertex to the opposite community.
To make the majority voting step private, we add some noise to the vote.

\begin{algorithmbox}[Private exact recovery for SBM] \label{algorithm:private_exact_recovery}
	\mbox{}\\
	\textbf{Input:} Graph $G$

  \noindent
	\textbf{Operations:}
	\begin{enumerate}
    \item Graph-splitting: Initialize $\rv{G}_1$ to be an empty graph on vertex set $V(G)$. Independently put each edge of $G$ in $\rv{G}_1$ with probability $1/2$. Let $\rv{G}_2 = G\sm\rv{G}_1$.
    
    \item Rough estimation on $\rv{G}_1$: Run the $(\eps,\delta)$-DP partial recovery algorithm (\cref{algorithm:private_weak_recovery}) on $\rv{G}_1$ to get a rough estimate $\tilde{x}(\rv{G}_1)$.
    
    \item Majority voting on $\rv{G}_2$: Run the $(\eps,0)$-DP majority voting algorithm (\cref{algorithm:private_majority_voting}) with input $(\rv{G}_2, \tilde{x}(\rv{G}_1))$ and get output $\hat{\rv{x}}$.

    \item Return $\hat{\rv{x}}$.
  \end{enumerate}
\end{algorithmbox}

\begin{algorithmbox}[Private majority voting] \label{algorithm:private_majority_voting}
	\mbox{}\\
	\textbf{Input:} Graph $G$, rough estimate $\tilde{x}\in\sbits^n$

  \noindent
	\textbf{Operations:} 
  \begin{enumerate}
    \item For each vertex $v\in V(G)$, let $\rv{Z}_v = \rv{S}_v - \rv{D}_v$ where
      \begin{itemize}
        \item $\rv{D}_v = \sum_{\{u,v\}\in E(G)} \ind{\tilde{x}_u \neq \tilde{x}_v}$ , 
        \item $\rv{S}_v = \sum_{\{u,v\}\in E(G)} \ind{\tilde{x}_u = \tilde{x}_v}$ .
      \end{itemize}
      Set $\hat{\rv{x}}_v = \sign(\rv{Z}_v + \mathbf{W}_v) \cdot \tilde{x}(\rv{G}_1)_v$ where $\mathbf{W}_v \sim \Lap(2/\eps)$.
    
      \item Return $\hat{\rv{x}}$.
  \end{enumerate}
\end{algorithmbox}

In the rest of this section, we will show \cref{algorithm:private_exact_recovery} is private in \cref{lem:privacy_proof_exact_recovery} and it recovers the hidden communities exactly with high probability in \cref{lem:utility_proof_exact_recovery}.
Then \cref{thm:private_exact_recovery} follows directly from \cref{lem:privacy_proof_exact_recovery} and \cref{lem:utility_proof_exact_recovery}.

\paragraph{Privacy analysis.}
We first show the differential privcay of the majority voting algorithm (\cref{algorithm:private_majority_voting}) with respect to input graph $G$ (i.e. assuming fixed the input rough estimate).

\begin{lemma}
  \label{lem:privacy_majority_voting}
  \cref{algorithm:private_majority_voting} is $(\eps, 0)$-DP with respect to input $G$.
\end{lemma}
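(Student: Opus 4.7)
The plan is to reduce the claim to an application of the Laplace mechanism (\cref{lemma:laplace-mechanism}) followed by post-processing (\cref{lemma:differential-privacy-post-processing}). To this end, I view the noisy vote as the composition of (a) the map $G \mapsto (Z_v(G))_{v \in V(G)} \in \Z^n$, (b) the addition of an i.i.d.\ $\Lap(2/\eps)$ noise vector, and (c) the deterministic map $z \mapsto (\sign(z_v) \cdot \tilde{x}_v)_{v\in V(G)}$. Crucially, $\tilde{x}$ is a fixed input to the algorithm and does not depend on $G$, so step (c) is genuine post-processing of the output of (a)+(b).

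First I would bound the $\ell_1$-sensitivity of the vector-valued function $G \mapsto (Z_v(G))_{v}$ under edge-adjacency. Let $G$ and $G'$ be adjacent graphs differing on a single edge $\{u,w\}$. For every vertex $v \notin \{u,w\}$, neither $S_v$ nor $D_v$ is affected, so $Z_v(G) = Z_v(G')$. For $v = u$, either $S_u$ changes by exactly one (if $\tilde{x}_u = \tilde{x}_w$) or $D_u$ changes by exactly one (if $\tilde{x}_u \neq \tilde{x}_w$), and in either case $|Z_u(G) - Z_u(G')| = 1$. The same holds at $v = w$. Summing over vertices, $\Delta_{Z,1} \leq 2$.

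Next I would invoke the Laplace mechanism (\cref{lemma:laplace-mechanism}) with noise scale $b = \Delta_{Z,1}/\eps = 2/\eps$, exactly the parameter used in the algorithm, to conclude that the intermediate random vector $(Z_v(G) + \mathbf{W}_v)_{v \in V(G)}$ is $(\eps,0)$-differentially private as a function of $G$. Finally, since $\tilde{x}$ is an input (fixed for the purpose of this lemma, which asserts privacy \emph{with respect to $G$}), the coordinate-wise map $z_v \mapsto \sign(z_v) \cdot \tilde{x}_v$ is a deterministic function of the noisy vote, and applying \cref{lemma:differential-privacy-post-processing} yields that $\hat{\mathbf{x}}$ is $(\eps,0)$-DP with respect to $G$, as claimed.

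The only subtle point to be careful about is treating $\tilde{x}$ as independent of the graph $G$; this is precisely what is meant by ``DP with respect to input $G$'' in the lemma statement, and it is what justifies folding the final sign/flip step into post-processing rather than into the privacy accounting. No real technical obstacle remains beyond this accounting clarification and the elementary sensitivity calculation.
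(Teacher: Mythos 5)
Your proof is correct and follows essentially the same route as the paper: bound the $\ell_1$-sensitivity of $G \mapsto (Z_v(G))_v$ by $2$, invoke the Laplace mechanism, then fold the sign-and-flip step into post-processing. The paper states this tersely; you simply spell out the sensitivity calculation and the reason $\tilde{x}$ can be treated as fixed, both of which are exactly the intended argument.
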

\begin{proof}
  Observing the $\ell_1$-sensitivity of the degree count function $Z$ in step  is $2$, the $(\eps,0)$-DP follows directly from Laplace mechanism (\cref{lemma:gaussian-mechanism}) and post-processing (\cref{lemma:differential-privacy-post-processing}).
\end{proof}

Then the privacy of the private exact recovery algorithm (\cref{algorithm:private_exact_recovery}) is a consequence of composition.

\begin{lemma}[Privacy] \label{lem:privacy_proof_exact_recovery}
  \cref{algorithm:private_exact_recovery} is $(\eps, \delta)$-DP.
\end{lemma}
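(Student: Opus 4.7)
The plan is to exploit the fact that each edge of the input graph is routed to exactly one of the two subgraphs $\rv{G}_1, \rv{G}_2$. This means that on adjacent inputs $G, G'$, the single differing edge affects only one of the two subgraphs, so effectively only one of the two subroutines (weak recovery or majority voting) sees a perturbation. A coupling argument over the graph-splitting randomness then reduces the privacy analysis to two cases, in each of which the output is $(\eps,\delta)$-close by a direct appeal to the privacy guarantees of the respective subroutine together with post-processing.

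More concretely, I would fix adjacent graphs $G, G'$ differing on the single edge $e$, and couple the graph-splitting step for $G$ and $G'$ so that each edge present in both graphs is routed to the same side by the same coin flip, and the edge $e$ uses a fresh coin when present. Let $\rv{b} \in \{1,2\}$ denote the side to which $e$ would be routed. Conditioning on $\rv{b}$, exactly one of two things happens: either $\rv{G}_1(G)$ and $\rv{G}_1(G')$ are adjacent graphs while $\rv{G}_2(G) = \rv{G}_2(G')$ (when $\rv{b}=1$), or the symmetric statement holds (when $\rv{b}=2$). In the first case, \cref{thm:private_weak_recovery} yields that $\tilde{x}(\rv{G}_1)$ is $(\eps,\delta)$-indistinguishable, and since $\rv{G}_2$ is identical, the final output differs only by post-processing (\cref{lemma:differential-privacy-post-processing}), so it is $(\eps,\delta)$-indistinguishable. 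In the second case, $\tilde{x}(\rv{G}_1)$ has identical distributions under $G$ and $G'$, so I can further condition on its realization; then \cref{lem:privacy_majority_voting} gives $(\eps,0)$-indistinguishability of $\hat{\rv{x}}$ under the two adjacent inputs $\rv{G}_2(G), \rv{G}_2(G')$ to \cref{algorithm:private_majority_voting}.

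Finally, I would combine both cases: for any measurable output set $S$,
\begin{align*}
\Pr[\cM(G)\in S]
&= \E_{\rv{b}}\bigl[\Pr[\cM(G)\in S \mid \rv{b}]\bigr] \\
&\leq \E_{\rv{b}}\bigl[e^{\eps}\Pr[\cM(G')\in S \mid \rv{b}] + \delta\bigr] \\
&= e^{\eps}\Pr[\cM(G')\in S] + \delta,
\end{align*}
which gives the claimed $(\eps,\delta)$-DP. No genuine obstacle arises here: the only mildly subtle point is to be careful that the rough estimate $\tilde{x}(\rv{G}_1)$ in the second case must be integrated out against a distribution identical under $G$ and $G'$ so that the conditional $(\eps,0)$-DP of \cref{algorithm:private_majority_voting} (which assumes a fixed $\tilde{x}$) can be applied pointwise and then averaged. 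This is precisely the adaptive-composition-with-halting setup of \cref{lemma:differential-privacy-composition}, which I could alternatively invoke directly, after observing that conditioning on the side of $e$ makes exactly one of the two subroutines operate on identical inputs and the other on adjacent inputs.
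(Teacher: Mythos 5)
Your proof is correct and rests on the same core observation as the paper's: the single differing edge is routed to exactly one of $\rv{G}_1, \rv{G}_2$, so after conditioning on the split, exactly one subroutine sees adjacent inputs and the other's output can be absorbed by post-processing. The paper realizes the same decomposition by conditioning on the entire realization $H$ of $\rv{G}_1(G)$ and using the telescoping identity $\Pr(\rv{G}_1(G)=H) = \Pr(\rv{G}_1(G_e)=H) + \Pr(\rv{G}_1(G_e)=H_e)$, whereas you condition only on the single bit indicating $e$'s destination; these are the same argument packaged slightly differently, and your version is a little cleaner about tracking the $(\eps,\delta)$ guarantee at the level of sets rather than individual output pairs.
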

\begin{proof}
  Let $\cA_1: \cG_n \to \sbits^n$ denote the $(\eps,\delta)$-DP recovery algorithm in step 2.
  Let $\cA_2: \cG_n\times\sbits^n \to \sbits^n$ denote the $(\eps,0)$-DP majority voting algorithm in step 3.
  Let $\cA$ be the composition of $\cA_1$ and $\cA_2$.
  
  We first make several notations.
  Given a graph $H$ and an edge $e$, $H_e$ is a graph obtained b adding $e$ to $H$.
  Given a graph $H$, $\rv{G}_1(H)$ is a random subgraph of $H$ by keeping each edge of $H$ with probability $1/2$ independently.

  Now, fix two adjacent graphs $G$ and $G_e$ where edge $e$ appears in $G_e$ but not in $G$. Also, fix two arbitrary possible outputs $x_1,x_2\in\{\pm1\}^n$ of algorithm $\cA$.\footnote{We can imagine that algorithm $\cA$ first outputs $(x_1,x_2)$ and then outputs $x_2$ as a post-processing step.}
  It is direct to see,
  \begin{equation}
    \Pr\Paren{\cA(G)=(x_1,x_2)} =
    \sum_{H\subseteq G} \Pr\Paren{\cA_1(H)=x_1} \Pr\Paren{\cA_2(G\setminus H, x_1)=x_2} \Pr\Paren{\rv{G}_1(G)=H} .
  \end{equation}
  Since $\Pr(\rv{G}_1(G) = H) = \Pr(\rv{G}_1(G_e) = H) + \Pr(\rv{G}_1(G_e) = H_e)$ for any $H\subseteq G$, we have
  \begin{align*}
    \Pr\Paren{\cA(G_e)=(x_1,x_2)}
    = \sum_{H\subseteq G} & \Pr\Paren{\cA_1(H)=x_1} \Pr\Paren{\cA_2(G_e\setminus H, x_1)=x_2} \Pr\Paren{\rv{G}_1(G_e)=H} \\ 
    +& \Pr\Paren{\cA_1(H_e)=x_1} \Pr\Paren{\cA_2(G_e\setminus H_e, x_1)=x_2} \Pr\Paren{\rv{G}_1(G_e)=H_e} \numberthis \label{eq:exact_recovery_privacy_3}
  \end{align*}
  Since both $\cA_1$ and $\cA_2$ are $(\eps,\delta)$-DP, we have for each $H\subseteq G$, 
  \begin{align*}
    \Pr\Paren{\cA_1(H_e)=x_1} &\leq e^\eps \Pr\Paren{\cA_1(H)=x_1} + 
    \delta , \numberthis \label{eq:exact_recovery_privacy_4} \\
    \Pr\Paren{\cA_2(G_e\setminus H, x_1) = x_2} &\leq e^\eps \Pr\Paren{\cA_2(G\setminus H, x_1) = x_2} + \delta . \numberthis \label{eq:exact_recovery_privacy_5}
  \end{align*}
  Plugging \cref{eq:exact_recovery_privacy_4} and \cref{eq:exact_recovery_privacy_5} into \cref{eq:exact_recovery_privacy_3}, we obtain 
  \begin{align*}
    \Pr\Paren{\cA(G_e)=(x_1,x_2)}
    &\leq \sum_{H\subseteq G} \Brac{ e^\eps \Pr\Paren{\cA_1(H)=x_1} \Pr\Paren{\cA_2(G\setminus H, x_1)=x_2} + \delta } \Pr\Paren{\rv{G}_1(G)=H} \\
    &= e^\eps\Pr\Paren{\cA(G)=(x_1,x_2)} + \delta .
  \end{align*}
  Similarly, we can show 
  \begin{equation}
    \Pr\Paren{\cA(G)=(x_1,x_2)} \leq
    e^\eps\Pr\Paren{\cA(G_e)=(x_1,x_2)} + \delta .
  \end{equation}
\end{proof}

\paragraph{Utility analysis.} 
We first show the utility guarantee of the priavte majority voting algorithm.

\begin{lemma} \label{lem:utility_proof_voting}
  Suppose $\rv{G}$ is generated by first sampling $\rv{G} \sim \ssbm$ for some balanced $x$ and then for each vertex removing at most $\Delta\leq O(\log^2 n)$ adjacent edges arbitrarily.
  Then on input $\rv{G}$ and a balanced rough estimate $\tilde{x}$ satisfying $\Ham(\tilde{x},x) \leq n/16$, \cref{algorithm:private_majority_voting} efficiently outputs $\hat{x}(\rv{G})$ such that for each vertex $v$,
  \begin{equation*}
    \Pr\Paren{\hat{x}(\rv{G})_v \neq x_v}
    \leq
    \exp\Paren{-\frac{1}{64} \cdot \eps(\alpha-\beta) \cdot \log n} + 
    2\cdot\exp\Paren{-\frac{1}{16^2} \cdot \frac{(\alpha-\beta)^2}{\alpha+\beta} \cdot \log n} .
  \end{equation*}
\end{lemma}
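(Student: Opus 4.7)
Fix a vertex $v$ and let $\sigma := \tilde{x}_v x_v \in \{\pm 1\}$ so that the event $\hat{\rv{x}}_v = x_v$ is precisely $\sigma(\rv{Z}_v + \rv{W}_v) > 0$. My first move is to separate the ``SBM randomness'' from the adversarial removal: let $\rv{H} \sim \ssbm$ denote the unperturbed graph, and define $\rv{Z}_v^{\mathrm{SBM}} := \sum_{u \neq v} Y_u$, where $Y_u := (2\ind{\tilde{x}_u = \tilde{x}_v} - 1) \cdot \ind{\{u,v\} \in E(\rv{H})}$. Since each removed edge incident to $v$ changes $\rv{Z}_v$ by $\pm 1$, we have $|\rv{Z}_v - \rv{Z}_v^{\mathrm{SBM}}| \leq \Delta$ deterministically.

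The second step is a combinatorial computation of $\mu := \sigma \cdot \E[\rv{Z}_v^{\mathrm{SBM}}]$. Partition $[n]$ according to the $(\tilde{x}_u, x_u)$ pair and set $a = |\tilde{C}_+ \cap C_+|$, $b = |\tilde{C}_+ \cap C_-|$, $c = |\tilde{C}_- \cap C_+|$, $d = |\tilde{C}_- \cap C_-|$. Balance of $\tilde{x}$ and $x$ forces $b = c$ and $a = d$, and the hypothesis $\Ham(\tilde{x},x) \leq n/16$ gives $b + c \leq n/16$, so $a = d \geq 15n/32$ and $b = c \leq n/32$. A direct case analysis (on whether $\tilde{x}_v = x_v$) then yields
\begin{equation*}
  \mu \;\geq\; (2a - n/2)(\alpha-\beta) \cdot \tfrac{\log n}{n} - \tfrac{\alpha \log n}{n} \;\geq\; \tfrac{7}{16}(\alpha-\beta)\log n - o(1).
\end{equation*}

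The third step is the tail bound. Combining the two sources of error,
\begin{equation*}
  \Pr\bigl(\sigma(\rv{Z}_v + \rv{W}_v) \leq 0\bigr) \;\leq\; \Pr\bigl(\sigma \rv{Z}_v^{\mathrm{SBM}} \leq \mu/2\bigr) + \Pr\bigl(\rv{W}_v \leq -\mu/2 + \Delta\bigr).
\end{equation*}
For the first summand, the $Y_u$ are independent variables in $\{-1,0,+1\}$ with $\E[Y_u^2] \leq \max\{\alpha,\beta\} \cdot \tfrac{\log n}{n}$, so $\sum_u \Var(Y_u) \leq (\alpha+\beta)\log n$; applying Bernstein's inequality to $\sigma \rv{Z}_v^{\mathrm{SBM}}$ at deviation $\mu/2$ gives a bound of $\exp\bigl(-\tfrac{1}{256}\cdot\tfrac{(\alpha-\beta)^2}{\alpha+\beta}\log n\bigr)$. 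For the second summand, since $\rv{W}_v \sim \Lap(2/\eps)$ is symmetric, \cref{fact:laplace-tail-bound} gives $\Pr(\rv{W}_v \leq -(\mu/2 - \Delta)) \leq \tfrac{1}{2}\exp(-\eps(\mu/2-\Delta)/2)$; using $\Delta \leq O(\log^2 n)$ and absorbing it into the constant yields the claimed $\exp\bigl(-\tfrac{1}{64}\eps(\alpha-\beta)\log n\bigr)$.

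\textbf{Main obstacle.} The only genuinely delicate step is the combinatorial book-keeping in Step 2: one must verify the lower bound on $\sigma \E[\rv{Z}_v^{\mathrm{SBM}}]$ for \emph{both} the correctly labeled ($\sigma = +1$) and incorrectly labeled ($\sigma = -1$) cases and check that the same $(2a - n/2)(\alpha-\beta)\tfrac{\log n}{n}$ bound emerges (up to lower-order corrections from the ``self'' term). Everything else is a mechanical application of Bernstein and the Laplace tail bound already recorded in \cref{fact:laplace-tail-bound}.
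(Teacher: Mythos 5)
Your decomposition of $\sigma\E[\rv{Z}_v^{\mathrm{SBM}}]$ by the $(\tilde{x}_u,x_u)$-type of each $u$ and the subsequent Bernstein-plus-Laplace tail bound are a legitimate (indeed slightly tighter) alternative to the paper's four-Binomial/Chernoff calculation, and the balance bookkeeping $b=c$, $a=d$, $a-b \ge 7n/16$ is correct. However, there is a genuine gap in how you handle $\Delta$.

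You bound $|\rv{Z}_v-\rv{Z}_v^{\mathrm{SBM}}|\le\Delta$ deterministically (treating the removal as deleting $\Delta$ realized edges, each shifting $\rv{Z}_v$ by $\pm1$) and then push the whole $\Delta$ into the Laplace tail, claiming $\Pr\bigl(\rv{W}_v\le -(\mu/2-\Delta)\bigr)$ can be absorbed into the constant. This cannot work: $\mu/2=\Theta\bigl((\alpha-\beta)\log n\bigr)$ while $\Delta$ is allowed to be $\Theta(\log^2 n)$, so $\mu/2-\Delta<0$ for all large $n$ and the Laplace bound degenerates to something $\ge 1/2$. Worse, under the interpretation you chose the lemma is actually false: a vertex has only $\Theta(\log n)$ incident edges, so an adversary allowed to remove $\Theta(\log^2 n)$ realized edges could delete every favorable edge at $v$, making $\sigma\rv{Z}_v\le 0$ and the misclassification probability at least $1/2$, contradicting the stated $o(1)$ bound.

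The paper instead treats the $\Delta$ removals as removing $\Delta$ \emph{potential} edges (Bernoulli trials) from the favorable counts, i.e., the trial counts in $\rv{A}$ and $\rv{B}'$ shrink by $\Delta$; this is what the application actually produces (conditioned on $\rv{G}_1$, the edges of $\rv{G}_1$ are deterministically disabled as Bernoulli trials for $\rv{G}_2$). Under that reading the shift in the relevant means is only $\Delta\cdot\alpha\tfrac{\log n}{n}=O(\log^3 n/n)=o(1)$, which genuinely is negligible, and the proof then proceeds with $\Delta=0$. To repair your argument you would need to replace the deterministic $\le\Delta$ bound with this per-trial analysis: the $\Delta$ excluded terms $Y_u$ each have $|\E Y_u|=O(\log n/n)$ and $\Var(Y_u)=O(\log n/n)$, so the total perturbation they contribute has mean $o(1)$ and concentrates, and it can then be folded into the Bernstein step rather than dumped into the Laplace tail.
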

\begin{proof}
  Let us fix an arbitrary vertex $v$ and analyze the probability $\Pr\Paren{\hat{x}(\rv{G})_v \neq x_v}$.
  Let $r := \Ham(\tilde{x},x) /n$.
  Then it is not hard to see 
  \begin{equation}
    \Pr\Paren{\hat{x}(\rv{G})_v \neq x_v} 
    \leq 
    \Pr\Paren{\rv{B}+\rv{A'}-\rv{A}-\rv{B'}+\rv{W}>0}
  \end{equation}
  where 
  \begin{itemize}
    \item $\mathbf{A}\sim\Bin((1/2-r)n-\Delta,\alpha\frac{\log n}{n})$, corresponding to the number of neighbors that are from the same community and correctly labeled by $\tilde{x}$,
    \item $\mathbf{B'}\sim\Bin(rn-\Delta, \beta\frac{\log n}{n})$, corresponding to the number of neighbors that are from the different community but incorrectly labeled by $\tilde{x}$,
    \item $\mathbf{B}\sim\Bin((1/2-r)n,\beta\frac{\log n}{n})$, corresponding to the number of neighbors that are from the different community and correctly labeled by $\tilde{x}$,
    \item $\mathbf{A'}\sim\Bin(rn, \alpha\frac{\log n}{n})$, corresponding to the number of neighbors that are from the same community but incorrectly labeled by $\tilde{x}$,
    \item $\mathbf{W}\sim\Lap(0,2/\eps)$, independently.
  \end{itemize}
  The $\Delta$ term appearing in both $\rv{A}$ and $\rv{B'}$ corresponds to the worst case where $\Delta$ ``favorable'' edges are removed.
  If $r \geq \Omega(1)$, then $\Delta=O(\log^2 n)$ is negligible to $rn=\Theta(n)$ and we can safely ignore the effect of removing $\Delta$ edges.
  If $r = o(1)$, then we can safely assume $\tilde{x}$ is correct on all vertices and ignore the effect of removing $\Delta$ edges as well.
  Thus, we will assume $\Delta=0$ in the following analysis.

  For any $t,t'$, we have
  \begin{align*}
    \Pr\Paren{\mathbf{A'}+\mathbf{B}-\rv{A}-\rv{B'}+\rv{W}>0} 
    &\leq \Pr\Paren{\mathbf{A'}+\mathbf{B}+\rv{W} > t} + \Pr\Paren{\rv{A}+\rv{B'} \leq t} \\
    &\leq \Pr\Paren{\mathbf{A'}+\mathbf{B} \geq t-t'} + \Pr\Paren{\rv{W} \geq t'} + \Pr\Paren{\rv{A}+\rv{B'} \leq t} .
  \end{align*}
  We choose $t,t'$ by first picking two constants $a,b>0$ satisfying $a+b<1$ and then solving
  \begin{itemize}
    \item $\E[\rv{A'}+\rv{B}]-t = a\cdot (\E[\rv{A}+\rv{B'}]-\E[\rv{A'}+\rv{B}])$ and
    \item $t' = (1-a-b)\cdot (\E[\rv{A}+\rv{B'}]-\E[\rv{A'}+\rv{B}])$.
  \end{itemize}
  By \cref{fact:laplace-tail-bound}, 
  \begin{equation*}
    \Pr\Paren{\mathbf{W}>t'} 
    \leq \exp\Paren{-\frac{t'\eps}{2}} 
    \leq \exp\Paren{-\frac{(1/4-r)(1-a-b)}{2} \cdot \eps(\alpha-\beta) \cdot \log n} .
  \end{equation*}
  By \cref{fact:Chernoff} and the assumption $r\leq1/16$, we have
  \begin{equation*}
    \Pr\Paren{\mathbf{A}+\mathbf{B'} \leq t} 
    \leq \exp\Paren{-\frac{(\E[\mathbf{A}+\mathbf{B'}]-t)^2}{2\E[\mathbf{A}+\mathbf{B'}]}} 
    \leq \exp\Paren{-(1/4-r)^2 a^2 \cdot \frac{(\alpha-\beta)^2}{\alpha+\beta} \cdot \log n} .
  \end{equation*}
  Setting $b=1/2$, by \cref{fact:Chernoff} and the assumption $r\leq1/16$, we have
  \begin{equation*}
    \Pr\Paren{\mathbf{A'}+\mathbf{B} \geq t-t'} 
    \leq \exp\Paren{-\frac{(t-t'-\E[\mathbf{A'}+\mathbf{B}])^2}{t-t'+\E[\mathbf{A'}+\mathbf{B}]}} 
    \leq \exp\Paren{-\frac{2(1/4-r)^2}{7}  \cdot \frac{(\alpha-\beta)^2}{\alpha+\beta} \cdot \log n} .
  \end{equation*}
  Further setting $a=1/3$, we have 
  \begin{equation*}
    \Pr\Paren{\hat{x}(\rv{G})_v \neq x_v}
    \leq
    \exp\Paren{-\frac{1/4-r}{12} \cdot \eps(\alpha-\beta) \cdot \log n} + 
    2\cdot\exp\Paren{-\frac{(1/4-r)^2}{9} \cdot \frac{(\alpha-\beta)^2}{\alpha+\beta} \cdot \log n} .
  \end{equation*}
  Finally, plugging the assumption $r\leq1/16$ to conclude.
\end{proof}

Then it is not difficult to show the utility guarantee of our priavte exact recovery algorithm.

\begin{lemma}[Utility] \label{lem:utility_proof_exact_recovery}
  Suppose $\alpha,\beta$ are fixed constants satisfying
  \begin{equation*}
    \sqrt{\alpha}-\sqrt{\beta} \geq 16 
    \quad\text{and}\quad
    \alpha-\beta \geq \Omega\Paren{\frac{1}{\eps^2} \cdot \frac{\log(2/\delta)}{\log n} + \frac{1}{\eps}} .
  \end{equation*}
  Then for any balanced $x\in\sbits^n$, on input $\mathbf{G} \sim \ssbm$, \cref{algorithm:private_exact_recovery} efficiently outputs $\hat{x}(\mathbf{G})$ satisfying $\hat{x}(\rv{G}) \in \{x, -x\}$ with probability $1-o(1)$. 
\end{lemma}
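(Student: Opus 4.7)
The plan is to trace the three steps of \cref{algorithm:private_exact_recovery}. First, because each edge of $\rv{G}\sim\ssbm$ is independently assigned to $\rv{G}_1$ with probability $1/2$, the subgraphs $\rv{G}_1$ and $\rv{G}_2=\rv{G}\sm\rv{G}_1$ are \emph{independent} and each marginally distributed as $\SSBM_n(\alpha/2,\beta/2,x)$. I will condition on this distributional observation throughout.

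Second, I analyze the rough estimation step. Matching parameters with \cref{model:sbm}, the graph $\rv{G}_1$ has $d'=(\alpha+\beta)(\log n)/2$ and $\gamma'=(\alpha-\beta)/(\alpha+\beta)$. The elementary identity $(\alpha-\beta)/\sqrt{\alpha+\beta}\geq\sqrt{\alpha}-\sqrt{\beta}$ shows that $\gamma'\sqrt{d'}\geq\tfrac{1}{\sqrt{2}}(\sqrt{\alpha}-\sqrt{\beta})\sqrt{\log n}$, which exceeds the threshold of \cref{thm:private_weak_recovery} for $n$ large. Plugging $\gamma',d'$ into the error bound of that theorem and invoking the hypothesis on $\alpha-\beta$ from \cref{eq:private_exact_recovery_threshold} (with a sufficiently large hidden constant) yields $\err(\tilde{x}(\rv{G}_1),x)\leq 1/32$ with probability $1-\exp(-\Omega(n))$. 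I then post-process $\tilde{x}$ by flipping the smallest number of coordinates needed to enforce $\sum_i\tilde{x}_i=0$; since $x$ is balanced, this requires at most $\err(\tilde{x},x)\cdot n$ flips, so the balanced post-processed estimate still satisfies $\err(\tilde{x},x)\leq 1/16$.

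Third, for the majority-voting step, the independence of $\rv{G}_1$ and $\rv{G}_2$ lets me condition on the good event of the previous step and apply \cref{lem:utility_proof_voting} to $\rv{G}_2$ with $\Delta=0$, rough estimate $\tilde{x}$, and effective parameters $\alpha/2,\beta/2$. For every vertex $v$,
\[
\Pr\Brac{\hat{x}(\rv{G})_v\neq x_v}\leq \exp\Paren{-\tfrac{\eps(\alpha-\beta)}{128}\log n}+2\exp\Paren{-\tfrac{(\alpha-\beta)^2}{512(\alpha+\beta)}\log n}.
\]
The hypothesis $\alpha-\beta\geq\Omega(1/\eps)$ (with sufficiently large constant) makes the first exponent at least $2\log n$; and the hypothesis $\sqrt{\alpha}-\sqrt{\beta}\geq 16$, combined with $(\alpha-\beta)^2/(\alpha+\beta)\geq(\sqrt{\alpha}-\sqrt{\beta})^2$, makes the second exponent at least $2\log n$ as well. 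A union bound over the $n$ vertices then yields $\hat{x}(\rv{G})\in\{x,-x\}$ with probability $1-o(1)$.

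The main delicate point is the parameter bookkeeping across the graph-splitting boundary: the constant $16$ in $\sqrt{\alpha}-\sqrt{\beta}\geq 16$ has to simultaneously (i) keep $\gamma'\sqrt{d'}$ above the weak-recovery threshold of \cref{thm:private_weak_recovery} after $d$ is halved, (ii) drive the weak-recovery error strictly below $1/16$ even after the balancing post-processing, and (iii) leave the second exponential tail in \cref{lem:utility_proof_voting} at $o(1/n)$ after halving $\alpha,\beta$. The other hypothesis $\alpha-\beta\geq\Omega\Paren{\eps^{-2}\log(2/\delta)/\log n+\eps^{-1}}$ plays a dual role: the $\eps^{-2}$ term absorbs the privacy-induced second summand in the error bound of \cref{thm:private_weak_recovery}, while the $\eps^{-1}$ term dominates the Laplace noise injected in the majority-vote step.
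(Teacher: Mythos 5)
There is a genuine gap in your handling of the graph-splitting step. You assert that $\rv{G}_1$ and $\rv{G}_2 = \rv{G}\setminus\rv{G}_1$ are \emph{independent}, and you use this independence to ``condition on the good event of the previous step and apply \cref{lem:utility_proof_voting} to $\rv{G}_2$ with $\Delta=0$.'' But $\rv{G}_1$ and $\rv{G}_2$ are \emph{not} independent: by construction $E(\rv{G}_1)\cap E(\rv{G}_2)=\emptyset$, so for any potential edge $e$ with SBM probability $p_e$ we have $\Pr[e\in E(\rv{G}_1)\text{ and }e\in E(\rv{G}_2)]=0\neq(p_e/2)^2$. (The Poisson-splitting trick that does give independence applies to a Poissonized model, which is not what \cref{algorithm:private_exact_recovery} does.) Consequently, after conditioning on $\rv{G}_1=G_1$ — which you must do, because $\tilde{x}(\rv{G}_1)$ is measurable with respect to $\rv{G}_1$ — the conditional law of $\rv{G}_2$ is \emph{not} $\SSBM$ with parameters $\alpha/2,\beta/2$: the edges of $E(G_1)$ are deterministically absent, and the remaining edges appear with probability $\tfrac{p_e/2}{1-p_e/2}$ rather than $p_e/2$.

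This is precisely why \cref{lem:utility_proof_voting} is stated for an SBM from which up to $\Delta = O(\log^2 n)$ adjacent edges per vertex have been removed arbitrarily: the deterministic suppression of the $E(G_1)$ edges is an adversarial perturbation that the lemma is designed to absorb. The paper's proof conditions on $\rv{G}_1=G_1$, uses \cref{fact:max-deg-random-graph} to show that $G_1$ has maximum degree $O(\log^2 n)$ with high probability, and then invokes \cref{lem:utility_proof_voting} with $\Delta=2\log^2 n$. Your invocation with $\Delta = 0$ silently assumes away exactly the dependence you need to control. The fix is local — replace the independence claim with conditioning on $G_1$, invoke \cref{fact:max-deg-random-graph}, and carry the nonzero $\Delta$ into \cref{lem:utility_proof_voting} — but as written the step is unjustified. (A secondary, cosmetic issue: your exponents with halved parameters do not close under $\sqrt{\alpha}-\sqrt{\beta}\geq 16$ exactly; $(\alpha-\beta)^2/(\alpha+\beta)\geq(\sqrt{\alpha}-\sqrt{\beta})^2\geq 256$ falls short of the $512$ or $1024$ your display demands, though this is only a matter of adjusting the unoptimized constant. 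The independence issue is the substantive one.)
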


\begin{proof}
  We will show the probability of a fixed vertex being misclassified is at most $o(1/n)$.
  Then by union bound, exact recovery can be achieved with probability $1-o(1)$.

  As the graph-splitting probability is $1/2$, $\rv{G}_1$ follows $\SBM_n(\frac{\alpha}{2}\cdot\frac{\log n}{n}, \frac{\beta}{2}\cdot\frac{\log n}{n},x)$.
  By \cref{thm:private_weak_recovery}, the rough estimate $\tilde{x}(\rv{G}_1)$ satisfies\footnote{It is easy to make the output of \cref{algorithm:private_weak_recovery} balanced at the cost of increasing the error rate by a factor of at most 2.}
  \begin{equation} \label{eq:exact_util_proof_1}
    \err(\tilde{x}(\rv{G}_1),x)
    \leq r %
    := o(1) + \frac{14000}{(\alpha-\beta)\eps^2} \cdot \frac{\log(2/\delta)}{\log n} .
  \end{equation} 
  with probability at least $1-\exp(-\Omega(n))$.
  Without loss of generality, we can assume $\Ham(\tilde{x}(\rv{G}_1),x) \leq rn$, since we consider $-x$ otherwise.
  By \cref{fact:max-deg-random-graph}, the maximum degree of $\rv{G}_1$ is at most $\Delta:=2\log^2 n$ with probability at least $1-n\exp(-(\log n)^2/3)$.
  In the following, we condition our analysis on the above two events regarding $\tilde{x}(\rv{G}_1)$ and $\rv{G}_1$. 

  Now, let us fix a vertex and analyze the probability $p_e$ that it is misclassified after majority voting.
  With $G_1$ being fixed, $\mathbf{G_2}$ can be thought of as being generated by first sampling $\mathbf{G}$ and then removing $G_1$ from $\mathbf{G}$.
  To make $r\leq1/16$, it suffices to ensure $\alpha-\beta > \frac{500^2}{\eps^2} \cdot \frac{\log(2/\delta)}{\log n}$ by \cref{eq:exact_util_proof_1}.%
  Then by \cref{lem:utility_proof_voting}, we have
  \begin{equation*}
    p_e \leq
    \exp\Paren{-\frac{1}{64} \cdot \eps(\alpha-\beta) \cdot \log n} + 
    2\cdot\exp\Paren{-\frac{1}{16^2} \cdot \frac{(\alpha-\beta)^2}{\alpha+\beta} \cdot \log n} .
  \end{equation*}
  To make $p_e$ at most $o(1/n)$, it suffices to ensure 
  \begin{equation*}
    \frac{1}{64} \cdot \eps(\alpha-\beta)>1
    \quad\text{and}\quad
    \frac{1}{16^2} \cdot \frac{(\alpha-\beta)^2}{\alpha+\beta} > 1 .
  \end{equation*}
  Note $(\alpha-\beta)^2/(\alpha+\beta)>(\sqrt{\alpha}-\sqrt{\beta})^2$ for $\alpha>\beta$. 
  Therefore, as long as
  \begin{equation*}
    \sqrt{\alpha}-\sqrt{\beta} \geq 16 
    \quad\text{and}\quad
    \alpha-\beta \geq \frac{500^2}{\eps^2} \cdot \frac{\log(2/\delta)}{\log n} + \frac{64}{\eps} ,
  \end{equation*}
  \cref{algorithm:private_exact_recovery} recovers the hidden communities exactly with probability $1-o(1)$.
\end{proof}

\begin{proof}[Proof of \cref{thm:private_exact_recovery}]
  By \cref{lem:privacy_proof_exact_recovery} and \cref{lem:utility_proof_exact_recovery}.
\end{proof}
\subsection{Inefficient recovery using the exponential mechanism}
\label{section:sbm-private-almost-exact-recovery}

In this section, we will present an inefficient algorithm satisfying pure privacy which succeeds for all ranges of parameters - ranging from weak to exact recovery.
The algorithm is based on the exponential mechanism \cite{mcsherry2007mechanism} combined with the majority voting scheme introduced in section \cref{section:sbm-private-exact-recovery}.
In particular, we will show

\begin{theorem}[Full version of \cref{theorem:inefficient_sbm_abridged}]
  \label{theorem:inefficient_sbm}
  Let $\gamma \sqrt{d} \geq 12800$ and $x \in \Set{\pm 1}^n$ be balanced.
  Let $\zeta \geq 2\exp\Paren{-\tfrac{\gamma^2 d}{512}}$.
  For any $\e \geq\tfrac{64\log\Paren{2/\zeta}}{\gamma d}$, there exists an algorithm, \cref{algorithm:inefficient_SBM}, which on input $\rv{G}\sim\SBM_n(\gamma,d,x^*)$ outputs an estimate $\hat{x}(\rv{G})\in\sbits^n$ satisfying $$\err\Paren{\hat{x}(\mathbf{G}), x^*} \leq \zeta$$
  with probability at least $1-\zeta$.
  In addition, the algorithm is $\e$-private.
  Further, by slightly modifying the algorithm, we can achieve error $20/\sqrt{\log(1/\zeta)}$ with probability $1-e^{-n}$.\footnote{The first, smaller, error guarantee additionally needs the requirement that $\zeta \leq \exp(-640)$. The second one does not.}
\end{theorem}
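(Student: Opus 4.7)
The plan is to apply the exponential mechanism with a score that has $\ell_1$-sensitivity $1$ and whose typical gap between $x^*$ and any competing balanced partition dominates the entropy of the union bound over candidates. For the sharper exponentially-small-failure guarantee I simply reanalyze the same mechanism at a larger target error rate.

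Define $s(G,y) := \sum_{i<j} A(G)_{ij}\, y_iy_j$ and let \cref{algorithm:inefficient_SBM} output $\hat{\rv x}\in \sbits^n$ sampled from balanced vectors with probability proportional to $\exp(\tfrac{\e}{2}\, s(G,y))$. Flipping one edge changes $s(G,y)$ by $\pm y_iy_j = \pm 1$, so the mechanism is $\e$-differentially edge private by the standard exponential-mechanism analysis, regardless of $G$.

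For utility, fix a balanced $y\in\sbits^n$ with $\err(y,x^*)=\zeta'\in(0,\tfrac12]$ and let $T := \{i : y_i\neq x^*_i\}$, $D_y := \{(i,j): i<j,\ x^*_i x^*_j\neq y_iy_j\}$. Balancedness of both $y$ and $x^*$ forces $|T\cap(x^*)^{-1}(+1)|=|T\cap(x^*)^{-1}(-1)|=\zeta' n/2$, giving $|D_y|=\zeta'(1-\zeta')n^2/2$ and, via a direct computation under $\SBM_n(\gamma,d,x^*)$,
\begin{equation*}
  \E\bigl[s(G,x^*)-s(G,y)\bigr] \;=\; 2\gamma d n\zeta'(1-\zeta').
\end{equation*}
Since $s(G,x^*)-s(G,y)$ is a sum of $|D_y|$ independent $\{0,\pm 2\}$-valued summands with total variance at most $8\zeta' n d$, Bernstein's inequality gives $\Pr[s(G,x^*)-s(G,y) < \gamma d n \zeta']\leq \exp(-\gamma^2 d n\zeta'/64)$. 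Union-bounding over the $\binom{n}{\zeta' n}\leq \exp(n\zeta'\log(e/\zeta'))$ balanced $y$ at each distance $\zeta'$ in a dyadic grid on $[\zeta,1/2]$, the hypothesis $\zeta \geq 2\exp(-\gamma^2d/512)$ makes the Bernstein exponent dominate the entropy, yielding uniformly $s(G,x^*)-s(G,y)\geq \gamma d n\, \err(y,x^*)$ over all balanced $y$ at distance $\geq \zeta$, with probability $1-o(\zeta)$.

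Conditioned on this event, the probability the mechanism outputs a $\hat y$ with $\err(\hat y,x^*)\geq \zeta$ is bounded by $\sum_{\zeta'\geq \zeta}\binom{n}{\zeta' n}\exp(-\tfrac{\e\gamma d n\zeta'}{2})\leq \sum_{\zeta'\geq\zeta}\exp\bigl(n\zeta'[\log(e/\zeta')-\tfrac{\e\gamma d}{2}]\bigr)$; the choice $\e\geq 64\log(2/\zeta)/(\gamma d)$ drives the bracket below $-15\log(2/\zeta')$ uniformly in $\zeta'\geq \zeta$, and the resulting geometric series sums to at most $\zeta$. For the $e^{-n}$-failure guarantee I apply the same mechanism but sum from $\zeta' = \eta := 20/\sqrt{\log(1/\zeta)}$: the exponent at $\zeta'=\eta$ becomes of order $-n\eta\log(1/\zeta)\asymp -n\sqrt{\log(1/\zeta)}\leq -n$, so the probability of error exceeding $\eta$ drops to $e^{-n}$ at the price of inflating the error rate from $\zeta$ to $\eta$. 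The only delicate step is the uniform Bernstein concentration, which is precisely what forces the quantitative hypothesis $\log(1/\zeta)\lesssim \gamma^2 d$.
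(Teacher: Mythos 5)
Your argument takes a genuinely different route from the paper's. The paper runs the exponential mechanism on half of the edges only to reach a \emph{coarse} error of order $1/\sqrt{\log(1/\zeta)}$ with failure probability $e^{-n}$, using the one-shot Grothendieck-type bound $\max_x\bigl|\iprod{x,[Y(\rv G)-\tfrac1n x^*(x^*)^\top]x}\bigr| \lesssim n/(\gamma\sqrt d)$, and then privately boosts to error $\zeta$ with probability $1-\zeta$ via majority voting on the other half (\cref{lem:utility_proof_voting}). You instead analyze a single exponential mechanism at every error scale $\zeta'\in[\zeta,1/2]$ with a scale-dependent Bernstein bound and a union bound over candidates at each scale. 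Your score $\sum_{i<j}A_{ij}y_iy_j$ coincides, up to scaling and the vanishing $\tfrac dn(\sum_i y_i)^2$ term for balanced $y$, with the paper's $\iprod{y,Y(G)y}$, so the two privacy analyses agree. Your route gives a simpler algorithm (no graph splitting, no voting) and a self-contained utility analysis, while the paper's route is more modular and reuses the voting lemma from the efficient algorithm. One caveat: the theorem as stated asserts the guarantee for \cref{algorithm:inefficient_SBM} specifically, which includes the splitting and voting steps, so to match the statement literally you would also need to verify that those steps do not hurt.

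There is one concrete defect in your concentration claim. You assert $\Pr\bigl[s(G,x^*)-s(G,y)<\gamma dn\zeta'\bigr]\leq\exp(-\gamma^2 dn\zeta'/64)$, but $\E[s(G,x^*)-s(G,y)]=2\gamma dn\zeta'(1-\zeta')$, which equals the threshold $\gamma dn\zeta'$ exactly at $\zeta'=1/2$ and exceeds it only by $\gamma dn\zeta'(1-2\zeta')$. For $\zeta'\in(1/4,1/2]$ this deviation collapses and Bernstein gives nothing like your claimed bound (at $\zeta'=1/2$ the left-hand side is roughly $1/2$). The fix is standard: target a strict fraction of the expected gap, say $s(G,x^*)-s(G,y)\geq\tfrac12\gamma dn\zeta'$, so the deviation $t=\gamma dn\zeta'(3/2-2\zeta')\geq\tfrac12\gamma dn\zeta'$ uniformly over $\zeta'\leq 1/2$; this only halves the exponential-mechanism exponent, and your subsequent bracket bound survives with a slightly smaller constant since the slack between your $64$ and the hypothesis's $512$ absorbs the factor. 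Separately, $|D_y|=|T|\cdot|T^c|=\zeta'(1-\zeta')n^2$, not $\zeta'(1-\zeta')n^2/2$; your expectation and variance formulas are consistent with the correct count, so this is just a slip. Modulo these fixes, your direct multi-scale argument is sound.
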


A couple of remarks are in order.
First, our algorithm works across all degree-regimes in the literature and matches known non-private thresholds and rates up to constants.
We remark that for ease of exposition we did not try to optimize these constants.
In particular, for $\gamma^2 d$ a constant we achieve weak recovery.
We reiterate, that $\gamma^2d > 1$ is the optimal non-private threshold.
For the regime, where $\gamma^2 d = \omega(1)$, it is known that the optimal error rate is $\exp\Paren{-(1-o(1))\gamma^2 d}$ even non-privately \cite{zhang2016minimax}, where $o(1)$ goes to zero as $\gamma^2 d$ tends to infinity.
We match this up to constants.
Moreover, our algorithm achieves exact recovery as soon as $\gamma^2 d \geq 512 \log n$ since then $\zeta < \tfrac 1 n$.
This also matches known non-private threshholds up to constants \cite{abbe2015exact, mossel2015consistency}.
Also, our dependence on the privacy parameter $\e$ is also optimal as shown by the information-theoretic lower bounds in \cref{section:sbm-lower-bound}.

We also emphasize, that if we only aim to achieve error on the order of $$\frac{1}{\gamma \sqrt{d}} = \Theta\Paren{\frac{1}{\sqrt{\log(1/\zeta)}}}\,,$$ we can achieve exponentially small failure probability in $n$, while keeping the privacy parameter $\e$ the same.
This can be achieved, by ommitting the boosting step in our algorithm and will be clear from the proof of \cref{theorem:inefficient_sbm}.
We remark that in this case, we can also handle non-balanced communities.

Again, for an input graph $G$, consider the matrix $Y(G) = \tfrac{1}{\gamma d} \Paren{A(G) - \tfrac d n J}$.
For $x \in \Set{\pm 1}^n$ we define the score function $$s_G(x) = \iprod{x, Y(G)x}\,.$$
Since the entries of $A(G)$ are in $[0,1]$ and adjacent graphs differ in at most one edge, it follows immediately, that this score function has sensitivity at most $$\Delta = \max_{\substack{G \sim G'\,, \\ x \in \Set{\pm 1}^n}} \Abs{s_G(x) - s_{G'}(x)} = \frac 2 {\gamma d} \cdot \max_{\substack{G \sim G'\,, \\ x \in \Set{\pm 1}^n}} \Abs{\iprod{x, \Paren{A(G) - A(G')}x}} \leq \frac 2 {\gamma d} \,.$$

\begin{algorithmbox}[Inefficient algorithm for SBM] \label{algorithm:inefficient_SBM}
	\mbox{}\\
	\textbf{Input:} Graph $G$, privacy parameter $\e > 0$

  \noindent
	\textbf{Operations:}
	\begin{enumerate}
    \item\label[step]{step:ineff_graph_split} Graph-splitting: Initialize $\rv{G}_1$ to be an empty graph on vertex set $V(G)$. Independently assign each edge of $G$ to $\rv{G}_1$ with probability $1/2$. Let $\rv{G}_2 = G\sm\rv{G}_1$.
    
    \item\label[step]{step:ineff_exp_mech} Rough estimation on $\rv{G}_1$:
    Sample $\tilde{x}$ from the distribution  with density $$p(x) \propto \exp\Paren{\frac{\e}{2\Delta} \iprod{x, Y(\rv{G}_1)x}}\,,$$ where $\Delta = \tfrac 2 {\gamma d}$.
    
    \item\label[step]{step:ineff_majority_voting} Majority voting on $\rv{G}_2$: Run the $\e$-DP majority voting algorithm (\cref{algorithm:private_majority_voting}) with input $(\rv{G}_2, \tilde{x}(\rv{G}_1))$. Denote its output by $\hat{\rv{x}}$.

    \item Return $\hat{\rv{x}}$.
  \end{enumerate}
\end{algorithmbox}

We first analyze the privacy guarantees of the above algorithm.
\begin{lemma}
  \label{lemma:privacy_inefficient_alg}
  \cref{algorithm:inefficient_SBM} is $\e$-DP.
\end{lemma}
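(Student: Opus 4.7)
The plan is to mimic the privacy analysis of the polynomial-time exact recovery algorithm (\cref{lem:privacy_proof_exact_recovery}), exploiting the graph-splitting step to avoid paying for composition of two $\e$-DP subroutines. Concretely, I will view \cref{algorithm:inefficient_SBM} as a composition of (a) the exponential mechanism in \cref{step:ineff_exp_mech}, which touches only $\rv{G}_1$, and (b) the majority voting step \cref{step:ineff_majority_voting}, which touches only $\rv{G}_2$. Because $\rv{G}_1$ and $\rv{G}_2$ partition the edges of the input $G$, a single modified edge will be routed to exactly one of the two halves, and this is what lets the whole pipeline be $\e$-DP rather than $2\e$-DP.

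First I would verify the two building blocks in isolation. For \cref{step:ineff_exp_mech}, the score function $s_H(x) = \iprod{x, Y(H)x}$ has $\ell_1$-sensitivity at most $\Delta = 2/(\gamma d)$ over the input graph $H$ (this bound is already established in the text just above the algorithm using $\Abs{x^\top(A(H)-A(H'))x}\le 2$ for $x\in\sbits^n$ and adjacent $H,H'$). By the standard exponential mechanism guarantee, sampling $\tilde{x}$ with density proportional to $\exp\bigl(\tfrac{\e}{2\Delta}s_H(x)\bigr)$ is $\e$-DP as a function of $H$. For \cref{step:ineff_majority_voting}, \cref{lem:privacy_majority_voting} already gives that the majority-voting routine is $\e$-DP in its graph argument when the rough estimate is held fixed.

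Next I would carry out the graph-splitting composition argument. Fix two adjacent graphs $G$ and $G_e$ differing in a single edge $e$, and fix an output pair $(x_1,x_2)$ where $x_1$ is the rough estimate produced by \cref{step:ineff_exp_mech} and $x_2$ is the final output. Writing $\cA_1$ for the exponential mechanism and $\cA_2$ for the majority voting routine, I would express
\begin{align*}
\Pr\bigl(\cA(G)=(x_1,x_2)\bigr) = \sum_{H\subseteq G}\Pr\bigl(\cA_1(H)=x_1\bigr)\Pr\bigl(\cA_2(G\sm H,x_1)=x_2\bigr)\Pr\bigl(\rv{G}_1(G)=H\bigr),
\end{align*}
and partition the analogous sum for $G_e$ according to whether $e\in\rv{G}_1(G_e)$ or $e\in\rv{G}_2(G_e)$. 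In the first case the two executions differ only in the input to $\cA_1$ (on adjacent graphs $H$ and $H\cup\{e\}$), so the $\e$-DP guarantee of $\cA_1$ supplies a factor $e^\e$; in the second case they differ only in the input to $\cA_2$ on adjacent graphs, so the $\e$-DP guarantee of \cref{lem:privacy_majority_voting} supplies the same factor. The crucial identity $\Pr(\rv{G}_1(G)=H)=\Pr(\rv{G}_1(G_e)=H)+\Pr(\rv{G}_1(G_e)=H\cup\{e\})$ ensures each $H\subseteq G$ is charged exactly once, so the two bounds combine without an extra factor of two and we obtain $\Pr(\cA(G_e)=(x_1,x_2))\le e^\e\Pr(\cA(G)=(x_1,x_2))$. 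Post-processing (\cref{lemma:differential-privacy-post-processing}) then yields $\e$-DP for the marginal distribution of $\hat{\rv{x}}$.

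The main obstacle I anticipate is purely bookkeeping rather than mathematical: one has to be careful that the randomness used in \cref{step:ineff_graph_split} is treated as an internal coin of $\cA$ (so that identical splits are coupled across $G$ and $G_e$) and that the exponential mechanism's privacy guarantee is applied to the correct conditional distribution on $\rv{G}_1$. Since both $\cA_1$ and $\cA_2$ are pure-DP (no $\delta$ term), there are no issues from approximate-DP composition, and the argument goes through exactly as in the proof of \cref{lem:privacy_proof_exact_recovery}.
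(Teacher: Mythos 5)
Your proposal is correct, and it actually proves a tighter statement than the paper's own proof does. The paper's proof of \cref{lemma:privacy_inefficient_alg} explicitly opts for a shortcut: it states ``for simplicity and clarity of notation, we will show that the algorithm satisfies $2\e$-DP,'' then applies plain composition (\cref{lemma:differential-privacy-composition}) to the $\e$-DP exponential mechanism and the $\e$-DP majority vote, incurring a factor-2 loss that it implicitly absorbs by renaming $\e$. You instead transplant the graph-splitting argument from \cref{lem:privacy_proof_exact_recovery} to the inefficient algorithm: because a single changed edge $e$ lands in exactly one of $\rv{G}_1$ or $\rv{G}_2$, and the key identity $\Pr(\rv{G}_1(G)=H)=\Pr(\rv{G}_1(G_e)=H)+\Pr(\rv{G}_1(G_e)=H_e)$ charges each subgraph $H\subseteq G$ to exactly one of the two mechanisms, the resulting bound is $e^\e$ rather than $e^{2\e}$. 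Your bookkeeping is sound — in the branch where $e\in\rv{G}_1(G_e)$ only $\cA_1$ sees a changed input, and in the other branch only $\cA_2$ does — and your observation that the pure-DP setting ($\delta=0$) makes the $\delta$-accounting trivial is accurate. Both approaches are valid; the paper's buys brevity at the cost of a constant, while yours buys tightness at the cost of repeating the calculation from \cref{lem:privacy_proof_exact_recovery}.
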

\begin{proof}
  For simplicity and clarity of notation, we will show that the algorithm satisfies $2\e$-DP.
  Clearly, the graph splitting step is 0-DP.
  \Cref{step:ineff_exp_mech} corresponds to the exponential mechanism.
  Since the sensitivity of the score function is at most $\Delta = \tfrac 2 {\gamma d}$ it follows by the standard analysis of the mechanism that this step is $\e$-DP \cite{mcsherry2007mechanism}.
  By \cref{lem:privacy_majority_voting}, the majority voting step is also $\e$-DP.
  Hence, the result follows by composition (cf. \cref{lemma:differential-privacy-composition}).
\end{proof}

Next, we will analyze its utility.
\begin{lemma}
  \label{lemma:utility_inefficient_alg}
  Let $\gamma \sqrt{d} \geq 12800$ and $x \in \Set{\pm 1}^n$ be balanced.
  Let $\exp(-640) \geq \zeta \geq 2\exp\Paren{-\tfrac{\gamma^2 d}{512}}, \e \geq\tfrac{64\log\Paren{2/\zeta}}{\gamma d}$, and $\rv{G}\sim\SBM_n(\gamma,d,x^*)$, the output $\hat{x}\Paren{\rv{G}}\in\Set{\pm 1}^n$ of \cref{algorithm:inefficient_SBM} satisfies $$\err\Paren{\hat{x}(\mathbf{G}), x^*} \leq \zeta$$
  with probability at least $1-\zeta$.
\end{lemma}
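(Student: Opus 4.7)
The plan is to combine the utility of the exponential mechanism in \cref{step:ineff_exp_mech} with the boosting guarantee of \cref{lem:utility_proof_voting}. Since edges of $\rv{G}$ are mutually independent Bernoullis, the graph-splitting step makes $\rv{G}_1$ and $\rv{G}_2$ independent and each distributed as $\SBM_n(\gamma, d/2, x^*)$, so \cref{step:ineff_exp_mech} and \cref{step:ineff_majority_voting} can be analyzed in sequence. I reduce the claim to proving (i) $\Pr[\err(\tilde{x}, x^*) > 1/16] \leq \zeta/2$ and (ii) $\Pr[\err(\hat{x}, x^*) > \zeta \mid \err(\tilde{x}, x^*) \leq 1/16] \leq \zeta/2$, then take a union bound.

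For (i), a direct computation gives $\E[s_{\rv{G}_1}(x)] = \tfrac{1}{2n}\iprod{x, x^*}^2 + C_\gamma$ with $C_\gamma$ independent of $x$. A Bernstein-type union bound over $\{\pm 1\}^n$ will show that, on an event $\mathcal{E}$ of probability $1 - e^{-\Omega(n)}$, the deviation $|s_{\rv{G}_1}(x) - \E s_{\rv{G}_1}(x)| \leq n/100$ holds simultaneously for every $x$; this is precisely where the hypothesis $\gamma \sqrt{d} \geq 12800$ is used, to let the signal $\gamma^2 d$ dominate the $n \log 2$ entropy of the candidate set. On $\mathcal{E}$, every $x$ with $\err(x, x^*) = \eta \geq 1/16$ satisfies $s_{\rv{G}_1}(x^*) - s_{\rv{G}_1}(x) \geq 2n\eta(1-\eta) - O(1) \geq n/10$. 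Since \cref{step:ineff_exp_mech} samples $\tilde{x}$ with density proportional to $\exp\Paren{\tfrac{\epsilon \gamma d}{4} s_{\rv{G}_1}(x)}$, summing over the at most $2^n$ candidates with error $\geq 1/16$ yields
\[
\Pr\bigbrac{\err(\tilde{x}, x^*) \geq 1/16 \mid \mathcal{E}}
\leq 2^n \exp\Paren{-\tfrac{\epsilon \gamma d}{4} \cdot \tfrac{n}{10}} \leq \zeta/2,
\]
where the final bound uses $\epsilon \gamma d \geq 64 \log(2/\zeta)$ together with $\zeta \leq e^{-640}$.

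For (ii), I apply \cref{lem:utility_proof_voting} to $\rv{G}_2 \sim \SBM_n(\gamma, d/2, x^*)$. Balance of $x^*$ and $\err(\tilde{x}, x^*) \leq 1/16$ imply $\tilde{x}$ is within $n/8$ of balanced; to get exact balance as required by that lemma, I slightly modify the algorithm by restricting \cref{step:ineff_exp_mech} to balanced $x$, which leaves the analysis of (i) essentially intact. Translating $(\gamma, d/2)$ into the $(\alpha, \beta)$ parameters of \cref{lem:utility_proof_voting} gives $\alpha - \beta = \gamma d/\log n$ and $(\alpha - \beta)^2/(\alpha + \beta) = \gamma^2 d/\log n$, so the per-vertex error probability is
\[
p_e \leq \exp\Paren{-\tfrac{\epsilon \gamma d}{64}} + 2\exp\Paren{-\tfrac{\gamma^2 d}{256}} \leq \zeta^2/2,
\]
by the hypotheses $\epsilon \gamma d \geq 64 \log(2/\zeta)$ and $\gamma^2 d \geq 512 \log(2/\zeta)$. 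Linearity of expectation and Markov's inequality then give $\Pr[\err(\hat{x}, x^*) > \zeta] \leq p_e/\zeta \leq \zeta/2$.

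The hard part will be the uniform concentration in (i): it is what forces $\gamma^2 d$ to dominate the $2^n$ entropy factor in the exponential-mechanism union bound, which drives the non-trivial parameter regime of the theorem. For the second, stronger claim of error $20/\sqrt{\log(1/\zeta)}$ with failure probability $e^{-n}$, I would modify the algorithm to omit the majority-voting step and repeat (i) with target error $\eta = 20/\sqrt{\log(1/\zeta)}$ instead of $1/16$: the exponential-mechanism gap becomes $\Omega(n \eta)$ and the failure probability becomes $2^n \exp(-\Omega(\epsilon \gamma d \cdot n \eta)) \leq e^{-n}$ under the theorem's hypotheses.
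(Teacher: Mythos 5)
Your outline matches the paper's proof structure essentially step for step: score concentration via Bernstein and a $2^n$ union bound (the paper's \cref{fact:cut_norm_sbm}), the exponential-mechanism analysis on $\rv{G}_1$, and then boosting via \cref{lem:utility_proof_voting} on $\rv{G}_2$. The one genuine issue is in your per-vertex error bound in part (ii). Using the halved-degree parametrization $\rv{G}_2\sim\SBM_n(d/2,\gamma,x^*)$ you correctly get $\alpha-\beta=\gamma d/\log n$, and then \cref{lem:utility_proof_voting} gives
\begin{equation*}
  p_e \leq \exp\Paren{-\tfrac{\eps\gamma d}{64}} + 2\exp\Paren{-\tfrac{\gamma^2 d}{256}}\,.
\end{equation*}
From the hypothesis $\eps\gamma d \geq 64\log(2/\zeta)$ the first term is only $\leq\exp\paren{-\log(2/\zeta)}=\zeta/2$, not $\zeta^2/2$: you are missing a factor of two in the exponent. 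Your last step then reads $\Pr\Paren{\err(\hat{x},x^*)>\zeta}\leq p_e/\zeta\leq 1/2$, which is vacuous, so (ii) does not close.

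The paper gets $p_e\leq\zeta^2$ by applying \cref{lem:utility_proof_voting} with $\rv{G}_2$ viewed as $\SBM_n(d,\gamma,x^*)$ — the \emph{unhalved} degree — using the lemma's ``remove at most $\Delta=O(\log^2 n)$ adjacent edges per vertex'' clause to account for the edges handed to $\rv{G}_1$. Under that parametrization $\alpha-\beta=2\gamma d/\log n$, the first term becomes $\exp\paren{-\eps\gamma d/32}\leq\zeta^2/4$, and Markov then yields the claimed $1-\zeta$ success probability. So the crux is whether $\rv{G}_2$ is analyzed as ``$\SBM_n(d,\gamma,\cdot)$ with a few edges removed'' or as ``$\SBM_n(d/2,\gamma,\cdot)$''; your decomposition loses a factor of two at exactly the step where it matters. (Relatedly, your claim that $\rv{G}_1$ and $\rv{G}_2$ are independent is not literally true — a given pair cannot appear in both — though conditioning on $\rv{G}_1$ recovers the conditional-independence structure you actually need. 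Restricting the exponential mechanism to balanced $x$ is a reasonable fix and one the paper glosses over when invoking \cref{lem:utility_proof_voting}, which does require a balanced rough estimate.)
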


\begin{proof}
  We will first show that the rough estimate $\tilde{x}$ obtained in \cref{step:ineff_exp_mech} achieves $$\err\Paren{\tilde{x}, x^*} \leq \frac{20}{\sqrt{\log(1/\zeta)}}$$ with probability $e^{-n}$.
  This will prove the second part of the theorem - for this we don't need that $\zeta \leq \exp(-640)$.
  In fact, arbitrary $\zeta$ works.
  The final error guarantee will then follow by \cref{lem:utility_proof_voting}.
  First, notice that similar to the proof of \cite[Lemma 4.1]{MR3520025-Guedon16}, using Bernstein's inequality and a union bound, we can show that (cf. \cref{fact:cut_norm_sbm} for a full proof) $$\max_{x \in \Set{\pm 1}^n} \Abs{\iprod{x, \Brac{Y(\rv{G})-\frac 1 n x^*(x^*)^\top}x}} \leq \frac{100n}{\gamma\sqrt{d}} \leq \frac{5}{\sqrt{\log\Paren{1/\zeta}}}$$ with probability at least $1-\exp^{-10n}$.
  Recall that $s_{\rv{G}}(x) = \iprod{x, Y(\rv{G})x}$.
  Let $\alpha = \tfrac{5}{\sqrt{\log\Paren{1/\zeta}}}$.
  We call $x \in \Set{\pm 1}^n$ \emph{good} if $s_{\rv{G}}(x) \geq (1-3\alpha)n$.
  It follows that for good $x$ it holds that
  \begin{align*}
    \frac 1 n \cdot \iprod{x,x^*}^2 \geq \iprod{x, Y(\rv{G}) x} - \Abs{\Iprod{x, \Brac{Y(\rv{G}) - \frac 1 n x^* (x^*)^\top} x}}  \geq (1-4\alpha)n \,.
  \end{align*}
  Which implies that $$2\err(x, x^*) \leq 1 - \sqrt{1-4\alpha} = 1 - \frac{1-4\alpha}{\sqrt{1-4\alpha}} \leq 1 - \frac{1-4\alpha}{1-2\alpha} = \frac {2\alpha} {1-2\alpha} \leq 4\alpha\,,$$ where we used that $\alpha \leq 1/4$ and that $\sqrt{1-4x} \leq 1 - 2x$ for $x \geq 0$.
  Hence, we have for good $x$ that $$\err(x, x^*) \leq \frac{20}{\sqrt{\log\Paren{1/\zeta}}}\,.$$
  Since $s_{\rv{G}}(x^*) \geq (1-\alpha)n$,there is at least one good candidate.
  Hence, we can bound the probability that we do not output a good $x$ as
  \begin{align*}
    \frac{\exp\Paren{\tfrac{\e}{2\Delta}(1-3\alpha)n} \cdot e^n}{\exp\Paren{\tfrac{\e}{2\Delta}(1-\alpha)n} \cdot 1} = \exp\Paren{\Paren{1 - \frac{2\e \alpha}{\Delta}} n} \leq e^{-n}\,,
  \end{align*}
  where we used that $$\frac{2\e \alpha}{\Delta} \geq \frac{64\log\Paren{2/\zeta}}{\gamma d} \cdot \frac{5\gamma d}{\sqrt{\log\Paren{1/\zeta}}} \geq 320 \sqrt{\log\Paren{1/\zeta}}\geq 2\,.$$

  We will use \cref{lem:utility_proof_voting} to proof the final conclusion of the theorem.
  In what follows, assume without loss of generality that $\Ham\Paren{x,x^*} < \Ham\Paren{x,-x^*}$.
  The above discussion implies that $$\Ham\Paren{x,x^*} \leq 8\alpha n \leq \frac{40 n}{\sqrt{\log\Paren{1/\zeta}}} \leq \frac n {16}\,,$$ where the last inequality uses $\zeta \leq e^{-640}$.
  Further, by \cref{fact:max-deg-random-graph} it also follows that the maximum degree of $\rv{G}_2$ is at most $O\Paren{\log^2n}$ (by some margin).
  Recall that $\rv{G}_2 \sim \SBM\Paren{d, \gamma, x^*}$.
  In the parametrization of \cref{lem:utility_proof_voting} this means that
  \begin{align*}
    \alpha = \frac{\Paren{1+\gamma}d}{\log n}\,,&\quad\quad \beta = \frac{\Paren{1-\gamma}d}{\log n} \,,\\
    \alpha - \beta = \frac{2\gamma d}{\log n}\,, &\quad\quad \alpha + \beta = \frac{2d}{\log n} \,.
  \end{align*}
  Thus, it follows that the output $\hat{x}$ of the majority voting step satisfies for every vertex $v$
  \begin{align*}
    \Pr\Paren{\hat{x}(\rv{G})_v \neq x_v}
    &\leq
    \exp\Paren{-\frac{1}{64} \cdot \eps(\alpha-\beta) \cdot \log n} + 
    2\cdot\exp\Paren{-\frac{1}{16^2} \cdot \frac{(\alpha-\beta)^2}{\alpha+\beta} \cdot \log n} \\
    &\leq \exp\Paren{-\frac{1}{32} \cdot \eps \gamma d} + \exp\Paren{-\frac{1}{16^2} \cdot \gamma^2 d} \\
    &\leq \zeta^2/4 + \zeta^2/4 \leq \zeta^2\,.
  \end{align*}
  By Markov's Inequality it now follows that $$\Psymb\Paren{\err\Paren{\hat{x}(\rv{G}), x^*} \geq \zeta} \leq \zeta\,.$$
\end{proof}

\subsection{Lower bound on the  parameters for private recovery}\label{section:sbm-lower-bound}

In this section, we prove a tight lower bound for private recovery for stochastic block models.
Recall the definition of error rate, $\err(u,v) :=  \frac{1}{n} \cdot \min\{\Ham(u,v), \Ham(u,-v)\}$ for $u,v\in\sbits^n$.
Our main result is the following theorem.

\begin{theorem}[Full version of \cref{theorem:main-sbm-private-recovery-lower-bound}] \label{thm:lower_bound_private_sbm}
  Suppose there exists an $\eps$-differentially private algorithm such that for any balanced ${ x\in\sbits^n }$, on input ${ \rv{G} \sim \SBM_n(d,\gamma,x) }$, outputs ${ \hat{x}(\rv{G}) \in \sbits^n }$ satisfying
  \[
    \Pr\Paren{\err(\hat{x}(\rv{G}), x) < \zeta } \geq 1-\eta ,
  \]
  where\footnote{Error rate less than $1/n$ already means exact recovery. Thus it does not make sense to set $\zeta$ to any value strictly smaller than $1/n$. The upper bound $\zeta\leq0.04$ is just a technical condition our proof needs for \cref{eq:proof_lower_bound_pack_num}.} $1/n\leq\zeta\leq 0.04$ and the randomness is over both the algorithm and stochastic block models.
  Then,
  \begin{equation} \label{eq:sbm_lb}
    e^{2\eps} - 1 \geq 
    \Omega\Paren{ \frac{\log(1/\zeta)}{\gamma d} + \frac{\log(1/\eta)}{\zeta n \gamma d} } .
  \end{equation}
\end{theorem}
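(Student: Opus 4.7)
My plan is a standard packing lower bound for differential privacy: combine group privacy with a coupling of stochastic block models for different planted partitions, and then apply a packing of balanced label vectors. The right-hand side of \eqref{eq:sbm_lb} is a sum of two terms; I will derive each term separately, and since $e^{2\eps}-1\ge 2\eps$ for $\eps\ge 0$, it suffices to show that $\eps$ exceeds each term (up to constants).

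The central analytic tool is a coupling-plus-group-privacy transfer inequality. For balanced $x,x'\in\sbits^n$ with $\Ham(x,x')=k$, I would couple $\rv{G}\sim\SBM_n(d,\gamma,x)$ and $\rv{G}'\sim\SBM_n(d,\gamma,x')$ edge-by-edge: pairs $\{i,j\}$ with $x_ix_j=x'_ix'_j$ have identical edge marginals and the coupling agrees, while each of the $k(n-k)\le kn$ remaining pairs contributes disagreement probability $2\gamma d/n$ (the gap between $(1+\gamma)d/n$ and $(1-\gamma)d/n$). Hence the expected number of edge disagreements is at most $2k\gamma d$, and Markov's inequality gives $|E(\rv{G})\triangle E(\rv{G}')|\le T:=8k\gamma d$ with probability at least $\tfrac34$. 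Combining this with the group-privacy property (an $\eps$-DP mechanism is $T\eps$-DP on datasets at edge-distance $T$) yields, for every event $E$,
\begin{equation}\label{eq:lb-transfer-plan}
  \Pr_{\rv{G}\sim\SBM_n(d,\gamma,x)}\bigbrac{\hat{x}(\rv{G})\in E}
  \le e^{T\eps}\Pr_{\rv{G}'\sim\SBM_n(d,\gamma,x')}\bigbrac{\hat{x}(\rv{G}')\in E}+\tfrac14.
\end{equation}

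For the first term in \eqref{eq:sbm_lb} I would construct a packing: via a Gilbert--Varshamov-style greedy argument restricted to balanced vectors in the Hamming ball of radius $12\zeta n$ around a fixed balanced $x^{(0)}$, one can find $M=(1/\zeta)^{\Omega(\zeta n)}$ balanced vectors $x^{(1)},\dots,x^{(M)}$ with pairwise Hamming distances in $[4\zeta n,\,12\zeta n]$. The hypothesis $\zeta\le 0.04$ ensures $12\zeta n< n-2\zeta n$, so $\Ham(x^{(i)},\pm x^{(j)})>2\zeta n$ for $i\neq j$, and hence the sets $E_i:=\Set{\hat{x}\in\sbits^n\suchthat\err(\hat{x},x^{(i)})<\zeta}$ are pairwise disjoint. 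Assuming $\eta\le\tfrac14$ (the case $\eta>\tfrac14$ is absorbed into constants), applying \eqref{eq:lb-transfer-plan} between $x^{(1)}$ and each $x^{(i)}$ with $T=O(\zeta n\gamma d)$ gives
\begin{align*}
  1\;\ge\;\sum_{i=1}^M\Pr_{x^{(1)}}\!\bigbrac{\hat{x}\in E_i}\;\ge\;M\cdot e^{-T\eps}\bigparen{(1-\eta)-\tfrac14}\;\ge\;\tfrac12 Me^{-T\eps},
\end{align*}
so $T\eps\ge\log(M/2)=\Omega(\zeta n\log(1/\zeta))$, i.e.\ $\eps\ge\Omega(\log(1/\zeta)/(\gamma d))$.

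For the second term I would use only two balanced vectors $x,x'$ at Hamming distance $\lceil 4\zeta n\rceil$: letting $E:=\Set{\hat{x}\in\sbits^n\suchthat\err(\hat{x},x)<\zeta}$ (disjoint by construction from the high-probability event $\Set{\err(\cdot,x')<\zeta}$), \eqref{eq:lb-transfer-plan} yields
\begin{align*}
  1-\eta\;\le\;\Pr_{x}\!\bigbrac{\hat{x}\in E}\;\le\;e^{T\eps}\Pr_{x'}\!\bigbrac{\hat{x}\in E}+\tfrac14\;\le\;e^{T\eps}\eta+\tfrac14,
\end{align*}
and rearranging with $T=O(\zeta n\gamma d)$ produces $\eps\ge\Omega(\log(1/\eta)/(\zeta n\gamma d))$. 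Summing the two lower bounds and invoking $e^{2\eps}-1\ge 2\eps$ yields \eqref{eq:sbm_lb}. The main obstacle is the packing step: the target size $M=(1/\zeta)^{\Omega(\zeta n)}$ essentially saturates the entropy of the radius-$\zeta n$ Hamming sphere, so the volume comparison $\binom{n/2}{3\zeta n}^2$ versus $\binom{n}{4\zeta n}$ must be carried out within the balanced-vector slice, which is precisely where the numerical hypothesis $\zeta\le 0.04$ is used.
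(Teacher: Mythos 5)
Your proposal is correct and reaches the same conclusion, but it takes a genuinely different route through the key transfer step. The paper couples $\SBM_n(d,\gamma,x^i)$ and $\SBM_n(d,\gamma,x^1)$ so that the edge-disagreement count is $\Bin(N_{i1},2\gamma d/n)$, then applies Cauchy–Schwarz together with the binomial moment generating function to get the \emph{multiplicative} bound $\Pr_{P_1}\brac{\hat{x}\in Y_i}\ge (1-\eta)^2/(1-p+pe^{2\eps})^{N_{i1}}$; summing over the packing and over the single quantity $\eta$ on the right-hand side yields both the $\log(1/\zeta)$ and $\log(1/\eta)$ terms from a single inequality. You instead use Markov's inequality on the coupled disagreement count to obtain an \emph{additive-slack} transfer inequality ($\Pr_x\le e^{T\eps}\Pr_{x'}+\tfrac14$ with $T=O(\zeta n\gamma d)$) and then run two separate arguments — a packing argument for the $\log(1/\zeta)/(\gamma d)$ term and a two-hypothesis argument for the $\log(1/\eta)/(\zeta n\gamma d)$ term — before combining via $e^{2\eps}-1\ge 2\eps$. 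Your version is more elementary (no MGF or Cauchy–Schwarz) and actually yields the ostensibly stronger bound $\eps\ge\Omega(\cdot)$ rather than $e^{2\eps}-1\ge\Omega(\cdot)$. The cost is the hard constant threshold coming from Markov: your packing step needs $1-\eta-\tfrac14>0$ and your two-point step needs $\tfrac34-\eta\ge\tfrac12$, so the argument degenerates once $\eta$ approaches $3/4$, whereas the paper's multiplicative form degrades more gracefully through the $(1-\eta)^2$ factor. Since the theorem is only meaningful for $\eta$ bounded away from $1$ (at $\eta=1$ the hypothesis is vacuous while the conclusion is nontrivial), this is a cosmetic limitation shared, in milder form, by the paper's own proof. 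Two small points worth tightening if you write this out in full: (a) your packing is built inside a Hamming ball of radius $12\zeta n$, so pairwise distances are a priori only bounded by $24\zeta n$, not $12\zeta n$; this is harmless for $T=O(\zeta n\gamma d)$ but the stated interval $[4\zeta n,12\zeta n]$ is not automatic; (b) you should confirm, as the paper does via the explicit $\binom{n/2}{\cdot}^2$ volume ratio, that $\zeta\le 0.04$ is indeed where $\log M=\Omega(\zeta n\log(1/\zeta))$ becomes positive.
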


\begin{remark}
  Both terms in lower bound \cref{eq:sbm_lb} are tight up to constants by the following argument.
  Considering typical privacy parameters $\eps\leq1$, then $e^{2\eps}-1 \approx 2\eps$.
  For exponentially small failure probability, i.e. $\eta=2^{-\Omega(n)}$, the lower bound reads $\eps \geq \Omega(\frac{1}{\gamma d}\cdot\frac{1}{\zeta})$, which is achieved by \cref{algorithm:inefficient_SBM} without the boosting step - see the discussion after \cref{theorem:inefficient_sbm}.
  For polynomially small failure probability, i.e.$\eta=1/\poly(n)$, the lower bound \cref{eq:sbm_lb} reads $\eps \geq \Omega(\frac{1}{\gamma d}\cdot\log\frac{1}{\zeta})$, which is achieved by \cref{theorem:inefficient_sbm}.
\end{remark}

By setting $\zeta=1/n$ in \cref{thm:lower_bound_private_sbm}, we directly obtain a tight lower bound for private exact recovery as a corollary.

\begin{corollary} \label{thm:lower_bound_sbm_exact_recovery}
  Suppose there exists an $\eps$-differentially private algorithm such that for any balanced ${ x\in\sbits^n }$, on input ${ \rv{G} \sim \SBM_n(d,\gamma,x) }$, outputs ${ \hat{x}(\rv{G}) \in \sbits^n }$ satisfying
  \[
    \Pr\Paren{\hat{x}(\rv{G}) \in \{x,-x\}} \geq 1-\eta ,
  \]
  where the randomness is over both the algorithm and stochastic block models.
  Then,
  \begin{equation} \label{eq:sbm_lb_exact_recovery}
    e^{2\eps} - 1 \geq \Omega\Paren{ \frac{\log(n)+\log\frac{1}{\eta}}{\gamma d} } .
  \end{equation}
\end{corollary}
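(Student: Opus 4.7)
The plan is to derive this corollary as a direct instantiation of \cref{thm:lower_bound_private_sbm} with the parameter choice $\zeta = 1/n$. No new lower bound machinery is needed; the work is just to verify the hypothesis of the main theorem and simplify its conclusion.

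First I would observe that the exact-recovery guarantee $\hat{x}(\rv{G})\in\{x,-x\}$ is equivalent to $\err(\hat{x}(\rv{G}), x) = 0$. In particular $\err(\hat{x}(\rv{G}), x) < 1/n$ with probability at least $1-\eta$. Hence any $\eps$-DP exact-recovery algorithm trivially satisfies the hypothesis of \cref{thm:lower_bound_private_sbm} with the parameter choice $\zeta = 1/n$ (which lies in the allowed range $1/n \leq \zeta \leq 0.04$ for all $n$ large enough, and for small $n$ the stated bound is vacuous up to constants).

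Next I would apply the conclusion of \cref{thm:lower_bound_private_sbm} with this value of $\zeta$. The bound
\[
  e^{2\eps} - 1 \geq \Omega\Paren{\frac{\log(1/\zeta)}{\gamma d} + \frac{\log(1/\eta)}{\zeta\, n\, \gamma d}}
\]
becomes, substituting $\zeta = 1/n$,
\[
  e^{2\eps} - 1 \geq \Omega\Paren{\frac{\log n}{\gamma d} + \frac{\log(1/\eta)}{\gamma d}} = \Omega\Paren{\frac{\log n + \log(1/\eta)}{\gamma d}},
\]
which is exactly \cref{eq:sbm_lb_exact_recovery}. Since the corollary is a pure specialization, there is no real obstacle: the only thing to be careful about is that the hypothesis $\zeta \geq 1/n$ of the main theorem is met at the boundary, which it is because we only need $\err < \zeta$ and exact recovery gives $\err = 0$.
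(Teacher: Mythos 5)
Your proposal is correct and matches the paper's own argument: the corollary is obtained precisely by instantiating \cref{thm:lower_bound_private_sbm} with $\zeta = 1/n$ (noting exact recovery gives $\err = 0 < \zeta$), whereupon $\zeta n = 1$ collapses the second term to $\log(1/\eta)/(\gamma d)$. Nothing further is needed.
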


\begin{remark}
  The lower bound \cref{eq:sbm_lb_exact_recovery} for priavte exact recovery is tight up to constants, since there exists an (inefficient) $\eps$-differentially priavte exact recovery algorithm with $\eps\leq O(\frac{\log n}{\gamma d})$ and $\eta=1/\poly(n)$ by \cref{theorem:inefficient_sbm} and \cite[Theorem 3.7]{MohamedNVT22}.
\end{remark}

In rest of this section, we will prove \cref{thm:lower_bound_private_sbm}.
The proof applies the packing lower bound argument similar to \cite[Theorem 7.1]{hopkins2022efficient}.
To this end, we first show $\err(\cdot,\cdot)$ is a semimetric over $\sbits^n$.

\begin{lemma}
  $\err(\cdot,\cdot)$ is a semimetric over $\sbits^n$.
\end{lemma}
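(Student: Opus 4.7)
The plan is to verify the standard semimetric axioms directly from the identity
\[
\err(u,v) \;=\; \frac{1}{n}\,\min_{s\in\{\pm 1\}} \Ham(u, s v),
\]
which exhibits $\err$ as the quotient (pseudo)metric of the normalized Hamming distance under the order-$2$ isometric action $v \mapsto -v$. Non-negativity and $\err(u,u)=0$ are immediate since $\Ham(u,u)=0$. Symmetry follows by observing that Hamming distance is symmetric and that $\Ham(u,-v)=\Ham(-u,v)=\Ham(v,-u)$, so the pair of quantities we take the minimum over is the same whether we compute $\err(u,v)$ or $\err(v,u)$.

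The only step with content is the triangle inequality $\err(u,w)\le \err(u,v)+\err(v,w)$. I would prove it by exploiting the sign-invariance $\Ham(sx, sy)=\Ham(x,y)$ for any $s\in\{\pm 1\}$. Concretely, let $s^\star,t^\star\in\{\pm 1\}$ achieve the respective minima, i.e.\ $\err(u,v)=\Ham(u,s^\star v)/n$ and $\err(v,w)=\Ham(v,t^\star w)/n$. Then the triangle inequality for Hamming distance gives
\[
\Ham(u,\,s^\star t^\star w) \;\le\; \Ham(u,\,s^\star v) + \Ham(s^\star v,\,s^\star t^\star w) \;=\; \Ham(u,s^\star v) + \Ham(v,t^\star w),
\]
where the last equality uses sign-invariance (cancel the leading $s^\star$). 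Dividing by $n$ and noting that the left-hand side is at least $\min_{s}\Ham(u,sw)/n=\err(u,w)$ yields the desired bound.

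I do not expect any real obstacle; the whole argument is a few lines once one notices the representation of $\err$ as a minimum over $\{\pm 1\}$ and that Hamming distance is invariant under coordinate-wise negation. The only thing to flag is that $\err$ is only a \emph{pseudo}-semimetric on $\sbits^n$ in the strict sense (since $\err(u,-u)=0$ though $u\neq -u$), which is consistent with the authors' use of the word ``semimetric'' and harmless for the subsequent packing argument that only uses non-negativity, symmetry, and the triangle inequality.
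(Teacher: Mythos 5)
Your proof is correct, and it takes a genuinely different route from the paper's. The paper proceeds by an explicit case analysis on whether each of $\Ham(u,v)$, $\Ham(u,w)$, $\Ham(v,w)$ is below or above $n/2$, reducing the harder cases to the easier ones by replacing some argument with its negation; this works but is somewhat ad hoc and requires checking four cases (plus subcases). You instead exploit the structural fact that $\err$ is the quotient of the normalized Hamming metric under the isometric $\Z/2$-action $v \mapsto -v$: writing $\err(u,v) = \tfrac{1}{n}\min_{s\in\{\pm 1\}}\Ham(u,sv)$, picking optimal signs $s^\star, t^\star$, and using sign-invariance $\Ham(s^\star v, s^\star t^\star w) = \Ham(v, t^\star w)$ to stitch the two triangle inequalities together via the combined sign $s^\star t^\star$. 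This is a standard argument for quotient pseudometrics and is cleaner and shorter than the paper's case analysis; it also generalizes immediately to quotients by any finite group of isometries. One small thing worth making explicit when you write it up: the cancellation step $\Ham(s^\star v, s^\star t^\star w) = \Ham(v, t^\star w)$ holds because entrywise negation of both arguments (which is what multiplying by $s^\star\in\{\pm1\}$ does on $\{\pm1\}^n$) preserves the set of coordinates on which they disagree. Your remark that $\err$ is only a pseudo-semimetric (identically zero on $\{u,-u\}$) is accurate and matches what the paper needs, since the downstream packing argument uses only non-negativity, symmetry, and the triangle inequality.
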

\begin{proof}
  Symmetry and non-negativity are obvious from the definition. We will show $\err(\cdot,\cdot)$ satisfies triangle inequality via case analysis.
  Let $u,v,w\in\sbits^n$ be three arbitrary sign vectors. 
  By symmetry, we only need to consider the following four cases.

  \emph{Case 1:} $\Ham(u,v), \Ham(u,w), \Ham(v,w) \leq n/2$. This case is reduced to showing Hamming distance satisfies triangle inequality, which is obvious.

  \emph{Case 2:} $\Ham(u,v), \Ham(u,w) \leq n/2$ and $\Ham(v,w) \geq n/2$. We need to check two subcases. First,
  \begin{align*}
    \err(u,v) \leq \err(u,w) + \err(v,w)
    &\Leftrightarrow \Ham(u,v) + \Ham(v,w) \leq \Ham(u,w) + n \\
    &\Leftarrow \Ham(u,v) + H(u,v) + H(u,w) \leq \Ham(u,w) + n \\
    &\Leftrightarrow \Ham(u,v) \leq n/2 .
  \end{align*}
  Second,
  \begin{align*}
    \err(v,w) \leq \err(u,v) + \err(u,w) 
    &\Leftrightarrow n \leq \Ham(v,w) + \Ham(u,v) + \Ham(u,w) \\
    &\Leftarrow n \leq 2 \Ham(v,w) .
  \end{align*}

  \emph{Case 3:} $\Ham(u,v) \leq n/2$ and $\Ham(u,w), \Ham(v,w) \geq n/2$. This case can be reduced to case 1 by considering $u,v,-w$.

  \emph{Case 4:} $\Ham(u,v), \Ham(u,w), \Ham(v,w) \geq n/2$. This case can be reduced to case 2 by considering $-u,v,w$.
\end{proof}

\begin{proof}[Proof of \cref{thm:lower_bound_private_sbm}]
  Suppose there exists an $\eps$-differentially private algorithm satisfying the theorem's assumption.
  
  We first make the following notation. Given a semimetric $\rho$ over $\sbits^n$, a center $v\in\sbits^n$, and a radius $r\geq0$, define $B_\rho(v,r) := \{ w \in \sbits^n : \one^\top w=0, \rho(w,v) \leq r \}$.

  Pick an arbitrary balanced $x\in\sbits^n$.
  Let $M=\{x^1, x^2, \dots, x^m\}$ be a maximal $2\zeta$-packing of $B_{\err}(x,4\zeta)$ in semimetric $\err(\cdot,\cdot)$.
  By maximality of $M$, we have $B_{\err}(x,4\zeta) \subseteq \cup_{i=1}^{m} B_{\err}(x^i,2\zeta)$, which implies
  \begin{align*}
    & \Abs{B_{\err}(x,4\zeta)} \leq \sum_{i=1}^{m} \Abs{B_{\err}(x^i,2\zeta)} \\
    \implies & \Abs{B_{\Ham}(x,4\zeta)} \leq \sum_{i=1}^{m} 2\cdot\Abs{B_{\Ham}(x^i,2\zeta)} = 2m \cdot \Abs{B_{\Ham}(x,2\zeta)} \\
    \implies & 2m \geq \frac{\Abs{B_{\Ham}(x,4\zeta n)}}{\Abs{B_{\Ham}(x,2\zeta n)}} 
      = \frac{\binom{n/2}{2\zeta n}^2}{\binom{n/2}{\zeta n}^2}
      \geq \frac{\Paren{\frac{1}{4\zeta}}^{4\zeta n}}{\Paren{\frac{e}{2\zeta}}^{2\zeta n}}
      = \Paren{\frac{1}{8e\zeta}}^{2\zeta n} \numberthis \label{eq:proof_lower_bound_pack_num}
  \end{align*}

  For each $i\in[m]$, define $Y_i := \{w\in\sbits^n : \err(w,x^i)\leq\zeta \}$. Then $Y_i$'s are pairwise disjoint.
  For each $i\in[m]$, let $P_i$ be the distribution over $n$-vertex graphs generated by $\SBM_n(d, \gamma, x^i)$.
  By our assumption on the algorithm, we have for any $i\in[m]$ that
  \[
    \Pr_{\rv{G} \sim P_i} \Paren{\hat{x}(\rv{G}) \in Y_i} \geq 1-\eta .
  \]
  Combining the fact that $Y_i$'s are pairwise disjoint, we have
  \begin{equation} \label{eq:proof_lower_bound_total_prob}
    \sum_{i=1}^{m} \Pr_{\rv{G} \sim P_1} \Paren{\hat{x}(\rv{G}) \in Y_i} 
    = \Pr_{\rv{G} \sim P_1} \Paren{\hat{x}(\rv{G}) \in \cup_{i=1}^{m} Y_i} \leq 1
    \implies
    \sum_{i=2}^{m} \Pr_{\rv{G} \sim P_1} \Paren{\hat{x}(\rv{G}) \in Y_i} \leq \eta .
  \end{equation}
  In the following, we will lower bound $\Pr_{\rv{G} \sim P_1} \paren{\hat{x}(\rv{G}) \in Y_i}$ for each $i\in[m]\setminus\{1\}$ using group privacy.
  
  Note each $P_i$ is a product of $\binom{n}{2}$ independent Bernoulli distributions. Thus for any $i,j\in[m]$, there exists a coupling $\omega_{ij}$ of $P_i$ and $P_j$ such that, if $(\rv{G}, \rv{H}) \sim \omega$, then 
  \[
    \Ham(\rv{G}, \rv{H}) \sim \Bin(N_{ij}, p) ,
  \]
  where $p=2\gamma d/n$ and $N_{ij} = \Ham(x^i, x^j) \cdot (n-\Ham(x^i, x^j))$.
  Applying group privacy, we have for any two graphs $G,H$ and for any $S\subseteq\sbits^n$ that\footnote{In \cref{eq:sbm_lb_group_privacy}, the randomness only comes from the algorithm.}
  \begin{equation} \label{eq:sbm_lb_group_privacy}
    \Pr\Paren{\hat{x}(G) \in S} 
    \leq 
    \exp(\eps\cdot \Ham(G, H)) \cdot \Pr\Paren{\hat{x}(H) \in S} .
  \end{equation}
  For each $i\in[m]$, taking expectations on both sides of \cref{eq:sbm_lb_group_privacy} with respect to coupling $\omega_{i1}$ and setting $S=Y_i$, we have
  \begin{equation} \label{eq:proof_lower_bound_group_privacy}
    \E_{(\rv{G}, \rv{H}) \sim \omega_{i1}} \Pr\Paren{\hat{x}(\rv{G}) \in Y_i} 
    \leq 
    \E_{(\rv{G}, \rv{H}) \sim \omega_{i1}} \exp(\eps\cdot \Ham(\rv{G}, \rv{H})) \cdot \Pr\Paren{\hat{x}(\rv{H}) \in Y_i} .
  \end{equation}
  The left side of \cref{eq:proof_lower_bound_group_privacy} is equal to
  \[
    \E_{(\rv{G}, \rv{H}) \sim \omega_{i1}} \Pr\Paren{\hat{x}(\rv{G}) \in Y_i} 
    = 
    \Pr_{\rv{G} \sim P_i} \Paren{\hat{x}(\rv{G}) \in Y_i} 
    \geq 1-\eta .
  \]
  Upper bounding the right side of \cref{eq:proof_lower_bound_group_privacy} by Cauchy-Schwartz inequality, we have
  \begin{align*}
    & \E_{(\rv{G}, \rv{H}) \sim \omega_{i1}} \exp(\eps\cdot \Ham(\rv{G}, \rv{H})) \cdot \Pr\Paren{\hat{x}(\rv{H}) \in Y_i} \\
    \leq & \Paren{\E_{(\rv{G}, \rv{H}) \sim \omega_{i1}} \exp(2\eps\cdot \Ham(\rv{G}, \rv{H}))}^{1/2} \cdot \Paren{\E_{(\rv{G}, \rv{H}) \sim \omega_{i1}} \Pr\Paren{\hat{x}(\rv{H}) \in Y_i}^2}^{1/2} \\
    = & \Paren{\E_{\rv{X} \sim \Bin(N_{i1},p)} \exp(2\eps\cdot\rv{X})}^{1/2} \cdot \Paren{\E_{\rv{H} \sim P_1} \Pr\Paren{\hat{x}(\rv{H}) \in Y_i}^2}^{1/2} .
  \end{align*}
  Using the formula for the moment generating function of binomial distributions, we have
  \[
    \E_{\rv{X} \sim \Bin(N_{i1},p)} \exp(2\eps\cdot\rv{X})
    = 
    (1-p+p\cdot e^{2\eps})^{N_{i1}} ,
  \]
  and it is easy to see
  \[
    \E_{\rv{H} \sim P_1} \Pr\Paren{\hat{x}(\rv{H}) \in Y_i}^2
    = \E_{\rv{H}\sim P_1} \Paren{\E\ind{\hat{x}(\rv{H}) \in Y_i}}^2 
    \leq \Pr_{\rv{H}\sim P_1} \Paren{\hat{x}(\rv{H}) \in Y_i} .
  \]  
  Putting things together, \cref{eq:proof_lower_bound_group_privacy} implies for each $i\in[m]$ that
  \begin{equation} \label{eq:proof_lower_bound_prob_lb}
    \Pr_{\rv{H}\sim P_1} \Paren{\hat{x}(\rv{H}) \in Y_i}
    \geq
    \frac{(1-\eta)^2}{(1-p+p\cdot e^{2\eps})^{N_{i1}}} .
  \end{equation}
  Since $x^i\in B_{\err}(x,4\zeta)$ for $i\in[m]$, by assuming $\zeta\leq1/16$, we have
  \begin{equation} \label{eq:pf_sbm_lb_Ni1}
    N_{i1} = \Ham(x^i, x^1) \cdot (n-\Ham(x^1, x^i)) \leq 8\zeta n (n-8\zeta n) .
  \end{equation}
  Recalling $p=2\gamma d/n$ and combining \cref{eq:proof_lower_bound_pack_num}, \cref{eq:proof_lower_bound_total_prob}, \cref{eq:proof_lower_bound_prob_lb} and \cref{eq:pf_sbm_lb_Ni1}, we have
  \begin{align*}
    (m-1) \cdot \frac{(1-\eta)^2}{(1-p+p\cdot e^{2\eps})^{8\zeta n (n-8\zeta n)}} \leq \eta .
  \end{align*}
  By taking logarithm on both sides, using $t\geq\log(1+t)$ for any $t>-1$, and assuming $\zeta\leq1/(8e)$, we have
  \begin{equation*}
    e^{2\eps} - 1 \gtrsim \frac{\log\frac{1}{8e\zeta}}{\gamma d} + \frac{\log\frac{1}{\eta}}{\zeta n \gamma d} .
  \end{equation*}
\end{proof}
\section{Private algorithms for learning  mixtures of spherical Gaussians}\label{section:learning-mixtures-gaussians}

In this section we present a private algorithm for recovering the centers of a mixtures of $k$ Gaussians (cf. \cref{model:gaussian-mixture-model}). Let $\cY \sse \Paren{\R^d}^{\otimes n}$ be the collection of sets of $n$ points in $\R^d$.
We consider the following notion of adjacency.

\begin{definition}[Adjacent datasets]\label[definition]{definition:clustering-adjacent-datasets}
	Two datasets $Y, Y'\in\cY$ are said to be adjacent if $\Card{Y\cap Y'}\geq n-1\,.$
\end{definition}

\begin{remark}[Problem parameters as public information]
	We consider the parameters $n,k, \Delta$ to be \textit{public information} given as input to the algorithm.
\end{remark}

Next we present the main theorem of the section.

\begin{theorem}[Privately learning spherical mixtures of Gaussians]\label{theorem:main-technical-clustering}
	Consider an instance of \cref{model:gaussian-mixture-model}. Let $t\in \N$ be such that $\Delta \geq O\Paren{\sqrt{t} k^{1/t}}$. For $n \geq \Omega\Paren{k^{O(1)}\cdot d^{O(t)} }\,, k\geq (\log n)^{1/5}\,,$ there exists an algorithm, running in time $(nd)^{O(t)}$,
	that outputs vectors $\hat{\bm \mu}_1,\ldots,\hat{\bm \mu}_\ell$ satisfying
	\begin{align*}
		\max_{\ell\in [k]}\Normt{\hat{\bm \mu}_\ell-\mu_{\pi(\ell)}}\leq O(k^{-12})\,,
	\end{align*}
	with high probability, for some permutation $\pi:[k]\rightarrow [k]\,.$\footnote{We remark that we chose constants to optimize readibility and not the smallest possible ones.}
	Moreover, for $\eps \geq k^{-10}\,, \delta\geq n^{-10}\,,$ the algorithm is   $(\eps, \delta)$-differentially private for any input $Y$.
\end{theorem}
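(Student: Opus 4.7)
The plan is to instantiate the general framework of \cref{section:techniques} with a sum-of-squares relaxation of cluster membership, combined with a rounding step that splits the privacy budget between a Gaussian mechanism and a differentially private histogram learner. Concretely, for an input $Y=\{y_1,\ldots,y_n\}$ the algorithm computes a pseudo-distribution of degree $O(t)$ satisfying the system $\cP(Y)$ from \cref{section:techniques}, but with the \emph{equality} constraint $\sum_i z_{i\ell}=n/k$ relaxed to the \emph{inequality} $\sum_i z_{i\ell}\le n/k$, and with objective $\normf{J-W}^2$ where $W_\ij=\sum_\ell z_{i\ell}z_{j\ell}$. Two features of this program are crucial: (a) $\cP(Y)$ is feasible for every $Y\in\cY$ (take $z\equiv 0$), so on adjacent $Y,Y'$ one can project any $W\in\cK(Y)$ into $\cK(Y)\cap\cK(Y')$ by zeroing out the row and column indexed by the differing sample; and (b) the objective is strongly convex in $W$ so that \cref{lemma:key-structure-lemma} applies. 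Let $\hat W$ and $\hat W'$ denote the minimizers on adjacent datasets $Y,Y'$.

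Using the zero-out projection, the point-wise change of the objective on adjacent inputs is at most $\alpha=O(n/k)$, so by \cref{lemma:key-structure-lemma} and $\kappa=\Theta(1)$ we obtain $\normf{\hat W-\hat W'}^2\le O(n/k)$. Since $W_\ij\in[0,1]$, this upgrades to an $\ell_1$-bound $\normo{\hat W-\hat W'}=o(\normo{\hat W})$, so all but an $o(1)$ fraction of rows $\hat W_i$, viewed as weight vectors on $Y$, change by only $o(\normo{\hat W_i})$ in $\ell_1$ between $Y$ and $Y'$. By \cref{theorem:subgaussianity-from-total-variation} the good rows induce explicitly subgaussian distributions, and by \cref{theorem:closeness-parameters-subgaussians} their induced means $\bm\mu_i$ satisfy $\normt{\bm\mu_i(Y)-\bm\mu_i(Y')}=o(1)$. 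For utility at typical $\mathbf Y$ from \cref{model:gaussian-mixture-model} with separation $\Delta\ge\Omega(\sqrt t k^{1/t})$, the canonical sum-of-squares analyses of~\cite{KothariSS18,Hopkins018} show every $W\in\cK(\mathbf Y)$ is $1/\poly(k)$-close in Frobenius norm to the ground-truth indicator matrix $W^*$, so most $\hat W_i$ are close in total variation to one of the $k$ true cluster uniform distributions and $\bm\mu_i$ lands within $O(k^{-12})$ of one of the true centers.

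For the rounding I would first draw a subsample $\bm\cS\subseteq[n]$ of size $n^{c}$ for a small constant $c>0$, compute $\bm\mu_i=\tfrac{k}{n}\sum_j\hat W_{\ij}y_j$ for $i\in\bm\cS$, and release $\{\bm\mu_i+\bm g_i\}_{i\in\bm\cS}$ with Gaussian noise $\bm g_i$ of variance calibrated to the good-row $\ell_2$-sensitivity, scaled down thanks to the subsampling factor of roughly $n^{1-c}$. Second, feed these noisy means into the $(\eps,\delta)$-DP high-dimensional histogram learner of \cref{lemma:private-histogram-learner-high-dimension} with bin width $\Theta(\Delta)$, identify the $k$ bins of highest count, and output their empirical averages as $\hat{\bm\mu}_1,\ldots,\hat{\bm\mu}_k$. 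Privacy is then argued by composition: the Gaussian release is $(\eps,\delta)$-DP for all rows except those in the bad set $\cB$; a standard concentration argument gives $\card{\bm\cS\cap\cB}=o(n^c)$ with high probability, so we pay only a negligible group-privacy factor to cover these rows. The histogram learner is $(\eps,\delta)$-DP on $\bm\cS$ by construction, and post-processing combined with \cref{lemma:differential-privacy-composition} concludes overall $(\eps,\delta)$-DP.

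The hardest part will be quantitatively aligning the three error budgets simultaneously: (i)~the strongly-convex SoS minimizer's utility error $O(k^{-12})$ must survive the Gaussian noise of magnitude proportional to $\sqrt{\log(1/\delta)}/(\eps\, n^{1-c})$, forcing $n\ge k^{O(1)}d^{O(t)}$; (ii)~the histogram bin width $\Theta(\Delta)$ must be large enough to dominate the noise yet small enough that distinct true centers fall into distinct bins, where the separation $\Omega(\sqrt t k^{1/t})$ and the condition $k\ge(\log n)^{1/5}$ enter; and (iii)~the residual leakage from bad rows through both mechanisms must be controlled via group privacy on $\bm\cS\cap\cB$, which requires showing that with overwhelming probability the number of bad rows is genuinely sublinear in $n^c$. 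A further technical point, handled by standard SoS machinery via \cref{fact:running-time-sos}, is to ensure that access to a pseudo-distribution (rather than a true distribution) satisfying $\cP(Y)$ suffices to run the above sensitivity and utility arguments, which amounts to replacing every expectation by a pseudo-expectation and verifying that the proofs in \cite{KothariSS18,Hopkins018} are degree-$O(t)$ sum-of-squares.
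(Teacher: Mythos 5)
Your high-level plan matches the paper's: privatize a degree-$O(t)$ SoS relaxation with the cluster-size equality relaxed to an inequality and an objective that pulls $W$ toward $J$, apply \cref{lemma:key-structure-lemma}, and round via subsampling, Gaussian noise, and a private histogram. However, there are several genuine gaps, of which I flag the most consequential.

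\textbf{Privacy must hold for all inputs, not only typical ones.}
Your argument that "good rows induce explicitly subgaussian distributions" invokes \cref{theorem:subgaussianity-from-total-variation}, which presupposes that the underlying dataset $Y$ is $2t$-explicitly bounded. That is a property of typical samples from the mixture, not of arbitrary $Y\in\cY$, so it cannot be used in a worst-case privacy proof. The paper instead proves \cref{lemma:clustering-covariance-bound}, which uses the moment constraints of $\cP(Y)$ themselves (an SOS certificate) to show that rows of $\phi(W)$ with large $\ell_1$-norm induce $2$-explicitly bounded distributions over $Y$ for \emph{every} input $Y$. Closely related: your sensitivity argument needs almost every row of $W$ to have $\ell_1$-norm close to $n/k$ (otherwise the normalized row means are ill-defined or unstable), but nothing guarantees this on an arbitrary input. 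The paper handles this with a truncated-Laplace private check on $\normo{\phi(W)}$ that rejects pathological inputs (\cref{step:check_norm} of \cref{algorithm:learning-mixtures-gaussians}); your proposal has no such gate, so its privacy claim does not go through as written.

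\textbf{The bin geometry is wrong.}
You propose bins of side $\Theta(\Delta)$. The output is the average of noisy row-means falling in a bin, so the accuracy is bounded by roughly (bin side length)$\cdot\sqrt{(\text{ambient dimension})}$. To reach $O(k^{-12})$ error you need bins of side $\poly(1/k)$ — the paper uses $b=k^{-15}$ — not $\Theta(\Delta)=\Omega(\sqrt{t}k^{1/t})$, which is at least a constant. Moreover, running the histogram learner directly in $\R^d$ makes this worse by a $\sqrt{d}$ factor with $d$ unbounded; the paper first applies a Johnson–Lindenstrauss projection (\cref{lemma:jl}) to $d^*=O(\log n)\le O(k^5)$ dimensions, and only then can $b=k^{-15}$ yield $\sqrt{d^*}\cdot b=O(k^{-12})$. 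Your sketch omits both the JL step and the right bin scale.

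\textbf{Smaller issues that still matter quantitatively.}
(i) The objective $\normf{J-W}^2$ corresponds to a linear term $-2\iprod{J,W}$, but the paper uses $\normf{W}^2 - (10)^{10}k^{300}\iprod{J,W}$; the large coefficient is what forces $\normo{W}\ge\frac{n^2}{k}\bigl(1-\poly(1/k)\bigr)$ (cf.\ \cref{lemma:clustering-probability-rejection}), which the subsequent Markov argument on rows critically needs. With coefficient $2$ you only get $\normo{W}\ge\frac{n^2}{2k}$, which is far too weak. (ii) You omit the soft thresholding $\phi$; it is what makes the covariance certification in \cref{lemma:clustering-covariance-bound} work. (iii) Outputting the raw bin averages is not private; the paper's \cref{step:output-average-over-ball} adds Gaussian noise to the per-bin averages, after establishing an $\ell_2$-sensitivity bound for the averaging operation that requires the reject condition of \cref{step:clustering-reject-small-cluster}.
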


We remark that our algorithm not only works for mixtures of Gaussians but for all mixtures of $2t$-explicitly bounded distributions (cf. \cref{definition:explicitly-bounded}).

Our algorithm is based on the sum-of-squares hierarchy and at the heart lies the following sum-of-squares program.
The indeterminates $z_{11},\ldots,z_{1k},\ldots,z_{nk}$ and vector-valued indeterminates $\mu'_1,\ldots,\mu'_k$,  will be central to the proof of \cref{theorem:main-technical-clustering}. Let $n,k, t$ be  fixed parameters.

\begin{align}\label{equation:clustering-program-part-1}
	\Set{
		\begin{aligned}
			&z_{i\ell}^2 = z_{i\ell}&\forall i \in [n]\,, \ell\in[k]\quad \text{(indicators)}\\
			&\sum_{\ell \in [k]} z_{i\ell} \leq 1&\forall i \in [n]\quad \text{(cluster mem.)}\\
			&z_{i\ell}\cdot z_{i\ell'} = 0&\forall i \in [n]\,, \ell\in[k]\quad \text{(uniq. mem.)}\\
			&\sum_i z_{i\ell }\leq n/k&\forall \ell\in[k]\quad \text{(size of clusters)}\\
			&\mu'_\ell = \frac{k}{n}\sum_i z_{i\ell}\cdot  y_i &\forall \ell \in[k]\quad\text{(means of clusters)}\\
			& \forall v \in \R^d \colon \frac k n \sum_{i=1}^n z_{i\ell} \iprod{y_i-\mu'_\ell, v}^{2s} + \Norm{Qv^{\otimes s}}^2 = (2s)^s \cdot \Norm{v}_2^{2s} & \forall s \leq t,  \ell \in [k]\quad\text{($t$ moment)}
		\end{aligned}
	}\tag{$\cP_{n,k, t}(Y)$}
\end{align}

We remark that the moment constraint encodes the $2t$-explicit $2$-boundedness constraint introduced in \cref{definition:explicitly-bounded}.
Note that in the form stated above there are infinitely many constraints, one for each vector $v$.
This is just for notational convenience.
This constraint postulates equality of two polynomials in $v$.
Formally, this can also be encoded by requiring there coefficients to agree and hence eliminating the variable $v$.
It is not hard to see that this can be done adding only polynomially many constraints.
Further, the matrix variable $Q$ represents the SOS proof of the $2t$-explicit $2$-boundedness constraint and we can hence deduce that for all $0 \leq s \leq t$ $$\cP \sststile{2s}{v} \Set{\frac k n \sum_{i=1}^n z_{i\ell} \iprod{y_i-\mu'_\ell, v}^{2s} \leq (2s)^s \norm{s}_2^{2s}} \,.$$

Before presenting the algorithm we will introduce some additional notation which will be convenient.
We assume $t,n,k$ to be \textit{fixed} throughout the section and drop the corresponding subscripts.
For $Y\in \cY$, let $\cZ(Y)$ be the set of degree-$10t$ pseudo-distributions satisfying $\cP(Y)$.
For each $\zeta\in \cZ(Y)$ define $W(\zeta)$ as the $n$-by-$n$ matrix satisfying $$W(\zeta)_\ij = \tilde{\E}_{\zeta}\Brac{\sum_{\ell \in [k]} z_{i\ell}\cdot z_{j\ell}}\,.$$
We let $\cW(Y):=\Set{W(\zeta)\suchthat \zeta\in \cZ(Y)}\,.$

Recall that $J$ denotes the all-ones matrix.
We define the function $g:\R^{n \times n}\rightarrow \R$ as %
\begin{align*}
	g(W)&=\Normf{W}^2-(10)^{10}k^{300} \iprod{J, W} %
\end{align*} 
and let $$W(\hat{\zeta}(Y)) \coloneqq \argmin_{W\in \cW(Y)}g(W)\,.$$

We also consider the following function
\begin{definition}[Soft thresholding function]\label[definition]{definition:soft-thresholding-function}
	We denote by $\phi:\brac{0,1}\rightarrow \brac{0,1}$ the function
	\begin{align*}
		\phi(x) = 
		\begin{cases}
			0&\text{ if }x\leq 0.8\,,\\
			1&\text{ if }x\geq 0.9\,,\\
			\frac{x-0.8}{0.9-0.8}&\text{otherwise\,.}
		\end{cases}
	\end{align*}
\end{definition}
Notice that $\phi(\cdot)$ is $\frac{1}{0.9-0.8} = 10$ Lipschitz.
Next we introduce our algorithm. Notice the algorithm  relies on certain private subroutines. We describe them later in the section to improve the presentation.

\begin{algorithmbox}[Private algorithm for learning mixtures of Gaussians]\label{algorithm:learning-mixtures-gaussians}
	\mbox{}\\
	\textbf{Input:} Set of $n$ points $Y\subseteq \R^d\,, \eps\,, \delta>0\,, k,t\in \N\,,$  $d^*= 100\log n\,, b = k^{-15}\,.$
	\noindent
	\begin{enumerate}
		\item\label[step]{step:compute_matrix} Compute $W=W(\hat\zeta(Y))$. 
		\item Pick $\bm \tau\sim \text{tLap}\Paren{-n^{1.6}\Paren{1+\frac{\log(1/\delta)}{\eps}}\,, \frac{n^{1.6}}{\eps}}$.  
		\item\label[step]{step:check_norm} If  $\abs{\bm \tau}\geq n^{1.7}$ or $\Normo{\phi(W)}\leq \frac{n^2}{k}\cdot \Paren{1-\frac{1}{n^{0.1}}-\frac{1}{k^{100}}}+\bm \tau$ reject. 
		\item For all $i\in [n]\,,$ compute the $n$-dimensional vector 
		\begin{align*}
			\nu^{(i)} = 
			\begin{cases}
				\mathbf{0} &\text{ if }\Normo{\phi(W_i)}=0\\
				\Normo{\phi(W_i)}^{-1}\sum_j \phi(W_\ij) \cdot y_j &\text{ otherwise.}
			\end{cases}
		\end{align*} 
		\item Pick a set $\bm \cS$ of $n^{0.01}$ indices $i\in [n]$ uniformly at random.
		\item\label[step]{step:clustering-gaussian-noise} For each $i\in \bm \cS$ let $\bar{\bm \nu}^{(i)}= \nu^{(i)}+\mathbf{w}$ where $\mathbf{w}\sim N\Paren{0, n^{-0.18}\cdot \frac{\log(2/\delta)}{\eps^2} \cdot \Id}\,.$
		\item\label[step]{step:clustering-histograms} Pick $\bm \Phi\sim N\Paren{0, \frac{1}{d^*}}^{d^*\times d}\,, \mathbf{q}\overset{u.a.r.}{\sim}\brac{0,b}$ and run the histogram learner of \cref{lemma:private-histogram-learner-high-dimension} with input $\bm \Phi\bar{ \bm \nu}^{(1)},\ldots, \bm \Phi\bar{ \bm \nu}^{(n^{0.01})}$ and parameters $$\mathbf{q}, b, \alpha=k^{-10}, \beta = n^{-10}, \delta^* = \frac{\delta}{n}, \epsilon^* = \e \cdot \frac{ 10k^{50}}{n^{0.01}}\,.$$ Let $\mathbf{B}_1,\ldots, \mathbf{B}_k$ be the resulting $d^*$-dimensional bins with highest counts.  
		Break ties randomly.
		\item\label[step]{step:clustering-reject-small-cluster}Reject if $\min_{i\in [k]} \Card{\Set{j \suchthat \bm \Phi\bar{\bm \nu}^{(j)} \in \mathbf{B}_i}}  < \frac{n^{0.01}}{2k}$.
		
		\item\label[step]{step:output-average-over-ball} For each $l \in [k]$ output $$\hat{\bm \mu}_l \coloneqq \frac{1}{\Card{\Set{j \suchthat \bm \Phi\bar{\bm \nu}^{(j)} \in \mathbf{B}_i}}}\cdot \Paren{\sum_{\bm \Phi\bar{\bm \nu}^{(j)}\in \mathbf{B}_l} \bar{\bm \nu}^{(j)}} + \mathbf{w'}\,,$$
		where $\mathbf{w'} \sim N\Paren{0, N\Paren{0, 32 \cdot k^{-120} \cdot \frac{\log(2kn/\delta)}{\e^2} \cdot \Id}} $.

	\end{enumerate}
\end{algorithmbox}

For convenience, we introduce some preliminary facts. 

\begin{definition}[Good $Y$]\label[definition]{definition:clustering-good-input}
	Let $\mathbf{Y}$ be sampled according to \cref{model:gaussian-mixture-model}.
	We say that $\mathbf{Y}$ is \textit{good} if:
	\begin{enumerate}
	\item for each $\ell\in [k]$, there are at least $\frac{n}{k}-n^{0.6}$ and  most $\frac{n}{k}+n^{0.6}$ points sampled from $D_\ell$ in $\mathbf{Y}$. Let $\mathbf{Y}_\ell\subseteq \mathbf{Y}$ be such set of points. 
		\item Each $\mathbf{Y}_\ell$ is $2t$-explicitly $2$-bounded. 
	\end{enumerate}
\end{definition}

It turns out that typical instances $\mathbf{Y}$ are indeed good.

\begin{lemma}[\cite{Hopkins018, KothariSS18}]\label{lemma:clustering-probability-input-good}
	Consider the settings of \cref{theorem:main-technical-clustering}. Then $\mathbf{Y}$ is good with high probability. Further, in this case the sets $\cZ(Y)$ and $\cW(Y)$ are non-empty.
\end{lemma}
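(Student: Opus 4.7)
The plan is to establish the two components of goodness by standard concentration arguments, and then construct an explicit feasible pseudo-distribution from the ground-truth clustering to conclude non-emptiness of $\cZ(Y)$ and $\cW(Y)$.

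First, I would verify the cluster-size balance condition. For $\mathbf{Y}$ drawn from the uniform mixture of $D_1,\ldots,D_k$, the number of samples falling in component $\ell$ is distributed as $\mathrm{Binomial}(n,1/k)$. Since $n^{0.6} \gg \sqrt{n/k}$ for the parameter regime $n \geq k^{O(1)}$, a standard Chernoff bound (Fact used elsewhere in the paper) gives a deviation of at most $n^{0.6}$ with probability $1-\exp(-n^{\Omega(1)})$ for each $\ell$; a union bound over $\ell \in [k]$ concludes this step.

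Second, for the $2t$-explicit $2$-boundedness of each $\mathbf{Y}_\ell$, I would directly invoke the known results from~\cite{KothariSS18,Hopkins018}: for $n/k \geq d^{O(t)}$ samples from a spherical Gaussian (or more generally any $2t$-explicitly bounded distribution), the empirical distribution on the samples remains $2t$-explicitly $2$-bounded with high probability. The sample complexity $n \geq k^{O(1)} d^{O(t)}$ in \cref{theorem:main-technical-clustering} is chosen exactly so that $n/k \geq d^{O(t)}$, which suffices. A union bound over $\ell \in [k]$ handles all $k$ clusters simultaneously.

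Finally, for non-emptiness of $\cZ(Y)$ (and hence $\cW(Y)$), I would construct a feasible integral solution directly from the true clustering. For each $\ell \in [k]$, select an arbitrary subset $Y_\ell^* \subseteq Y_\ell$ of size exactly $\lfloor n/k\rfloor$ and set $z_{i\ell}^* = 1$ iff $y_i \in Y_\ell^*$. The indicator, uniqueness and cluster-membership constraints hold trivially; the cluster-size constraint $\sum_i z_{i\ell} \leq n/k$ holds by construction. For the $2t$-moment constraint, note that $Y_\ell^*$ is obtained by removing at most $n^{0.6}$ samples from $Y_\ell$, and removing a vanishing fraction of points from a $2t$-explicitly $2$-bounded set preserves $2t$-explicit $O(1)$-boundedness (again by~\cite{KothariSS18,Hopkins018}); choosing the $2$-boundedness with slight slack in constants, or equivalently observing that the moment bound can be loosened by a $(1+o(1))$ factor and then reabsorbed, gives the required SoS identity, with $Q$ being the matrix certifying the corresponding SoS proof. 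The Dirac pseudo-distribution supported on $(z^*,\mu'_\ell = \tfrac{k}{n}\sum_i z^*_{i\ell}y_i, Q)$ is then a degree-$10t$ pseudo-distribution in $\cZ(Y)$, and the associated $W$ lies in $\cW(Y)$.

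The main subtle step is the last one: ensuring that the SoS moment constraint still admits a valid certificate $Q$ after truncating each $Y_\ell$ down to exactly $\lfloor n/k \rfloor$ points. This is not quite immediate from $2t$-explicit $2$-boundedness of $Y_\ell$ alone, but follows from the quantitative stability of the SoS moment proof under deletion of an $o(1)$-fraction of samples, a fact implicit in the clustering analyses of~\cite{KothariSS18,Hopkins018}.
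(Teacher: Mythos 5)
The paper offers no proof of this lemma --- it is attributed directly to \cite{Hopkins018, KothariSS18} --- so your sketch is essentially an independent reconstruction. Your first two steps (Chernoff for cluster sizes, quoting the cited works for $2t$-explicit boundedness of each empirical cluster) are exactly right and uncontroversial.

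On the non-emptiness claim, however, you are working considerably harder than necessary, and in doing so you run into a genuine gap you yourself flag. First, the easier route: because the paper deliberately replaced the equality constraint $\sum_i z_{i\ell} = n/k$ by the inequality $\sum_i z_{i\ell} \leq n/k$, the all-zero assignment $z\equiv 0$, $\mu'\equiv 0$ is feasible for \emph{every} input $Y$, good or not. The moment constraint at $z=0$ reduces to exhibiting $Q$ with $\|Qv^{\otimes s}\|^2 = (2s)^s\|v\|^{2s}$, which holds because $\|v\|^{2s} = \sum_{j_1,\ldots,j_s}(v_{j_1}\cdots v_{j_s})^2$ is manifestly a sum of squares of linear forms in $v^{\otimes s}$. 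The paper calls this out explicitly in the techniques section ("the program is satisfiable for any input $Y$ since we can set $z_{il}=0$"), so the literal non-emptiness of $\cZ(Y)$ and $\cW(Y)$ needs no concentration argument at all.

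The non-trivial construction you attempt --- a Dirac pseudo-distribution built from a truncated ground-truth clustering --- is really what is needed for \cref{lemma:clustering-probability-rejection} (where one needs $\mathbf{W}^*\in\cW(\mathbf Y)$ with $\Normo{\mathbf{W}^*}$ close to $n^2/k$), not for the bare non-emptiness here. And there your acknowledged ``main subtle step'' is a genuine gap: Theorem \ref{theorem:subgaussianity-from-total-variation} only gives $(2+O(\beta^{1-1/2t}))$-boundedness after deleting a $\beta$-fraction, whereas the moment constraint in $\cP(Y)$ has the constant $(2s)^s$ hard-coded. Goodness (Definition \ref{definition:clustering-good-input}) asserts $2$-boundedness exactly, so without margin you cannot conclude that the truncated cluster still satisfies the constraint. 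The fix, which you gesture at but do not make precise, is to either (a) strengthen the goodness definition to $\sigma$-boundedness for some $\sigma<2$ (which holds with the stated sample complexity, since the underlying Gaussian is $1$-bounded), or (b) exploit the fact that the $\tfrac k n$ normalizer in the moment constraint gives a multiplicative $(1-o(1))$ discount when $\sum_i z_{i\ell}$ is taken slightly below $n/k$, and check that this discount dominates the additive $O(\beta^{1-1/2t})$ loss. Either makes the argument rigorous, but as written your construction does not close this.
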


\subsection{Privacy analysis}\label{section:clustering-privacy-analysis}
In this section we show that our clustering algorithm is private.

\begin{lemma}[Differential privacy of the algorithm]\label{lemma:clustering-privacy}
	Consider the settings of \cref{theorem:main-technical-clustering}.
	Then \cref{algorithm:learning-mixtures-gaussians} is $(\eps, \delta)$-differentially private.
\end{lemma}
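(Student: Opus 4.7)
The plan is to invoke \cref{lemma:differential-privacy-composition} on the four randomized mechanisms used by \cref{algorithm:learning-mixtures-gaussians}: the noisy norm check in \cref{step:check_norm} (truncated Laplace), the subsampled Gaussian release of $\{\bar{\bm \nu}^{(i)}\}_{i \in \bm \cS}$ in \cref{step:clustering-gaussian-noise}, the high-dimensional histogram learner in \cref{step:clustering-histograms}, and the Gaussian-noised cluster averages in \cref{step:output-average-over-ball}. Each stage will be shown to be $(\eps/O(1), \delta/O(1))$-DP under the condition that the previous stages did not reject, and summing the budgets then yields the claim. The cornerstone of the analysis is a sensitivity bound on the matrix $W = W(\hat\zeta(Y))$ produced by \cref{step:compute_matrix}, obtained via \cref{lemma:key-structure-lemma} applied to $f_Y(W) = g(W) = \Normf{W}^2 - (10)^{10} k^{300}\iprod{J,W}$. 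This function is $2$-strongly convex and does not depend on $Y$, so condition (ii) of the lemma is trivial. For condition (i), given adjacent $Y, Y'$ differing on sample $i^\star$, any $W \in \cW(Y)$ is mapped into $\cW(Y) \cap \cW(Y')$ by zeroing out its $i^\star$-th row and column (equivalently, setting $z_{i^\star,\ell}=0$ for all $\ell$ in the underlying pseudo-distribution); crucially, the relaxed constraints $\sum_i z_{i\ell}\leq n/k$ remain feasible, which was the whole point of relaxing the equality to an inequality. Since all entries of $W$ lie in $[0,1]$, this projection modifies at most $2n$ entries and changes $g$ by at most $\alpha = O(n\cdot k^{300})$. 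Plugging into \cref{lemma:key-structure-lemma} gives $\Normf{\hat W - \hat W'}^2 \leq O(n\cdot k^{300})$, hence $\Normo{\hat W - \hat W'} \leq O(n^{1.5}\cdot k^{150})$ by Cauchy--Schwarz.

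Two of the four stages now follow directly. Because $\phi$ is $10$-Lipschitz, $\Normo{\phi(\hat W)}$ inherits $\ell_1$-sensitivity $O(n^{1.5}\cdot k^{150}) \ll n^{1.6}$ (using $k \leq n^{o(1)}$ from the hypotheses of \cref{theorem:main-technical-clustering}), so \cref{lemma:truncated-laplace-mechanism} makes \cref{step:check_norm} $(\eps,\delta)$-DP; the contribution of the event $|\bm\tau|\geq n^{1.7}$ is absorbed into $\delta$ via \cref{lemma:tail-bound-truncated-laplace}. For \cref{step:output-average-over-ball}, the bin-size check of \cref{step:clustering-reject-small-cluster} guarantees that at least $n^{0.01}/(2k)$ vectors are averaged per cluster, while the bin diameter coming from \cref{step:clustering-histograms} bounds the $\ell_2$-diameter of the contributing $\bar{\bm \nu}^{(j)}$ by $O(k^{-60})$; removing one sample therefore shifts each $\hat{\bm \mu}_l$ by $\ell_2$-norm at most $O(k^{-59}/n^{0.01})$, and \cref{lemma:gaussian-mechanism} with the prescribed noise variance is calibrated accordingly.

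The substantive obstacle is the joint analysis of \cref{step:clustering-gaussian-noise} and \cref{step:clustering-histograms}: although $\hat W$ and $\hat W'$ are globally close, an individual row $\phi(\hat W)_i$ may change by as much as $\Omega(n/k)$, so a Gaussian mechanism alone cannot privatize every $\nu^{(i)}$. The plan is to decompose $[n]$ into a set of \emph{stable} rows where $\Normo{\phi(\hat W)_i - \phi(\hat W')_i} = o(\Normo{\phi(\hat W)_i})$ (and hence $\Normt{\nu^{(i)} - (\nu')^{(i)}} = o(1)$), and an \emph{unstable} set $\cB$ with $|\cB| \leq n/k^{100}$; the size bound will follow by combining the global $\ell_1$-sensitivity of $\phi(\hat W)$ with the lower bound $\Normo{\phi(\hat W)_i} = \Omega(n/k)$ enforced on most rows by the threshold of \cref{step:check_norm}. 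On stable rows, subsampling amplification applied to the Gaussian mechanism on the random set $\bm \cS$ of size $n^{0.01}$, with noise variance $n^{-0.18}\log(2/\delta)/\eps^2$, yields $(\eps,\delta)$-DP. On unstable rows a direct Gaussian release cannot work, but a Chernoff bound gives $|\cB \cap \bm \cS| \leq n^{0.01}/k^{50}$ with probability $1-O(\delta/n)$; the histogram learner of \cref{lemma:private-histogram-learner-high-dimension}, invoked with $\eps^\star = \eps \cdot 10 k^{50}/n^{0.01}$, then absorbs these differing inputs via group privacy and remains $(\eps,\delta)$-DP. Propagating the privacy of the released bins to the final $\hat{\bm \mu}_l$ via \cref{lemma:differential-privacy-post-processing} and composing the four stages will conclude the proof.
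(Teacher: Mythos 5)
Your high-level plan and your use of \cref{lemma:key-structure-lemma} for the $\ell_1$-sensitivity of $\phi(W)$ match the paper, but there are two genuine gaps in how you privatize the $\nu^{(i)}$'s. First, ``subsampling amplification'' does not apply here: the altered input sample $y_{i^\star}$ participates in the computation of \emph{every} $\nu^{(i)}$ through the shared SDP solution $\hat W$, and the random index set $\bm\cS$ selects which rows to \emph{output} rather than which samples to \emph{use}; moreover $\bm\cS$ is held fixed across the two adjacent inputs in the coupling, so there is no randomness over whether the changed point is ``in the subsample.'' The Gaussian noise with per-entry variance $n^{-0.18}\log(2/\delta)/\eps^2$ suffices for a different reason: releasing only $|\bm\cS|=n^{0.01}$ stable rows, each with $\ell_2$-sensitivity $O(n^{-0.1})$, gives joint $\ell_2$-sensitivity $\sqrt{n^{0.01}}\cdot O(n^{-0.1}) = O(n^{-0.095})$, which is what the noise is calibrated against.

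Second, and more substantively, you cannot compose the Gaussian release in \cref{step:clustering-gaussian-noise} with \cref{step:clustering-histograms,step:output-average-over-ball} by a direct appeal to \cref{lemma:differential-privacy-composition}: even after the Gaussian noise is added, the inputs to the later steps still differ arbitrarily on the $m \le 10n^{0.01}/k^{50}$ unstable rows of $\bm\cS$, so those steps must be analyzed under an $m$-row adjacency, not the single-row adjacency supplied by their native privacy guarantees. The paper handles exactly this via \cref{lemma:privatizing-input}: the Gaussian mechanism ``transports'' the stable rows of $V$ to those of $V'$ at cost $(\eps_1,\delta_1)$, and group privacy of the remaining pipeline then pays $(m\eps_2,\,e^{\eps_1} m e^{(m-1)\eps_2}\delta_2)$; the histogram budget $\eps^\star = \eps\cdot 10k^{50}/n^{0.01}$ is chosen precisely so that $m\eps^\star = O(\eps)$. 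Your sketch attributes this absorption to the histogram learner alone, but \cref{step:output-average-over-ball} forms the estimates from the $\bar{\bm\nu}^{(j)}$'s themselves, not merely from the histogram's output, so it is \emph{not} post-processing and must be covered by the same $m$-row group-privacy argument; without a lemma of the flavor of \cref{lemma:privatizing-input} the composition does not go through. (As a minor slip, the bin $\ell_2$-diameter is $\sqrt{d^*}\,b \le O(k^{-12.5})$, not $O(k^{-60})$; the resulting sensitivity of \cref{step:output-average-over-ball} is $O(k^{-11}/n^{0.01})$, which the variance $32k^{-120}\cdot\log(2kn/\delta)/\eps^2$ dominates only because $n\ge k^{O(1)}$ for a large enough exponent.)
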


We split our analysis in multiple steps and combine them at the end.
On a high level, we will argue that on adjacent inputs $Y,Y'$ many of the vectors $\nu^{(i)}$ by the algorithm are close to each other and a small part can be very far.
We can then show that we can mask this small difference using the Gaussian mechanism and afterwards treat this subset of the vectors as privatized (cf. \cref{lemma:privatizing-input}).
Then we can combine this with known histogram learners to deal with the small set of $\nu^{(i)}$'s that is far from each other on adjacent inputs.

\subsubsection{Sensitivity of the  matrix W}\label{section:sensitivity-matrix-W}

Here we use \cref{lemma:key-structure-lemma} to reason about the sensitivity of $\phi(W(\hat{\zeta}(Y)))$.
For adjacent datasets $Y, Y'\in \cY$ we let $\hat{\zeta}\,, \hat{\zeta}'$ be the pseudo-distribution corresponding to $W(\hat{\zeta}(Y))$ and $W(\hat{\zeta}(Y'))$ computed in \cref{step:compute_matrix} of the algorithm, respectively.
We prove the following result.

\begin{lemma}[$\ell_1$-sensitivity of $\phi(W)$]\label{lemma:l1-sensitivity-phi-w}
	Consider the settings of \cref{theorem:main-technical-clustering}. Let $W, W'$ be respectively be the matrices computed in step 1 by \cref{algorithm:learning-mixtures-gaussians} on adjacent inputs $Y, Y'\in\cY$. Then
	\begin{align*}
		\Normo{\phi(W)-\phi(W')}\leq n^{1.6}\,.
	\end{align*}
	For all but  $n^{0.8}$ rows $i$ of $\phi(W), \phi(W')$, it holds
	\begin{align*}
		\Normo{\phi(W)_i- \phi(W')_i}\leq n^{0.8}\,.
	\end{align*}
	\begin{proof}
		The second inequality is an immediate consequence of the first via Markov's inequality. Thus it suffices to prove the first.
		Since $\phi(\cdot)$ is 10-Lipschitz, we immediately obtain the result if
		\begin{align*}
			\Normo{W(\hat{\zeta}(Y))- W(\hat{\zeta}(Y'))}\leq n^{1.55}\,.
		\end{align*}
		Thus we focus on this inequality.
		To prove it, we verify the two conditions of \cref{lemma:key-structure-lemma}.
		First notice that $g$ is $2$-strongly convex with respect to its input $W$.
		Indeed for $W,W'\in \cW(Y)$, since $\forall i,j\in [n]\,, W_\ij \geq 0$ it holds that
		\begin{align*}
			\Normf{W'}^{2} &= \Normf{W}^{2} + \Normf{W- W'}^{2} + 2\iprod{W'-W, W}\\
			&=\Normf{W}^{2} + \Normf{W- W'}^{2} + 2\iprod{W'-W, W} + \iprod{W'-W, (10)^{10} k^{300}(J-J)}\\
			&= g(W) + \Normf{W- W'}^{2} + \iprod{W'-W, \nabla g(W)} + \iprod{W', (10)^{10} k^{300}J}\,,
		\end{align*}
		where we used that $\nabla g(W) = 2 W - (10)^{10} k^{300} J$.
		Thus it remain to prove \textit{(i)} of \cref{lemma:key-structure-lemma}.
		
		Let $\hat{\zeta}\in \cZ(Y)\,, \hat{\zeta}'\in \cZ(Y')$ be the pseudo-distributions such that $W_Y(\hat{\zeta})=W$  and $W_Y(\hat{\zeta}')=W'$.
		We claim that there always exists $\zeta_{\text{adj}} \in \cZ\Paren{Y} \cap \cZ \Paren{Y'}$ such that 
		\begin{enumerate}
			\item $\abs{g(W(\zeta))-g(W(\zeta_{\text{adj}})}\leq \frac {2n} k \cdot \Paren{(10)^{10} k^{300} + 1} \leq 3 \cdot (10)^{10} k^{300} n\,,$
			\item $\abs{g_{Y'}(W(\zeta_{\text{adj}}))-g(W(\zeta_{\text{adj}})} = 0\,.$
		\end{enumerate}
		Note that in this case the second point is always true since $g$ doesn't depend on $Y$.
		Together with  \cref{lemma:key-structure-lemma} these two inequalities will imply that 
		\begin{align*}
			\Normf{W(\hat{\zeta}(Y))- W(\hat{\zeta}(Y'))}^2\leq 18 \cdot (10)^{10} k^{300} n\,.
		\end{align*}
		By assumption on $n$, an application of Cauchy-Schwarz will give us the desired result.
		
		So, let $i$ be the index at which $Y,Y'$ differ.
		We construct $\zeta_{\text{adj}}$ as follows: for all polynomials $p$ of degree at most $10t$ we let
		\begin{align*}
			\tilde{\E}_{\zeta_{\text{adj}}} \Brac{p} =
			\begin{cases}
				\tilde{\E}_{\zeta} \Brac{p}&\text{ if $p$ does not contain variables $z_{i\ell}$ for any $\ell \in [k]$}\\
				0&\text{ otherwise.}
			\end{cases}
		\end{align*}
		By construction $\zeta_{\text{adj}}\in \cZ(Y)\cap \cZ(Y')$.
		Moreover, $W(\zeta), W(\zeta_{\text{adj}})$ differ in at most $2n/k$ entries.
		Since all entries of the two matrices are in $[0,1]$, the first inequality follows by definition of the objective function.
	\end{proof}
\end{lemma}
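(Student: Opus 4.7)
The plan is to invoke the strongly convex stability bound \cref{lemma:key-structure-lemma} applied to the objective $g$ and the feasible set $\cW(Y)$, and then translate the resulting Frobenius bound on $W - W'$ into the stated $\ell_1$ bound. The second inequality in the lemma will follow immediately from the first by Markov's inequality applied to the $n$ row norms: if $\normo{\phi(W) - \phi(W')} \leq n^{1.6}$, then at most $n^{0.8}$ rows can have row-$\ell_1$ distance exceeding $n^{0.8}$. So the heart of the proof is the total $\ell_1$ bound.

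Since $\phi$ is $10$-Lipschitz on $[0,1]$, entrywise $\abs{\phi(W_\ij) - \phi(W'_\ij)} \leq 10\,\abs{W_\ij - W'_\ij}$, and Cauchy--Schwarz on the $n^2$ entries yields $\normo{\phi(W) - \phi(W')} \leq 10 n \cdot \normf{W - W'}$. Consequently it suffices to show $\normf{W - W'}^2 \leq \poly(k) \cdot n$; the assumptions on $n$ and $k$ will then upgrade the resulting $\poly(k) \cdot n^{1.5}$ bound to something below $n^{1.6}$.

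Next I would verify the two hypotheses of \cref{lemma:key-structure-lemma} for $g$ and the sets $\cW(Y), \cW(Y')$. First, $g$ is $2$-strongly convex in $W$: expanding $\normf{W'}^2$ around $W$ shows that the Hessian of $g$ equals $2\,\Id$, since the linear term $-(10)^{10} k^{300} \iprod{J, W}$ contributes nothing. Second, because $g$ itself does not depend on the input, the point-wise identity $g_Y = g_{Y'}$ is automatic, so it remains only to exhibit, for each $W \in \cW(Y)$, a witness $Z \in \cW(Y) \cap \cW(Y')$ with $\abs{g(W) - g(Z)} \leq \poly(k) \cdot n$.

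I would construct this witness at the pseudo-distribution level. Let $i^{\star}$ be the unique index where $Y$ and $Y'$ differ, and let $\zeta \in \cZ(Y)$ realize $W$. Define $\zeta_{\text{adj}}$ by setting $\pE_{\zeta_{\text{adj}}}[p] = 0$ for every monomial $p$ that mentions some variable $z_{i^{\star}\ell}$, and leaving $\pE_{\zeta_{\text{adj}}}[p] = \pE_{\zeta}[p]$ otherwise. The main obstacle is verifying that $\zeta_{\text{adj}} \in \cZ(Y) \cap \cZ(Y')$: each constraint of the SoS program in \ref{equation:clustering-program-part-1} must still hold after this surgery. Crucially, the cluster-size constraint is the inequality $\sum_i z_{i\ell} \leq n/k$ (not an equality), and the moment constraint is a sum of non-negative sum-of-squares contributions indexed by $i$; both only relax when the $i^{\star}$-th terms are zeroed out, and the remaining constraints touch only individual $z_{i\ell}$'s which are either untouched or set to $0$. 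Granting this, $W(\zeta)$ and $W(\zeta_{\text{adj}})$ differ in at most $2n$ entries, each lying in $[0,1]$, so both $\normf{\cdot}^2$ and $\iprod{J, \cdot}$ change by $O(n)$, giving $\abs{g(W) - g(W(\zeta_{\text{adj}}))} \leq O(k^{300} n)$. Applying \cref{lemma:key-structure-lemma} with $\kappa = 2$ and this $\alpha$ yields $\normf{W - W'}^2 \leq O(k^{300} n)$, which combined with the reductions above finishes the proof.
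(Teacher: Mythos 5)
Your proof is correct and follows essentially the same route as the paper: reduce via Lipschitzness of $\phi$ and Cauchy--Schwarz to a Frobenius bound on $W-W'$, verify $2$-strong convexity of $g$ and invoke \cref{lemma:key-structure-lemma}, and construct the witness $\zeta_{\text{adj}}$ by zeroing out all pseudo-moments involving the index where $Y$ and $Y'$ differ (correctly observing that the cluster-size constraint being an inequality is what makes this feasible for both $\cZ(Y)$ and $\cZ(Y')$). The only cosmetic difference is the counting step: you bound the number of changed entries by $2n$, whereas the paper tightens the total change in $g$ to $2n/k$ by invoking the cluster-size constraint on row $i^\star$; both yield $\normf{W-W'}^2 \leq O(k^{300} n)$ and the same conclusion.
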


\subsubsection{Sensitivity of the resulting vectors}\label{section:sensitivity-resulting-vectors}

In this section we argue that if the algorithm does not reject  in \cref{step:check_norm} then the vectors $\nu^{(i)}$ are stable on adjacent inputs. Concretely our statement goes as follows:

\begin{lemma}[Stability of the $\nu^{(i)}$'s]\label{lemma:stability-nu-vectors}
	Consider the settings of \cref{theorem:main-technical-clustering}. Suppose \cref{algorithm:learning-mixtures-gaussians} does not reject in \cref{step:check_norm}, on adjacent inputs $Y\,, Y'\in \cY$. Then for all but $\tfrac{6n}{k^{50}}$ indices $i\in [n]$, it holds:
	\begin{align*}
		\Normt{\nu^{(i)}_Y-\nu^{(i)}_{Y'}}\leq  O\Paren{n^{-0.1}}\,.
	\end{align*} 
\end{lemma}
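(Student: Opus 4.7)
The proof will combine the $\ell_1$-stability bound from \cref{lemma:l1-sensitivity-phi-w}, the non-rejection condition in \cref{step:check_norm}, and the $2t$-explicit boundedness certificates built into $\cP(Y)$. First, I would verify that every row of $W$ has $\ell_1$-mass at most $n/k$: using the SoS constraints $z_{i\ell}^2 = z_{i\ell}$, $z_{i\ell}z_{i\ell'}=0$, $\sum_\ell z_{i\ell} \leq 1$ and $\sum_i z_{i\ell} \leq n/k$, one gets $\sum_j W(\zeta)_{ij} = \tilde{\E}_\zeta\bigl[\sum_\ell z_{i\ell} \sum_j z_{j\ell}\bigr] \leq n/k$, so $\|\phi(W_i)\|_1 \leq n/k$ for every $i$. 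Non-rejection in \cref{step:check_norm} forces $\|\phi(W)\|_1 \geq (n^2/k)(1 - n^{-0.1} - k^{-100}) - n^{1.7}$. Markov's inequality applied to the row deficits $n/k - \|\phi(W_i)\|_1$ then yields that all but at most $n/k^{50}$ rows satisfy $\|\phi(W_i)\|_1 \geq (1-1/k^{50})\,n/k$, and likewise for $W'$.

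Second, \cref{lemma:l1-sensitivity-phi-w} provides that at most $n^{0.8}$ rows have $\|\phi(W)_i - \phi(W')_i\|_1 > n^{0.8}$; since $n \geq k^{O(1)}$ with a sufficiently large exponent, $n^{0.8} \leq n/k^{50}$. Union-bounding over these two failure events and over the single differing index $i^*$, there are at most $3n/k^{50} + 1 \leq 6n/k^{50}$ "bad" rows, matching the claim.

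For each remaining "good" row $i$, I would set $p := \phi(W_i)/\|\phi(W_i)\|_1$ and $p' := \phi(W'_i)/\|\phi(W'_i)\|_1$ as weight vectors on the ground set $Y \cup Y'$ (extended by zero on missing entries). A short triangle inequality from the previous two bounds gives $\|p - p'\|_1 \leq O(k\,n^{-0.2})$. The $2t$-moment constraint of $\cP(Y)$ (certified in SoS by the slack matrix $Q$) shows that $p$ induces a $2t$-explicitly $O(1)$-bounded distribution on $Y$, and symmetrically $p'$ via $\cP(Y')$. Applying \cref{theorem:closeness-parameters-subgaussians} yields
\begin{equation*}
  \Normt{\mu_p - \mu_{p'}} \leq O\!\bigl((k\,n^{-0.2})^{1-1/(2t)}\sqrt{t}\bigr) \leq O(n^{-0.1}),
\end{equation*}
where the second inequality uses the relation $\Delta \geq O(\sqrt{t}\,k^{1/t})$ together with $n \geq k^{O(1)}$. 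The lone differing sample contributes at most $(\phi(W_{i,i^*})/T_i) \cdot \max(\|y_{i^*}\|,\|y'_{i^*}\|) \leq O(k/n)\cdot O(\sqrt{d\log n}) = o(n^{-0.1})$ on good $\mathbf{Y}$, and is absorbed.

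The main obstacle is the subgaussianity inheritance step just invoked: translating the SoS moment certificate, which is phrased in terms of the hidden indicator variables $z_{i\ell}$, into $2t$-explicit boundedness of the \emph{external} row-weighting $\phi(W_i)/\|\phi(W_i)\|_1$. The intended route is to first use the goodness of $\mathbf{Y}$ (each true cluster is $2t$-explicitly $2$-bounded) together with \cref{theorem:subgaussianity-from-total-variation} to argue that, for typical rows, $p$ is $\ell_1$-close to the uniform distribution on the true cluster of $y_i$, and therefore inherits explicit boundedness; the same argument applies to $p'$ after noting that the row is unchanged on the $n-1$ shared samples.
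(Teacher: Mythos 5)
Your high-level skeleton is right and matches the paper's: count good rows using the non-rejection check in \cref{step:check_norm}, use \cref{lemma:l1-sensitivity-phi-w} to bound the per-row $\ell_1$-change, normalize to weight vectors, and finish with \cref{theorem:closeness-parameters-subgaussians}. The arithmetic in your first two paragraphs (every row has $\ell_1$-mass at most $n/k$, Markov applied to the row deficits, union bound) is essentially the paper's argument.

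The gap is in the step you yourself flag as the "main obstacle": establishing that the normalized row-weighting $p=\phi(W_i)/\Normo{\phi(W_i)}$ induces an explicitly bounded distribution. Your proposed route routes this through the \emph{goodness} of $\mathbf{Y}$ (each true cluster is $2t$-explicitly $2$-bounded) plus \cref{theorem:subgaussianity-from-total-variation}, arguing that for typical rows $p$ is $\ell_1$-close to the uniform distribution on the true cluster of $y_i$. This cannot work here: \cref{lemma:stability-nu-vectors} feeds into the privacy proof (\cref{lemma:clustering-privacy}), and the privacy guarantee of \cref{theorem:main-technical-clustering} is asserted \emph{for any input $Y$}. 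The only hypothesis you are allowed is the non-rejection event, not that $\mathbf{Y}$ is a good sample from the model; and showing $p$ is $\ell_1$-close to the uniform distribution on a true cluster is itself a utility statement (it is essentially \cref{lemma:clustering-closeness-good-inputs}), not something available here. The paper avoids this entirely via \cref{lemma:clustering-covariance-bound}: a purely sum-of-squares argument which uses only the moment constraint in $\cP(Y)$ (via the constraint for $s=1$) and the identity-type SoS inequality $1+ab-a-b\geq 0$ for Boolean $a,b$ to show that whenever $\Normo{\phi(W_i)}\geq \tfrac n k(1-\tfrac{10}{k^{50}})$, the row $p$ is $2$-explicitly $40$-bounded --- for \emph{any} $Y$. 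That deterministic, input-independent covariance bound is the missing ingredient; without it (or a replacement with the same input-independence), the proof does not close.

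A second, smaller issue of the same flavor: your bound on the lone differing sample's contribution uses $\max(\Normt{y_{i^*}},\Normt{y'_{i^*}})\leq O(\sqrt{d\log n})$, which again is a property of good model samples and is not available for worst-case $Y$. Once \cref{lemma:clustering-covariance-bound} is in hand, the paper simply applies \cref{theorem:closeness-parameters-subgaussians} with $t=1$ to the two weight vectors (both certified $2$-explicitly $40$-bounded) and does not need this norm bound.
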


The proof of \cref{lemma:stability-nu-vectors} crucially relies on the next statement.

\begin{lemma}[Covariance bound]\label{lemma:clustering-covariance-bound}
	Consider the settings of \cref{theorem:main-technical-clustering}. Let $W$ be the matrix computed by \cref{algorithm:learning-mixtures-gaussians}  on input $Y\in \cY$.
	For $i\in [n]$, if $\Normo{\phi(W_i)}\geq \frac{n}{k}\cdot \Paren{1-\frac{10}{k^{50}}}$
	then $\nu^{(i)}$ induces a $2$-explicitly $40$-bounded distribution over $Y$.
	\begin{proof}
		First, by assumption notice that there must be at least $\frac{n}{k}\cdot \Paren{1-\frac{10}{k^{50}}}$ entries of $\phi(W_i)$ larger than $0.8$. We denote the set of $j\in[n]$ such that $W_\ij \geq 0.8$ by $\cG\,.$
		Let $\zeta\in \cZ(Y)$ be the degree $10t$ pseudo-distribution so that $W=W(\zeta(Y))$.
		Since $\zeta$ satisfies $\cP(Y)$, for $\ell \in [k]$ it follows from the moment bound constraint for $s=1$ that for all unit vectors $u$ it holds that
		\begin{align*}
			\cP \sststile{4}{} & \Set{0 \leq \frac k n \sum_{j=1}^n z_{j\ell} \iprod{\mathbf{y}_j - \mu_l', u}^2 \leq 2 } \,,
		\end{align*}
		Using the SOS triangle inequality (cf. \cref{fact:sos-triangle-inequality}) $\sststile{2}{a,b} (a+b)^2 \leq 2(a^2 + b^2)$ it now follows that
		\begin{align*}
			\mathbf{0}\sle \tilde{\E}_\zeta \Brac{\frac{k^2}{n^2} \sum_{j\,, j'\in [n]}z_{j\ell}z_{j'\ell}\cdot \tensorpower{\Paren{y_j-y_{j'}}}{2}} \sle 8\Id
		\end{align*}
		and thus 
		\begin{align*}
			\mathbf{0}\sle \tilde{\E}_\zeta \Brac{\frac{k^2}{n^2} \sum_{\ell\in [k]}\sum_{j\,, j'\in [n]}z_{i\ell}z_{j\ell}z_{j'\ell}\cdot \tensorpower{\Paren{y_j-y_{j'}}}{2}} \sle 8\Id\,.
		\end{align*}
		Furthermore using $\cP(Y) \sststile{2}{}\Set{z_{i\ell} z_{i\ell'}=0}$  for $\ell\neq \ell'$ we have
		\begin{align*}
			\tilde{\E}_\zeta \Brac{ \sum_{\ell\in [k]}\sum_{j\,, j'\in [n]}z_{i\ell}z_{j\ell}z_{j'\ell}} = \tilde{\E}_\zeta \Brac{\Paren{ \sum_{\ell\in [k]\,, j\in [n]}z_{i\ell}z_{j\ell}}\cdot \Paren{\sum_{\ell'\in[k]\,, j'\in [n]} z_{i\ell'}z_{j'\ell'}}}\,.
		\end{align*}
		Now, for fixed $j\,,j'\in [n]$, using $$\Set{a^2=a\,,b^2=b}\sststile{O(1)}{}\Set{1 + ab - a - b = 1-ab-(a-b)^2\geq 0}$$ with $a=\sum_{\ell\in [k]}z_{i\ell}z_{j\ell}$ and $b=\sum_{\ell'\in[k]} z_{i\ell'}z_{j'\ell'}$ we get
		\begin{align*}
			\tilde{\E}_\zeta \Brac{\Paren{ \sum_{\ell\in [k]}z_{i\ell}z_{j\ell}} \Paren{\sum_{\ell'\in[k]} z_{i\ell'}z_{j'\ell'}}} 
			&\geq \tilde{\E}_\zeta \Brac{\sum_{\ell\in [k]}z_{i\ell}z_{j\ell}+ \sum_{\ell'\in[k]} z_{i\ell'}z_{j'\ell'}} - 1\\
			&= W_{ij} + W_{ij'} -1\,.
		\end{align*}
		Now if $j,j'\in \cG$ we must have
		\begin{align*}
			\sum_{\ell \in [k]}\tilde{\E}_\zeta \Brac{z_{i\ell}z_{j\ell}z_{j'\ell}} = \tilde{\E}_\zeta \Brac{\Paren{ \sum_{\ell\in [k]}z_{i\ell}z_{j\ell}} \Paren{\sum_{\ell'\in[k]} z_{i\ell'}z_{j'\ell'}}} \geq 0.6\,.
		\end{align*}
		Since $\phi(W_\ij)\leq 1$ by definition and $ \Normo{\phi(W_i)}\geq \frac{n}{k}\cdot \Paren{1-\frac{10}{k^{50}}}$, we conclude
		\begin{align*}
			 &{\Normo{\phi(W_i)}}^{-2} \Brac{\sum_{j\,,j'\in [n]} \phi(W_\ij)\phi(W_{ij'}) \tensorpower{\Paren{y_j-y_{j'}}}{2}}\\
			 &\sle 5\cdot \frac{k^2}{n^2} \sum_{j\,, j'\in [n]\,, \ell \in [k]}\tilde{\E}_\zeta \Brac{z_{i\ell}z_{j\ell}z_{j'\ell}}\cdot \tensorpower{\Paren{y_j-y_{j'}}}{2}\\
			 &\sle 40\Id\,.
		\end{align*}
		as desired.
	\end{proof}
\end{lemma}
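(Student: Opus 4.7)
My plan is to reduce the claim to a pairwise-distance second-moment bound and then transfer a spectral bound from the pseudo-distribution variables $z_{j\ell}$ to the deterministic weights $\phi(W_{ij})$ using the constraints of $\cP(Y)$. By \cref{definition:weight-vector-inducing-distribution}, viewing $\nu^{(i)}$ as a weight vector on $Y$, it induces the distribution $\Pr[\bm y = y_j] = \phi(W_{ij})/\Normo{\phi(W_i)}$ with mean $\mu^{(i)} = \Normo{\phi(W_i)}^{-1}\sum_j \phi(W_{ij})\,y_j$. The standard identity $\E[(\bm y - \mu^{(i)})(\bm y - \mu^{(i)})^{\top}] = \tfrac{1}{2}\E[(\bm y - \bm y')(\bm y - \bm y')^{\top}]$ for i.i.d.\ copies $\bm y, \bm y'$ reduces the task to showing
\begin{align*}
\frac{1}{\Normo{\phi(W_i)}^2}\sum_{j, j' \in [n]} \phi(W_{ij})\,\phi(W_{ij'}) \cdot \tensorpower{(y_j - y_{j'})}{2} \sle 80\,\Id\mcom
\end{align*}
and since this is a quadratic form in the test direction $u$, a PSD certificate of this matrix inequality furnishes the required degree-$2$ SOS proof in $u$.

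I would next establish the analogous bound inside the SOS world. Instantiating the $t$-moment constraint of $\cP(Y)$ at $s=1$ gives $\cP(Y) \sststile{4}{v} \Set{\frac{k}{n}\sum_j z_{j\ell}\iprod{y_j - \mu'_\ell, v}^2 \leq 2\norm{v}^2}$ for each $\ell \in [k]$. Expanding $y_j - y_{j'} = (y_j - \mu'_\ell) - (y_{j'} - \mu'_\ell)$, applying the SOS triangle inequality $(a+b)^2 \leq 2(a^2+b^2)$, multiplying by $z_{j\ell}z_{j'\ell}$, summing over $j,j'$, and absorbing the second factor via the cluster-size constraint $\sum_{j'} z_{j'\ell} \leq n/k$, I obtain
\begin{align*}
\frac{k^2}{n^2}\,\tilde{\E}_\zeta\Brac{\sum_{j, j' \in [n]} z_{j\ell}\,z_{j'\ell} \cdot \tensorpower{(y_j - y_{j'})}{2}} \sle 8\,\Id\mper
\end{align*}
Multiplying the summand by $z_{i\ell} \in [0,1]$ only shrinks the left-hand side, and summing over $\ell$ preserves the spectral bound. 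The indicator and uniqueness constraints further give $(\sum_\ell z_{i\ell} z_{j\ell})^2 = \sum_\ell z_{i\ell} z_{j\ell}$, so $\sum_\ell z_{i\ell} z_{j\ell} z_{j'\ell}$ factorizes as the product $(\sum_\ell z_{i\ell}z_{j\ell})(\sum_{\ell'} z_{i\ell'}z_{j'\ell'})$.

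The final step is to replace $\sum_\ell \tilde{\E}_\zeta[z_{i\ell}z_{j\ell}z_{j'\ell}]$ by the scalar $\phi(W_{ij})\phi(W_{ij'})$. I would use the Boolean SOS inequality $ab \geq a+b-1$, which follows from $a^2=a$, $b^2=b$ via $(1-a)(1-b) = 1 - a - b + ab$ being a product of SOS polynomials, applied to $a = \sum_\ell z_{i\ell}z_{j\ell}$ and $b = \sum_{\ell'} z_{i\ell'}z_{j'\ell'}$. Taking pseudo-expectation gives $\sum_\ell \tilde{\E}_\zeta[z_{i\ell}z_{j\ell}z_{j'\ell}] \geq W_{ij} + W_{ij'} - 1$. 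Setting $\cG \defeq \Set{j : W_{ij} \geq 0.8}$, this lower bound is at least $0.6$ for $j,j' \in \cG$; outside $\cG$ the weight $\phi(W_{ij})$ vanishes by definition of the soft-threshold, so the effective support of the sum is $\cG \times \cG$. Hence $\phi(W_{ij})\phi(W_{ij'}) \leq 1 \leq \tfrac{1}{0.6}\sum_\ell \tilde{\E}_\zeta[z_{i\ell}z_{j\ell}z_{j'\ell}]$ on this support, and combining with the previous display yields the desired estimate after normalizing by $\Normo{\phi(W_i)}^2 \geq (n/k)^2(1 - 10/k^{50})^2$.

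The main obstacle I anticipate is book-keeping at the SOS-versus-scalar interface: the Boolean inequality $1 + ab - a - b \geq 0$ must be applied inside $\tilde{\E}_\zeta$ against SOS polynomials in $u$ of moderate degree, while the final covariance bound is a PSD comparison of scalar matrices obtained after pseudo-expectation. I also need to verify that inserting the $z_{i\ell}$ factor does not break the previously derived spectral bound (this uses $z_{i\ell}^2 = z_{i\ell}$ and monotonicity of pseudo-expectation against test directions $u$) and that the triangle-inequality step combines cleanly with the size constraint $\sum_{j'} z_{j'\ell} \leq n/k$ so that no extra factors of $k$ appear. The overall constant $40$ will have comfortable slack over the bound of roughly $\tfrac{8}{0.6} \cdot \tfrac{1}{2} \approx 7$ that the argument actually produces.
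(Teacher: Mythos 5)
Your proposal is correct and follows essentially the same path as the paper's proof: the $s=1$ moment constraint, the SOS triangle inequality yielding the $8\,\Id$ bound on the pairwise second moment, insertion of $z_{i\ell}$ and summation over $\ell$, the uniqueness-constraint factorization, the Boolean inequality $ab \geq a+b-1$ to lower-bound $\sum_\ell \tilde{\E}_\zeta[z_{i\ell}z_{j\ell}z_{j'\ell}]$ by $0.6$ on $\cG\times\cG$, and the observation that $\phi(W_{ij})$ is supported inside $\cG$. The only additions are book-keeping the paper leaves implicit (the $\tfrac12\E[(y-y')(y-y')^\top]$ identity and absorbing $\tfrac{k}{n}\sum_{j'}z_{j'\ell}\leq 1$), which are correct.
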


We can now prove \cref{lemma:stability-nu-vectors}.

\begin{proof}[Proof of \cref{lemma:stability-nu-vectors}]
	Let $W,W'$ be the matrices computed by \cref{algorithm:learning-mixtures-gaussians} in \cref{step:compute_matrix} on input $Y, Y'$, respectively. 
	Let $\cG\subseteq [n]$ be the set of indices $i$ such that
	\begin{align*}
		\Normo{\phi(W)_i-\phi(W')_i}\leq n^{0.8}\,.
	\end{align*}
	Notice that $\Card{\cG}\geq n-n^{0.8}$ by \cref{lemma:l1-sensitivity-phi-w}.
	Since on input $Y$ the algorithm did not reject in \cref{step:check_norm} we must have $$\Normo{\phi(W)} \geq \frac {n^2} k \cdot \Paren{1 - \frac{1}{n^{0.1}} - \frac{1}{k^{100}}} - n^{1.7} \geq \frac {n^2} k \cdot \Paren{1 - \frac{2}{k^{100}}} \,.$$
	Let $g_W$ be the number of indices $i \in \cG$ such that $\Normo{\phi(W)_i}\geq \frac{n}{k}\cdot \Paren{1-\frac{1}{k^{50}}}$.
	It holds that
	\begin{align*}
		\frac {n^2} k \cdot \Paren{1 - \frac{2}{k^{100}}} &\leq g_W \cdot \frac{n}{k} + \Paren{n - \card{\cG}} \cdot \frac{n}{k} + \Paren{\card{G} - g_w} \frac{n}{k}\cdot \Paren{1-\frac{1}{k^{50}}} \\
		&\leq g_W \cdot \frac{n}{k} \cdot \frac{1}{k^{50}} + \frac{n^{1.8}}{k} + \frac{n^2}{k} \cdot \Paren{1-\frac{1}{k^{50}}} \\
		&\leq g_W \cdot \frac{n}{k} \cdot \frac{1}{k^{50}} + \frac{n^2}{k} \cdot \Paren{1 + \frac{1}{k^{100}}-\frac{1}{k^{50}}} \,.
	\end{align*}
	Rearring now yields $$g_W \geq n \cdot \Paren{1 - \frac{3}{k^{50}}} \,.$$
	Similarly, let $g_{W'}$ be the number of indices $i \in \cG$ such that $\Normo{\phi(W')_i}\geq \frac{n}{k}\cdot \Paren{1-\frac{1}{k^{50}}}$.
	By an analogous argument it follows that $g_{W'} \geq n \cdot \Paren{1 - \frac{3}{k^{50}}}$.
	Thus, by the pigeonhole principle there are at least $g_W \geq n \cdot \Paren{1 - \frac{6}{k^{50}}}$ indices $i$ such that
	\begin{enumerate}
		\item $\Normo{\phi(W)_i}\geq \frac{n}{k}\Paren{1-\frac{1}{k^{50}}}\,,$
		\item $\Normo{\phi(W')_i}\geq \frac{n}{k}\Paren{1-\frac{1}{k^{50}}}\,,$
		\item $\Normo{\phi(W)_i-\phi(W')_i}\leq n^{0.8}\,.$
	\end{enumerate}
	Combining these with \cref{lemma:clustering-covariance-bound} we may also add
	\begin{enumerate}
		\item[4.] the distribution induced by
		$\Normo{\phi(W_i)}^{-1}\phi(W_i)$  is $2$-explicitly $40$-bounded,
		\item[5.] the distribution induced by
		$\Normo{\phi(W'_i)}^{-1}\phi(W'_i)$  is $2$-explicitly $40$-bounded.
	\end{enumerate}
	Using that for non-zero vectors $x,y$ it holds that $\Norm{\tfrac{x}{\norm{x}} - \tfrac{y}{\norm{y}}} \leq \tfrac{2}{\Norm{x}} \Norm{x -y}$ points 1 to 3 above imply that $$\Normo{\Normo{\phi(W_i)}^{-1}\phi(W_i) - \Normo{\phi(W'_i)}^{-1}\phi(W'_i)} \leq \frac{2n^{0.8}}{\tfrac{n}{k} \cdot \Paren{1 - \tfrac{1}{k^{50}}}} = O\Paren{n^{-0.2}} \,.$$
	Hence, applying \cref{theorem:closeness-parameters-subgaussians} with $t=1$ it follows that $$\Normt{\nu^{(i)}_Y-\nu^{(i)}_{Y'}}\leq  O\Paren{n^{-0.1}}\,.$$
\end{proof}

\subsubsection{From low sensitivity to privacy}\label{section:clustering-reduction-mean-estimation}

In this section we argue privacy of the whole algorithm, proving \cref{lemma:clustering-privacy}. 
Before doing that we observe that low-sensitivity is preserved with high probability under subsampling.

\begin{fact}[Stability of $\bm \cS$]\label{fact:stability-S}
	Consider the settings of \cref{theorem:main-technical-clustering}. Suppose \cref{algorithm:learning-mixtures-gaussians} does not reject in \cref{step:check_norm}, on adjacent inputs $Y\,, Y'\in \cY$. With probability at least $1-e^{-n^{\Omega(1)}}$ over the random choices of $\bm \cS$, for all but $\tfrac{10n^{0.01}}{k^{50}}$ indices $i\in \bm \cS$, it holds:
	\begin{align*}
		\Normt{\nu^{(i)}_Y-\nu^{(i)}_{Y'}}\leq  O\Paren{n^{-0.1}}\,.
	\end{align*} 
	\begin{proof}
		There are at most $\tfrac{6n}{k^{50}}$ such indices in $[n]$ by \cref{lemma:stability-nu-vectors}.
		By Chernoff's bound, cf. \cref{fact:Chernoff}, the claim follows.
	\end{proof}
\end{fact}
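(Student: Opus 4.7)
The plan is to combine \cref{lemma:stability-nu-vectors} with a Chernoff-type bound for sampling without replacement. First, I would invoke \cref{lemma:stability-nu-vectors}, which (under the hypothesis that \cref{algorithm:learning-mixtures-gaussians} does not reject in \cref{step:check_norm} on the adjacent inputs $Y,Y'$) guarantees a set $\cB \subseteq [n]$ of ``bad'' indices with $|\cB| \leq \tfrac{6n}{k^{50}}$, such that for every $i \notin \cB$ we have $\Normt{\nu^{(i)}_Y-\nu^{(i)}_{Y'}} \leq O(n^{-0.1})$. It then suffices to upper-bound the number of bad indices that land in the random subsample $\bm{\cS}$ of size $n^{0.01}$.

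Let $\mathbf{X} \coloneqq |\bm{\cS} \cap \cB|$. Since $\bm{\cS}$ is drawn uniformly without replacement, $\mathbf{X}$ is hypergeometric with mean $\mu \coloneqq \E[\mathbf{X}] = n^{0.01} \cdot |\cB|/n \leq \tfrac{6 n^{0.01}}{k^{50}}$. A standard multiplicative Chernoff bound for hypergeometric random variables (which follows, for instance, from the fact that sampling without replacement is no less concentrated than sampling with replacement) yields
\begin{equation*}
\Pr\Brac{\mathbf{X} > \tfrac{10 n^{0.01}}{k^{50}}} \;\leq\; \exp\Paren{-\Omega\paren{n^{0.01}/k^{50}}}.
\end{equation*}

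To convert this into the claimed $e^{-n^{\Omega(1)}}$ rate, I would invoke the hypothesis $k \geq (\log n)^{1/5}$ from \cref{theorem:main-technical-clustering}, which gives $k^{50} \leq (\log n)^{10}$, and hence $n^{0.01}/k^{50} \geq n^{0.01}/(\log n)^{10} = n^{\Omega(1)}$. There is no genuine obstacle: the only ingredient beyond \cref{lemma:stability-nu-vectors} is the concentration of a hypergeometric variable, and the regime $k \leq \poly(\log n)$ is precisely what makes $\mu$ polynomially small in $n$ (rather than, say, $o(1)$), so that the Chernoff tail is meaningful and strong enough.
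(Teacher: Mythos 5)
Your overall approach matches the paper's (terse) proof: fix the bad set $\cB$ of size at most $6n/k^{50}$ from \cref{lemma:stability-nu-vectors}, observe that $|\bm\cS\cap\cB|$ is hypergeometric with mean $\mu \leq 6n^{0.01}/k^{50}$, and apply a Chernoff-type tail bound (\cref{fact:Chernoff}, valid for sampling without replacement by stochastic domination). Up to and including the bound $\Pr\bigl[|\bm\cS\cap\cB| > 10n^{0.01}/k^{50}\bigr] \leq \exp\bigl(-\Omega(n^{0.01}/k^{50})\bigr)$, the argument is correct, and your care about without-replacement sampling is a reasonable point to spell out.

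The final step, however, contains a genuine error. You write that the hypothesis $k \geq (\log n)^{1/5}$ ``gives $k^{50} \leq (\log n)^{10}$.'' This reverses the inequality: a lower bound on $k$ yields $k^{50} \geq (\log n)^{10}$, which is the wrong direction for what you need. Likewise, your closing remark that ``the regime $k \leq \poly(\log n)$'' is what makes the bound work is not a hypothesis of \cref{theorem:main-technical-clustering}; no such upper bound in terms of $\log n$ is assumed. What you actually need is an upper bound on $k$ of the form $k \leq n^{c}$ for some small constant $c > 0$, so that $n^{0.01}/k^{50} \geq n^{\Omega(1)}$. This does hold, but it comes from the sample-complexity hypothesis $n \geq \Omega\bigl(k^{O(1)} \cdot d^{O(t)}\bigr)$ — provided the implicit exponent is large enough (bigger than $5000$) — not from $k \geq (\log n)^{1/5}$. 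So the conclusion is salvageable, but the hypothesis you cite, and the stated implication, are both incorrect and should be replaced by an appeal to $n \geq k^{O(1)}$.
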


Finally,  we prove our main privacy lemma.

\begin{proof}[Proof of \cref{lemma:clustering-privacy}]
	For simplicity, we will prove that the algorithm is $(5\epsilon, 5\delta)$-private.
	Let $Y, Y' \in \cY$ be adjacent inputs.
	By \cref{lemma:truncated-laplace-mechanism} and \cref{lemma:l1-sensitivity-phi-w} the test in \cref{step:check_norm} of  \cref{algorithm:learning-mixtures-gaussians} is $(\epsilon, \delta)$-private. 
	
	Thus suppose now the algorithm did not reject in \cref{step:check_norm} on inputs $Y, Y'$.
	By composition (cf. \cref{lemma:differential-privacy-composition}) it is enough to show that the rest of the algorithm is $(\epsilon, \delta)$-private with respect to $Y,Y'$ under this condition.
	Next, let $\nu^{(1)}_Y, \ldots, \nu^{(n)}_Y$ and $\nu^{(1)}_{Y'}, \ldots, \nu^{(n)}_{Y'}$ be the vectors computed in step 4 of the algorithm and $\cS$ be the random set of indices computed in step 5.\footnote{Note that since this does not depend on $Y$ or $Y'$, respectively, we can assume this to be the same in both cases. Formally, this can be shown, e.g., via a direct calculation or using \cref{lemma:differential-privacy-composition}.}
	By \cref{lemma:stability-nu-vectors} and \cref{fact:stability-S} with probability $1-e^{-n^{\Omega(1)}}$  over the random choices of $\bm \cS$ we get that for all but $\tfrac{10n^{0.01}}{k^{50}}$ indices $i\in \bm \cS$, it holds that
	\begin{align*}
		\Normt{\nu^{(i)}_Y-\nu^{(i)}_{Y'}}\leq  O\Paren{n^{-0.1}}\,.
	\end{align*}
	Denote this set of indices by $\cG$.
	Note, that we may incorporate the failure probability $e^{-n^{\Omega(1)}} \leq \min\Set{\e/2,\delta/2}$ into the final privacy parameters using \cref{fact:low_failure_probability}.

	Denote by $\mathbf{V},\mathbf{V}'$ the $\Card{\bm \cS}$-by-$d$ matrices respectively with rows $\nu^{(i_1)}_Y,\ldots,\nu^{(i_{\Card{\bm \cS}})}_Y$ and $\nu^{(i_1)}_{Y'},\ldots,\nu^{(i_{\Card{\bm \cS}})}_{Y'}$, where $i_1,\ldots, i_{\Card{\bm \cS}}$ are the indices in $\bm \cS\,.$
	Recall, that $\card{\cG}$ rows of $\mathbf{V}$ and $\mathbf{V}'$ differ by at most $O\Paren{n^{-0.1}}$ in $\ell_2$-norm.
	Thus, by the Gaussian mechanism used in \cref{step:clustering-gaussian-noise} (cf. \cref{lemma:gaussian-mechanism}) and \cref{lemma:privatizing-input} it is enough to show that \cref{step:clustering-histograms} to \cref{step:output-average-over-ball} of the algorithm are private with respect to pairs of inputs $V$ and $V'$ differing in at most 1 row.\footnote{Note that for the remainder of the analysis, these do \emph{not} correspond to $\mathbf{V}$ and $\mathbf{V}'$, since those differ in $m$ rows. \cref{lemma:privatizing-input} handles this difference.}
	In particular, suppose these steps are $(\e_1, \delta_1)$-private.
	Then, for $m = n^{0.01} - \card{\cG} \leq \tfrac{10n^{0.01}}{k^{50}}$, by \cref{lemma:privatizing-input} it follows that \cref{step:clustering-gaussian-noise} to \cref{step:output-average-over-ball} are $(\e', \delta')$-differentially private with
	\begin{align*}
		\e' &\coloneqq \e + m \e_1 \,,\\
        \delta' &\coloneqq e^{\e} m e^{(m-1)\e_1} \delta_1 + \delta \,.
	\end{align*}

	Consider \cref{step:clustering-histograms,step:clustering-reject-small-cluster}.
	Recall, that in \cref{step:clustering-histograms} we invoke the histogram learner with parameters $$b = k^{-15}, \mathbf{q} \overset{u.a.r.}{\sim} [0,b], \alpha=k^{-10}, \beta = n^{-10}, \delta^* = \frac{\delta}{n}, \epsilon^* = \e \cdot \frac{ 10k^{50}}{n^{0.01}}\,.$$
	Hence, by \cref{lemma:private-histogram-learner-high-dimension} this step is $(\e^*,\delta^*)$-private since $$\frac{8}{\e^* \alpha} \cdot \log \Paren{\frac{2}{\delta^* \beta}} \leq \frac{200\cdot k^{10} \cdot n^{0.01}}{10 \cdot k^{50} \cdot \e} \cdot \log n = \frac{20 \cdot n^{0.01}}{k^{40} \cdot \e} \cdot \log n \leq n \,,$$ for $\e \geq k^{-10}$.
	\Cref{step:clustering-reject-small-cluster} is private by post-processing.

	Next, we argue that \cref{step:output-average-over-ball} is private by showing that the average over the bins has small $\ell_2$-sensitivity.
	By \cref{lemma:differential-privacy-composition} we can consider the bins $\mathbf{B}_1, \ldots, \mathbf{B}_k$ computed in the previous step as fixed.
	Further, we can assume that the algorithm did not reject in \cref{step:clustering-reject-small-cluster}, i.e., that each bin contains at least $\tfrac{n^{0.01}}{2k}$ points of $V$ and $V'$ respectively.
	As a consequence, every bin contains at least two (projections of) points of the input $V$ or $V'$ respectively.
	In particular, it contains at least one (projection of a) point which is present in both $V$ and $V'$.
	Fix a bin $\mathbf{B}_l$ and let $\bar{ \nu}^*$ be such that it is both in $V$ and $V'$ and $\bm\Phi \bar{ \nu}^* \in \mathbf{B}_l$.
	Also, define
	\begin{align*}
		S_l &\coloneqq \Card{\Set{j \suchthat \bm \Phi\bar{ \nu}^{(j)}_Y \in \mathbf{B}_i}} \,, \\
		S_l' &\coloneqq \Card{\Set{j \suchthat \bm \Phi\bar{ \nu}^{(j)}_{Y'} \in \mathbf{B}_i}} \,.
	\end{align*}
	Assume $V$ and $V'$ differ on index $j$.
	We consider two cases.
	First, assume that $\bm\Phi\bar{\nu}^{(j)}_Y$ and $\bm\Phi\bar{\nu}^{(j)}_{Y'}$ both lie in $\mathbf{B}_l$.
	In this case, $S_l = S_l'$ and using \cref{lemma:jl} it follows that with probability $n^{-100} \leq \min\Set{\e/2,\delta/2}$ it holds that
	\begin{align*}
		\Normt{\bar{\nu}^{(j)}_Y - \bar{\nu}^{(j)}_{Y'}} &\leq \Normt{\bar{\nu}^{(j)}_Y - \bar{\nu}^*} + \Norm{\bar{\nu}^* - \bar{\nu}^{(j)}_{Y'}} \leq 10 \cdot \Paren{  \Normt{\bm\Phi\bar{\nu}^{(j)}_Y - \bm\Phi\bar{\nu}^*} + \Normt{\bm\Phi\bar{\nu}^{(j)}_{Y'} - \bm\Phi\bar{\nu}^*} } \\
		&\leq 20 \cdot \sqrt{d^*} \cdot b \leq 200 \cdot k^{-12}\,.
	\end{align*}
	And hence we can bound
	\begin{align*}
		\Normt{\frac{1}{S_l}\cdot \Paren{\sum_{\bm \Phi\bar{ \nu}^{(j)}_Y \in \mathbf{B}_l} \bar{ \nu}_Y^{(j)}} - \frac{1}{S_l'}\cdot \Paren{\sum_{\bm \Phi\bar{ \nu}_{Y'}^{(j)}\in \mathbf{B}_l} \bar{ \nu}^{(j)}_{Y'}}} \leq \frac{\Normt{\bar{\nu}^{(j)}_Y - \bar{\nu}^{(j)}_{Y'}}}{S_l } \leq \frac{400 \cdot k^{-11}}{n^{0.01}} \,.
	\end{align*}
	Next, assume that $\bm\Phi\bar{\nu}^{(j)}_Y \not\in \mathbf{B}_l$ and $\bm\Phi\bar{\nu}^{(j)}_{Y'} \in \mathbf{B}_l$ (the other case works symetrically).
	It follows that $S_l = S_l' - 1$ and we can bound
	\begin{align*}
		\Normt{\frac{1}{S_l}\cdot \Paren{\sum_{\bm \Phi\bar{ \nu}^{(j)}_Y \in \mathbf{B}_l} \bar{ \nu}_Y^{(j)}} - \frac{1}{S_l'}\cdot \Paren{\sum_{\bm \Phi\bar{ \nu}_{Y'}^{(j)}\in \mathbf{B}_l} \bar{ \nu}^{(j)}_{Y'}}} &= \frac{1}{S_l \cdot S_l'} \cdot \Normt{S_l'  \Paren{\sum_{\bm \Phi\bar{ \nu}^{(j)}_Y \in \mathbf{B}_l} \bar{ \nu}_Y^{(j)}} - \Paren{S_l' - 1}  \Paren{\sum_{\bm \Phi\bar{ \nu}_{Y'}^{(j)}\in \mathbf{B}_l} \bar{ \nu}^{(j)}_{Y'}}} \\
		&= \frac{1}{S_l \cdot S_l'} \cdot\Normt{S_l' \cdot\bar{\nu}^{(j)}_{Y'} + \Paren{\sum_{\bm \Phi\bar{ \nu}_{Y'}^{(j)}\in \mathbf{B}_l} \bar{ \nu}^{(j)}_{Y'}}} \\
		&= \frac{1}{S_l} \cdot \Normt{\bar{\nu}^{(j)}_{Y'} - \frac{1}{S_l'} \Paren{\sum_{\bm \Phi\bar{ \nu}_{Y'}^{(j)}\in \mathbf{B}_l} \bar{ \nu}^{(j)}_{Y'}}} \\
		&\leq \frac{\sqrt{d^*} \cdot b}{S_l} \leq \frac{20 \cdot k^{-11}}{n^{0.01}} \,.
	\end{align*}
	
	Hence, the $\ell_2$-sensitivity is at most $\Delta \coloneqq \tfrac{400 \cdot k^{-11}}{n^{0.01}}$.
	Since $$2\Delta^2 \cdot \frac{\log(2/(\delta^*/k))}{(\e^*/k)^2} = 32 \cdot k^{-120} \cdot \frac{\log(2kn/\delta)}{\e^2}$$ and $\mathbf{w}' \sim N\Paren{0, 32 \cdot k^{-120} \cdot \frac{\log(2kn/\delta)}{\e^2} \cdot \Id}$ it follows that outputing $\hat{\bm \mu}_l$ is $(\e^*/k, \delta^*/k)$-DP by the Gaussian Mechanism that.
	By \cref{lemma:differential-privacy-composition} it follows \cref{step:output-average-over-ball} is $(\e^*,\delta^*)$-private.

	Hence, by \cref{lemma:differential-privacy-composition} it follows that \cref{step:clustering-histograms} to \cref{step:output-average-over-ball} are $(2\e^*,2\delta^*)$-differentially private.
	Using $m \leq \tfrac{10n^{0.01}}{k^{10}}$ it now follows by \cref{lemma:privatizing-input} that \cref{step:clustering-gaussian-noise} to \cref{step:output-average-over-ball} are $(\e',\delta')$-private for
	\begin{align*}
		\e' &= \e + 2m \e^* \leq 3\e \,, \\
		\delta' &= 2e^{\e} m e^{(m-1) 2\e^*} \delta^* + \delta \leq 2m e^{3\e} \cdot \frac{\delta}{n} + \delta \leq 3\delta \,.
	\end{align*}
	Thus, combined with the private check and \cref{fact:low_failure_probability} in \cref{step:check_norm} the whole algorithm is $(5\e,5\delta)$-private.

\end{proof}

\subsection{Utility analysis}\label{section:clustering-utility-analysis}

In this section we reason about the utility of \cref{algorithm:learning-mixtures-gaussians} and prove \cref{theorem:main-technical-clustering}.
We first introduce some notation.

\begin{definition}[True solution]
	Let $\mathbf{Y}$ be an input sampled from \cref{model:gaussian-mixture-model}.  Denote by $W^*(\mathbf{Y})\in \cW(\mathbf{Y})$ the matrix induced by the \textit{true solution} (or \textit{ground truth}).
	I.e., let 
	\begin{align*}
		W^*(\mathbf{Y})_\ij =
		\begin{cases}
			1&\text{ if $i\,, j$ were both sampled from the same component of the mixture, }\\
			0&\text{ otherwise.}
		\end{cases}
	\end{align*}
	Whenever the context is clear, we simply write $\mathbf{W^*}$ to ease the notation.
\end{definition}

First, we show that in the utility case \cref{step:check_norm} of \cref{algorithm:learning-mixtures-gaussians} rejects only with low probability.

\begin{lemma}[Algorithm does not reject on good inputs]\label{lemma:clustering-probability-rejection}
	Consider the settings of \cref{theorem:main-technical-clustering}. Suppose $\mathbf{Y}$ is a good set as per \cref{definition:clustering-good-input}. Then $\Normo{W(\hat{\zeta}(\mathbf{Y}))}\geq \frac{n^2}{k} \cdot \Paren{1-n^{-0.4}-\frac{1}{(10)^{10}k^{300}}}$ and \cref{algorithm:learning-mixtures-gaussians} rejects with probability at most $\exp\Paren{-\Omega\Paren{n^{1.7}}}$.
	\begin{proof}
		Since $\mathbf{Y}$ is good, there exists  $\mathbf{W^*}\in \cW(\mathbf{Y})$, corresponding to the indicator matrix of the true solution, such that 
		\begin{align*}
			g(\mathbf{W^*}) &= \Normf{\mathbf{W^*}}^2 - 10^{10}k^{300} \iprod{J, \mathbf{W^*}} \leq \frac{n^2}{k} + n^{1.6} -(10)^{10}k^{300}\Paren{\frac{n^2}{k}-n^{1.6}} \\
			&= \frac {n^2} k \Paren{1 + \frac k {n^{0.4}} - (10)^{10}k^{300}\Paren{1 - \frac k {n^{0.4}}}}\,.
		\end{align*}
		Since $g(W(\hat{\zeta}(\mathbf{Y}))) \leq g(\mathbf{W^*})$ it follows that $$(10)^{10}k^{300}\iprod{J, W(\hat{\zeta}(\mathbf{Y}))} \geq \abs{g(W(\hat{\zeta}(\mathbf{Y})))} \geq \frac {n^2} k \Paren{(10)^{10}k^{300}\Paren{1 - \frac k {n^{0.4}}} - 1 - \frac k {n^{0.4}} }\,.$$
		Since, $ \Normo{W(\hat{\zeta}(\mathbf{Y}))} \geq \iprod{J, W(\hat{\zeta}(\mathbf{Y}))}$ the first claim follows rearranging the terms.
		This means that the algorithm rejects only if $\abs{\bm \tau}\geq n^{1.7}$.
		Recall that $\bm \tau \sim \text{tLap}\Paren{-n^{1.6}\Paren{1+\frac{\log(1/\delta)}{\eps}}\,, \frac{n^{1.6}}{\eps}}$.
		Hence,by \cref{lemma:tail-bound-truncated-laplace} it follows that
		\begin{align*}
			\Psymb\Paren{\abs{\bm \tau} \geq n^{1.7}} \leq \frac{\exp\Paren{-n^{1.7} + \e + \log(1/\delta)}}{2-\exp\Paren{-\e - \log(1/\delta)}} = \exp\Paren{-\Omega\Paren{n^{1.7}}} \,.
		\end{align*}
	\end{proof}
\end{lemma}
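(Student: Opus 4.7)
The plan is to use the ground-truth membership matrix $W^*(\mathbf{Y})$ as a feasible witness in the strongly convex program, certify a good value of the objective $g$, and then transfer this to $W(\hat{\zeta}(\mathbf{Y}))$ by optimality. Since $\mathbf{Y}$ is good, \cref{lemma:clustering-probability-input-good} gives $W^*(\mathbf{Y})\in\cW(\mathbf{Y})$. Using that each true cluster has size in $[n/k - n^{0.6}, n/k + n^{0.6}]$ and that $W^*$ is block-diagonal with all-ones blocks, one computes $\Normf{W^*}^2 \leq n^2/k + O(n^{1.6})$ and $\iprod{J, W^*} \geq n^2/k - O(n^{1.6})$. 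Substituting into $g(W^*) = \Normf{W^*}^2 - (10)^{10}k^{300}\iprod{J,W^*}$ yields an explicit upper bound on $g(W^*)$ that is strongly negative.

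Next, since $W(\hat{\zeta})$ is the minimizer of $g$ over $\cW(\mathbf{Y})$, we have $g(W(\hat{\zeta})) \leq g(W^*)$. Dropping the nonnegative Frobenius term gives the one-sided inequality $(10)^{10}k^{300}\iprod{J, W(\hat{\zeta})} \geq -g(W^*)$. Because the entries of $W(\hat{\zeta})$ are nonnegative (inherited from $z_{i\ell}^2 = z_{i\ell}$ and PSD-ness implicit in the pseudo-expectation), we have $\Normo{W(\hat{\zeta})} = \iprod{J, W(\hat{\zeta})}$, and rearranging produces exactly the advertised lower bound $\Normo{W(\hat{\zeta})} \geq \tfrac{n^2}{k}\bigl(1 - n^{-0.4} - \tfrac{1}{(10)^{10}k^{300}}\bigr)$.

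For the rejection probability: once the $\ell_1$-mass lower bound on $W(\hat{\zeta})$ is established, one argues that it transfers to $\phi(W)$ with only $o(n^2/k)$ loss, so the threshold check in \cref{step:check_norm} is satisfied as soon as $\bm{\tau}$ is not too negative. The transfer relies on the row-sum constraints $\sum_j z_{i\ell} \leq n/k$ being nearly tight on average, combined with the quadratic penalty in $g$ preferring integral-looking solutions, so that most nonvanishing entries of $W(\hat{\zeta})$ sit near $1 \geq 0.9$ where $\phi$ is the identity. Consequently, rejection occurs only when $\abs{\bm{\tau}}\geq n^{1.7}$. Since $\bm{\tau}\sim\text{tLap}\bigl(-n^{1.6}(1+\log(1/\delta)/\eps),\,n^{1.6}/\eps\bigr)$ is nonpositive, the event reduces to $\bm{\tau} \leq -n^{1.7}$, and \cref{lemma:tail-bound-truncated-laplace} bounds this by $e^{(y-\mu)/b}/(2-e^{\mu/b}) = \exp\bigl(-\Omega(n^{1.7})\bigr)$ for the stated parameter choices.

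The main subtlety I expect is precisely the step transferring $\Normo{W(\hat{\zeta})}$ to $\Normo{\phi(W(\hat{\zeta}))}$, since $\phi$ zeros out entries below $0.8$ and the naive bound $\phi(x)\leq x$ goes the wrong way. The clean fix is a pigeonhole on the row-sums: under the constraint $\sum_j W_{ij} \leq n/k$, the aggregate budget being within a $(1-o(1))$ factor of saturation forces all but a vanishing fraction of rows to have nearly $n/k$ entries of magnitude close to $1$, leaving only a negligible $\ell_1$-deficit after applying $\phi$. Everything else is a straightforward computation from optimality of $W(\hat{\zeta})$ and the truncated-Laplace tail bound.
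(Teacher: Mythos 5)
Your derivation of the lower bound on $\Normo{W(\hat{\zeta}(\mathbf{Y}))}$ is correct and matches the paper's argument exactly: use $\mathbf{W}^*$ as a feasible comparator, bound $g(\mathbf{W}^*)$ from cluster-size estimates, invoke optimality $g(W(\hat{\zeta})) \le g(\mathbf{W}^*)$, drop the Frobenius term, and rearrange using $\iprod{J,W} = \Normo{W}$ (nonnegativity of $W$'s entries). The truncated-Laplace tail bound at the end is also fine.

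You correctly flag the one real subtlety: \cref{step:check_norm} thresholds $\Normo{\phi(W)}$, not $\Normo{W}$, and since $\phi$ floors entries below $0.8$ to zero, the bound on $\Normo{W}$ does not directly control $\Normo{\phi(W)}$. (The paper's own writeup of this lemma is also silent on this transfer.) However, your proposed fix does not work, for two separate reasons. First, the claim that the quadratic term in $g$ ``prefers integral-looking solutions'' is backwards: $g(W) = \Normf{W}^2 - (10)^{10}k^{300}\iprod{J,W}$ is \emph{minimized}, so for fixed $\ell_1$-mass the objective rewards \emph{smaller} Frobenius norm, i.e.\ more spread-out, less integral matrices. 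Second, near-saturation of the row budgets $\sum_j W_{ij}\le n/k$ together with a near-maximal aggregate $\Normo{W}\approx n^2/k$ does not force entries near $1$: the fractional matrix $W_{ij}\equiv 1/k$ has every row-sum equal to $n/k$ and $\Normo{W}=n^2/k$, yet $\phi(W)=\mathbf{0}$. A pigeonhole on row sums alone cannot rule this out. The correct route is structural: one needs the SOS moment constraints (as invoked through \cref{fact:clustering-utility}, which only uses the first claim of \cref{lemma:clustering-probability-rejection} and hence is not circular) to show that $W$ is entrywise $\ell_1$-close to the integral ground truth $\mathbf{W}^*$; then $\Normo{\phi(W)} \ge \Normo{\mathbf{W}^*} - \Normo{\phi(W)-\phi(\mathbf{W}^*)} \ge \Normo{\mathbf{W}^*} - 10\Normo{W-\mathbf{W}^*}$ since $\phi$ is $10$-Lipschitz and $\phi(\mathbf{W}^*)=\mathbf{W}^*$, and this yields the needed lower bound on $\Normo{\phi(W)}$. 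Without the subgaussianity constraints there is no argument that the minimizer of $g$ looks integral, so that input is essential and cannot be replaced by a counting argument on the program's linear constraints.
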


The next step shows that on a good input $\mathbf{Y}$ the matrix $\phi(W(\hat{\zeta}(\mathbf{Y})))$ is close to the true solution.

\begin{lemma}[Closeness to true solution on good inputs]
	\torestate{
		\label{lemma:clustering-closeness-good-inputs}
		Consider the settings of \cref{theorem:main-technical-clustering}. Suppose $\mathbf{Y}$ is a good set as per \cref{definition:clustering-good-input}. Let $W(\mathbf{Y})\in \cW(\mathbf{Y})$ be the matrix computed by \cref{algorithm:learning-mixtures-gaussians}. Suppose the algorithm does not reject.
		Then 
		\begin{align*}
			\Normo{\phi(W(\mathbf{Y}))-\mathbf{W}^*} \leq \frac{n^2}{k}\cdot \frac{3}{k^{98}}\,.
		\end{align*}
	}
\end{lemma}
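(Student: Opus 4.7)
The proof proceeds in two stages: first establishing a Frobenius-norm bound $\Normf{W - \mathbf{W}^*}^2 \leq O(n^2/k^{99})$, then converting it to the required $\ell_1$ bound via the structure of $\phi$.

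For the Frobenius bound, I would exploit that $g$ is $2$-strongly convex (as shown in \cref{lemma:sbm-stability}) and that $\mathbf{W}^*\in\cW(\mathbf{Y})$ by \cref{lemma:clustering-probability-input-good}, so since $W$ minimizes $g$ over the convex set $\cW(\mathbf{Y})$,
\[
  \Normf{W-\mathbf{W}^*}^2 \;\leq\; g(\mathbf{W}^*) - g(W).
\]
Since entries of $W$ lie in $[0,1]$ (SOS-derivable from $z_{i\ell}^2 = z_{i\ell}$ and $\sum_\ell z_{i\ell}\leq 1$), we have $\Normf{W}^2 = \iprod{J,W} - \sum_{ij} W_{ij}(1-W_{ij})$, so setting $s = \iprod{J,\mathbf{W}^*} = \sum_\ell \abs{C_\ell}^2$ yields the identity
\[
  g(\mathbf{W}^*) - g(W) \;=\; \sum_{ij} W_{ij}(1-W_{ij}) \;-\; (s-\iprod{J,W})\Paren{10^{10}k^{300}-1}.
\]
The constraint $\sum_i z_{i\ell}\leq n/k$ combined with $\sum_\ell z_{i\ell}\leq 1$ gives $\iprod{J,W}\leq n^2/k$, while goodness of $\mathbf{Y}$ gives $s = n^2/k \pm O(n^{1.6})$.

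The heart of the argument is bounding $\sum_{ij} W_{ij}(1-W_{ij})$. I would invoke the standard SoS clustering machinery for explicitly-bounded mixtures with separation $\Delta \geq C\sqrt{t}k^{1/t}$ from \cite{KothariSS18, Hopkins018}: the moment-bound constraint in $\cP(\mathbf{Y})$, together with explicit subgaussianity of the true clusters and the separation assumption, forces every feasible pseudo-distribution to assign mass almost exclusively to within-cluster pairs. Non-negativity of $g(\mathbf{W}^*)-g(W)$ already pins down $\iprod{J,W}$ to within $O(n^2/k^{99})$ of $s$ (since $\sum W_{ij}(1-W_{ij}) \geq (s-\iprod{J,W})(10^{10}k^{300}-1)$), and the SoS clustering analysis then drives $\sum_{ij} W_{ij}(1-W_{ij})$ down to $O(n^2/k^{99})$, giving the Frobenius bound.

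For the $\ell_1$ conversion I would argue pointwise: whenever $\phi(W_{ij})\neq \mathbf{W}^*_{ij}$ we have either $\mathbf{W}^*_{ij}=1$ with $W_{ij}<0.9$, or $\mathbf{W}^*_{ij}=0$ with $W_{ij}>0.8$, so in either case $(W_{ij}-\mathbf{W}^*_{ij})^2\geq 0.01$. Consequently the number of disagreeing pairs is at most $100\,\Normf{W-\mathbf{W}^*}^2$, and since each such pair contributes at most $1$ to $\Normo{\phi(W)-\mathbf{W}^*}$,
\[
  \Normo{\phi(W)-\mathbf{W}^*} \;\leq\; 100\,\Normf{W-\mathbf{W}^*}^2 \;\leq\; \frac{3n^2}{k^{99}} \;=\; \frac{n^2}{k}\cdot\frac{3}{k^{98}}.
\]
The main obstacle is adapting the SoS clustering analysis of \cite{KothariSS18, Hopkins018} to the modified program, which uses the inequality $\sum_i z_{i\ell}\leq n/k$ in place of the equality assumed in prior work. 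The key insight is that the strongly-convex objective---in particular the large coefficient $10^{10}k^{300}$ on $\iprod{J,W}$---effectively restores the balance condition by forcing $\iprod{J,W}$ to lie within $O(n^2/k^{301})$ of $n^2/k$, so the usual moment-matching / separation arguments can be run with essentially no loss.
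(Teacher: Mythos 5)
Your proposal takes a genuinely different route from the paper. The paper's proof (in the deferred-proofs appendix) first establishes three structural facts about $W$ --- entries in $[0,1]$; row sums upper-bounded by $n/k$ and lower-bounded by $(1-O(k^{-200}))n/k$ for most rows; cross-cluster mass $\sum_{i\in\mathbf{C}_r,j\notin\mathbf{C}_r}W_{ij}\leq\delta n^2/k$ --- and then bounds $\Normo{W-\mathbf{W}^*}$ directly by decomposing it into within- and between-cluster sums and using the row-sum information to argue that most within-cluster entries are close to $1$; the passage to $\phi(W)$ uses only that $\phi$ is $10$-Lipschitz. You instead derive a Frobenius bound from strong convexity and minimality, and then convert to $\ell_1$ by observing that the dead zones of $\phi$ force $\abs{W_{ij}-\mathbf{W}^*_{ij}}\geq 0.1$ whenever $\phi(W_{ij})\neq\mathbf{W}^*_{ij}$, so the number of disagreeing entries is controlled by $\Normf{W-\mathbf{W}^*}^2$. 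This threshold-based Frobenius-to-$\ell_1$ conversion is a cleaner observation than the paper's Lipschitz step, and the strong-convexity route sidesteps the row-sum bookkeeping entirely. Both proofs ultimately rest on the same deepest ingredient, the SoS-derived cross-cluster bound.

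The one place you should not hand-wave is the bound on $\sum_{ij}W_{ij}(1-W_{ij})$. The SoS clustering analysis alone does \emph{not} control this quantity: for instance $W\equiv 1/2$ on within-cluster pairs satisfies all the moment constraints yet has $\sum_{ij}W_{ij}(1-W_{ij})=\Theta(n^2/k)$. The correct combination is $\sum_{ij}W_{ij}(1-W_{ij})\leq\sum_{\text{between}}W_{ij}+\sum_{\text{within}}(1-W_{ij})$, where the first term is the SoS cross-cluster bound, and the second is handled via $\sum_{\text{within}}(1-W_{ij})=(s-\iprod{J,W})+\sum_{\text{between}}W_{ij}$ together with the pinning of $\iprod{J,W}$ you already derived. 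You should also treat the sign case $\iprod{J,W}>s$ explicitly: there the correction term $-(s-\iprod{J,W})(10^{10}k^{300}-1)$ is positive and of size up to $O(n^{1.6}k^{300})$ (using goodness of $\mathbf{Y}$ to bound $\iprod{J,W}-s\leq O(n^{1.6})$), which is absorbed by the sample-complexity hypothesis $n\geq k^{O(1)}d^{O(t)}$ but needs to be checked against the target $O(n^2/k^{99})$.
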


The proof is similar to the classical utility analysis of the sum-of-squares program found, e.g., in \cite{Hopkins018,fleming2019semialgebraic}.
We defer it to \cref{section:deferred-proofs-clustering}.

Together, the above  results imply that the vectors $\nu^{(i)}$ computed by the algorithm are close to the true centers of the mixture.

\begin{lemma}[Closeness to true centers]\label{lemma:clustering-closeness-true-centers}
	Consider the settings of \cref{theorem:main-technical-clustering}. Suppose $\mathbf{Y}$ is a good set as per \cref{definition:clustering-good-input}. Let $\mathbf{W}\in \cW(\mathbf{Y})$ be the matrix computed by \cref{algorithm:learning-mixtures-gaussians}. Suppose the algorithm does not reject in \cref{step:check_norm}.
	Then for each $\ell \in [k]$, there exists  $\frac{n}{k}\cdot \Paren{1-\frac{2}{k^{47}}}$  indices $i\in [n]$, such that
	\begin{align*}
		\Normt{\nu^{(i)}(\mathbf{W})-\mu_\ell}\leq O\Paren{k^{-25}}\,.
	\end{align*}
	\begin{proof}
		We aim to show that for most indices $i \in [n]$ the vectors $\Normo{\phi(\mathbf{W}_i)}^{-1}\phi(\mathbf{W}_i)$ and $\Normo{\mathbf{W}^*_i}^{-1}\mathbf{W}^*_i$ induce a $2$-explicitly $40$-bounded distribution over $\mathbf{Y}$.
		If additionally the two vectors are close in $\ell_1$-norm, the result will follow by \cref{theorem:closeness-parameters-subgaussians}.

		Note that $\Normo{\mathbf{W}^*_i}^{-1}\mathbf{W}^*_i$ induces a $2$-explicitly $40$-bounded distribution by \cref{lemma:clustering-probability-input-good}.
		By Markov's inequality and \cref{lemma:clustering-closeness-good-inputs} there can be at most $n/k^{48}$ indices $j\in[n]$ such that 
		\begin{align*}
			\Normo{\phi(\mathbf{W})_j-\mathbf{W}^*_j}\geq  \frac{n}{k}\cdot \frac{3}{k^{50}}\,.
		\end{align*}
		Consider all remaining indices $i$.
		It follows that
		\begin{align*}
			\Normo{\phi(\mathbf{W}_i)} &\geq \Normo{\mathbf{W}_i^*} - \Normo{\phi(\mathbf{W})_i-\mathbf{W}^*_i} \geq \frac n k \cdot \Paren{1-\frac{k}{n^{0.4}} - \frac{3}{k^{50}}} \geq \frac n k \cdot \Paren{1 - \frac{10}{k^{50}}} \,.
		\end{align*}
		Hence, by \cref{lemma:clustering-covariance-bound} the distribution induced by $\Normo{\phi(\mathbf{W}_i)}^{-1}\phi(\mathbf{W}_i)$ is $2$-explicitly $40$-bounded distribution.
		Further, using $\Normo{\mathbf{W}^*_i} \geq \tfrac n k \Paren{1 - \tfrac{k}{n^{0.4}}}$ we can bound
		\begin{align*}
			&\Normo{\Normo{\phi(\mathbf{W}_i)}^{-1}\phi(\mathbf{W}_i)-\Normo{\mathbf{W}^*_i}^{-1}\mathbf{W}^*_i} = \Normo{\phi(\mathbf{W}_i)}^{-1} \Normo{\mathbf{W}^*_i}^{-1}  \cdot \Normo{\Normo{\mathbf{W}^*_i}\phi(\mathbf{W}_i)-\Normo{\phi(\mathbf{W}_i)}\mathbf{W}^*_i} \\
			&\leq \Normo{\phi(\mathbf{W}_i)}^{-1} \Normo{\mathbf{W}^*_i}^{-1} \cdot \Paren{\Abs{\Normo{\phi(\mathbf{W}_i)} - \Normo{\mathbf{W}^*_i}} \cdot \Normo{\phi(\mathbf{W}_i)} + \Normo{\phi(\mathbf{W}_i)} \cdot \Normo{\phi(\mathbf{W}_i) - \mathbf{W}^*_i}} \\
			&\leq \Normo{\mathbf{W}^*_i}^{-1} \cdot 2 \Normo{\phi(\mathbf{W}_i) - \mathbf{W}^*_i} \leq \frac{6}{k^{50} \cdot \Paren{1 - \frac{k}{n^{0.4}}}} \leq \frac{7}{k^{50}} \,.
		\end{align*}

		Hence, by \cref{theorem:closeness-parameters-subgaussians} for each $l \in [k]$ there are at least $\tfrac n k - n^{0.6} - \tfrac n {k^{48}} \geq \tfrac n k \cdot \Paren{1 - \tfrac 2 {k^{47}}}$ indices $i$ such that $$\Normt{\nu^{(i)}(\mathbf{W})-\Normo{\mathbf{W}^*_i}^{-1} \sum_{j=1}^n \mathbf{W}^*_{i,j} \mathbf{y}_j}\leq O\Paren{k^{-25}}\,.$$
		The result now follows by standard concentration bounds applied to the distribution induced by $\Normo{\mathbf{W}^*_i}^{-1}\mathbf{W}^*_i$.
	\end{proof}
\end{lemma}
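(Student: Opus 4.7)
The plan is to combine the Frobenius/$\ell_1$-closeness of $\phi(W)$ to the ground-truth clustering matrix $\mathbf{W}^*$ (supplied by \cref{lemma:clustering-closeness-good-inputs}) with the covariance bound in \cref{lemma:clustering-covariance-bound} and the subgaussian parameter-closeness result \cref{theorem:closeness-parameters-subgaussians}. First, I would convert the global $\ell_1$ closeness $\Normo{\phi(W)-\mathbf{W}^*}\leq \frac{n^2}{k}\cdot\frac{3}{k^{98}}$ into a row-wise guarantee via Markov: all but $n/k^{48}$ rows satisfy $\Normo{\phi(W)_i-\mathbf{W}^*_i}\leq \frac{n}{k}\cdot\frac{3}{k^{50}}$. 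For any such row $i$ lying in the true cluster $\ell$, the ground truth row $\mathbf{W}^*_i$ has $\Normo{\mathbf{W}^*_i}\geq \frac{n}{k}-n^{0.6}$ (from goodness), so by the reverse triangle inequality $\Normo{\phi(W)_i}\geq \frac{n}{k}(1-10/k^{50})$, which in particular triggers the hypothesis of \cref{lemma:clustering-covariance-bound} and certifies that the weight vector $\Normo{\phi(W)_i}^{-1}\phi(W)_i$ induces a $2$-explicitly $40$-bounded distribution over $\mathbf{Y}$.

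Next, I would compare this vector to the normalized ground-truth weight vector $p^*_i := \Normo{\mathbf{W}^*_i}^{-1}\mathbf{W}^*_i$, which is the uniform distribution over $\mathbf{Y}_\ell$ and is $2$-explicitly $2$-bounded by goodness. A short calculation, using $\Norm{a/\Norm{a}-b/\Norm{b}}_1 \leq 2\Norm{a-b}_1/\Norm{b}_1$ on nonnegative vectors, turns the unnormalized bound into $\Normo{\Normo{\phi(W)_i}^{-1}\phi(W)_i - p^*_i}\leq O(1/k^{50})$. Now \cref{theorem:closeness-parameters-subgaussians} applied with $t=1$ and $\beta = O(1/k^{50})$ yields that the means of the two induced distributions differ in $\ell_2$ by $\beta^{1/2}\cdot O(\sqrt{40}) = O(k^{-25})$. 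The mean of the distribution induced by $\Normo{\phi(W)_i}^{-1}\phi(W)_i$ is exactly $\nu^{(i)}$, and the mean of the distribution induced by $p^*_i$ is the empirical mean $\bar\mu_\ell := |\mathbf{Y}_\ell|^{-1}\sum_{y \in \mathbf{Y}_\ell} y$, so the $\ell_2$ distance $\Normt{\nu^{(i)}-\bar\mu_\ell}$ is $O(k^{-25})$.

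The last step is to pass from the empirical mean $\bar\mu_\ell$ to the true center $\mu_\ell$. Since $|\mathbf{Y}_\ell|\geq n/k - n^{0.6}$ and $\mathbf{Y}_\ell$ consists of i.i.d.\ standard Gaussians centered at $\mu_\ell$, standard Gaussian concentration (e.g., $\Normt{\bar\mu_\ell-\mu_\ell}\leq O(\sqrt{d/|\mathbf{Y}_\ell|})$ with high probability) gives $\Normt{\bar\mu_\ell-\mu_\ell} = \tilde O(\sqrt{kd/n})$, which is $o(k^{-25})$ given the assumption $n\geq k^{O(1)}d^{O(t)}$. The triangle inequality then delivers the claimed $\Normt{\nu^{(i)}-\mu_\ell}\leq O(k^{-25})$. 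Finally, counting the admissible indices: for each cluster $\ell$ there are at least $n/k - n^{0.6}$ indices truly in that cluster, and removing the at most $n/k^{48}$ globally-bad rows still leaves $\frac{n}{k}(1 - 2/k^{47})$ good indices.

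The main obstacle I anticipate is the bookkeeping in the second step: turning a row-wise $\ell_1$ bound on the raw $\phi(W)_i$ into the precise $O(1/k^{50})$ bound on the normalized weight vectors required to invoke \cref{theorem:closeness-parameters-subgaussians} with a clean parameter, and simultaneously verifying that the two distributions being compared (the one induced by $\phi(W)_i$ and the uniform distribution on $\mathbf{Y}_\ell$) are supported in a way that makes \cref{theorem:closeness-parameters-subgaussians} directly applicable. Everything else is a chain of applications of already-established lemmas.
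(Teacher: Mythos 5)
Your proposal follows essentially the same route as the paper's proof: Markov's inequality on the $\ell_1$ bound from \cref{lemma:clustering-closeness-good-inputs} to isolate the good rows, the reverse triangle inequality to lower-bound $\Normo{\phi(\mathbf{W}_i)}$ and trigger \cref{lemma:clustering-covariance-bound}, normalization of the weight vectors and an application of \cref{theorem:closeness-parameters-subgaussians} at $t=1$ to get the means of the induced distributions within $O(k^{-25})$, and finally concentration of the empirical cluster mean around the true center. The only difference is that you spell out explicitly what the paper dismisses as ``standard concentration bounds'' in its final sentence; the substance is identical.
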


An immediate consequence of \cref{lemma:clustering-closeness-true-centers} is that the vectors $\bar{\bm{\nu}}^{(i)}$ inherits the good properties of the vectors $\nu^{(i)}$ with high probability.

\begin{corollary}[Closeness to true centers after sub-sampling]\label{corollary:clustering-closeness-true-centers-subsampling}
	Consider the settings of \cref{theorem:main-technical-clustering}. Suppose $\mathbf{Y}$ is a good set as per \cref{definition:clustering-good-input}. Let $\mathbf{W}\in \cW(\mathbf{Y})$ be the matrix computed by \cref{algorithm:learning-mixtures-gaussians}. Suppose the algorithm does not reject.
	Then with high probability for each $\ell \in [k]$, there exists  $\frac{n^{0.01}}{k}\cdot \Paren{1-\frac{150}{k^{47}}}$  indices $i\in \bm \cS$, such that
	\begin{align*}
		\Normt{\bar{\bm \nu}^{(i)}-\mu_\ell}\leq O\Paren{k^{-25}}\,.
	\end{align*}
	\begin{proof}
		For each $\ell\in [k]$, denote by $\cT_\ell$ the set of indices in $[n]$ satisfying $$\Normt{\nu^{(i)}(\mathbf{W})-\mu_\ell}\leq O\Paren{k^{-25}}\,.$$
		By \cref{lemma:clustering-closeness-true-centers} we know that $\cT_\ell$  has size at least $\tfrac n k \cdot \Paren{1 - \tfrac{2}{k^{47}}}$.
		Further, let $\mathbf{\cS}$ be the set of indices selected by the algorithm.
		By Chernoff's bound \cref{fact:Chernoff} with probability $1-e^{-n^{\Omega(1)}}\,,$ we have $\Card{\bm \cS\cap \cT_\ell}\geq \frac{n^{0.01}}{k}\cdot \Paren{1-\frac{150}{k^{47}}}$. Taking a union bound over all $\ell \in [k]$ we get that with probability $1-e^{-n^{\Omega(1)}}\,,$
		for each $\ell \in [k]$, there exists  $\frac{n^{0.01}}{k}\cdot \Paren{1-\frac{150}{k^{47}}}$  indices $i\in \bm \cS$ such that
		\begin{align*}
			\Normt{\nu^{(i)}(\mathbf{W})-\mu_\ell}\leq O\Paren{k^{-25}}\,.
		\end{align*}
		Now, we obtain the corollary observing (cf. \cref{fact:gaussian-spectral-concentration} with $m=1$) that with probability at least $1-e^{-n^{\Omega(1)}}$, for all $i\in \bm \cS$
		\begin{align*}
			\Normt{\bar{\bm \nu}^{(i)}-\nu^{(i)}(\mathbf{W})} = \Normt{\mathbf{w}}\leq n^{-0.05} \cdot \frac{\sqrt{\log(2/\delta)}}{\e} \cdot \sqrt{d} \leq n^{-0.04} \leq O\Paren{k^{-25}}\,.
		\end{align*}
	\end{proof}
\end{corollary}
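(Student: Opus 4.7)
The plan is to combine Lemma \ref{lemma:clustering-closeness-true-centers} (which already gives many good indices in $[n]$) with standard concentration arguments for uniform sub-sampling, and then control the Gaussian noise added in Step~6 of the algorithm.

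First, I would fix $\ell \in [k]$ and let $\cT_\ell \subseteq [n]$ denote the set of indices $i$ such that $\Normt{\nu^{(i)}(\mathbf{W}) - \mu_\ell} \leq O(k^{-25})$. By Lemma \ref{lemma:clustering-closeness-true-centers}, $\card{\cT_\ell} \geq \frac{n}{k}\Paren{1 - \frac{2}{k^{47}}}$. Since $\bm \cS$ is drawn uniformly at random of size $n^{0.01}$, the random variable $\card{\bm \cS \cap \cT_\ell}$ has expectation at least $\frac{n^{0.01}}{k}\Paren{1 - \frac{2}{k^{47}}}$. Applying a Chernoff bound (Fact~\ref{fact:Chernoff}) for sums of negatively correlated Bernoulli variables (or equivalently for sampling without replacement), I get that with probability $1 - e^{-n^{\Omega(1)}}$ the bound $\card{\bm \cS \cap \cT_\ell} \geq \frac{n^{0.01}}{k}\Paren{1 - \frac{150}{k^{47}}}$ holds; the large slack from $\tfrac{2}{k^{47}}$ to $\tfrac{150}{k^{47}}$ leaves plenty of room for the Chernoff deviation term. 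A union bound over $\ell \in [k]$ (with $k \leq n$) then shows this holds simultaneously for all $\ell$ with probability $1 - o(1)$.

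Second, I need to pass from $\nu^{(i)}$ to $\bar{\bm \nu}^{(i)} = \nu^{(i)} + \mathbf{w}$, where $\mathbf{w} \sim N\Paren{0, n^{-0.18} \cdot \frac{\log(2/\delta)}{\e^2} \cdot \Id}$ in $\R^d$. By Fact~\ref{fact:gaussian-spectral-concentration} applied with $m=1$, $\Normt{\mathbf{w}} \leq O\Paren{n^{-0.09} \cdot \sqrt{d} \cdot \tfrac{\sqrt{\log(2/\delta)}}{\e}}$ with probability $1 - e^{-\Omega(d)}$. Using the admissible parameter regime $\e \geq k^{-10}$, $\delta \geq n^{-10}$, $n \geq k^{\Omega(1)} d^{\Omega(t)}$, and $k \geq (\log n)^{1/5}$, this is easily $\leq n^{-0.04} \ll k^{-25}$. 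A union bound over the $n^{0.01}$ indices in $\bm \cS$ keeps this bound valid for every $i \in \bm \cS$ simultaneously. Combining with the previous paragraph via the triangle inequality yields $\Normt{\bar{\bm \nu}^{(i)} - \mu_\ell} \leq O(k^{-25})$ for all $i \in \bm \cS \cap \cT_\ell$ and all $\ell \in [k]$, which is exactly the claim.

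I do not expect any real obstacle here: Lemma~\ref{lemma:clustering-closeness-true-centers} does all of the SoS-flavored work, and what remains is essentially bookkeeping. The only thing to verify carefully is that the chosen parameters ($n^{0.01}$ for the subsample size, $n^{-0.18}$ for the noise variance, $\e \geq k^{-10}$, $\delta \geq n^{-10}$) indeed allow both the Chernoff deviation and the Gaussian noise norm to be absorbed into the $O(k^{-25})$ error term, which a direct substitution confirms.
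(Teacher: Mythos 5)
Your proposal follows the same route as the paper: bound $|\cT_\ell|$ via \cref{lemma:clustering-closeness-true-centers}, apply Chernoff for the uniform subsample, union bound over $\ell \in [k]$, then bound the Gaussian noise and finish with the triangle inequality. The first part (Chernoff on $\bm\cS \cap \cT_\ell$ and the union bound over $k$ clusters) is correct and matches the paper.

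There is one step that would fail as written. You invoke \cref{fact:gaussian-spectral-concentration} and state $\Normt{\mathbf{w}} \leq O\bigl(n^{-0.09}\sqrt{d}\cdot\sqrt{\log(2/\delta)}/\e\bigr)$ with failure probability $e^{-\Omega(d)}$, and then union bound over the $n^{0.01}$ indices of $\bm\cS$. That union bound needs the per-index failure probability to be $o(n^{-0.01})$, i.e., $d = \omega(\log n)$ — but nothing in \cref{theorem:main-technical-clustering} guarantees this (only $n \geq k^{O(1)} d^{O(t)}$, so $d$ can be $O(1)$). The fix is to take a larger deviation $t$ in the Gaussian tail bound: the error budget has enormous slack, so choosing, say, $t = n^{0.04}\sqrt{d}$ gives failure probability $e^{-\Omega(n^{0.08}d)} = e^{-n^{\Omega(1)}}$ while the norm bound only degrades to roughly $n^{-0.05}\sqrt{d}\cdot\sqrt{\log(2/\delta)}/\e$, which is still $\leq n^{-0.04} = O(k^{-25})$ under the stated parameter constraints. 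This is what the paper's bound implicitly does: its $n^{-0.05}$ (rather than the naive $n^{-0.09}$) is precisely the price of pushing the tail probability down to $e^{-n^{\Omega(1)}}$. Aside from this calibration, your argument is correct and complete.
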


For each $\ell$, denote by $\bm \cG_\ell\subseteq \bm \cS$ the set of indices $i\in \bm \cS$ satisfying $$\Normt{\bar{\bm \nu}^{(i)}-\mu_\ell}\leq O\Paren{k^{-25}}\,.$$
Let $\bm \cG:=\underset{\ell \in [k]}{\bigcup}\bm\cG_\ell\,.$ We now have  all the tools to prove utility of \cref{algorithm:learning-mixtures-gaussians}. We achieve this by showing thst with high probability, each bin returned by the algorithm at step \ref{step:clustering-histograms} satisfies $\bm \cG_{\ell'}\subseteq \mathbf{B}_{\ell}$ for some $\ell\,, \ell'\in [k]\,.$ Choosing the bins small enough will yield the desired result.

\begin{lemma}[Closeness of estimates] \label{lemma:clustering-bins-have-large-intersection}
	Consider the settings of \cref{theorem:main-technical-clustering}. Suppose $\mathbf{Y}$ is a good set as per \cref{definition:clustering-good-input}. Let $\mathbf{W}\in \cW(\mathbf{Y})$ be the matrix computed by \cref{algorithm:learning-mixtures-gaussians}. Suppose the algorithm does not reject.
	Then with high probability, there exists a permutation $\pi:[k]\rightarrow [k]$ such that
	\begin{align*}
		\max_{\ell \in [k]} \Normt{\mu_\ell-\hat{\bm \mu}_{\pi(\ell)}}\leq O\Paren{k^{-20}}
	\end{align*}
	\begin{proof}
		Consider distinct $\ell,\ell'\in [k]$. By \cref{corollary:clustering-closeness-true-centers-subsampling} for each $\bar{\bm \nu}^{(i)}\,, \bar{\bm \nu}^{(j)}\in \bm \cG_\ell$ it holds that
		\begin{align*}
			\Normt{\bar{\bm \nu}^{(i)} - \bar{\bm \nu}^{(j)}}\leq C\cdot k^{-25}\,,
		\end{align*}
		for some universal constant $C>0$.  Moreover, by assumption on $\mu_\ell, \mu_{\ell'}$ for each $\bar{\bm \nu}^{(i)}\in \bm \cG_\ell$ and $\bar{\bm \nu}^{(j)}\in \bm \cG_{\ell'}$
		\begin{align*}
			\Normt{\bar{\bm \nu}^{(i)} - \bar{\bm \nu}^{(j)}}\geq \Delta-O\Paren{k^{-25}}\,.
		\end{align*}
		Thus, by \cref{lemma:jl} with probability at least $1 - e^{\Omega(d^*)} \geq 1 - n^{-100}$ it holds that
		or each $\bar{\bm \nu}^{(i)}\,, \bar{\bm \nu}^{(j)}\in \bm \cG_\ell$ and $\bar{\bm \nu}^{r}\in \cG_{\ell'}$ with $\ell'\neq \ell\,,$
		\begin{align*}
			\Normt{\bm{\Phi}\bar{\bm \nu}^{(i)} - \bm{\Phi}\bar{\bm \nu}^{(j)}}\leq C^*\cdot k^{-25}\qquad \text{ and }\qquad \Normt{\bm \Phi\bar{\bm \nu}^{(i)} - \bm \Phi\bar{\bm \nu}^{(r)}}\geq \Delta-C^*\cdot k^{-25}
		\end{align*}
		for some other universal constant $C^*>C$.
		Let $Q_{\bm \Phi}(\bm \cG_\ell)\subseteq \R^{d^*}$ be a ball of radius  $C^*\cdot \Paren{k^{-25}}$ such that $\forall i \in\bm \cG_\ell$ it holds $\bm{\Phi}\bar{\bm \nu}^{(i)}\in Q_{\bm \Phi}(\bm \cG_\ell)$. That is, $Q_{\bm \Phi}(\bm \cG_\ell)$ contains the projection of all points in $\bm \cG_\ell\,.$
		
		Recall that $d^* = 100 \log (n) \leq 100 k^5$ and $b = k^{-15}$.
		Let $\bm \cB =\Set{\mathbf{B}_i}_{i=1}^\infty$ be the sequence of bins computed by the histogram learner of \cref{lemma:private-histogram-learner-high-dimension} for $\R^{d^*}$ at step \ref{step:clustering-histograms}  of the algorithm.
		By choice of $b$, and since $\mathbf{q}$ is chosen uniformly at random in $\brac{0,b}$, the probability that there exists a bin $\mathbf{B}\in \bm \cB$   containing $Q_{\bm \Phi}(\bm \cG_\ell)$ is at least $$1-d^*\cdot \frac{C^*}{b}\cdot \Paren{k^{-25}}\geq 1- \frac{100C^*}{b}\cdot k^{-20} \geq 1 - O\Paren{k^{-5}}\,,$$ where we used that $d^* = 100 \log n \leq 100 k^5$. 
		A simple union bound over $\ell \in [k]$ yields that with high probability
		for all $\ell\in [k]\,,$ there exists $\mathbf{B}\in \bm \cB$ such that $Q_{\bm \Phi}(\bm \cG_\ell)\subseteq \mathbf{B}\,.$ For simplicity, denote such bin by $\mathbf{B}_\ell$.
		
		We continue our analysis conditioning on the above events, happening with high probability.
		First, notice that for all $l \in [k]$
		\begin{align*}
			\max_{u,u'\in \mathbf{B}_\ell}\Normt{u-u'}^2\leq d^*\cdot b^2 \leq 100 k^{-25} \leq \frac{\Delta - C^*k^{-25}}{k^{10}}\,,
		\end{align*}
		and thus there cannot be $\ell,\ell'\in[k]$ such that $Q_{\bm \Phi}(\bm \cG_\ell)\subseteq \mathbf{B}_\ell$  and $Q_{\bm \Phi}(\bm \cG_\ell')\subseteq\mathbf{B}_\ell\,.$ 
		Moreover, by \cref{corollary:clustering-closeness-true-centers-subsampling} and $$\min_{\ell\in[k]}\Card{\bm \cG_\ell}\geq \frac{n^{0.01}}{k}\cdot \Paren{1-\frac{150}{k^{47}}}\,,$$ and hence $$\Card{\bm \cS\setminus\bm \cG}\leq n^{0.01} \cdot \frac{150}{k^{47}} = \frac{n^{0.01}}{k} \cdot \frac{150}{k^{46}}$$ it must be that step \ref{step:clustering-histograms} returned bins $\mathbf{B}_1,\ldots, \mathbf{B}_k$. This also implies that the algorithm does not reject.
		Further, by \cref{lemma:jl} for all $\bar{\bm \nu}^{(i)},\bar{\bm \nu}^{(j)}$ such that $\bm{\Phi}\bar{\bm \nu}^{(i)}, \bm{\Phi}\bar{\bm \nu}^{(j)} \in \mathbf{B}_l$ it holds that $$\Normt{\bar{\bm \nu}^{(i)} - \bar{\bm \nu}^{(j)}} \leq C^* \cdot \Normt{\bm \Phi\bar{\bm \nu}^{(i)} - \bm \Phi\bar{\bm \nu}^{(j)}} \leq C^* \cdot \sqrt{d^*} \cdot b \leq O\Paren{k^{-12}} \,.$$
		And hence, by triangle inequality, we get $$\Normt{\bar{\bm \nu}^{(i)} - \mu_l} \leq O\Paren{k^{-12}}\,.$$

		Finally, recall that for each $\ell\in [k]$,
		$$\hat{\bm \mu}_l \coloneqq \frac{1}{\Card{\Set{j \suchthat \bm \Phi\bar{\bm \nu}^{(j)} \in \mathbf{B}_i}}}\cdot \Paren{\sum_{\bm \Phi\bar{\bm \nu}^{(j)}\in \mathbf{B}_l} \bar{\bm \nu}^{(j)}} + \mathbf{w'}\,,$$
		where $\mathbf{w'} \sim N\Paren{0, N\Paren{0, 32 \cdot k^{-120} \cdot \frac{\log(2kn/\delta)}{\e^2} \cdot \Id}} $.
		Since  by choice of $n,k, \e$ it holds that  $$ 32 \cdot k^{-120} \cdot \frac{\log(2kn/\delta)}{\e^2} \leq O \Paren{k^{-90}} \,,$$
		we get with probability at least $1-e^{-k^{\Omega(1)}}$ for each $\ell \in [k]$, by \cref{fact:gaussian-spectral-concentration}, with $m = 1$, and a union bound that $$\Norm{\mathbf{w}'} \leq O\Paren{k^{-20}}\,.$$

		Since all $\bar{\bm \nu}^{(i)}$ such that $\bm{\Phi}\bar{\bm \nu}^{(i)} \in \mathbf{B}_l$ are at most $O\Paren{k^{-12}}$-far from $\mu_l$, also their average is.
		We conclude that
		\begin{align*}
			\Normt{\hat{\bm \mu_\ell}-\mu_l} \leq O(k^{-12}) + \Normt{\mathbf{w}} \leq O(k^{-12})  \,.
		\end{align*}
		This completes the proof.
	\end{proof}
\end{lemma}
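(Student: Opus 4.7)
The plan is to combine the closeness-to-centers guarantee from \cref{corollary:clustering-closeness-true-centers-subsampling} with a Johnson-Lindenstrauss argument (\cref{lemma:jl}) and a bin-alignment analysis to show that the histogram learner outputs $k$ bins, each corresponding to exactly one true center $\mu_\ell$, whose averages estimate the centers accurately. Throughout, I will condition on the high-probability events of \cref{corollary:clustering-closeness-true-centers-subsampling} so that for every $\ell \in [k]$ the set $\bm\cG_\ell \subseteq \bm\cS$ of "good" indices has size at least $\tfrac{n^{0.01}}{k}(1 - 150/k^{47})$ and every $\bar{\bm\nu}^{(i)}$ with $i \in \bm\cG_\ell$ satisfies $\Normt{\bar{\bm\nu}^{(i)} - \mu_\ell} \leq O(k^{-25})$.

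First, I will apply \cref{lemma:jl} (a union bound over all $O(n^{0.02})$ pairs in $\bm\cS$) to argue that with probability at least $1-n^{-50}$, the projection $\bm\Phi$ distorts pairwise distances among the $\bar{\bm\nu}^{(i)}$ by at most a constant factor. Since the original clusters have intra-cluster diameter $O(k^{-25})$ and inter-cluster distance at least $\Delta - O(k^{-25})$, the projected points in $\R^{d^*}$ inherit these properties up to constants. Consequently, for each $\ell \in [k]$ there is a ball $Q_{\bm\Phi}(\bm\cG_\ell) \subseteq \R^{d^*}$ of radius $O(k^{-25})$ containing every projected point $\bm\Phi\bar{\bm\nu}^{(i)}$ with $i \in \bm\cG_\ell$.

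Next, the bin-alignment step: the histogram learner partitions $\R^{d^*}$ into boxes of side length $b = k^{-15}$ shifted by the uniformly random $\mathbf{q} \in [0,b]$. For a fixed coordinate, a ball of radius $r = O(k^{-25})$ projects onto an interval of length $2r$, and the random shift causes this interval to cross a boundary with probability at most $2r/b = O(k^{-10})$. Since $d^* = 100\log n \leq O(k^5)$ coordinates and $k$ clusters are involved, a union bound gives that with probability $1 - O(k^{-4})$, every $Q_{\bm\Phi}(\bm\cG_\ell)$ is fully contained inside a single histogram bin, which I denote $\mathbf{B}_\ell$. Because the clusters are pairwise $\Omega(\Delta)$-separated in the projected space while each bin has diameter $\sqrt{d^*}b = O(k^{-12}) \ll \Delta$, distinct $\bm\cG_\ell$'s must land in distinct bins $\mathbf{B}_\ell$.

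To conclude that the top-$k$ bins returned by the histogram learner are exactly $\mathbf{B}_1,\ldots,\mathbf{B}_k$, I observe that each $\mathbf{B}_\ell$ contains at least $\tfrac{n^{0.01}}{k}(1 - 150/k^{47})$ points, whereas the total number of "bad" points in $\bm\cS \setminus \bm\cG$ is at most $\tfrac{150 n^{0.01}}{k^{47}}$, far too few to populate a competing bin to the same level. The accuracy of the histogram learner ($\alpha = k^{-10}$) does not perturb this ranking. In particular the algorithm does not reject in \cref{step:clustering-reject-small-cluster}. Finally, since every projected point in $\mathbf{B}_\ell$ corresponds via the JL guarantee to an original vector within $O(k^{-12})$ of $\mu_\ell$, the empirical average is within $O(k^{-12})$ of $\mu_\ell$. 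The Gaussian noise $\mathbf{w}'$ with per-coordinate variance $O(k^{-120} \cdot \log(kn/\delta)/\e^2)$ has norm at most $O(k^{-20})$ with probability $1-e^{-k^{\Omega(1)}}$ by \cref{fact:gaussian-spectral-concentration} and a union bound over $\ell \in [k]$. Triangle inequality and defining the permutation $\pi$ by the assignment $\ell \mapsto \mathbf{B}_\ell$ gives the claimed bound $\Normt{\mu_\ell - \hat{\bm\mu}_{\pi(\ell)}} \leq O(k^{-20})$. The main subtlety is the bin-alignment argument: we must choose $b$ small enough to make the averages accurate yet large enough that a random shift avoids splitting any cluster across bins with high probability, and the asymmetric scaling $b = k^{-15}$ vs. cluster radius $O(k^{-25})$ is exactly what makes the union bound across $k \cdot d^*$ boundary-crossing events succeed.
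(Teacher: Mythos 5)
Your proposal follows essentially the same approach as the paper's proof: condition on \cref{corollary:clustering-closeness-true-centers-subsampling}, use \cref{lemma:jl} to preserve intra-/inter-cluster distances under $\bm\Phi$, argue via the uniformly random shift $\mathbf{q}$ that each projected good cluster fits inside a single histogram bin with failure probability $O(k^{-4})$ after a union bound over the $d^*k$ boundary-crossing events, use the count gap between good and bad points to show the top-$k$ bins are exactly the cluster bins, and finish with the JL lower bound plus the Gaussian-noise bound. Your added remark that the histogram accuracy $\alpha=k^{-10}$ cannot disturb the top-$k$ ranking is a correct, slightly more explicit version of the counting step the paper leaves implicit.
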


Now \cref{theorem:main-technical-clustering} is a trivial consequence.

\begin{proof}[Proof of \cref{theorem:main-technical-clustering}]
	The error guarantees and privacy guarantees immediately follows combining \cref{lemma:clustering-privacy}, \cref{lemma:clustering-closeness-good-inputs},  \cref{lemma:clustering-probability-rejection} and \cref{lemma:clustering-bins-have-large-intersection}.
	The running time follows by \cref{fact:running-time-sos}.
\end{proof}

\phantomsection
\addcontentsline{toc}{section}{Bibliography}
\bibliographystyle{amsalpha}
\typeout{}
\bibliography{bib/custom,bib/scholar,bib/mathreview}

\clearpage
\appendix
\section{Concentration inequalities}\label{section:additional-tools}

We introduce here several useful and standard concentration inequalities.

\begin{fact}[Concentration of spectral norm of Gaussian matrices]\label{fact:gaussian-spectral-concentration}
  Let $\rv{W}\sim\cN(0,1)^{m\times n}$. 
  Then for any $t$, we have 
  \begin{equation*}
    \Pr\Paren{\sqrt{m}-\sqrt{n}-t \leq \sigma_{\min}(\rv{W}) \leq \sigma_{\max}(\rv{W}) \leq \sqrt{m}+\sqrt{n}+t}
    \geq 1-2\exp\Paren{-\frac{t^2}{2}} ,
  \end{equation*}
  where $\sigma_{\min}(\cdot)$ and $\sigma_{\max}(\cdot)$ denote the minimum and the maximum singular values of a matrix, respectively.

  Let $\rv{W}'$ be an $n$-by-$n$ symmetric matrix with independent entries sampled from $N(0,\sigma^2)$.
  Then $\|\rv{W}'\| \leq 3\sigma\sqrt{n}$ with probability at least $1-\exp(-\Omega(n))$.
\end{fact}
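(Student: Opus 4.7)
The plan is to prove the two parts separately, in each case combining Gaussian concentration of measure with a suitable bound on the expectation.

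For the first claim, I would begin by observing that both $\sigma_{\max}$ and $\sigma_{\min}$ are $1$-Lipschitz functions of $\rv{W}$ viewed as a vector in $\R^{mn}$ (in the Euclidean/Frobenius metric), since $\sigma_{\max}(W)=\max_{\|u\|=\|v\|=1} u^\top W v$ and $\sigma_{\min}(W)=\min_{\|u\|=1}\|Wu\|$ are respective suprema/infima of $1$-Lipschitz affine functions of the entries. Next, I would invoke Gordon's minimax inequality, applied to the two Gaussian processes $(u,v)\mapsto u^\top \rv{W}v$ and $(u,v)\mapsto u^\top \bm{g}+v^\top \bm{h}$ with $\bm{g}\sim \cN(0,I_m)$, $\bm{h}\sim \cN(0,I_n)$ independent, to conclude that $\E\,\sigma_{\max}(\rv{W})\leq \sqrt{m}+\sqrt{n}$ and (assuming $m\geq n$) $\E\,\sigma_{\min}(\rv{W})\geq \sqrt{m}-\sqrt{n}$. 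Finally, the Borell--Tsirelson--Ibragimov--Sudakov concentration inequality for $1$-Lipschitz functions of a standard Gaussian vector yields $\Pr(|\sigma_{\max}(\rv{W})-\E\,\sigma_{\max}(\rv{W})|>t)\leq 2\exp(-t^2/2)$ and similarly for $\sigma_{\min}$. Combining with the expectation bounds via a union bound gives the stated probability guarantee with the extra factor of $2$ absorbed.

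For the second claim, I would reduce the symmetric case to a net argument. Take a $(1/4)$-net $\cN$ of the unit sphere $S^{n-1}$ with $|\cN|\leq 9^n$; by the standard duality for symmetric matrices, $\|\rv{W}'\|\leq 2\max_{u\in \cN}|u^\top \rv{W}'u|$. For each fixed unit $u$, the quadratic form $u^\top \rv{W}' u=\sum_i u_i^2 \rv{W}'_{ii}+2\sum_{i<j}u_i u_j \rv{W}'_{ij}$ is a centered Gaussian with variance $\sigma^2(2\|u\|^4-\sum_i u_i^4)\leq 2\sigma^2$, so the standard Gaussian tail gives $\Pr(|u^\top \rv{W}' u|>t)\leq 2\exp(-t^2/(4\sigma^2))$. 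A union bound over $\cN$ with $t=C\sigma\sqrt{n}$ for $C$ sufficiently large (compared to $\sqrt{\log 9}$) yields $\|\rv{W}'\|=O(\sigma\sqrt{n})$ with probability $1-\exp(-\Omega(n))$. To recover the explicit constant $3$ in the statement, one can either refine the net (e.g.\ use a $(1/8)$-net) or, more cleanly, apply Part 1 directly to a decomposition $\rv{W}'=\rv{G}+\rv{G}^\top$ for an appropriate triangular Gaussian matrix $\rv{G}$ whose independent entries have half the variance, in which case the triangle inequality combined with Part 1 applied to $\rv{G}$ yields the right constant.

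The main obstacle is in Part 1: Gordon's expectation bound, while classical, is nontrivial and itself depends on Slepian's inequality together with a careful choice of comparison process. In contrast, the Lipschitz observation, the Gaussian concentration step, and the net argument for Part 2 are routine once that ingredient is in hand.
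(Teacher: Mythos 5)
The paper states this Fact without proof—it appears in Appendix~A as a standard concentration result cited from the literature—so there is no paper proof to compare against, only the correctness of your argument.

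Your Part~1 is the textbook Davidson--Szarek argument (Gordon's comparison for the means, Borell--TIS for concentration) and is correct. One small bookkeeping point: to land exactly at $1-2e^{-t^2/2}$ you should use the \emph{one-sided} tails — the upper deviation of $\sigma_{\max}$ and the lower deviation of $\sigma_{\min}$ — each with probability $\leq e^{-t^2/2}$, and then union-bound; quoting the two-sided bound $2e^{-t^2/2}$ for each and then union-bounding gives $4e^{-t^2/2}$, which is not what's claimed, and there is nothing to ``absorb.''

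Your Part~2 has a genuine issue with both repair options offered. The net argument as you set it up produces $\|\rv{W}'\|\leq \frac{2\sigma\sqrt{n}\sqrt{\log(1+2/\e)}}{1-2\e}$, and this quantity is minimized around $\e\approx 0.05$ at roughly $4.3\,\sigma\sqrt{n}$; refining the net therefore cannot reach the stated constant $3$, because making $\e$ smaller increases the $\sqrt{\log(3/\e)}$ factor faster than it shrinks $\frac{1}{1-2\e}$. Your ``cleaner'' decomposition is the right idea but the word ``triangular'' breaks it: if $\rv{G}$ is triangular, Part~1 (which is stated for a matrix with \emph{all} $mn$ entries i.i.d.\ $\cN(0,1)$) does not apply to $\rv{G}$, and zeroing out entries below the diagonal does not monotonically control the spectral norm. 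The decomposition that works is to take $\rv{G}\sim \cN(0,\sigma^2/2)^{n\times n}$ a \emph{full} i.i.d.\ Gaussian matrix and set $\rv{W}'' = \rv{G}+\rv{G}^\top$; then $\rv{W}''$ has the correct off-diagonal law $\cN(0,\sigma^2)$, and Part~1 with $m=n$ gives $\|\rv{W}''\|\leq 2\|\rv{G}\|\leq \sqrt{2}\,\sigma(2\sqrt{n}+t)\approx 2.83\,\sigma\sqrt{n}$ with probability $1-e^{-\Omega(n)}$. The diagonal of $\rv{W}''$ has variance $2\sigma^2$ instead of $\sigma^2$, but the diagonal mismatch is a diagonal Gaussian matrix whose operator norm is $O(\sigma\sqrt{\log n})=o(\sigma\sqrt{n})$ with the same failure probability, so the bound $\|\rv{W}'\|\leq 3\sigma\sqrt{n}$ follows. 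In short: keep the decomposition route, drop ``triangular,'' and add the diagonal correction.
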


\begin{fact}[Maximum degree of \Erdos-\Renyi graphs] \label{fact:max-deg-random-graph}
  Let $G$ be an \Erdos-\Renyi graph on $n$ vertices with edge probability $p$. 
  Then with probability at least $1-n\exp(-np/3)$, any vertex in $G$ has degree at most $2np$.
\end{fact}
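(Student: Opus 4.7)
The plan is a direct application of a multiplicative Chernoff bound to the degree of each vertex, followed by a union bound over the $n$ vertices.

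First, fix any vertex $v \in V(G)$. Its degree $\mathbf{d}_v$ is the sum of $n-1$ independent Bernoulli$(p)$ random variables (one for each potential edge from $v$ to another vertex), so $\mathbf{d}_v \sim \Bin(n-1, p)$ and $\E[\mathbf{d}_v] = (n-1)p \leq np$. I would then invoke the multiplicative Chernoff bound in the form ``$\Pr[\mathbf{X} \geq (1+\eta)\mu] \leq \exp(-\eta^2 \mu / 3)$ for $\eta \in (0,1]$'', but here we want the deviation above $2np \geq 2(n-1)p$, i.e.\ $\eta = 1$ relative to the true mean $(n-1)p$. Plugging this in yields
\begin{equation*}
\Pr\bigl[\mathbf{d}_v \geq 2np\bigr] \;\leq\; \Pr\bigl[\mathbf{d}_v \geq 2(n-1)p\bigr] \;\leq\; \exp\!\Paren{-\tfrac{(n-1)p}{3}} \;\leq\; \exp\!\Paren{-\tfrac{np}{3} + \tfrac{p}{3}} \;\leq\; e^{1/3}\exp\!\Paren{-\tfrac{np}{3}}.
\end{equation*}
The constant $e^{1/3}$ is a minor annoyance; in the regime where $np$ is bounded below by a constant the stated bound $\exp(-np/3)$ absorbs it, and otherwise the conclusion is vacuous, so I would just state the Chernoff form that gives exactly $\exp(-np/3)$ (using $\mu = (n-1)p$ with a slightly sharper version, or equivalently using $np$ as an upper bound on the mean in the standard one-sided Chernoff bound $\Pr[\mathbf{X} \geq 2\mu] \leq \exp(-\mu/3)$).

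Second, I apply a union bound over all $n$ vertices:
\begin{equation*}
\Pr\Bigl[\exists v \in V(G) : \mathbf{d}_v \geq 2np\Bigr] \;\leq\; n\exp\!\Paren{-\tfrac{np}{3}}.
\end{equation*}
Taking the complement gives the claim.

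I don't expect any substantive obstacle; the only small points to handle carefully are (i) using $(n-1)p \leq np$ correctly inside the Chernoff exponent so that the final bound reads $\exp(-np/3)$ rather than $\exp(-(n-1)p/3)$, and (ii) stating the Chernoff bound in the one-sided multiplicative form that yields exactly the exponent $-\mu/3$ at deviation $2\mu$.
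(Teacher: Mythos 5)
The paper states this fact without proof, so there is nothing to compare against verbatim; your Chernoff-plus-union-bound argument is the standard derivation and is correct. The $(n-1)p$ versus $np$ wrinkle you flag is real, but your "absorb the constant" remark is a bit loose: $n e^{1/3}\exp(-np/3)$ is not literally $\leq n\exp(-np/3)$, so the clean fix is one of the two alternatives you already gesture at — either add an independent dummy $\Ber(p)$ to the degree so that the sum is exactly $\Bin(n,p)$ with mean $np$, or invoke the standard strengthening of the multiplicative Chernoff bound in which $\mu$ may be any upper bound on the mean (so $\Pr[\mathbf{d}_v \geq 2np] \leq \exp(-np/3)$ directly). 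Either way you get exactly the stated exponent, and the union bound over $n$ vertices finishes as you describe.
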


\begin{fact}[Gaussian concentration bounds] \label{fact:Gaussian_concentration}
  Let $\mathbf{X}\sim\cN(0,\sigma^2)$. Then for any $t\geq0$,
  \[
    \max\Set{\Pr\Paren{\mathbf{X}\geq t}, \Pr\Paren{\mathbf{X}\leq-t}}  \leq \exp\Paren{-\frac{t^2}{2\sigma^2}} .
  \]
\end{fact}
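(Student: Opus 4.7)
The plan is to use the standard Chernoff/moment-generating-function argument for a centered Gaussian, then invoke symmetry to handle the lower tail.

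First, I would recall that for $\mathbf{X} \sim \cN(0, \sigma^2)$, the moment generating function is $\E[e^{\lambda \mathbf{X}}] = e^{\lambda^2 \sigma^2 / 2}$ for every $\lambda \in \R$. Applying Markov's inequality to $e^{\lambda \mathbf{X}}$ for any $\lambda > 0$ yields
\begin{align*}
    \Pr(\mathbf{X} \geq t) \;=\; \Pr\Paren{e^{\lambda \mathbf{X}} \geq e^{\lambda t}} \;\leq\; e^{-\lambda t} \cdot \E[e^{\lambda \mathbf{X}}] \;=\; \exp\Paren{-\lambda t + \tfrac{1}{2}\lambda^2 \sigma^2}.
\end{align*}
Minimizing the right-hand side in $\lambda$ by taking the derivative and setting it to zero gives $\lambda^\star = t/\sigma^2$, which substituted back yields
\begin{align*}
    \Pr(\mathbf{X} \geq t) \;\leq\; \exp\Paren{-\tfrac{t^2}{2\sigma^2}}.
\end{align*}

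Second, for the lower-tail bound, I would use the symmetry of the centered Gaussian: $-\mathbf{X} \sim \cN(0, \sigma^2)$ as well, so
\begin{align*}
    \Pr(\mathbf{X} \leq -t) \;=\; \Pr(-\mathbf{X} \geq t) \;\leq\; \exp\Paren{-\tfrac{t^2}{2\sigma^2}}.
\end{align*}
Combining these two bounds yields the desired inequality on $\max\Set{\Pr(\mathbf{X} \geq t), \Pr(\mathbf{X} \leq -t)}$.

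There is no real obstacle here; the argument is textbook. The only mild care needed is verifying the MGF identity for $\cN(0,\sigma^2)$ (a direct Gaussian integral completing the square) and noting the optimization over $\lambda$ is unconstrained in $(0, \infty)$ with the optimum being attained at a positive value, which is automatic because $t \geq 0$.
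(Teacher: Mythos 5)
Your proof is correct and is the standard Chernoff/moment-generating-function argument for Gaussian tails. The paper states this as a well-known fact without supplying a proof, so there is no paper argument to compare against; your derivation (MGF bound, optimize $\lambda = t/\sigma^2$, symmetry for the lower tail) is exactly the textbook route one would cite.
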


\begin{fact}[Chernoff bound] \label{fact:Chernoff}
  Let $\mathbf{X_1}, \dots, \mathbf{X_n}$ be independent random variables taking values in $\{0,1\}$. 
  Let $\mathbf{X} := \sum_{i=1}^{n} \mathbf{X_i}$ and let $\mu := \E\mathbf{X}$.
  Then for any $\delta > 0$,
  \[
    \Pr\Paren{\mathbf{X}\leq(1-\delta)\mu} \leq \exp\Paren{-\frac{\delta^2\mu}{2}}, 
  \]
  \[
    \Pr\Paren{\mathbf{X}\geq(1+\delta)\mu} \leq \exp\Paren{-\frac{\delta^2\mu}{2+\delta}} .
  \]
\end{fact}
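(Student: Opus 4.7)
The plan is to prove the two tail bounds by the standard exponential moment (Bernstein/Chernoff) method. For the upper tail, I would fix $t > 0$ and apply Markov's inequality to the nonnegative random variable $e^{t \mathbf{X}}$, obtaining
\[
\Pr\Paren{\mathbf{X} \geq (1+\delta)\mu} \;=\; \Pr\Paren{e^{t\mathbf{X}} \geq e^{t(1+\delta)\mu}} \;\leq\; e^{-t(1+\delta)\mu} \cdot \E\bigl[e^{t\mathbf{X}}\bigr].
\]
Then I would use independence of the $\mathbf{X}_i$ to factor the moment generating function: $\E[e^{t\mathbf{X}}] = \prod_{i=1}^n \E[e^{t\mathbf{X}_i}]$. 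For each Bernoulli $\mathbf{X}_i$ with mean $p_i$, one has $\E[e^{t\mathbf{X}_i}] = 1 + p_i(e^t - 1) \leq \exp\bigl(p_i(e^t - 1)\bigr)$ by $1+x \leq e^x$. Multiplying across $i$ gives $\E[e^{t\mathbf{X}}] \leq \exp\bigl(\mu(e^t - 1)\bigr)$, so
\[
\Pr\Paren{\mathbf{X} \geq (1+\delta)\mu} \;\leq\; \exp\Bigl(\mu(e^t - 1) - t(1+\delta)\mu\Bigr).
\]

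The next step is to optimize over $t$; the optimum is $t = \log(1+\delta) > 0$, which yields the classical bound $\bigl(e^{\delta}/(1+\delta)^{1+\delta}\bigr)^\mu$. From here the plan is purely analytic: show $\log\bigl((1+\delta)^{1+\delta}/e^{\delta}\bigr) \geq \delta^2/(2+\delta)$ for all $\delta > 0$, e.g.\ by defining $h(\delta) = (1+\delta)\log(1+\delta) - \delta - \delta^2/(2+\delta)$ and checking $h(0) = 0$ and $h'(\delta) \geq 0$. This would give the stated upper-tail inequality $\Pr(\mathbf{X} \geq (1+\delta)\mu) \leq \exp(-\delta^2 \mu/(2+\delta))$.

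For the lower tail the structure is symmetric: apply Markov's inequality to $e^{-t\mathbf{X}}$ for $t > 0$, use independence and $\E[e^{-t\mathbf{X}_i}] \leq \exp\bigl(p_i(e^{-t}-1)\bigr)$, and optimize at $t = -\log(1-\delta)$ (valid since $\delta \in (0,1)$ is the only nontrivial case, as the bound is vacuous for $\delta \geq 1$). This produces $\bigl(e^{-\delta}/(1-\delta)^{1-\delta}\bigr)^\mu$, and the remaining task is to verify $(1-\delta)\log(1-\delta) + \delta \geq \delta^2/2$ on $[0,1)$, again by a one-variable calculus check. I do not anticipate any real obstacle; the only mildly delicate step is the scalar inequality that converts the raw Chernoff bound into the clean $\delta^2\mu/(2+\delta)$ and $\delta^2\mu/2$ forms, but these are standard and can be cited from any textbook treatment (e.g.\ Mitzenmacher–Upfal).
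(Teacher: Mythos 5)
The paper states this Chernoff bound as a known fact in the concentration-inequalities appendix without giving a proof, so there is no paper argument to compare against. Your proposal is the classical Chernoff argument — Markov's inequality on $e^{\pm t\mathbf{X}}$, independence to factor the moment generating function, the bound $1+x \le e^x$ for each Bernoulli factor, optimization at $t = \log(1+\delta)$ (resp. $t=-\log(1-\delta)$), and finally the scalar inequalities $(1+\delta)\log(1+\delta)-\delta \ge \delta^2/(2+\delta)$ and $(1-\delta)\log(1-\delta)+\delta \ge \delta^2/2$ — and it is correct; both scalar inequalities check out by the one-variable calculus you sketch (for the first, $h'(\delta)=\log(1+\delta)-\delta(4+\delta)/(2+\delta)^2 \ge 0$ follows from $\log(1+\delta)\ge 2\delta/(2+\delta)$; for the second, $g''(\delta)=\delta/(1-\delta)\ge 0$ with $g(0)=g'(0)=0$). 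One small precision point: the lower-tail statement is not literally vacuous at $\delta=1$, since $\Pr(\mathbf{X}\le 0)=\Pr(\mathbf{X}=0)=\prod_i(1-p_i)$ need not vanish; but it still satisfies the bound via $\prod_i(1-p_i)\le e^{-\mu}\le e^{-\mu/2}$, and for $\delta>1$ the left side is $0$ when $\mu>0$ (and both sides are $1$ when $\mu=0$), so restricting the optimization of $t$ to $\delta\in(0,1)$ is harmless.
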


\begin{lemma}[\cite{johnson1984extensions}]\label{lemma:jl}
	Let $\Phi$ be a $d$-by-$n$ Gaussian matrix, with each entry independently chosen from $N(0, 1/d)$. Then, for every vector $u \in \R^n$ and every $\alpha \in(0,1)$
	\begin{align*}
		\bbP\Paren{\Norm{\Phi u} = \Paren{1\pm \alpha}\Norm{u}}\geq 1-e^{-\Omega(\alpha^2 d)}\,.
	\end{align*}
\end{lemma}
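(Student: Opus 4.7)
The plan is to reduce the statement to concentration of a chi-squared random variable by exploiting the rotational invariance of the Gaussian distribution, then apply a standard tail bound.

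First, by the $1$-homogeneity of both sides in $u$, I would assume without loss of generality that $\Norm{u}_2 = 1$. Next, since the rows of $\Phi$ are i.i.d.\ with distribution $N(0,\tfrac{1}{d} \Id_n)$, which is rotationally invariant, the distribution of $\Phi u$ depends on $u$ only through $\Norm{u}_2$. Hence I may further assume that $u = e_1$ is the first standard basis vector. Then $\Phi u$ is simply the first column of $\Phi$, which is distributed as $\tfrac{1}{\sqrt{d}}\,(g_1,\dots,g_d)^\top$ for i.i.d.\ $g_i \sim N(0,1)$. Consequently
\[
  \Norm{\Phi u}_2^2 \;=\; \frac{1}{d}\sum_{i=1}^d g_i^2 \;=\; \frac{1}{d}\,\chi^2_d.
\]

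The second step is to invoke a standard chi-squared concentration inequality (for instance the Laurent--Massart bound, or equivalently Bernstein's inequality applied to the subexponential random variables $g_i^2 - 1$), which gives for every $t \in (0,1)$
\[
  \Pr\!\Paren{\Abs{\tfrac{1}{d}\sum_{i=1}^d g_i^2 - 1} > t} \;\le\; 2\exp\!\Paren{-c\,d\,t^2}
\]
for an absolute constant $c>0$. Setting $t = \alpha$ (which is in $(0,1)$ by assumption), this yields
\[
  \Pr\!\Paren{\Abs{\Norm{\Phi u}_2^2 - 1} \le \alpha} \;\ge\; 1 - 2\exp(-c\,\alpha^2 d).
\]

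Finally, to pass from squared-norm concentration to norm concentration, I observe that on the event $\Abs{\Norm{\Phi u}_2^2 - 1} \le \alpha$ one has $\Norm{\Phi u}_2 \in [\sqrt{1-\alpha},\sqrt{1+\alpha}] \subseteq [1-\alpha,\,1+\alpha]$, using the elementary bounds $\sqrt{1-\alpha} \ge 1-\alpha$ and $\sqrt{1+\alpha} \le 1+\alpha$ for $\alpha \in [0,1]$. This gives the claimed bound $\Pr(\Norm{\Phi u}_2 = (1\pm\alpha)\Norm{u}_2) \ge 1 - e^{-\Omega(\alpha^2 d)}$ after absorbing the factor of $2$ into the $\Omega(\cdot)$.

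There is no real obstacle: the only step that requires a named tool is the chi-squared tail bound, and the conversion from $\Norm{\Phi u}_2^2$ to $\Norm{\Phi u}_2$ is a one-line elementary inequality. The rotational-invariance reduction is what keeps the argument trivial and avoids having to prove a union bound or an $\epsilon$-net argument (which would only be needed in the \emph{uniform} version of JL over many vectors).
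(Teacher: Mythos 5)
Your proof is correct. The paper does not prove this lemma; it is stated as a citation to prior work (the Johnson--Lindenstrauss concentration bound), so there is no in-paper argument to compare against. Your reduction via rotational invariance to $\tfrac{1}{d}\chi^2_d$, followed by a subexponential/Laurent--Massart tail bound and the elementary passage from $\Abs{\Norm{\Phi u}^2 - 1} \le \alpha$ to $\Norm{\Phi u} \in [1-\alpha, 1+\alpha]$, is the standard and complete proof of this single-vector concentration statement; no $\epsilon$-net or union bound is needed, as you note, since the lemma is not a uniform statement over a set of vectors.
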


\section{Linear algebra}\label{section:linear_algebra}

\begin{lemma}[Weyl's inequality] \label{lem:Weyl}
  Let $A$ and $B$ be symmetric matrices. 
  Let $R = A - B$.
  Let $\alpha_1 \geq \cdots \geq \alpha_n$ be the eigenvalues of $A$.
  Let $\beta_1 \geq \cdots \geq \beta_n$ be the eigenvalues of $B$.
  Then for each $i\in[n]$,
  \begin{equation*}
    \Abs{\alpha_i-\beta_i} \leq \Norm{R}.
  \end{equation*}
\end{lemma}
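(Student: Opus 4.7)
The plan is to derive Weyl's inequality from the Courant--Fischer min-max characterization of eigenvalues of symmetric matrices. Recall that for any symmetric $M \in \R^{n \times n}$ with eigenvalues $\lambda_1(M) \geq \cdots \geq \lambda_n(M)$, we have
\begin{equation*}
  \lambda_i(M) \;=\; \max_{\substack{S \subseteq \R^n \\ \dim S = i}} \; \min_{\substack{x \in S \\ \|x\|_2 = 1}} \; x^\top M x.
\end{equation*}

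First, I would take $S^\star$ to be the span of eigenvectors of $A$ corresponding to $\alpha_1, \ldots, \alpha_i$, so that every unit vector $x \in S^\star$ satisfies $x^\top A x \geq \alpha_i$. For such an $x$, writing $B = A - R$ and using the elementary bound $|x^\top R x| \leq \|R\|$ for unit $x$, we obtain
\begin{equation*}
  x^\top B x \;=\; x^\top A x - x^\top R x \;\geq\; \alpha_i - \|R\|.
\end{equation*}
Plugging $S^\star$ into the max-min formula for $\beta_i$ gives $\beta_i \geq \alpha_i - \|R\|$.

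Next, I would swap the roles of $A$ and $B$ and repeat the same argument with $A = B + R$, noting that $\|-R\| = \|R\|$. This yields $\alpha_i \geq \beta_i - \|R\|$, and combining the two inequalities produces $|\alpha_i - \beta_i| \leq \|R\|$, as required.

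This proof is entirely standard and I do not anticipate any real obstacle; the only nontrivial ingredient is the Courant--Fischer theorem, which I would simply cite. If one wanted to avoid invoking Courant--Fischer as a black box, one could instead argue via a dimension-counting argument: the span $S^\star$ above has dimension $i$, while the span $T$ of eigenvectors of $B$ corresponding to $\beta_{i+1}, \ldots, \beta_n$ has dimension $n-i$, so $S^\star \cap T \neq \{0\}$, and any unit vector in the intersection gives $\alpha_i \leq x^\top A x \leq x^\top B x + \|R\| \leq \beta_i + \|R\|$ --- but this just unpacks the same idea.
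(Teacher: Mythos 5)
Your proof is correct and is the standard derivation of Weyl's inequality from the Courant--Fischer min-max theorem. The paper itself states this lemma in its linear-algebra appendix without any proof, treating it as a classical textbook fact, so there is nothing in the paper to compare against; your argument fills that gap soundly, and both the min-max route and the dimension-counting variant you sketch are entirely standard and would serve equally well.
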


\begin{lemma}[Davis-Kahan's theorem] \label{lem:Davis-Kahan}
  Let $A$ and $B$ be symmetric matrices. 
  Let $R = A - B$. 
  Let $\alpha_1 \geq \cdots \geq \alpha_n$ be the eigenvalues of $A$ with corresponding eigenvectors $v_1,\dots,v_n$. 
  Let $\beta_1 \geq \cdots \geq \beta_n$ be the eigenvalues of $B$ with corresponding eigenvectors $u_1,\dots,u_n$. 
  Let $\theta_i$ be the angle between $\pm v_i$ and $\pm u_i$ . 
  Then for each $i\in[n]$,
  \begin{equation*}
    \sin(2\theta_i) \leq \frac{2\Norm{R}}{\min_{j\neq i}\Abs{\alpha_i-\alpha_j}} .
  \end{equation*}
\end{lemma}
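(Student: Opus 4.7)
The plan is to use the standard eigenbasis expansion argument. I would fix $i$ and expand $u_i$ in the orthonormal basis of $A$: write $u_i = \sum_{j=1}^n d_j v_j$ with $d_j = \langle u_i, v_j\rangle$. Then $\sum_j d_j^2 = 1$, and after choosing signs so that $d_i \geq 0$, one has $\cos\theta_i = d_i$ and $\sin^2\theta_i = \sum_{j\neq i} d_j^2$.

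The key identity comes from applying $A - \alpha_i I$ to $u_i$ and using $Au_i = Bu_i + Ru_i = \beta_i u_i + Ru_i$:
\[
    \sum_{j \neq i}(\alpha_j - \alpha_i)\, d_j\, v_j \;=\; (A - \alpha_i I)u_i \;=\; Ru_i + (\beta_i - \alpha_i) u_i.
\]
Taking squared norms of the left-hand side gives the lower bound $\sum_{j\neq i}(\alpha_j - \alpha_i)^2 d_j^2 \geq g^2 \sin^2\theta_i$, where $g := \min_{j\neq i}|\alpha_i - \alpha_j|$. To turn this into a bound on $\sin(2\theta_i) = 2\sin\theta_i\cos\theta_i$ rather than just $\sin\theta_i$, I would take the inner product of the identity with $v_i$. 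Since the left-hand side is orthogonal to $v_i$, this yields the scalar relation $\langle v_i, Ru_i\rangle = (\alpha_i - \beta_i) d_i$, which is the mechanism by which the $\cos\theta_i$ factor is extracted.

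Combining these pieces is the delicate step. The strategy is to multiply through by $\cos\theta_i = d_i$ and recognize $2\sin\theta_i\cos\theta_i$ on the left, while on the right using Cauchy–Schwarz together with Weyl's inequality (\cref{lem:Weyl}) to bound $|\beta_i - \alpha_i| \leq \|R\|$ and $\|Ru_i\| \leq \|R\|$. The goal is the sharp estimate $g \cdot \sin\theta_i \cdot \cos\theta_i \leq \|R\|$, from which $\sin(2\theta_i) \leq 2\|R\|/g$ follows immediately.

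The main obstacle is precisely the constant. A naive application of the triangle inequality to the right-hand side of the key identity gives only $\sin\theta_i \leq 2\|R\|/g$, which would produce $\sin(2\theta_i) \leq 4\|R\|/g$. Recovering the factor of $2$ stated in the lemma requires isolating the $v_i$-component of $Ru_i$ from its orthogonal complement and pairing the orthogonal piece (which carries the $\sin\theta_i$) with the inner-product identity (which carries the $\cos\theta_i$), so that $\sin\theta_i\cos\theta_i$ appears naturally rather than being bounded by $\sin\theta_i$ alone. This is the classical Davis–Kahan $\sin(2\theta)$ trick, and once it is executed, no additional ingredient beyond Weyl's inequality and orthogonality is needed.
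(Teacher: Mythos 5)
The paper states this lemma in Appendix~B alongside Weyl's inequality as a standard linear-algebra fact and offers no proof of its own, so there is no in-paper argument to compare against. Your outline is the classical $\sin 2\theta$ proof and, once the final step is filled in, it is correct.

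The one place where your sketch is vaguer than it needs to be is the ``delicate step,'' and it is worth making explicit that the factor of $2$ does not come from ``multiplying through by $\cos\theta_i$'' but from a specific split followed by Cauchy--Schwarz. Writing $s=\sin\theta_i$, $c=\cos\theta_i$, $\eta=\abs{\beta_i-\alpha_i}$, and $g=\min_{j\neq i}\abs{\alpha_i-\alpha_j}$: from the key identity, the component orthogonal to $v_i$ gives $s(g-\eta)\leq \Normt{P_{v_i}^{\perp}Ru_i}$ (the trivial case $g\leq\eta\leq\Norm{R}$ is handled separately, since there $2\Norm{R}/g\geq 2$), and the $v_i$-component gives $\eta c=\abs{\iprod{v_i,Ru_i}}$. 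Now split
\begin{equation*}
  scg \;=\; c\cdot s(g-\eta)\;+\;s\cdot \eta c
  \;\leq\; c\,\Normt{P_{v_i}^{\perp}Ru_i}+s\,\abs{\iprod{v_i,Ru_i}}
  \;\leq\; \sqrt{c^2+s^2}\cdot\Normt{Ru_i}\;\leq\;\Norm{R},
\end{equation*}
where the last two inequalities are Cauchy--Schwarz on $\R^2$ and the Pythagorean decomposition of $Ru_i$ along $v_i$ and $v_i^{\perp}$. This yields $\sin 2\theta_i=2sc\leq 2\Norm{R}/g$ with the stated constant. Your description (``pairing the orthogonal piece with the inner-product identity'') is gesturing at exactly this split, and you correctly name Cauchy--Schwarz and Weyl as the only external ingredients, so the proposal is sound; I would just record the split $scg=c\cdot s(g-\eta)+s\cdot \eta c$ explicitly, since without it the reader cannot verify that the constant $2$ rather than $4$ emerges.
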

\section{Convex optimization} \label{section:cvx_opt}

\begin{proposition} \label{proposition:neighborhood-minimizer}
  Let $f:\R^m\to\R$ be a convex function.
  Let $\cK\subseteq\R^m$ be a convex set.
  Then $y^*\in\cK$ is a minimizer of $f$ over $\cK$ if and only if there exists a subgradient $g\in\partial f(y^*)$ such that
  \begin{equation*} %
    \Iprod{y-y^*, g} \geq 0 \quad \forall y\in\cK .
  \end{equation*}
\end{proposition}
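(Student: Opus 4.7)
\medskip

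\noindent\textbf{Proof proposal.} The statement is the standard first-order optimality condition for constrained convex optimization, and I would prove the two implications separately.

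For the easy direction, suppose such a subgradient $g \in \partial f(y^*)$ exists. Then for any $y \in \cK$, the defining subgradient inequality gives
\begin{equation*}
  f(y) \;\geq\; f(y^*) + \iprod{g,\, y - y^*} \;\geq\; f(y^*),
\end{equation*}
so $y^*$ is a minimizer of $f$ on $\cK$. This step is essentially a one-liner.

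For the converse, which I expect to be the main obstacle, the plan is to reformulate the constrained problem as an unconstrained one and use convex subdifferential calculus. Let $\iota_\cK : \R^m \to \R \cup \{+\infty\}$ denote the indicator of $\cK$, and observe that $y^*$ minimizes $f$ over $\cK$ if and only if it minimizes $f + \iota_\cK$ over $\R^m$. The unconstrained optimality condition gives $0 \in \partial(f + \iota_\cK)(y^*)$. Since $f$ is convex and finite on all of $\R^m$, it is continuous everywhere, so the Moreau--Rockafellar constraint qualification is automatic and the sum rule applies:
\begin{equation*}
  \partial(f + \iota_\cK)(y^*) \;=\; \partial f(y^*) + \partial \iota_\cK(y^*) \;=\; \partial f(y^*) + N_\cK(y^*),
\end{equation*}
where $N_\cK(y^*) = \{v \in \R^m : \iprod{v,\, y - y^*} \leq 0 \text{ for all } y \in \cK\}$ is the normal cone. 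Hence there exist $g \in \partial f(y^*)$ and $w \in N_\cK(y^*)$ with $g + w = 0$, i.e., $w = -g$. Unpacking the normal-cone condition $\iprod{-g,\, y - y^*} \leq 0$ yields $\iprod{y - y^*,\, g} \geq 0$ for every $y \in \cK$, as required.

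The only nontrivial ingredient is the subdifferential sum rule, which would normally be cited from a standard convex analysis reference (e.g., Rockafellar's \emph{Convex Analysis}); the verification of the constraint qualification is immediate here because $\dom f = \R^m$ forces a nonempty intersection of relative interiors with $\dom \iota_\cK = \cK$. If a more elementary proof is preferred, one can avoid the sum rule by a direct separating-hyperplane argument: apply a supporting hyperplane theorem to separate the origin from the convex set $\partial f(y^*) + N_\cK(y^*)$ (which is closed and convex under mild assumptions) and derive a contradiction with the directional-derivative characterization $f'(y^*; d) = \max_{g \in \partial f(y^*)} \iprod{g, d} \geq 0$ for every $d$ of the form $y - y^*$ with $y \in \cK$.
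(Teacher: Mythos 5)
Your proof is correct and takes essentially the same approach as the paper's: reformulate the constrained problem via the indicator function $I_\cK$, apply the subdifferential sum rule to get $0\in\partial f(y^*)+\partial I_\cK(y^*)$, and identify $\partial I_\cK(y^*)$ with the normal cone. You additionally spell out the sufficiency direction and the Moreau--Rockafellar constraint qualification, which the paper leaves implicit.
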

\begin{proof}
  Define indicator function
  \begin{equation*}
    I_{\cK}(y) = 
    \begin{cases}
      0, & y\in\cK , \\
      \infty, & y\notin\cK .
    \end{cases}
  \end{equation*}
  Then for $y\in\cK$, one has
  \begin{equation*}
    \partial I_\cK(y) = \Set{g\in\R^m : \Iprod{g,y-y'}
    \geq0 ~\forall y'\in\cK} .
  \end{equation*}
  Note $y^*$ is a minimizer of $f$ over $\cK$, if and only if $y^*$ is a minimizer of $f+I_\cK$ over $\R^m$, if and only if $\zero_m\in\partial(f+I_\cK)(y^*) = \partial f(y^*) + \partial I_\cK(y^*)$, if and only if there exists $g\in\partial f(y^*)$ such that $\iprod{g,y-y^*}\geq0$ for any $y\in\cK$.
\end{proof}

\begin{proposition}[Pythagorean theorem from strong convexity] \label{proposition:triangle-inequality-strong-convexity}
  Let $f:\R^m\to\R$ be a convex function.
  Let $\cK\subseteq\R^m$ be a convex set.
  Suppose $f$ is $\kappa$-strongly convex over $\cK$.
  Let $x^*\in\cK$ be a minimizer of $f$ over $\cK$.
  Then for any $x\in\cK$, one has
  \begin{equation*}
    \Norm{x-x^*}^2 \leq \frac{2}{\kappa}(f(x)-f(x^*)) .
  \end{equation*}
\end{proposition}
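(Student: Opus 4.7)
The plan is to combine the first-order optimality condition for constrained convex optimization (Proposition~\ref{proposition:neighborhood-minimizer}, just proved) with the subgradient characterization of strong convexity. The statement is a standard fact, so the proof will be short; the only subtlety is that $f$ is assumed to be strongly convex only on $\cK$ rather than on all of $\R^m$, and that we need to use a \emph{constrained} optimality condition, not simply $\mathbf{0} \in \partial f(x^*)$.

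First, I would recall (or, if needed, briefly justify) the subgradient form of $\kappa$-strong convexity: for every $x, y \in \cK$ and every subgradient $g \in \partial f(y)$,
\begin{equation*}
  f(x) \;\geq\; f(y) + \iprod{g,\, x - y} + \tfrac{\kappa}{2}\,\Norm{x - y}^2.
\end{equation*}
This is the standard consequence of the defining inequality
\(f(tx + (1-t)y) \leq t f(x) + (1-t) f(y) - \tfrac{\kappa}{2} t(1-t)\Norm{x-y}^2\),
obtained by rearranging and letting $t \to 0$.

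Next I would apply this inequality at $y = x^*$. By Proposition~\ref{proposition:neighborhood-minimizer}, since $x^*$ is a minimizer of $f$ over $\cK$, there exists a subgradient $g \in \partial f(x^*)$ such that $\iprod{x - x^*,\, g} \geq 0$ for every $x \in \cK$. Plugging this particular $g$ into the strong-convexity inequality above yields, for every $x \in \cK$,
\begin{equation*}
  f(x) \;\geq\; f(x^*) + \iprod{g,\, x - x^*} + \tfrac{\kappa}{2}\,\Norm{x - x^*}^2 \;\geq\; f(x^*) + \tfrac{\kappa}{2}\,\Norm{x - x^*}^2,
\end{equation*}
and rearranging gives $\Norm{x - x^*}^2 \leq \tfrac{2}{\kappa}\bigl(f(x) - f(x^*)\bigr)$, as claimed.

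There is no real obstacle here: the proof is a one-line combination of two ingredients that are either already available in the paper (the optimality condition from the preceding proposition) or completely standard (the subgradient form of strong convexity). The only care needed is to make sure we use the \emph{constrained} optimality condition $\iprod{x-x^*, g} \geq 0$ rather than assuming $g = \mathbf{0}$, since in general $x^*$ may lie on the boundary of $\cK$.
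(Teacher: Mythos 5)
Your proof is correct and follows essentially the same path as the paper's: invoke the subgradient inequality for strong convexity at $y = x^*$, then use \cref{proposition:neighborhood-minimizer} to pick a subgradient $g \in \partial f(x^*)$ with $\iprod{x - x^*, g} \ge 0$ and drop that term. Your explicit note that one must use the constrained optimality condition (rather than $g = \mathbf{0}$) matches the care taken in the paper.
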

\begin{proof}
  By strong convexity, for any subgradient $g\in\partial f(x^*)$ one has
  \begin{equation*}
    f(x) \geq f(x^*) + \Iprod{x-x^*, g} + \frac{\kappa}{2}\Norm{x-x^*}^2 .
  \end{equation*}
  By \cref{proposition:neighborhood-minimizer}, $\iprod{x-x^*,g} \geq 0$ for some $g\in\partial f(x^*)$. Then the result follows.
\end{proof}
\section{Deferred proofs SBM}
\label{section:deferred-proofs-sbm}

We prove \cref{lemma:GV} restated below.

\begin{lemma}[Restatement of \cref{lemma:GV}]
  Consider the settings of \cref{lem:utility_priavte_weak_recovery}.
  With probability $1-\exp(-\Omega(n))$ over $\rv{G}\sim\SBM_n(\gamma,d,x)$,
  \begin{equation*}
    \Norm{\hat{X}(Y(\rv{G})) - \frac{1}{n} xx^\top}_F^2 \leq \frac{800}{\gamma\sqrt{d}} .
  \end{equation*}
\end{lemma}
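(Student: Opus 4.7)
The plan is to follow the standard Guédon--Vershynin template for analyzing the basic SDP. The starting point is that $X^* := \tfrac{1}{n}xx^\top$ lies in $\cK$ (it is PSD with diagonal $1/n$, since $x \in \{\pm 1\}^n$), so the first-order optimality condition for the projection $\hat{X}(Y(G))$ of $Y(G)$ onto $\cK$ gives, after rearranging in the standard way,
\[
  \Norm{\hat{X}(Y(G)) - X^*}_F^2 \;\leq\; \Iprod{Y(G) - X^*,\; \hat{X}(Y(G)) - X^*}.
\]
This replaces the goal of bounding a Frobenius error by the goal of bounding a single inner product involving the ``noise'' $Y(G) - X^*$ against an object in the convex set $\cK - \cK$.

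Next I would compute $\E Y(G)$: a direct calculation using $\E A(G)_{ij} = (1 + \gamma x_i x_j)d/n$ for $i \neq j$ shows $\E Y(G) = X^*$ off the diagonal, with an $O(1/(\gamma n)) \cdot I$ correction on the diagonal whose contribution is negligible. Hence $Y(G) - X^* = \tfrac{1}{\gamma d}(A(G) - \E A(G)) + D$ where $D$ is a diagonal matrix of operator norm $O(1/(\gamma n))$. I would then upper-bound the inner product by $\Norm{Y(G) - X^*}_{\mathrm{op}} \cdot \Norm{\hat{X}(Y(G)) - X^*}_{\mathrm{nuc}}$, and note that the nuclear norm factor is at most $2$ since both $\hat{X}(Y(G))$ and $X^*$ are PSD matrices of trace $1$.

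The remaining task, and the main obstacle, is to prove a spectral concentration bound of the form $\Norm{A(G) - \E A(G)}_{\mathrm{op}} \leq c\sqrt{d}$ with probability $1 - \exp(-\Omega(n))$, for an explicit constant $c$ small enough to yield the stated constant $800$. In the regime $\gamma \sqrt{d} \geq 12800$ (in particular $d \geq \Omega(1)$) this is essentially the Feige--Ofek / Bandeira--van Handel concentration result for inhomogeneous Erd\H{o}s--R\'enyi matrices, whose exponential-in-$n$ failure probability matches what we need. If $d$ is too small for a direct matrix Bernstein bound to give a tight enough constant, the alternative route is to replace the spectral/nuclear H\"older step by the $\infty\to 1$ / Grothendieck version: write $\Iprod{M, \hat{X}(Y(G)) - X^*} \lesssim \tfrac{1}{n}\Norm{M}_{\infty\to 1}$ using the Grothendieck-type inequality available for the set $\cK$ (this is exactly the Guédon--Vershynin route), and then concentrate $\Norm{A(G) - \E A(G)}_{\infty\to 1} \leq c' n\sqrt{d}$ via Bernstein's inequality together with a union bound over $\{\pm 1\}^n \times \{\pm 1\}^n$, at which point the $\exp(-\Omega(n))$ failure probability falls out naturally. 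Combining either concentration bound with the two deterministic inequalities above yields $\Norm{\hat{X}(Y(G)) - X^*}_F^2 \leq 2 \cdot \tfrac{c}{\gamma\sqrt{d}} + o(1/(\gamma\sqrt{d})) \leq 800/(\gamma\sqrt{d})$, as required.
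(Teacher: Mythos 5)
Your second branch---the Grothendieck/$\infty\to1$-norm route---is exactly the paper's proof: it compares objective values to get $\normf{\hat X-X^*}^2 \leq 2\iprod{\hat X-X^*,\, Y(\rv G)-X^*}$, applies Fact 3.2 of Gu\'edon--Vershynin (the Grothendieck-type bound $\abs{\iprod{Z,M}} \leq \tfrac{K_G}{n}\normio{M}$ for $Z\in\cK$) to $\hat X$ and $X^*$ separately, and closes with a Bernstein-plus-union-bound estimate $\normio{Y(\rv G)-X^*} \leq \tfrac{100n}{\gamma\sqrt d}$ with probability $1-e^{-\Omega(n)}$ (\cref{fact:cut_norm_sbm}), yielding the constant $400 K_G \leq 800$. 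One warning about your first branch: the spectral concentration $\normop{A(\rv G)-\E A(\rv G)} \leq O(\sqrt d)$ with failure probability $e^{-\Omega(n)}$ is generally \emph{false} when $d = o(\log n)$, a regime not excluded by the hypothesis $\gamma\sqrt d \geq \Omega(1)$; heavy-degree vertices inflate the operator norm and one would need degree regularization (as in Feige--Ofek) to recover $O(\sqrt d)$, and even then not at exponential-in-$n$ failure probability. The $\infty\to1$ norm, in contrast, concentrates at rate $O(n\sqrt d)$ with $e^{-\Omega(n)}$ failure probability uniformly in $d$, which is precisely why Gu\'edon--Vershynin (and this paper) use it. Your first-order-optimality version of the opening inequality, without the factor of $2$, is also valid and marginally tighter than the paper's objective-comparison form; either suffices.
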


\begin{proof}
  Recall $\cK = \{X\in\R^{n\times n} : X\succeq0, X_{ii}=1/n ~\forall i\}$.
  Let $X^* := \frac{1}{n} xx^\top$.
  Since $\hat{X}=\hat{X}(Y(\rv{G}))$ is a minimizer of $\min_{X\in\cK} \norm{Y(\rv{G})-X}_F^2$ and $X^*\in\cK$, we have
  \begin{equation*}
    \Norm{\hat{X}-Y(\rv{G})}_F^2 \leq \Norm{X^*-Y(\rv{G})}_F^2 
    \iff
    \Norm{\hat{X}-X^*}_F^2 \leq 2 \Iprod{\hat{X}-X^*, Y(\rv{G})-X^*} .
  \end{equation*}
  The infinity-to-one norm of a matrix $M\in\R^{m\times n}$ is defined as 
  \begin{equation*}
    \Normio{M} := \max\Set{\Iprod{u,Mv} : u\in\sbits^m, v\in\sbits^n } .
  \end{equation*}
  By \cite[Fact 3.2]{MR3520025-Guedon16}, every $Z\in\cK$ satisfies
  \begin{equation*}
    \Abs{ \Iprod{Z, Y(\rv{G})-X^*} } \leq \frac{K_G}{n} \cdot \Normio{Y(\rv{G})-X^*} ,
  \end{equation*}
  where $K_G\leq1.783$ is Grothendieck's constant.
  Similar to the proof of \cite[Lemma 4.1]{MR3520025-Guedon16}, using Bernstein's inequality and union bound, we can show (cf. \cref{fact:cut_norm_sbm})
  \begin{equation*}
    \Normio{Y(\rv{G})-X^*} \leq \frac{100n}{\gamma\sqrt{d}}
  \end{equation*}
  with probability $1-\exp(-\Omega(n))$.
  Putting things together, we have 
  \begin{equation*}
    \Norm{\hat{X}(Y(\rv{G})) - \frac{1}{n} xx^\top}_F^2 \leq \frac{400\cdot K_G}{\gamma\sqrt{d}} ,
  \end{equation*}
  with probability $1-\exp(-\Omega(n))$.
\end{proof}

\begin{fact}
  \label{fact:cut_norm_sbm}
  Let $\gamma > 0, d \in \N, x^* \in \Set{\pm 1}^n$, and $\rv{G} \sim \SBM\Paren{\gamma, d, x^*}$.
  Let $Y(\rv{G}) = \tfrac{1}{\gamma d} \Paren{A(\rv{G}) - \tfrac{d}{n} J}$, where $A(\rv(G))$ is the adjacency matrix of $\rv(G)$ with entries $d/n$ on the diagonal.
  Then $$\max_{x \in \Set{\pm 1}^n} \Abs{x^\top \Paren{Y(\rv{G}) - \tfrac{1}{n}x^*(x^*)^\top}x} \leq \frac{100n}{\gamma \sqrt{d}}$$ with probability at least $1-e^{-10n}$.
\end{fact}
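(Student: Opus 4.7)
}
The plan is to use a standard Bernstein-plus-union-bound argument. For a fixed sign vector $x \in \{\pm 1\}^n$, I expand
\[
x^\top \Paren{Y(\rv{G}) - \tfrac{1}{n} x^*(x^*)^\top} x = \sum_{i,j} x_i x_j \Paren{Y(\rv{G})_{ij} - \tfrac{1}{n} x^*_i x^*_j}.
\]
The diagonal ($i=j$) contribution is the deterministic quantity $\sum_i x_i^2 \cdot (0 - \tfrac 1 n) = -1$, which is harmless. For $i \neq j$, using $\rv{A}_{ij} \sim \Ber((1+\gamma x^*_i x^*_j)\tfrac d n)$, we have
\[
\Esymb\Brac{Y(\rv{G})_{ij}} = \tfrac{1}{\gamma d} \Paren{(1+\gamma x^*_i x^*_j)\tfrac{d}{n} - \tfrac{d}{n}} = \tfrac{1}{n} x^*_i x^*_j,
\]
so $\rv{Z}_{ij} := Y(\rv{G})_{ij} - \tfrac 1 n x^*_i x^*_j$ is centered, bounded in absolute value by $\tfrac{1}{\gamma d}$, and has variance at most $\tfrac{1}{(\gamma d)^2} \cdot \tfrac{2d}{n} = \tfrac{2}{n\gamma^2 d}$.

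Since $Y(\rv{G})$ is symmetric, I can rewrite the random part as $S(x) := 2 \sum_{i<j} x_i x_j \rv{Z}_{ij}$, where the $\{\rv{Z}_{ij}\}_{i<j}$ are independent. For this sum, each summand has $|2 x_i x_j \rv{Z}_{ij}| \leq \tfrac{2}{\gamma d} =: M$, and the total variance satisfies
\[
\sigma^2 := \sum_{i<j} \Var(2 x_i x_j \rv{Z}_{ij}) \leq 4 \binom{n}{2} \cdot \tfrac{2}{n\gamma^2 d} \leq \tfrac{4n}{\gamma^2 d}.
\]
By Bernstein's inequality, setting $t = \tfrac{100n}{\gamma\sqrt d} - 1$,
\[
\Pr\Paren{|S(x)| > t} \leq 2\exp\Paren{-\tfrac{t^2/2}{\sigma^2 + Mt/3}}.
\]
Plugging in, the denominator is at most $\tfrac{n}{\gamma^2 d} \Paren{4 + \tfrac{200}{3\sqrt d}}$ and the numerator is on the order of $\tfrac{n^2}{\gamma^2 d}$, so the exponent is at most $-c n$ for an absolute constant $c$ that we can verify is at least, say, $100$ (by checking the two regimes $d \gtrsim 1$ and $d \leq O(1)$ separately — in the subgaussian regime the exponent is $-\Omega(t^2/\sigma^2) = -\Omega(n)$, and in the Poisson regime it is $-\Omega(t/M) = -\Omega(n\sqrt d)$). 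Hence for each fixed $x$, $|x^\top (Y(\rv{G}) - \tfrac 1 n x^*(x^*)^\top) x| \leq \tfrac{100n}{\gamma\sqrt d}$ with probability $\geq 1 - 2e^{-100 n}$.

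Finally, a union bound over the $2^n \leq e^{n \ln 2}$ vectors $x \in \{\pm 1\}^n$ yields failure probability at most $2 \cdot 2^n \cdot e^{-100 n} \leq e^{-10 n}$, as desired. The only real subtlety is choosing the Bernstein constants carefully so that the per-$x$ bound beats the $2^n$ union bound with a margin of at least $e^{-10n}$; the proof itself is a direct computation with no conceptual obstacle.
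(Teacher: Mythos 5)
Your proof follows the same route as the paper's: fix $x\in\{\pm1\}^n$, center entrywise, exploit symmetry to reduce to independent terms $\{\rv Z_{ij}\}_{i<j}$, apply Bernstein, and union bound over $2^n$ sign vectors. The one detail you handle and the paper elides is the diagonal: with $A_{ii}=d/n$ the diagonal of $Y(\rv{G})-\tfrac1n x^*(x^*)^\top$ is $-1/n$, contributing a deterministic $-1$ to $x^\top(\cdot)x$, which you correctly note is negligible against the target $100n/(\gamma\sqrt{d})$.
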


\begin{proof}
  The result will follow using Bernstein's Inequality and a union bound.
  Define $\bm E \coloneqq Y(\rv{G}) - \tfrac{1}{n} x^*(x^*)^\top$.
  Fix $x \in \Set{\pm 1}^n$ and for $1 \leq i < j \leq n$, let $\bm Z_{i,j} \coloneqq \bm E_{i,j} x_i x_j$.
  Then $x^\top \bm E x = 2 \sum_{1 \leq i < j \leq n} \bm Z_{i,j}$.
  Note that
  \begin{align*}
    \E \bm Z_{i,j} &= 0\,, \\
    \Abs{\bm Z_{i,j}} &\leq \frac{1}{\gamma n} \cdot \Paren{\frac{n}{d} - 1} + \frac{1}{\gamma d n} \leq \frac{1}{\gamma d}\,,\\
    \E \bm Z_{i,j}^2 &= \mathrm{Var}\Brac{\bm Y(\rv{G})_{i,j}} \leq \E \bm Y(\rv{G})_{i,j}^2 \leq \Paren{1+\gamma} \frac{d}{n} \cdot \frac{1}{\gamma^2 n^2} \Brac{\Paren{\frac{n}{d} - 1}^2 - \frac{1}{\gamma^2n^2}} + \frac{1}{\gamma^2n^2} \\
    &\leq \Paren{1+\gamma}\frac{1}{d\gamma^2 n} + \frac{1}{\gamma^2n^2} \leq \frac{3}{\gamma^2 d n} \,.
  \end{align*}
  By Bernstein's Inequality (cf. \cite[Proposition 2.14]{wainwright_2019}) it follows that
  \begin{align*}
    \Psymb\Paren{\sum_{i < j} \bm Z_{i,j} \geq \frac{50n}{\gamma \sqrt{d}}} &\leq \Psymb\Paren{\sum_{i < j} \bm Z_{i,j} \geq \frac{n^2}{2} \cdot \frac{100n}{\gamma \sqrt{d}}} \leq 2\exp\Paren{-\frac{\frac{10^4}{\gamma^2 d}}{\frac{3}{\gamma^2 d n} + \frac{100}{3 \gamma^2 d^{3/2} n}}} \\
    &= 2\exp\Paren{-\frac{10^4n}{3+\frac{100}{\sqrt{d}}}} \leq \exp\Paren{-50n} \,.
  \end{align*}
  Hence, by a union bound over all $x \in \Set{\pm 1}^n$ it follows that $$\max_{x \in \Set{\pm 1}^n} \Abs{x^\top \Paren{Y(\rv{G}) - \tfrac{1}{n}x^*(x^*)^\top}x} \leq \frac{100n}{\gamma \sqrt{d}} $$ with probability at least $1-e^{-10n}$.
\end{proof}
\section{Deferred proofs for clustering}
\label{section:deferred-proofs-clustering}

In this section, we will prove \cref{lemma:clustering-closeness-good-inputs} restated below.

\restatelemma{lemma:clustering-closeness-good-inputs}

We will need the following fact about our clustering program.
Similar facts where used, e.g., in \cite{Hopkins018,fleming2019semialgebraic}.
One difference for us is that we don't have a constraint on the lower bound on the cluster size indicated by our SOS variables.
However, since we maximize a variant of the $\ell_1$ norm of the second moment matrix of the pseudo-distribution this will make up for this.

\begin{fact}
\label{fact:clustering-utility}
    Consider the same setting as in \cref{lemma:clustering-closeness-good-inputs}.
    Let $0 < \delta \leq \tfrac{1}{1.5 \cdot 10^{10}} \cdot \tfrac{1}{k^{201}}$ and denote by $\mathbf{C}_1,\ldots, \mathbf{C}_k \subseteq [n]$ the indices belonging to each true cluster.
    Then $W(\mathbf{Y})$ satisfies the following three properties:
    \begin{enumerate}
        \item For all $i,j \in [n]$ it holds that $0 \leq \mathbf{W}_{i,j} \leq 1$, \label{prop:clustering-utility-one}
        \item for all $i \in [n]$ it holds that $\sum_{j=1}^n \mathbf{W}_{i,j} \leq \tfrac n k$ and for at least $(1 - \tfrac{1}{1000k^{100}})n$ indices $i \in [n]$ it holds that $\sum_{j=1}^n \mathbf{W}_{i,j} \geq (1-\tfrac{1}{(10)^6k^{200}}) \cdot \tfrac n k$, \label{prop:clustering-utility-two}
        \item for all $r \in [k]$ it holds that $\sum_{i \in \mathbf{C}_r, j \not\in \mathbf{C}_r} \mathbf{W}_{i,j} \leq \delta \cdot \tfrac{n^2}{k}$. \label{prop:clustering-utility-three}
    \end{enumerate}
\end{fact}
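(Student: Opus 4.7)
The plan is to derive the three properties in turn by pseudo-distribution calculus. Fix $\zeta \in \mathcal Z(\mathbf{Y})$ with $W(\zeta) = \mathbf{W}$, and for each $r, \ell \in [k]$ let $A^r_\ell := \sum_{i \in \mathbf{C}_r} z_{i\ell}$, so that $\mathbf{W}_{ij} = \tilde{\mathbb{E}}_\zeta[\sum_\ell z_{i\ell} z_{j\ell}]$ and $\sum_{i \in \mathbf{C}_r,\, j \in \mathbf{C}_{r'}} \mathbf{W}_{ij} = \tilde{\mathbb{E}}_\zeta[\sum_\ell A^r_\ell A^{r'}_\ell]$. Property~\ref{prop:clustering-utility-one} will follow from low-degree SoS manipulations: idempotency gives $z_{i\ell} z_{j\ell} = z_{i\ell}^2 z_{j\ell}^2$ (hence SoS), and the identity $z_{i\ell}(1-z_{j\ell}) = z_{i\ell}^2(1 - z_{j\ell})^2$ combined with $\sum_\ell z_{i\ell} \leq 1$ gives an SoS proof of $\sum_\ell z_{i\ell} z_{j\ell} \leq 1$; taking pseudo-expectation yields $0 \leq \mathbf{W}_{ij} \leq 1$.

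For Property~\ref{prop:clustering-utility-two} the upper bound $\sum_j \mathbf{W}_{ij} \leq n/k$ will come from multiplying the size constraint $\sum_{i'} z_{i'\ell} \leq n/k$ by the SoS-nonnegative $z_{i\ell}$, summing over $\ell$, and applying cluster-membership. For the lower bound on most row sums I would invoke \cref{lemma:clustering-probability-rejection}: since $\mathbf{Y}$ is good we have $\mathbf{W}^* \in \mathcal{W}(\mathbf{Y})$, and the optimality $g(\mathbf{W}) \leq g(\mathbf{W}^*)$ combined with the large $(10)^{10}k^{300}\iprod{J,W}$ penalty forces $\iprod{J, \mathbf{W}} \geq (n^2/k)\bigl(1 - n^{-0.4} - (10)^{-10}k^{-300}\bigr)$. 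The defects $d_i := n/k - \sum_j \mathbf{W}_{ij} \geq 0$ then satisfy $\sum_i d_i \leq (n^2/k)(n^{-0.4} + (10)^{-10}k^{-300})$, and a Markov bound limits the number of rows with $d_i > n/((10)^6 k^{201})$ to at most $n/(1000 k^{100})$, provided $n \geq k^C$ for a sufficiently large constant $C$ (guaranteed by $n \geq k^{O(1)}$ with a sufficiently large exponent).

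Property~\ref{prop:clustering-utility-three} is the heart of the proof and will adapt the standard SoS clustering identifiability argument of \cite{Hopkins018, KothariSS18} to our program. The plan is to derive, in SoS, the fundamental cross-cluster bound
\[
  A^r_\ell \, A^{r'}_\ell \, \Delta^{2t} \;\leq\; 2^{O(t)} (n/k)(2t)^t \bigl(A^r_\ell + A^{r'}_\ell\bigr)
\]
for every $\ell$ and every pair of distinct true clusters $r \neq r'$. The derivation takes $v = \mu_r - \mu_{r'}$ (so $\|v\| \geq \Delta$), applies the SoS triangle inequality to $\langle \mu_r - \mu_{r'}, v\rangle^{2t}$ split through $\mu'_\ell$, and then controls each piece via (a) the degree-$2t$ moment constraint of $\cP_{n,k,t}(\mathbf{Y})$ applied to the pseudo-cluster $\ell$ and (b) the $2t$-explicit $2$-boundedness of $\mathbf{C}_r$ and $\mathbf{C}_{r'}$ which is part of the goodness of $\mathbf{Y}$. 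Summing over $\ell$ and over $r' \neq r$, using $\sum_\ell A^r_\ell \leq |\mathbf{C}_r| \leq n/k + n^{0.6}$, and taking pseudo-expectation will give $\sum_{r' \neq r,\, i \in \mathbf{C}_r,\, j \in \mathbf{C}_{r'}} \mathbf{W}_{ij} \leq \delta \cdot (n^2/k)$ for the claimed range of $\delta$, provided the hidden constant in $\Delta \geq O(\sqrt{t}\, k^{1/t})$ is chosen large enough relative to the exponent $201$.

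The main obstacle I anticipate is the constant-tracking in Property~\ref{prop:clustering-utility-three}: the SoS bookkeeping must combine the pseudo-moment bound (for the \emph{inferred} cluster $\ell$) with the \emph{sample} subgaussianity of $\mathbf{C}_r, \mathbf{C}_{r'}$ without losing too many factors of $2^t$ and $(2t)^t$, so the final bound tightens to $\delta \leq 1/(1.5 \cdot 10^{10} k^{201})$ rather than the naive $O(1/k^2)$ that a single application of the triangle inequality gives. A secondary subtlety is that, unlike the standard clustering programs of \cite{Hopkins018, KothariSS18}, our program $\cP_{n,k,t}$ drops the lower bound on cluster sizes (so as to remain satisfiable on every input); consequently, the near-saturation of $\sum_j \mathbf{W}_{ij} \leq n/k$ in Property~\ref{prop:clustering-utility-two} cannot be read off the constraints but must instead be extracted from the $(10)^{10}k^{300}\iprod{J,W}$ penalty in the objective $g$, which is the role played by the objective function in our setup.
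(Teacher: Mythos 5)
Your proposal takes essentially the same route as the paper: Property~1 by pseudo-expectation of the SoS inequalities $0\le\sum_\ell z_{i\ell}z_{j\ell}\le 1$; Property~2's upper bound from the size constraint, and its lower bound by combining $g(\mathbf{W})\le g(\mathbf{W}^*)$ with the $(10)^{10}k^{300}\iprod{J,W}$ penalty to force $\iprod{J,\mathbf{W}}$ to be nearly $n^2/k$ (the paper does a direct counting argument rather than phrasing it as Markov, but these are the same step); Property~3 by an SoS cross-cluster bound obtained from the SoS triangle inequality through $\mu'_\ell$, controlled by the program's moment constraint and the $2t$-explicit 2-boundedness of the true clusters. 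The only cosmetic differences are that you state the cross-cluster bound per pair $(r,r')$ and pick $v=\mu_r-\mu_{r'}$ directly, whereas the paper sums over $r'\ne r$ at once, uses $\sum_r z_\ell(\mathbf{C}_r)\le 1$ to simplify, and sums over the $k^2$ normalized test directions $\tfrac{\mu_s-\mu_u}{\lVert\mu_s-\mu_u\rVert}$; both work because the $\mu$'s are data-independent of the SoS variables.
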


We will prove \cref{fact:clustering-utility} at the end of this section.
With this in hand, we can proof \cref{lemma:clustering-closeness-good-inputs}.

\begin{proof}[Proof of \cref{lemma:clustering-closeness-good-inputs}]
    For brevity, we write $\mathbf{W} = W(\mathbf{Y})$.
    Since $\phi(\mathbf{W}^*) = \mathbf{W}^*$ and $\phi$ is 10-Lipschitz we can also bound $$\Normo{\phi(\mathbf{W}) - \mathbf{W}^*} \leq 10 \cdot \Normo{\mathbf{W} - \mathbf{W}^*} \,. $$
    Let $\delta \leq \tfrac{1}{1.5 \cdot 10^{10}} \cdot \tfrac{1}{k^{201}}$ and again let $\mathbf{C}_1,\ldots, \mathbf{C}_k \subseteq [n]$ denote the indices belonging to each true cluster.
    Note that by assumption that $\mathbf{Y}$ is a good sample it holds for each $r \in [k]$ that $\tfrac n k - n^{0.6} \leq \Card{\mathbf{C}_r} \leq \tfrac n k + n^{0.6}$.

    Let $r, r' \in [k]$. We can write
    \begin{align}
        \Normo{\mathbf{W} - \mathbf{W}^*} = \sum_{r=1}^k \sum_{i,j \in \mathbf{C}_r} \Abs{\mathbf{W}_{i,j} - 1} + \sum_{r=1}^k \sum_{i \in \mathbf{C}_r, j \not\in \mathbf{C}_{r}} \Abs{\mathbf{W}_{i,j} - 0} \label{equation:error-frobenius-norm}
    \end{align}
    Note that we can bound the second sum by $k \cdot \delta \tfrac{n^2}{k}$ using \cref{prop:clustering-utility-three}.
    Further, in what follows consider only indices $i$ such that $\sum_{j=1}^n \mathbf{W}_{i,j} \geq (1-\tfrac{1}{(10)^6k^{200}}) \cdot \tfrac n k$.
    By \cref{prop:clustering-utility-two} we can bound the contribution of the other indices by $$\frac{1}{1000k^{100}} n \cdot \Paren{ \frac n k + n^{0.6}} \leq \frac 2 {1000k^{100}} \cdot \frac {n^2} k\,.$$

    Focusing only on such indices, for the first sum in \cref{equation:error-frobenius-norm}, fix $r \in [k]$.
    We will aim to show that most entries of $\mathbf{W}$ are large if and only if the corresponding entry of $\mathbf{W}^*$ is 1.
    By \cref{prop:clustering-utility-three} and Markov's Inequality, it follows that for at least a $(1-\tfrac{1}{1000k^{100}})$-fraction of the indices $i \in \mathbf{C}_r$ it holds that 
    \begin{align*}
        \sum_{j \not\in \mathbf{C}_r} \mathbf{W}_{i,j} \leq 1000k^{100} \cdot \delta \tfrac{n^2}{k \cdot \Card{\mathbf{C}_r}} \leq 1000k^{100} \delta \cdot \tfrac{n}{1-k \cdot n^{-0.4}} \leq 2000 k^{101} \delta \cdot \tfrac{n}{k}\,,
    \end{align*}
    where we used that $\Card{\mathbf{C}_r} \geq \tfrac{n}{k} - n^{0.6}$.
    Call such indices \emph{good}.
    Notice that for good indices it follows using \cref{prop:clustering-utility-two} that
    \begin{align*}
        \sum_{j \in \mathbf{C}_r} \mathbf{W}_{i,j} \geq \tfrac{n}{k} \cdot (1- \frac 1 {(10)^6k^{200}} -  2000 k^{101} \delta) \,.
    \end{align*}
    Denote by $G$ the number of $j \in \mathbf{C}_r$ such that $\mathbf{W}_{i,j} \geq 1 - \tfrac{1}{1000k^{100}}$.
    Using the previous display and that $\mathbf{W}_{i,j} \leq 1$ we obtain
    \begin{align*}
        \tfrac{n}{k} \cdot \Paren{1- \frac 1 {(10)^6k^{200}}-2000 k^{101} \delta} \leq \sum_{j \in \mathbf{C}_r} \mathbf{W}_{i,j} &\leq G \cdot 1 + \Paren{\Card{\mathbf{C}_r} - G} \cdot (1-\tfrac{1}{1000k^{100}}) \\
        &\leq G \cdot \tfrac{1}{1000k^{100}} + \tfrac{n}{k} \cdot (1 + \tfrac{1}{kn^{0.4}}) \cdot (1-\tfrac{1}{1000k^{100}}) \\
        &\leq G \cdot \tfrac{1}{1000k^{100}} + \tfrac{n}{k} \cdot (1 + \tfrac{1}{kn^{0.4}}) \,,
    \end{align*}
    where we also used $\Card{\mathbf{C}_r} \leq \tfrac{n}{k} + n^{0.6}$.
    Rearranging now yields
    \begin{align*}
        G \geq \frac{n}{k} \cdot \Paren{1- \frac{1}{1000k^{100}} - \frac{10^3k^{99}}{n^{0.4}}- 2 \cdot 10^6k^{101}\delta} \geq \frac{n}{k} \cdot \Paren{1- \frac{2}{1000k^{100}}- 2 \cdot 10^6k^{101}\delta}\,.
    \end{align*}
    We can now bound
    \begin{align*}
        \sum_{i,j \in \mathbf{C}_r} \Abs{\mathbf{W}_{i,j} - 1} &= \sum_{i,j \in \mathbf{C}_r, i \text{ is good}} \Abs{\mathbf{W}_{i,j} - 1} + \sum_{i,j \in \mathbf{C}_r, i \text{ is not good}} \Abs{\mathbf{W}_{i,j} - 1} \\
        &\leq \Card{\mathbf{C}_r} \cdot \Paren{(\card{\mathbf{C}_r}-G) \cdot 1 + \Card{\mathbf{C}_r} \cdot \tfrac{1}{1000k^{100}}} + \tfrac{1}{1000k^{100}} \cdot \Card{\mathbf{C}_r}^2 \\
        &\leq \Card{\mathbf{C}_r}^2 (1 + \tfrac{1}{500k^{100}}) - G \cdot \Card{\mathbf{C}_r}\\
        &\leq \tfrac{n^2}{k^2} (1+\tfrac{k}{n^{0.4}})^2(1 + \tfrac{1}{500k^{100}}) - \tfrac{n^2}{k^2} \paren{1- \tfrac{2}{1000k^{100}}- 2 \cdot 10^6 k^{101}\delta} (1-\tfrac{k}{n^{0.4}}) \\
        &\leq \tfrac{n^2}{k^2} \cdot (30 \cdot 10^6k^{101}\delta + \tfrac{11}{500k^{100}}) \leq \tfrac{n^2}{k} \cdot (30 \cdot 10^6k^{100} \delta + \tfrac{11}{500k^{101}}) \\
        &\leq \tfrac{n^2}{k} \cdot \tfrac{3}{125k^{101}} \,.
    \end{align*}
    Putting everything together, it follows that
    \begin{align*}
        \Normf{\phi(\mathbf{W}) - \mathbf{W}^*}^2 \leq \Normo{\phi(\mathbf{W}) - \mathbf{W}^*} \leq 10 \cdot \tfrac{n^2}{k} \Paren{\delta k + \frac 2{1000k^{100}} + \tfrac{3}{125k^{100}}} \leq \tfrac{n^2}{k} \cdot \tfrac{4}{k^{100}} \leq \tfrac{n^2}{k} \cdot \tfrac{3}{k^{98}}\,.
    \end{align*}
    
\end{proof}

It remains to verify \cref{fact:clustering-utility}.
\begin{proof}[Proof of \cref{fact:clustering-utility}]
Let $\cP = \cP_{n,k,t}(\mathbf{Y})$ be the system of \cref{equation:clustering-program-part-1}.
Recall that $\mathbf{W}_{i,j} = \tE \sum_{l \in [k]} z_{i,l}z_{j,l}$.
Since $$\cP \sststile{4}{} \Set{0 \leq \sum_{l \in [k]} z_{i,l}z_{j,l} \leq \sum_{l \in [k]} z_{i,l} \leq 1}\,,$$ it follows that $0 \leq \mathbf{W}_{i,j} \leq 1$.
Further, for each $i \in [n]$ it holds that $$\cP \sststile{4}{} \Set{\sum_{j \in [n], l \in [k]} z_{j,l} z_{i,l} \leq \tfrac n k \sum_{l \in [k]} z_{i,l} \leq \tfrac n k}$$ implying that $\sum_{j \in [n]}\mathbf{W}_{i,j} \leq \tfrac n k$.
Further, by \cref{lemma:clustering-probability-rejection} $$\Normo{\mathbf{W}} \geq \frac{n^2}{k} \cdot \Paren{1-n^{-0.4}-\frac{1}{(10)^{10}k^{300}}} \geq \frac{n^2}{k} \cdot \Paren{1-\frac{1}{(10)^9k^{300}}}\,.$$
Denote by $\mathbf{W}_i$ the $i$-th row of $\mathbf{W}$ and by $L$ the number of rows which have $\ell_1$ norm at least $(1-\tfrac{1}{(10)^6k^{200}}) \cdot \tfrac n k$.
Since for all $i$ it holds that $\Normo{\mathbf{W}_i} \leq \tfrac n k$ it follows that
\begin{align*}
    \frac{n^2}{k} \cdot \Paren{1-\frac{1}{(10)^9k^{300}}} \leq \sum_{i \in [n]} \Normo{\mathbf{W}_i} &\leq L \cdot \frac n k + (n - L) \cdot \Paren{1-\tfrac{1}{(10)^6k^{200}}} \cdot \frac n k \\
    &= L \cdot \tfrac{1}{(10)^6k^{200}} \cdot \frac n k + \frac {n^2}{k} \cdot \Paren{1-\tfrac{1}{(10)^6k^{200}}}
\end{align*}
Rearranging then yields $L \geq (1 - \tfrac{1}{1000k^{100}}) \cdot n$ which proofs \cref{prop:clustering-utility-two}.

It remains to verify \cref{prop:clustering-utility-three}.
Fix $r,l \in [k]$ and define $z_l(\mathbf{C}_r) = \tfrac k n \sum_{i \in \mathbf{C}_r} z_{i,l}$.
Let $t > 0$ be an integer.
We aim to show that for all unit vectors $v$ it holds that
\begin{align}
    \cP \sststile{10t}{} \Set{z_l(\mathbf{C}_r) \cdot \frac{1}{\Delta^{2t}}\sum_{r' \neq r} z_l(\mathbf{C}_{r'}) \iprod{\mu_r - \mu_{r'}, v}^{2t} \leq \frac \delta k} \label{equation:small-inter-cluster-mass}\,,
\end{align}
where $\Delta$ is the minimal separation between the true means.
Before proving this, let us examine how we can use this fact to prove \cref{prop:clustering-utility-three}.
Note, that for all $r \neq r'$ it holds that $$\sum_{s,u \in [k]} \Iprod{\mu_r - \mu_{r'}, \tfrac{\mu_s - \mu_u}{\Norm{\mu_s - \mu_u}}}^{2t} \geq \Delta^{2t}\,.$$ 
Hence, if the above SOS proof indeed exists, we obtain
\begin{align*}
    \sum_{i \in \mathbf{C}_r, j\not\in \mathbf{C}_r} \mathbf{W}_{i,j} &= \sum_{l=1}^k \tE \sum_{i \in \mathbf{C}_r, j\not\in \mathbf{C}_r} z_{i,l}z_{j,l} = \frac{n^2}{k^2} \tE z_l(\mathbf{C}_r) \cdot \sum_{r' \neq r} z_l(\mathbf{C}_{r'}) \\
    &\leq \frac{n^2}{\Delta^{2t} k^2} \sum_{s,u \in[k]} \tE z_l(\mathbf{C}_r) \cdot \sum_{r' \neq r} z_l(\mathbf{C}_r) \Iprod{\mu_r - \mu_{r'}, \tfrac{\mu_s - \mu_u}{\Norm{\mu_s - \mu_u}}}^{2t} \\
    &\leq \frac \delta k  k^2 \cdot \frac{n^2}{k^2} = \delta \cdot \frac{n^2}{k}\,.
\end{align*}

In the remainder of this proof we will prove \cref{equation:small-inter-cluster-mass}.
We will use the following SOS version of the triangle Inequality (cf. \cref{fact:sos-triangle-inequality}) $$\sststile{2t}{x,y} (x+y)^{2t} \leq 2^{2t-1} (x^{2t} + y^{2t}) \,.$$
Recall that $\mu_l' = \tfrac k n \sum_{i=1}^n z_{i,l} y_i$ and denote by $\mu_{\pi(i)}$ the true mean corresponding to the $i$-th sample.
Let $v$ be an arbitrary unit vector, it follows that
\begin{align*}
    \cP \sststile{10t}{} \{&z_l(\mathbf{C}_r) \cdot \frac{1}{\Delta^{2t}}\sum_{r' \neq r} z_l(\mathbf{C}_{r'}) \iprod{\mu_r - \mu_{r'}, v}^{2t} \\
    &\leq z_l(\mathbf{C}_r) \cdot \frac{2^{2t-1}}{\Delta^{2t}}\sum_{r' \neq r} z_l(\mathbf{C}_{r'}) \Paren{\iprod{\mu_r - \mu_l', v}^{2t} + \iprod{\mu_{r'} - \mu_l', v}^{2t}}  \\
    &\leq \frac{2^{2t-1}}{\Delta^{2t}} \sum_{r = 1}^k z_l(\mathbf{C}_{r})\iprod{\mu_r - \mu_l', v}^{2t} = \frac{2^{2t-1}}{\Delta^{2t}} \cdot \frac k n \sum_{i=1}^n z_{i,l} \iprod{\mu_{\pi(i)} - \mu_l', v}^{2t}\}\,,
\end{align*}
where we used that $\cP \sststile{1}{} \sum_{r=1}^k z_l(\mathbf{C}_r) \leq 1$.
Using the SOS triangle inequality again and that $\cP \sststile{2}{} z_{i,l} \leq 1$ we obtain
\begin{align*}
    \cP \sststile{}{10t} \{&z_l(\mathbf{C}_r) \cdot \frac{1}{\Delta^{2t}}\sum_{r' \neq r} z_l(\mathbf{C}_{r'}) \iprod{\mu_r - \mu_{r'}, v}^{2t} \\
    &\leq \frac{2^{4t-1}}{\Delta^{2t}} \cdot \Paren{k \cdot \frac 1 n \sum_{i=1}^n \iprod{\mathbf{y}_i - \mu_{\pi(i)},v}^{2t} + \frac k n \sum_{i=1}^n z_{i,l}\iprod{\mathbf{y}_i - \mu_l',v}^{2t}}\} \,.
\end{align*}

We start by bounding the first sum.
Recall that by assumption the uniform distribution over each true cluster is $2t$-explicitly 2-bounded.
It follows that
\begin{align}
    \sststile{2t}{} \{\frac 1 n \sum_{i=1}^n \iprod{\mathbf{y}_i - \mu_{\pi(i)},v}^{2t} &= \frac 1 k \sum_{r=1}^k \frac{k}{n} \sum_{i \in \mathbf{C}_r} \iprod{\mathbf{y}_i - \mu_r,v}^{2t} \leq \frac 1 k \sum_{r=1}^k \frac{k}{n} \cdot \Card{\mathbf{C}_r} \cdot (2t)^t \cdot \Normt{v}^{2t} \\
    &\leq \Paren{1+\frac{k}{n^{0.4}}} \cdot (2t)^t \leq 2 (2t)^t \}\,, \label{equation:true-clusters-moment-bound}
\end{align}
where we used that $\Card{\mathbf{C}_r} \leq \tfrac n k + n^{0.6}$.
To bound the second sum, we will use the moment bound constraints.
In particular, we know that
\begin{align}
    \cP \sststile{10t}{} \Set{\frac k n \sum_{i=1}^n z_{i,l}\iprod{\mathbf{y}_i - \mu_l',v}^{2t} \leq (2t)^t }\,. \label{equation:sos-moment-bound}
\end{align}

Combining \cref{equation:true-clusters-moment-bound} and \cref{equation:sos-moment-bound} now yields
\begin{align*}
    \cP \sststile{10t}{} \Set{z_l(\mathbf{C}_r) \cdot \frac{1}{\Delta^{2t}}\sum_{r' \neq r} z_l(\mathbf{C}_{r'}) \iprod{\mu_r - \mu_{r'}, v}^{2t} \leq k \frac{2^{2t+1} (2t)^t}{\Delta^{2t}} \leq k \Paren{\frac{8t}{\Delta^2}}^t }\,.
\end{align*}
Note that by assumption $\Delta \geq O(\sqrt{tk^{1/t}})$.
Overloading notation, we can choose the $t$ parameter in the SOS proof to be 202 times the $t$ parameter in the lower bound in the separation to obtain\footnote{Note that this influences the exponent in the running time and sample complexity only by a constant factor and hence doesn't violate the assumptions of \cref{theorem:main-technical-clustering}.} $$\sum_{i \in \mathbf{C}_r, j\not\in \mathbf{C}_r} \mathbf{W}_{i,j} \leq \delta \cdot \frac{n^2}{k}\,.$$

\end{proof}

\subsection{Small Lemmas}

\begin{fact}[Lemma A.2 in \cite{KothariSS18}]
\label{fact:sos-triangle-inequality}
For all integers $t > 0$ it holds that $$\sststile{2t}{x,y} (x+y)^{2t} \leq 2^{2t-1} (x^{2t} + y^{2t})\,.$$
\end{fact}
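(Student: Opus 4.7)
The plan is to exhibit an explicit sum-of-squares decomposition of the polynomial $2^{2t-1}(x^{2t} + y^{2t}) - (x+y)^{2t}$, of degree exactly $2t$ in the indeterminates $x, y$.

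The key idea is to pass to the variables $u = x+y$ and $v = x-y$, which is an invertible linear change of variables, so polynomial identities in $u, v$ translate verbatim into polynomial identities in $x, y$. Under the substitution $x = (u+v)/2$, $y = (u-v)/2$, expanding by the binomial theorem and observing that the odd-indexed terms in $v$ cancel between $(u+v)^{2t}$ and $(u-v)^{2t}$ gives
\begin{align*}
x^{2t} + y^{2t} \;=\; \frac{1}{2^{2t}}\Bigl[(u+v)^{2t} + (u-v)^{2t}\Bigr] \;=\; \frac{1}{2^{2t-1}} \sum_{k=0}^{t} \binom{2t}{2k}\, u^{2k} v^{2t-2k}.
\end{align*}
Multiplying by $2^{2t-1}$ and isolating the $k=t$ summand, which is exactly $u^{2t} = (x+y)^{2t}$, I obtain
\begin{align*}
2^{2t-1}(x^{2t} + y^{2t}) - (x+y)^{2t} \;=\; \sum_{k=0}^{t-1} \binom{2t}{2k}\, u^{2k} v^{2t-2k}.
\end{align*}

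Now since $2t-2k$ is even for every $k$ in the range $0 \le k \le t-1$, each monomial in the right-hand side is a perfect square: $u^{2k} v^{2t-2k} = \bigl(u^{k} v^{t-k}\bigr)^{2}$. Substituting back yields
\begin{align*}
2^{2t-1}(x^{2t} + y^{2t}) - (x+y)^{2t} \;=\; \sum_{k=0}^{t-1} \binom{2t}{2k}\, \bigl((x+y)^{k}(x-y)^{t-k}\bigr)^{2},
\end{align*}
which is manifestly a non-negative combination of squares of polynomials in $x, y$ of degree $k + (t-k) = t$. Hence the whole expression is an SOS polynomial of degree $2t$, giving the desired degree-$2t$ sum-of-squares proof.

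There is no real obstacle in this argument: the only verification needed is the binomial identity, and the essential observation is simply that after the variable change every leftover monomial has even exponents in both variables, so each appears as a square with a non-negative binomial coefficient in front. The decomposition is completely explicit and self-contained.
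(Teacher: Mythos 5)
The paper states this as a fact imported from \cite{KothariSS18} and does not reproduce a proof, so there is no in-paper argument to compare against; the question is simply whether your construction is correct, and it is. After the change of variables $u = x+y$, $v = x-y$, expanding $(u\pm v)^{2t}$ and cancelling the odd powers of $v$ gives exactly
\[
2^{2t-1}\bigl(x^{2t}+y^{2t}\bigr) - (x+y)^{2t} \;=\; \sum_{k=0}^{t-1}\binom{2t}{2k}\,\bigl((x+y)^{k}(x-y)^{t-k}\bigr)^{2},
\]
a nonnegative linear combination of squares of degree-$t$ polynomials in $x,y$, hence a degree-$2t$ sum-of-squares certificate as required. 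An added virtue of your decomposition is that it works verbatim for every positive integer $t$ in a single step; the folklore inductive route (iterating $(a+b)^2 \le 2(a^2+b^2)$ under substitutions) is cleanest when the exponent is a power of two and requires extra bookkeeping otherwise, so your argument is both more direct and more general.
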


\begin{fact}
\label{fact:low_failure_probability}
    Let $\e, \delta > 0$.
    Let $\cM \colon \cY \rightarrow \cO$ be a randomized algorithm that, for every pair of adjacent inputs, with probability at least $1-\gamma \geq 1/2$ over the internal randomness of $\cY$\footnote{In particular, this randomness is independent of the input} satisfies $(\e,\delta)$-privacy.
    Then $\cM$ is $(\e + 2\gamma, \delta + \gamma)$-private.
\end{fact}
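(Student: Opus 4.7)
The plan is to fix adjacent inputs $Y, Y' \in \cY$ and a measurable set $S \subseteq \cO$, and to reduce the claim to a conditioning computation. By hypothesis, there is an event $G$ on the internal randomness of $\cM$ (possibly depending on the pair $(Y,Y')$) with $\Pr[G] \geq 1-\gamma$ such that, conditioned on $G$, the algorithm is $(\e,\delta)$-DP on $(Y,Y')$; i.e.,
\[
  \Pr[\cM(Y) \in S \mid G] \;\leq\; e^{\e}\,\Pr[\cM(Y') \in S \mid G] + \delta\,.
\]

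Next, I would split $\Pr[\cM(Y) \in S]$ according to $G$ and its complement, discarding the contribution of $\bar G$ using $\Pr[\bar G] \leq \gamma$, to get
\[
  \Pr[\cM(Y) \in S] \;\leq\; \Pr[\cM(Y) \in S \mid G] + \gamma\,.
\]
Then I would apply the conditional DP bound and pass back to the unconditional probability on $\cM(Y')$ via $\Pr[\cM(Y') \in S \mid G] \leq \Pr[\cM(Y') \in S]/\Pr[G] \leq \Pr[\cM(Y') \in S]/(1-\gamma)$, obtaining
\[
  \Pr[\cM(Y) \in S] \;\leq\; \frac{e^{\e}}{1-\gamma}\,\Pr[\cM(Y') \in S] + \delta + \gamma\,.
\]

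The only analytic step is the elementary inequality $\tfrac{1}{1-\gamma} \leq e^{2\gamma}$ for $\gamma \in [0, 1/2]$, which follows from $-\log(1-\gamma) = \sum_{k \geq 1} \gamma^k/k \leq \gamma/(1-\gamma) \leq 2\gamma$. Substituting yields
\[
  \Pr[\cM(Y) \in S] \;\leq\; e^{\e + 2\gamma}\,\Pr[\cM(Y') \in S] + \delta + \gamma\,,
\]
which is exactly $(\e + 2\gamma, \delta + \gamma)$-differential privacy. No step is really an obstacle; the main subtlety is just to remember that the "good event" $G$ may depend on the pair of neighboring inputs, so one must control $\Pr[\cM(Y') \in S \mid G]$ by $\Pr[\cM(Y') \in S]/(1-\gamma)$ rather than by $\Pr[\cM(Y') \in S]$ directly, which is what forces the $2\gamma$ loss in $\e$.
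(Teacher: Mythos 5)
Your proposal is correct and matches the paper's proof essentially line for line: you condition on the good event $G$ to peel off the bad-event mass (contributing the extra $\gamma$ in $\delta$), apply the conditional $(\e,\delta)$ bound, and then remove the conditioning on the $Y'$ side at a cost of $1/\Pr[G] \leq 1/(1-\gamma) \leq e^{2\gamma}$ (contributing the extra $2\gamma$ in $\e$). The observation you flag at the end --- that $G$ may depend on the pair $(Y,Y')$, which is why one must pay the $1/(1-\gamma)$ factor when passing from conditional to unconditional probability --- is exactly the point the paper's derivation is built around.
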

\begin{proof}
    Let $X,X'$ be adjacent input and let $B$ be the event under which $\cM$ is $(\e, \delta)$-private.
    By assumption, we know that $\Psymb\Paren{B} \geq 1 -\gamma$.
    Let $S \in \cO$, it follows that
    \begin{align*}
        \Psymb\Paren{\cM(X) \in S} &= \Psymb(B) \cdot \Psymb\Paren{\cM(X) \in S \suchthat B} + \Psymb(B^c) \cdot \Psymb\Paren{\cM(X) \in S \suchthat B^c} \\
        &\leq \Psymb\Paren{\cM(X) \in S \suchthat B}  + \gamma \\
        &\leq e^\e \Psymb\Paren{\cM(X) \in S \suchthat B}  + \delta + \gamma \\
        &\leq \frac{e^\e}{\Psymb\Paren{B}} \cdot \Psymb\Paren{\cM(X) \in S} + \delta + \gamma \\
        &\leq e^{\e + \log\Paren{\tfrac{1}{1-\gamma}}} \cdot \Psymb\Paren{\cM(X) \in S} + \Paren{\delta + \gamma} \\
        &\leq e^{\e + 2\gamma } \cdot \Psymb\Paren{\cM(X) \in S} + \Paren{\delta + \gamma}\,,
    \end{align*}
    where we used that $\log(1-\gamma) \geq -2\gamma$ for $\gamma \in [0,1/2]$.
\end{proof}

\subsection{Privatizing input using the Gaussian Mechanism}
\label{section:privatizing-input}

In this section, we will proof the following helpful lemma used in the privacy analysis of our clustering algorithm (\cref{algorithm:learning-mixtures-gaussians}).
In summary, it says that when restricted to some set our input has small $\ell_2$ sensitivity, we can first add Gaussian noise proportional to this sensitivity and afterwards treat this part of the input as "privatized".
In particular, for the remainder of the privacy analysis we can treat this part as the same on adjacent inputs.
Note that we phrase the lemma in terms of matrix inputs since this is what we use in our application.
Of course, it also holds for more general inputs.
\begin{lemma}
    \label{lemma:privatizing-input} 
    Let $V, V' \in \R^{n \times d}, m \in [n]$ and $\Delta > 0$ be such that there exists a set $S$ of size at least $n-m$ satisfying $$\forall i \in S.\, \Normt{V_i - V_i'}^2 \leq \Delta^2 \,,$$ where $V_i, V_i'$ denote the rows of $V,V'$, respectively.
    Let $\cA_2 \colon \R^{n \times d} \rightarrow \cO$ be an algorithm that is $(\e_2,\delta_2)$-differentially private in the standard sense, i.e,., for all sets $\cS \sse \cO$ and datasets $X,X' \in \colon \R^{n \times d} $ differing only in a single row it holds that $$\Psymb\Paren{\cA_2\Paren{X} \in S} \leq e^{\e_2}\Psymb\Paren{\cA_2\Paren{X'} \in S}  + \delta_2 \,.$$
    Further, let $\cA_1 \colon \R^{n \times d} \rightarrow \R^{n \times d}$ be the Gaussian Mechanism with parameters $\Delta, \e_1, \delta_1$.
    I.e., on input $M$ it samples $\mathbf{W} \sim N\Paren{0, 2\Delta^2 \cdot \tfrac{\log(2/\delta_1)}{\e_1^2}}^{n \times d}$ and outputs $M + \mathbf{W}$.

    Then for
    \begin{align*}
        \e' &\coloneqq \e_1 + m \e_2 \,,\\
        \delta' &\coloneqq e^{\e_1} m e^{(m-1)\e_2} \delta_2 + \delta_1 \,.
    \end{align*}
    $\cA_2 \circ \cA_1$ is $(\e',\delta')$-differentially private with respect to $V$ and $V'$, i.e., for all sets $\cS \sse \cO$ it holds that $$\Psymb\Paren{\Paren{\cA_2 \circ \cA_1}\Paren{V} \in S} \leq e^{\e'}\Psymb\Paren{\Paren{\cA_2 \circ \cA_1}\Paren{V'} \in S}  + \delta' \,.$$
\end{lemma}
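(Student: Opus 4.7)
The plan is a two-step hybrid argument. Introduce the intermediate matrix $V^\dagger \in \R^{n\times d}$ that agrees with $V'$ on the good rows (indices $i \in S$) and with $V$ on the bad rows ($i \notin S$). I would bound the privacy distance between $\cA_2(\cA_1(V))$ and $\cA_2(\cA_1(V'))$ by passing through $\cA_2(\cA_1(V^\dagger))$ and combining two separate privacy statements along this path.

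For the transition $V \to V^\dagger$ the two matrices agree on the bad rows and differ on the good rows by at most $\Delta$ per row in $\ell_2$-norm. I would invoke the privacy guarantee of the Gaussian mechanism $\cA_1$, whose noise variance $2\Delta^2\log(2/\delta_1)/\e_1^2$ per entry is exactly what \cref{lemma:gaussian-mechanism} requires to make $\cA_1(V)$ and $\cA_1(V^\dagger)$ $(\e_1,\delta_1)$-indistinguishable. Since $\cA_2$ is a post-processing of $\cA_1$'s output, \cref{lemma:differential-privacy-post-processing} then yields that $\cA_2\circ\cA_1$ is $(\e_1,\delta_1)$-DP between $V$ and $V^\dagger$.

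For the transition $V^\dagger \to V'$, the matrices agree on good rows and differ on at most $m$ bad rows, with no bound on the per-row change. Condition on an arbitrary realization of the Gaussian noise $\mathbf{W}$: the inputs $V^\dagger + \mathbf{W}$ and $V' + \mathbf{W}$ fed to $\cA_2$ differ in at most $m$ rows. Iterating the $(\e_2,\delta_2)$-DP guarantee of $\cA_2$ along a chain of $m$ single-row swaps (the standard group-privacy telescope) gives $(m\e_2,\; m e^{(m-1)\e_2}\delta_2)$-DP for any such $m$-row change. Taking expectation over $\mathbf{W}$ transfers this bound to $\cA_2\circ\cA_1$ between $V^\dagger$ and $V'$.

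Finally, chain the two bounds: for every measurable $T\subseteq\cO$,
\begin{align*}
\Pr[\cA_2(\cA_1(V)) \in T]
&\leq e^{\e_1}\Pr[\cA_2(\cA_1(V^\dagger)) \in T] + \delta_1 \\
&\leq e^{\e_1}\Paren{e^{m\e_2}\Pr[\cA_2(\cA_1(V')) \in T] + m e^{(m-1)\e_2}\delta_2} + \delta_1 \\
&= e^{\e_1 + m\e_2}\Pr[\cA_2(\cA_1(V')) \in T] + e^{\e_1} m e^{(m-1)\e_2}\delta_2 + \delta_1\,,
\end{align*}
which matches the claimed $(\e',\delta')$. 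The main subtlety is the first step: because the aggregate Frobenius distance on good rows can be as large as $\sqrt{|S|}\Delta$, one has to check that the Gaussian noise budget is large enough to absorb this aggregate sensitivity in the regime in which the lemma is applied (which is indeed the regime of the clustering algorithm's calibration in \cref{section:clustering-privacy-analysis}); the remaining bookkeeping via group privacy and basic composition is routine.
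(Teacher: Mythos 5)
Your proposal is essentially the paper's proof. The intermediate matrix $V^\dagger$ (agreeing with $V'$ on $S$ and with $V$ elsewhere) is exactly the paper's hybrid $\begin{pmatrix} V_1' \\ V_2 \end{pmatrix}$ once the rows are permuted so that the bad rows form one contiguous block; the first hop ($V \to V^\dagger$) is handled by the Gaussian mechanism on the noise, the second hop ($V^\dagger \to V'$) by $m$-fold group privacy of $\cA_2$ after conditioning on the Gaussian noise, and the final chaining produces the same $(\e', \delta')$.

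The caveat you raise at the end is genuine and worth being precise about, and it applies equally to the paper's own proof. \Cref{lemma:gaussian-mechanism} makes $\cA_1(V)$ and $\cA_1(V^\dagger)$ $(\e_1,\delta_1)$-indistinguishable only when $\Normf{V - V^\dagger} \leq \Delta$; under the stated hypothesis ($\Normt{V_i - V_i'} \leq \Delta$ per row on $S$) that Frobenius norm can be as large as $\sqrt{\card{S}}\,\Delta$, so the noise variance $2\Delta^2\log(2/\delta_1)/\e_1^2$ is not ``exactly what is required'' as your second paragraph first claims. For the lemma to be correct as stated, $\Delta$ must be read as a bound on $\Normf{V_S - V'_S}$ (the Frobenius distance restricted to $S$) rather than a per-row bound, or the noise variance must be inflated by a factor $\card{S}$. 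In the clustering application (\cref{lemma:clustering-privacy}) the noise $N(0, n^{-0.18}\log(2/\delta)/\e^2)$ does in fact dominate the aggregate Frobenius sensitivity $\sqrt{\card{\bm\cS}}\cdot O(n^{-0.1}) = O(n^{-0.095})$, so the conclusion there is unaffected; but your instinct to check this is correct, and a clean version of the lemma should make the Frobenius reading of $\Delta$ explicit.
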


\begin{proof}
    Without loss of generality, assume that $S = \Set{1,\ldots, m}$.
    Denote by $V_1, V_2$ the first $m$ and last $n-m$ rows of $V$ respectively.
    Analogously for $V_1', V_2'$.
    We will later partitin the noise $\mathbf{W}$ of the Gaussian mechanism in the same way.
    Further, for a subset $A$ of $\R^{n \times n}$ and $Y \in \R^{m \times n}$ define $$T_{A,Y} = \Set{X \in \R^{(n-m) \times n} \suchthat \begin{pmatrix} X \\ Y \end{pmatrix} \in A} \sse \R^{(n-m) \times n}\,.$$
    Note that $\begin{pmatrix} X \\ Y \end{pmatrix} \in A$ if and only if $X \in T_{A,Y}$.

    Let $\cS \sse \cO$.
    It now follows that
    \begin{align*}
        \Psymb_{\cA_2, \mathbf{W}}\Brac{\Paren{\cA_2 \circ \cA_1}\Paren{V} \in S} &= \E_{\cA_2, \mathbf{W}} \Brac{\mathbb{1}\Set{V + \mathbf{W} \in \cA_2^{-1}\Paren{S}}} \\
        &= \E_{\cA_2, \mathbf{W}_2} \Brac{\E_{\mathbf{W}_1} \Brac{\mathbb{1}\Set{ \begin{pmatrix} V_1 + \mathbf{W}_1 \\ V_2 + \mathbf{W}_2 \end{pmatrix} \in \cA_2^{-1}\Paren{S}}} \suchthat \mathbf{W}_2} \\
        &= \E_{\cA_2, \mathbf{W}_2} \Brac{\E_{\mathbf{W}_1} \Brac{\mathbb{1}\Set{ V_1 + \mathbf{W}_1  \in T_{\cA_2^{-1}\Paren{S}, V_2 + \mathbf{W}_2 } }} \suchthat \mathbf{W}_2} \\
        &\leq e^{\e_1} \cdot \E_{\cA_2, \mathbf{W}_2} \Brac{\E_{\mathbf{W}_1} \Brac{\mathbb{1}\Set{ V_1' + \mathbf{W}_1  \in T_{\cA_2^{-1}\Paren{S}, V_2 + \mathbf{W}_2 } }} \suchthat \mathbf{W}_2} + \delta_1 \\
        &= e^{\e_1} \cdot\E_{\cA_2, \mathbf{W}} \Brac{ \mathbb{1}\Set{ \begin{pmatrix} V_1' + \mathbf{W}_1 \\ V_2 + \mathbf{W}_2 \end{pmatrix} \in \cA_2^{-1}\Paren{S} } } + \delta_1\,,
    \end{align*}
    where the inequality follows by the guarantees of the Gaussian Mechanism.
    Further, we can bound
    \begin{align*}
        \E_{\cA_2, \mathbf{W}} \Brac{ \mathbb{1}\Set{ \begin{pmatrix} V_1' + \mathbf{W}_1 \\ V_2 + \mathbf{W}_2 \end{pmatrix} \in \cA_2^{-1}\Paren{S} } } &= \E_{\mathbf{W}} \Brac{ \E_{\cA_2} \Brac{\mathbb{1}\Set{ \cA_2\begin{pmatrix} V_1' + \mathbf{W}_1 \\ V_2 + \mathbf{W}_2 \end{pmatrix} \in S } \suchthat \mathbf{W}} } \\
        &\leq e^{m \e_2} \cdot \E_{\mathbf{W}} \Brac{ \E_{\cA_2} \Brac{\mathbb{1}\Set{ \cA_2\begin{pmatrix} V_1' + \mathbf{W}_1 \\ V_2' + \mathbf{W}_2 \end{pmatrix} \in S } \suchthat \mathbf{W}} } + m e^{(m-1)\e_2} \delta_2 \\
        &= e^{m \e_2} \cdot \E_{\cA_2, \mathbf{W}} \Brac{ \mathbb{1}\Set{ \begin{pmatrix} V_1' + \mathbf{W}_1 \\ V_2' + \mathbf{W}_2 \end{pmatrix} \in \cA_2^{-1}\Paren{S} } } + m e^{(m-1)\e_2} \delta_2 \,,
    \end{align*}
    where the inequality follows by the privacy guarantees of $\cA_2$ combined with standard group privacy arguments.

    Putting the above two displays together and plugging in the definition of $\e', \delta'$ we finally obtain $$\Psymb_{\cA_2, \mathbf{W}}\Brac{\Paren{\cA_2 \circ \cA_1}\Paren{V} \in S} \leq e^{\e'} \Psymb_{\cA_2, \mathbf{W}}\Brac{\Paren{\cA_2 \circ \cA_1}\Paren{V'} \in S} + \delta' \,.$$
\end{proof}

\end{document}